\documentclass{LMCS}

%
%
\makeatletter
\@ifundefined{lhs2tex.lhs2tex.sty.read}%
  {\@namedef{lhs2tex.lhs2tex.sty.read}{}%
   \newcommand\SkipToFmtEnd{}%
   \newcommand\EndFmtInput{}%
   \long\def\SkipToFmtEnd#1\EndFmtInput{}%
  }\SkipToFmtEnd

\newcommand\ReadOnlyOnce[1]{\@ifundefined{#1}{\@namedef{#1}{}}\SkipToFmtEnd}
\usepackage{amstext}
\usepackage{amssymb}
\usepackage{stmaryrd}
\DeclareFontFamily{OT1}{cmtex}{}
\DeclareFontShape{OT1}{cmtex}{m}{n}
  {<5><6><7><8>cmtex8
   <9>cmtex9
   <10><10.95><12><14.4><17.28><20.74><24.88>cmtex10}{}
\DeclareFontShape{OT1}{cmtex}{m}{it}
  {<-> ssub * cmtt/m/it}{}

\DeclareFontShape{OT1}{cmtt}{bx}{n}
  {<5><6><7><8>cmtt8
   <9>cmbtt9
   <10><10.95><12><14.4><17.28><20.74><24.88>cmbtt10}{}
\DeclareFontShape{OT1}{cmtex}{bx}{n}
  {<-> ssub * cmtt/bx/n}{}

\newcommand{\Conid}[1]{\mathit{#1}}
\newcommand{\Varid}[1]{\mathit{#1}}
\newcommand{\anonymous}{\kern0.06em \vbox{\hrule\@width.5em}}


\renewcommand{\leq}{\leqslant}
\renewcommand{\geq}{\geqslant}
\usepackage{polytable}

\@ifundefined{mathindent}%
  {\newdimen\mathindent\mathindent\leftmargini}%
  {}%

\def\resethooks{%
  \global\let\SaveRestoreHook\empty
  \global\let\ColumnHook\empty}
\newcommand*{\savecolumns}[1][default]%
  {\g@addto@macro\SaveRestoreHook{\savecolumns[#1]}}
\newcommand*{\restorecolumns}[1][default]%
  {\g@addto@macro\SaveRestoreHook{\restorecolumns[#1]}}
\newcommand*{\aligncolumn}[2]%
  {\g@addto@macro\ColumnHook{\column{#1}{#2}}}

\resethooks

\newcommand{\onelinecommentchars}{\quad-{}- }
\newcommand{\commentbeginchars}{\enskip\{-}
\newcommand{\commentendchars}{-\}\enskip}

\newcommand{\visiblecomments}{%
  \let\onelinecomment=\onelinecommentchars
  \let\commentbegin=\commentbeginchars
  \let\commentend=\commentendchars}

\newcommand{\invisiblecomments}{%
  \let\onelinecomment=\empty
  \let\commentbegin=\empty
  \let\commentend=\empty}

\visiblecomments

\newlength{\blanklineskip}
\setlength{\blanklineskip}{0.66084ex}

\newcommand{\hsindent}[1]{\quad}
\let\hspre\empty
\let\hspost\empty

\EndFmtInput
\makeatother
%
%
%
%
%
%
%
%
%
\ReadOnlyOnce{polycode.fmt}%
\makeatletter

\newcommand{\hsnewpar}[1]%
  {{\parskip=0pt\parindent=0pt\par\vskip #1\noindent}}

\newcommand{\hscodestyle}{}


\newcommand{\sethscode}[1]%
  {\expandafter\let\expandafter\hscode\csname #1\endcsname
   \expandafter\let\expandafter\endhscode\csname end#1\endcsname}


%
  {\par\noindent
   \advance\leftskip\mathindent
   \hscodestyle
   \let\\=\@normalcr
   \(\pboxed}%
  {\endpboxed\)%
   \par\noindent
   \ignorespacesafterend}


%
  {\hsnewpar\abovedisplayskip
   \advance\leftskip\mathindent
   \hscodestyle
   \let\hspre\(\let\hspost\)%
   \pboxed}%
  {\endpboxed%
   \hsnewpar\belowdisplayskip
   \ignorespacesafterend}

  {\hsnewpar\abovedisplayskip
   \advance\leftskip\mathindent
   \hscodestyle
   \let\\=\@normalcr
   \(\pboxed}%
  {\endpboxed\)%
   \hsnewpar\belowdisplayskip
   \ignorespacesafterend}


\newcommand{\plainhs}{\sethscode{plainhscode}}

\plainhs


%
  {\hsnewpar\abovedisplayskip
   \advance\leftskip\mathindent
   \hscodestyle
   \let\\=\@normalcr
   \(\parray}%
  {\endparray\)%
   \hsnewpar\belowdisplayskip
   \ignorespacesafterend}


%
  {\parray}{\endparray}


%
  {\(\parray}{\endparray\)}


\def\codeframewidth{\arrayrulewidth}
\RequirePackage{calc}

  {\parskip=\abovedisplayskip\par\noindent
   \hscodestyle
   \arrayrulewidth=\codeframewidth
   \tabular{@{}|p{\linewidth-2\arraycolsep-2\arrayrulewidth-2pt}|@{}}%
   \hline\framedhslinecorrect\\{-1.5ex}%
   \let\endoflinesave=\\
   \let\\=\@normalcr
   \(\pboxed}%
  {\endpboxed\)%
   \framedhslinecorrect\endoflinesave{.5ex}\hline
   \endtabular
   \parskip=\belowdisplayskip\par\noindent
   \ignorespacesafterend}

\newcommand{\framedhslinecorrect}[2]%
  {#1[#2]}


%
  {\(\def\column##1##2{}%
   \let\>\undefined\let\<\undefined\let\\\undefined
   \newcommand\>[1][]{}\newcommand\<[1][]{}\newcommand\\[1][]{}%
   \def\fromto##1##2##3{##3}%
   }{\) }%


%
  {\let\orighscode=\hscode
   \let\origendhscode=\endhscode
   \def\endhscode{\def\hscode{\endgroup\def\@currenvir{hscode}\\}\begingroup}
   \orighscode\def\hscode{\endgroup\def\@currenvir{hscode}}}%
  {\origendhscode
   \global\let\hscode=\orighscode
   \global\let\endhscode=\origendhscode}%

\makeatother
\EndFmtInput
%

\usepackage{enumerate}
\usepackage{hyperref}


\usepackage{amssymb} 
\usepackage{amsmath} 
\usepackage{mathpartir} 
\message{<Paul Taylor's Proof Trees, 2 August 1996>}

\newdimen\proofrulebreadth \proofrulebreadth=.05em
\newdimen\proofdotseparation \proofdotseparation=1.25ex
\newdimen\proofrulebaseline \proofrulebaseline=2ex
\newcount\proofdotnumber \proofdotnumber=3
\let\then\relax
\def\hfi{\hskip0pt plus.0001fil}
\mathchardef\squigto="3A3B
%
\newif\ifinsideprooftree\insideprooftreefalse
\newif\ifonleftofproofrule\onleftofproofrulefalse
\newif\ifproofdots\proofdotsfalse
\newif\ifdoubleproof\doubleprooffalse
\let\wereinproofbit\relax
%
\newdimen\shortenproofleft
\newdimen\shortenproofright
\newdimen\proofbelowshift
\newbox\proofabove
\newbox\proofbelow
\newbox\proofrulename
%
\def\shiftproofbelow{\let\next\relax\afterassignment\setshiftproofbelow\dimen0 }
\def\shiftproofbelowneg{\def\next{\multiply\dimen0 by-1 }%
\afterassignment\setshiftproofbelow\dimen0 }
\def\setshiftproofbelow{\next\proofbelowshift=\dimen0 }
\def\setproofrulebreadth{\proofrulebreadth}

\def\prooftree{
%
\ifnum  \lastpenalty=1
\then   \unpenalty
\else   \onleftofproofrulefalse
\fi
%
\ifonleftofproofrule
\else   \ifinsideprooftree
        \then   \hskip.5em plus1fil
        \fi
\fi
%
\bgroup
\setbox\proofbelow=\hbox{}\setbox\proofrulename=\hbox{}%
\let\justifies\proofover\let\leadsto\proofoverdots\let\Justifies\proofoverdbl
\let\using\proofusing\let\[\prooftree
\ifinsideprooftree\let\]\endprooftree\fi
\proofdotsfalse\doubleprooffalse
\let\thickness\setproofrulebreadth
\let\shiftright\shiftproofbelow \let\shift\shiftproofbelow
\let\shiftleft\shiftproofbelowneg
\let\ifwasinsideprooftree\ifinsideprooftree
\insideprooftreetrue
%
\setbox\proofabove=\hbox\bgroup$\displaystyle 
\let\wereinproofbit\prooftree
%
\shortenproofleft=0pt \shortenproofright=0pt \proofbelowshift=0pt
%
\onleftofproofruletrue\penalty1
}

\def\eproofbit{
%
\ifx    \wereinproofbit\prooftree
\then   \ifcase \lastpenalty
        \then   \shortenproofright=0pt  
        \or     \unpenalty\hfil         
        \or     \unpenalty\unskip       
        \else   \shortenproofright=0pt  
        \fi
\fi
%
\global\dimen0=\shortenproofleft
\global\dimen1=\shortenproofright
\global\dimen2=\proofrulebreadth
\global\dimen3=\proofbelowshift
\global\dimen4=\proofdotseparation
\global\count255=\proofdotnumber
%
$\egroup  
%
\shortenproofleft=\dimen0
\shortenproofright=\dimen1
\proofrulebreadth=\dimen2
\proofbelowshift=\dimen3
\proofdotseparation=\dimen4
\proofdotnumber=\count255
}

\def\proofover{
\eproofbit 
\setbox\proofbelow=\hbox\bgroup 
\let\wereinproofbit\proofover
$\displaystyle
}%
%
\def\proofoverdbl{
\eproofbit 
\doubleprooftrue
\setbox\proofbelow=\hbox\bgroup 
\let\wereinproofbit\proofoverdbl
$\displaystyle
}%
%
\def\proofoverdots{
\eproofbit 
\proofdotstrue
\setbox\proofbelow=\hbox\bgroup 
\let\wereinproofbit\proofoverdots
$\displaystyle
}%
%
\def\proofusing{
\eproofbit 
\setbox\proofrulename=\hbox\bgroup 
\let\wereinproofbit\proofusing
\kern0.3em$
}

\def\endprooftree{
\eproofbit 
  \dimen5 =0pt
%
\dimen0=\wd\proofabove \advance\dimen0-\shortenproofleft
\advance\dimen0-\shortenproofright
%
\dimen1=.5\dimen0 \advance\dimen1-.5\wd\proofbelow
\dimen4=\dimen1
\advance\dimen1\proofbelowshift \advance\dimen4-\proofbelowshift
%
\ifdim  \dimen1<0pt
\then   \advance\shortenproofleft\dimen1
        \advance\dimen0-\dimen1
        \dimen1=0pt
        \ifdim  \shortenproofleft<0pt
        \then   \setbox\proofabove=\hbox{%
                        \kern-\shortenproofleft\unhbox\proofabove}%
                \shortenproofleft=0pt
        \fi
\fi
%
\ifdim  \dimen4<0pt
\then   \advance\shortenproofright\dimen4
        \advance\dimen0-\dimen4
        \dimen4=0pt
\fi
%
\ifdim  \shortenproofright<\wd\proofrulename
\then   \shortenproofright=\wd\proofrulename
\fi
%
\dimen2=\shortenproofleft \advance\dimen2 by\dimen1
\dimen3=\shortenproofright\advance\dimen3 by\dimen4
%
\ifproofdots
\then
        \dimen6=\shortenproofleft \advance\dimen6 .5\dimen0
        \setbox1=\vbox to\proofdotseparation{\vss\hbox{$\cdot$}\vss}%
        \setbox0=\hbox{%
                \advance\dimen6-.5\wd1
                \kern\dimen6
                $\vcenter to\proofdotnumber\proofdotseparation
                        {\leaders\box1\vfill}$%
                \unhbox\proofrulename}%
\else   \dimen6=\fontdimen22\the\textfont2 
        \dimen7=\dimen6
        \advance\dimen6by.5\proofrulebreadth
        \advance\dimen7by-.5\proofrulebreadth
        \setbox0=\hbox{%
                \kern\shortenproofleft
                \ifdoubleproof
                \then   \hbox to\dimen0{%
                        $\mathsurround0pt\mathord=\mkern-6mu%
                        \cleaders\hbox{$\mkern-2mu=\mkern-2mu$}\hfill
                        \mkern-6mu\mathord=$}%
                \else   \vrule height\dimen6 depth-\dimen7 width\dimen0
                \fi
                \unhbox\proofrulename}%
        \ht0=\dimen6 \dp0=-\dimen7
\fi
%
\let\doll\relax
\ifwasinsideprooftree
\then   \let\VBOX\vbox
\else   \ifmmode\else$\let\doll=$\fi
        \let\VBOX\vcenter
\fi
\VBOX   {\baselineskip\proofrulebaseline \lineskip.2ex
        \expandafter\lineskiplimit\ifproofdots0ex\else-0.6ex\fi
        \hbox   spread\dimen5   {\hfi\unhbox\proofabove\hfi}%
        \hbox{\box0}%
        \hbox   {\kern\dimen2 \box\proofbelow}}\doll%
%
\global\dimen2=\dimen2
\global\dimen3=\dimen3
\egroup 
\ifonleftofproofrule
\then   \shortenproofleft=\dimen2
\fi
\shortenproofright=\dimen3
%
\onleftofproofrulefalse
\ifinsideprooftree
\then   \hskip.5em plus 1fil \penalty2
\fi
}


\theoremstyle{plain}
\def\eg{{\em e.g.}}
\def\cf{{\em cf.}}
\def\ie{{\em i.e.}}

\newcommand{\LONGVERSION}[1]{#1}
\newcommand{\SHORTVERSION}[1]{}
\newcommand{\proofLine}[2]{&#1&&\text{#2}}
\newcommand{\textdbldagger}{\textdagger\!\!\textdagger}
\newcommand{\textdbldaggerdbl}{\textdaggerdbl\!\!\textdaggerdbl}

\newcommand{\abelmail}{andreas.abel@ifi.lmu.de}
\newcommand{\coquandmail}{coquand@chalmers.se}
\newcommand{\paganomail}{pagano@famaf.unc.edu.ar}
\newcommand{\bla}{\ensuremath{\mbox{$$}}} 

\newcommand{\of}{\!:\!}
\newcommand{\subtype}{\mathrel{\mathord{<}\mathord{:}}}
\newcommand{\derN}{\vdash}
\newcommand{\FV}{\mathsf{FV}}
\newcommand{\ru}{\dfrac}
\newcommand{\NN}{\mathbb{N}}
\def\bprg#1\eprg{\begin{multline*}\qquad#1\end{multline*}}

\newcommand{\ABwhere}[3]{#1 \; \mathsf{where}\; [#2] \leftarrow #3}

\newcommand{\lambdaSing}{\texorpdfstring{\ensuremath{\lambda^{\mathsf{Sing}}}}{Singletons}}
\newcommand{\lambdaPI}{\texorpdfstring{\ensuremath{\lambda^{\mathsf{Irr}}}}{Proof-irrelevance}}
\newcommand{\lambdaIrr}{\lambdaPI}

\newcommand{\Vecraw}{\mathsf{Vec}}
\newcommand{\tVec}[1]{\Vecraw\;#1\;}
\newcommand{\tleqraw}{\mathsf{leq}}
\newcommand{\tleq}[1]{\tleqraw\;#1\;}
\newcommand{\Trueraw}{\mathsf{True}}
\newcommand{\TrueC}{\Trueraw\;}
\newcommand{\tlookupraw}{\mathsf{lookup}}
\newcommand{\tlookup}[4]{\tlookupraw\;#1\;#2\;#3\;#4\;}
\newcommand{\funT}[2]{(#1 \of #2) \to}
\newcommand{\funS}[2]{(#1 : #2) \to}
\newcommand{\sigT}[2]{(#1 \of #2) \times}
\newcommand{\Bool}{\mathsf{Bool}}
\newcommand{\ttrue}{\mathsf{true}}
\newcommand{\tfalse}{\mathsf{false}}
\newcommand{\Ltraw}{\mathsf{Lt}}
\newcommand{\Lt}[1]{\Ltraw\,#1\,}
\newcommand{\tmagicraw}{\mathsf{magic}}
\newcommand{\tmagic}[1]{\tmagicraw\,#1\,}

\newcommand{\den}{\semc}
\newcommand{\Den}{\semc}
\newcommand{\Lam}{\mathsf{Lam}}
\newcommand{\App}{\mathsf{App}}
\newcommand{\D}{\mathsf{D}}

\newcommand{\tq}{\mathsf{q}}
\newcommand{\tp}{\mathsf{p}}
\newcommand{\up}{\upa}
\newcommand{\down}{\da}
\newcommand{\dapp}{\cdot}

\newcommand{\A}{\mathcal{A}}
\newcommand{\seq}[1]{#1^*}
\newcommand{\Aseq}{\seq\A}
\newcommand{\tfst}{\mathsf{fst}}
\newcommand{\tsnd}{\mathsf{snd}}
\newcommand{\tPair}{\mathsf{Pair}}
\newcommand{\update}[3]{#1[#3/#2]}

\newcommand{\tyrule}[3]{\inferrule* [right=(#1)] {#2} {#3}}

\newcommand{\algrule}[2]{\inferrule* {#1} {#2}} 

\newcommand{\rulename}[1]{\ensuremath{\mbox{\textsc{(#1)}}}}

\newcommand{\ctx}{\mathsf{Ctx}}
\newcommand{\type}[1]{\mathsf{Type}(#1)}
\newcommand{\term}[2]{\mathsf{Term}(#1,#2)}

\newcommand{\into}{\rightarrow}

\newcommand{\ectx}{\diamond} 
\newcommand{\ctxe}[2]{#1.#2} 

\newcommand{\idsubs}[1]{\mathsf{id}_{#1}}
\newcommand{\esubs}{\langle\,\!\rangle}
\newcommand{\exsubs}[2]{( #1, #2)}
\newcommand{\subsc}[2]{#1\,#2}
\newcommand{\subsTm}[2]{#1\,#2}
\newcommand{\subsTy}[2]{#1\,#2}

\newcommand{\TmU}{\mathsf{U}}
\newcommand{\F}[2]{\mathsf{Fun}\,#1\,#2}
\newcommand{\DSum}[2]{\Sigma\,#1\,#2}
\newcommand{\p}{\mathsf{p}}
\newcommand{\q}{\mathsf{q}}
\newcommand{\appTm}[2]{\mathsf{app}\;#1\;#2} 
\newcommand{\singTm}[2]{\{#1\}_{#2}}
\newcommand{\depair}[2]{(#1,#2)}
\newcommand{\dfst}[1]{\mathsf{fst}\;#1}
\newcommand{\dsnd}[1]{\mathsf{snd}\;#1}
\newcommand{\natty}{\mathsf{Nat}}
\newcommand{\ztm}{\mathsf{zero}}
\newcommand{\suctm}[1]{\mathsf{suc}\;#1}
\newcommand{\natrecraw}{\mathsf{natrec}}
\newcommand{\natrec}[4]{\natrecraw\;#1\;#2\;#3\;#4}

\newcommand{\boxty}[1]{[#1]}
\newcommand{\boxtm}[1]{[#1]}
\newcommand{\subid}[2]{\exsubs{\idsubs{#1}}{#2}}
\newcommand{\ind}[1]{\mathsf{v}_{#1}} 

\newcommand{\dctx}[1]{#1\vdash} 
\newcommand{\dtype}[2]{#1\vdash#2} 
\newcommand{\dterm}[3]{#1\vdash#3:#2} 
\newcommand{\dsubs}[3]{#1\vdash#3:#2} 
\newcommand{\deqctx}[2]{\vdash#1=#2} 
\newcommand{\deqtype}[3]{#1\vdash#2=#3} 
\newcommand{\deqterm}[4]{#1\vdash#3=#4:#2} 
\newcommand{\deqsubs}[4]{#1\vdash #3=#4: #2}

\newcommand{\vdashp}{\vdash^{\oprf}}
\newcommand{\sdctx}[1]{#1\vdashp} 
\newcommand{\sdtype}[2]{#1\vdashp#2} 
\newcommand{\sdterm}[3]{#1\vdashp#3:#2} 
\newcommand{\sdsubs}[3]{#1\vdashp#3:#2} 
 
\newcommand{\sdeqtype}[3]{#1\vdashp#2=#3} 
\newcommand{\sdeqterm}[4]{#1\vdashp#3=#4:#2} 
\newcommand{\sdeqsubs}[4]{#1\vdashp #3=#4: #2}

\newcommand{\lift}[2]{\subsTm{#2}{\p^{#1}}}
\newcommand{\Da}[1]{\mathop{\Downarrow} #1} 
\newcommand{\upa}[2]{\mathop{\uparrow}\nolimits_{#1}{#2}} 
\newcommand{\da}[2]{\mathop{\downarrow}\nolimits_{#1}{#2}}
\newcommand{\reify}[2]{\mathsf{R}_{#1}\,#2}
\newcommand{\reifyC}[2]{\mathsf{R}_{|#1|}\,#2}

\newcommand{\calR}{\mathcal{R}}
\newcommand{\calX}{\mathcal{X}}
\newcommand{\calF}{\mathcal{F}}
\newcommand{\X}{\mathcal{X}}
\newcommand{\smashed}[1]{|#1|}
\newcommand{\singD}[2]{\{\!\!\{#1\}\!\!\}_{#2}}

\newcommand{\sigD}[2]{\coprod\,#1\,#2}

\newcommand{\dom}{\mathit{dom}}
\newcommand{\vdom}{\dom}
\newcommand{\perD}{\mbox{PER}(D)}
\newcommand{\perne}{\mathord{\mathcal{N}\!\mathit{e}}}
\newcommand{\pernf}{\mathord{\mathcal{N}\!\mathit{f}}}
\newcommand{\perU}{\mathcal{U}}
\newcommand{\perT}{\mathcal{T}}
\newcommand{\Per}{\mathsf{Per}}

\newcommand{\Var}{\mathit{Var}}

\newcommand{\semc}[1]{[\![#1]\!]} 
\newcommand{\rel}{\sim}

\newcommand{\one}{\mathbf{1}}

\newcommand{\iLam}[1]{\mathsf{Lam}\,#1}
\newcommand{\iNe}[2]{\mathsf{App}\,#1\,#2}
\newcommand{\iPair}[2]{(#1 , #2)}
\newcommand{\iVar}[1]{\mathsf{Var}\,x_{#1}}
\newcommand{\iO}{\top}
\newcommand{\iU}{\mathsf{U}}
\newcommand{\iPi}[2]{\mathsf{Fun}\,#1\,#2}
\newcommand{\iSing}[2]{\mathsf{Sing}\,#1\,{#2}}
\newcommand{\iDs}[2]{\mathsf{Sum}\,#1\,#2}

\newcommand{\Fst}{\mathsf{Fst}}
\newcommand{\Snd}{\mathsf{Snd}}
\newcommand{\iBoolF}[1]{\Fst\;#1}
\newcommand{\iBoolT}[1]{\Snd\;#1}

\newcommand{\fst}{\tfst}
\newcommand{\snd}{\tsnd}

\newcommand{\fstnew}{\tfst\,}
\newcommand{\sndnew}{\tsnd\,}

\newcommand{\ertype}[1]{\overline{#1}}

\newcommand{\terms}{\mathit{Terms}}

\newcommand{\neterms}{\mathit{Ne}}
\newcommand{\nfterms}{\mathit{Nf}}

\newcommand{\chktype}[2]{#1 \vdash #2\, \mathord{\Leftarrow}}
\newcommand{\chkterm}[3]{#1\vdash#3\Leftarrow #2}
\newcommand{\inftype}[3]{#1\vdash #2\Rightarrow #3}

\newcommand{\nbe}[1]{\mathbf{nbe}(#1)}
\newcommand{\nbety}[2]{\mathbf{nbe}_{#1}(#2)}
\newcommand{\nbetm}[3]{\mathbf{nbe}_{#1}^{#2}(#3)}
\newcommand{\norm}[1]{\mathbf{nbe}(#1)}
\newcommand{\normty}[2]{\mathbf{nbe}_{#1}(#2)}
\newcommand{\normtm}[3]{\mathbf{nbe}_{#1}^{#2}(#3)}

\newcommand{\ruleref}[1]{(\RefTirName{#1})}
\newcommand{\oprf}{\dprf}
\newcommand{\dprf}{\star}
\renewcommand{\boxty}[1]{\mathsf{Prf}\,#1}
\newcommand{\boxpar}[1]{\mathsf{Prf}\,(#1)}
\newcommand{\prfraw}{\mathsf{Prf}}
\newcommand{\prf}[1]{\mathsf{Prf}\,#1}
\newcommand{\recty}[1]{\mathsf{Rec}(#1)}
\newcommand{\whereraw}{\mathsf{where}}
\newcommand{\wheretm}[3]{#1\,\whereraw\!^{#3}\,#2}

\newcommand{\elimraw}{\mathsf{case}}
\newcommand{\Elimraw}{\mathsf{Case}}
\newcommand{\enum}[1]{\mathsf{N}_{#1}}
\newcommand{\elim}[4]{\elimraw\!^{#1}\;#2\;#3\;#4}
\newcommand{\elimn}[3]{\elimraw\;#1\;#2\;#3}
\newcommand{\const}[2]{\mathsf{c}^{#1}_{#2}}
\newcommand{\constn}[1]{\mathsf{c}_{#1}}
\newcommand{\constD}[2]{\mathsf{c}^{#1}_{#2}}
\newcommand{\elimD}[4]{\Elimraw\!^{#1}(#2, #3, #4)}
\newcommand{\enumD}[1]{\mathsf{N}_{#1}}
\newcommand{\tuple}[1]{\langle #1\rangle}
\newcommand{\erecraw}{\elimraw}
\newcommand{\erec}[4]{\erecraw^{#1}(#2,#3,#4)}

\newcommand{\enumrule}[2][n]{n\ensuremath{_{#1}}-{#2}}
\newcommand{\C}[1]{\ensuremath{\mathcal{C}_{#1}}}
\newcommand{\tnrecd}{[D\into D]\times D\times[D\into [D\into D]]\times D}
\newcommand{\iZero}{\ztm}
\newcommand{\iNat}{\natty}
\newcommand{\perNat}{\mathcal{N}}
\newcommand{\nrecraw}{\mathsf{natrec}}
\newcommand{\Nrecraw}{\mathsf{Natrec}}
\newcommand{\iNrec}[4]{\Nrecraw(#1,#2,#3,#4)}
\newcommand{\drec}[4]{\nrecraw(#1,#2,#3,#4)}
\newcommand{\iSuc}[1]{\mathsf{suc}\,#1}
\ifdefined\LONGVERSION
  \relax
\else
\newcommand{\LONGVERSION}[1]{}
\newcommand{\SHORTVERSION}[1]{#1}
\fi
\newcommand{\LONGSHORT}[2]{\LONGVERSION{#1}\SHORTVERSION{#2}}

\newcommand{\REDUNDANT}[1]{}
\newcommand{\EXPLAINREDUNDANT}[1]{#1}

\newcommand{\PrfIrrTitle}{\subsection{Calculus \lambdaPI with proof irrelevance}}

\newcommand{\para}[1]{
\LONGSHORT{\paragraph{\it #1.}}
          {\vspace{1ex}\noindent{\it #1.}}
}
\newcommand{\localpara}[1]{\noindent #1.}

\def\doi{7 (2:4) 2011}
\lmcsheading%
{\doi}
{1--57}
{}
{}
{Nov.~15, 2009}
{May~\phantom{.~0}4, 2011}
{}   

\begin{document}

\title[A Modular Type-Checking Algorithm for Type Theory]{A Modular
  Type-Checking Algorithm for \\ Type Theory with Singleton Types and
  \\ Proof Irrelevance}

\author[A.~Abel]{Andreas Abel\rsuper a}	
\address{{\lsuper a}Ludwig-Maximilians-Universit\"{a}t M\"{u}nchen}	
\email{\abelmail}  
\thanks{{\lsuper a}Supported by INRIA 
as guest researcher in the PI.R2 team, PPS,
Paris, France, from October 2009 to March 2010.}	

\author[T.~Coquand]{Thierry Coquand\rsuper b}	
\address{{\lsuper b}G\"{o}teborg University}	
\email{\coquandmail}  

\author[M.~Pagano]{Miguel Pagano\rsuper c} 
\address{{\lsuper c}Universidad Nacional de C\'{o}rdoba} 
\email{\paganomail} 
\thanks{{\lsuper c}Partially supported by CONICET, Argentina.} 



\keywords{type theory, type-checking, normalisation-by-evaluation,
  singleton types, proof-irrelevance} 

\subjclass{F.4.1}



\begin{abstract}
  \noindent We define a logical framework with singleton types and one
  universe of small types. We give the semantics using a PER model; it
  is used for constructing a normalis\-ation-by-evaluation algorithm.
  We prove completeness and soundness of the algorithm; and get as a
  corollary the injectivity of type constructors. Then we give the
  definition of a correct and complete type-checking algorithm for
  terms in normal form. We extend the results to proof-irrelevant
  propositions.
\end{abstract}

\maketitle

\newtheorem{lemma}[thm]{Lemma}
\newtheorem{corollary}[thm]{Corollary}
\newtheorem{remark}[thm]{Remark}
\newtheorem{definition}[thm]{Definition}
\newtheorem{theorem}[thm]{Theorem}
\section{Introduction and Related Work}
\label{sec:intro}
\setcounter{footnote}{0} 

\noindent One of the raisons d'\^{e}tre of
proof-checkers like Agda \cite{norell:PhD}, Coq \cite{coq81}, and
Epigram \cite{mcBride:afp04} is to decide if a given term has some
type (either checking for a given type or inferring one); i.e., if
a term corresponds to a proof of a proposition
\cite{harperHonsellPlotkin:LF}.  Hence, the convenience of such a
system is, in part, determined by the types for which the system can
check membership. We extend the decidability of type-checking done in
previous works \cite{abelAehligDybjer:mfps07,abelCoquandDybjer:lics07}
for Martin-L\"{o}f type theories \cite{mlitt,nordstroem:mltt} by
considering singleton types and proof-irrelevant propositions.

\LONGVERSION{We consider a type theory with a \emph{universe}, which
  allows large eliminations, i.e., types defined by recursion on
  natural numbers. 
The universe of small types was introduced by
  Martin-L\"{o}f \cite{martinlof72} for formalising category theory.
  Martin-L\"of presents universes in two different styles \cite{mlitt}: 
  {\em \`a la Russell} (the one considered here), and {\em \`a la Tarski}.}

\emph{Singleton types} were introduced by Aspinall \cite{aspinall:csl94} in
the context of specification languages. An important use of singletons
is as definitions by abbreviations (see
\cite{aspinall:csl94,coquandPollackTakeyama:fundinf05}); they were
also used to model translucent sums in the formalisation of SML
\cite{harper:popl07}.  It is interesting to consider singleton types
because beta-eta phase separation fails: one cannot do eta-expansion before
beta-normalisation of types because the shape of the types at which to
eta-expand is still unknown at this point; and one cannot postpone
eta-expansion after beta-normalisation, because eta-expansion at
singleton type can
trigger new beta-reductions. 
%
%
Stone and Harper \cite{stoneHarper:tocl06} decide type checking in a
logical framework (LF) with singleton types and subtyping.  Yet it is not clear
whether their method extends to computation on the type level.  As far
as we know, our work is the first where singleton types are considered
together with a universe.

De Bruijn proposed the concept of \emph{irrelevance of proofs} \cite{aut-4},
for reducing the burden in the formalisation of mathematics.  As shown
by Werner \cite{werner:strengthProofIrrelevance}, the use of
proof-irrelevance types together with sigma types is one way to get
subset types \`a la PVS \cite{pvs-sub} in type-theories having the
eta rule.  This style of subset types was also explored by Sozeau
\cite[Sec.~3.3]{sozeau:types06}\LONGVERSION{; for another presentation of subset types in
  Martin-L\"{o}f type-theory see \cite{sambin}}.
\LONGVERSION{Berardi conjectured that
  (impredicative) type-theory with proof-irrelevance is equivalent to
  constructive mathematics \cite{berardi}.}

Checking dependent types relies on checking types for equality.  To
this end, we compute $\eta$-long normal forms using
\emph{normalisation by evaluation} (NbE) \cite{martinlof:jaist04}.  
Syntactic expressions are evaluated into a semantic domain and then
\emph{reified} back to expressions in normal form.
To handle functional and
open expressions, the semantic domain has to be equipped
with variables; a major challenge in rigorous treatments of NbE has
been the problem to generate fresh identifiers.
Solutions include term families \cite{bergerSchwichtenberg:lics91},
liftable de Bruijn terms \cite{aehlig:nbe}, or Kripke semantics
\cite{abelCoquandDybjer:mpc08}.  In this work we present a novel
formulation of NbE which avoids the problem completely: reification is
split into an $\eta$-expansion phase $(\downarrow)$ in the semantics,
followed by a read back function ($\mathsf{R}$)
into the syntax which is indexed by
the number of already used variables.  This way, a standard PER
model is sufficient, and technical difficulties are avoided.

\paragraph{Outline.} 
In Section~\ref{sec:cwf-gat}, we first present \lambdaSing,
Martin-L\"of's logical framework with one universe and singleton
types, as a generalized algebraic theory \cite{cartmell}.  Secondly,
we introduce \lambdaPI, Martin-L\"of type theory with natural numbers,
sigma types, and proof-irrelevant propositions.  In
Section~\ref{sec:examples}, we show some examples using singleton types and
proof-irrelevant types.  In Section~\ref{sec:nbe-primer}, we present
briefly NbE for untyped and simply typed lambda calculi; in particular
we illustrate our novel approach to generate fresh identifiers.  In
Section~\ref{sec:semantics}, we define the semantics of the type
theories by a PER model and prove the soundness of the inference rules.
We use this model to introduce a normalization algorithm
$\textbf{nbe}$, 
for which we prove completeness (if $t = t'$ is derivable,
then $\nbe{t}$ and $\nbe{t'}$ are identical). 
The soundness of the algorithm (i.e., $t = \nbe{t}$ is derivable)
is proven by logical relations in Section~\ref{sec:logrel}.
In Section~\ref{sec:proof-alg}, we define a bi-directional algorithm for
checking the type of normal forms and inferring the type of neutral
terms.  More related work is discussed in Section~\ref{sec:related}.
\LONGVERSION{ The Haskell programs corresponding to the NbE, and
  type-checking algorithms are shown in the
  appendices~\ref{sec:nbealg} and~\ref{sec:alg}, respectively.  }


\section{The Calculus as a Generalised Algebraic Theory}
\label{sec:cwf-gat}
 
\noindent In this section, we introduce the type theory.  In order to show the
modularity of our approach, we present it as two calculi $\lambdaSing$
and $\lambdaPI$: the first one has dependent function spaces,
singleton types, and a universe closed under function spaces and
singletons. In the second calculus we leave out singleton types and we
add natural numbers, sigma types, and proof-irrelevant propositions.
It is not clear if singleton types can be combined with
proof-irrelevant propositions without turning the system inconsistent.

We present the calculi using the formalism proposed by Cartmell for
generalised algebraic theories (GAT) \cite{cartmell}.  A GAT consists
of \emph{sort symbols} and \emph{operator symbols}, each with
a dependent typing, and \emph{equations} between \emph{sort
  expressions} and \emph{terms} (``operator expressions'').
Following Dybjer \cite{dybjer:internalTypeTheory}, we are using
``informal syntax'' where redundant arguments to operators are left
implicit. 

\LONGSHORT{
\subsection{Calculus \lambdaSing with singleton types}
\label{sec:calc-sing}
}{
\subsubsection*{Calculus with singleton types.}
\label{sec:calc-sing}
}

\LONGVERSION{
  We use capital Greek letters ($\Gamma,\Delta$) for variables ranging
  over contexts; capital letters from the beginning of the Latin
  alphabet ($A,B$) for variables ranging over types; small Greek
  letters ($\delta,\rho,\sigma$) are used for variables denoting
  substitutions; and minuscule Latin characters ($r,s,t,u,a,b$) for
  variables on terms. Words in sans face denote constants (e.g.,
  $\mathsf{Type},\mathsf{q}$).  
}

\subsubsection{Sorts}
\label{sec:sorts}
The set of sort symbols is $\{\ctx, \into, \mathsf{Type},
\mathsf{Term} \}$ and their formation rules, in the sense of Cartmell's
GATs, are:
\begin{gather*}
  \tyrule{ctx-sort}{ }{\ctx \mbox{ is a type}} 
\qquad
  \tyrule{subs-sort}{\Gamma,\Delta \in \ctx}{\Gamma\into\Delta \mbox{
      is a type}} 
\\[2ex]
  \tyrule{type-sort}{\Gamma \in \ctx }{\type{\Gamma} \mbox{ is a
      type}}
\qquad
  \tyrule{term-sort}{\Gamma \in \ctx\\ A\in \type{\Gamma}}
  {\term{\Gamma}{A} \mbox{ is a type}}
\end{gather*}
%
\EXPLAINREDUNDANT{ In the following, whenever a rule has a hypothesis
  $A \in \type\Gamma$, then $\Gamma \in \ctx$ shall be a further,
  implicit hypothesis.  Similarly, $\sigma \in \Gamma \into \Delta$
  presupposes $\Gamma \in \ctx$ and $\Delta \in \ctx$, and $t \in
  \term\Gamma A$ presupposes $A \in \type\Gamma$, which in turn
  presupposes $\Gamma \in \ctx$.  } Note that judgements of the form
$\Gamma~\in~\ctx$, $A\in \type{\Gamma}$, $t \in \term{\Gamma}{A}$, and
$\sigma \in \Gamma\into\Delta$ correspond to the more conventional
forms $\dctx{\Gamma}$, $\dtype{\Gamma}{A}$, $\dterm{\Gamma}{A}{t}$, and
$\dsubs{\Gamma}{\Delta}{\sigma}$, resp.  After we have defined the
judgements, we will use the latter, more readable versions.

\subsubsection{Operators}
\label{sec:op}
The set of operators is quite large and instead of giving it at once,
we define it as the union of the disjoint sets of operators for
contexts, substitutions, types, and terms.

\para{Contexts}
\label{sec:ctx}
There are the usual two operators for constructing contexts: $S_C = \{\ectx,
\ctxe{\_}{\_}\}$.

\begin{gather*}
  \tyrule{empty-ctx}{ }{\ectx \in \ctx} \qquad
  \tyrule{ext-ctx}{\Gamma \in \ctx\\ A\in \type{\Gamma}}{\ctxe{\Gamma}{A}\in \ctx}
\end{gather*}

\para{Substitutions}
\label{sec:subs}
We have five operators for substitutions, which are the usual
operators for explicit substitutions \cite{p31-abadi}: $S_S = \{
\esubs,\exsubs{\_}{\_}, \idsubs{\_},\subsc{\_}{\_}, \p\}$.
Semantically, substitutions $\sigma \in \Gamma \into \Delta$ are
sequences of values, one for every variable declaration in $\Delta$.
The sequences are constructed from the empty sequence $\esubs$ by
sequence extension $\exsubs \sigma t$.  Substitutions form a category
with identity $\idsubs \Gamma$ and composition $\subsc \sigma \delta$.
Finally, we have the first projection $\p$ on sequences.

\begin{gather*}
  \tyrule{empty-subs%
        }{\Gamma \in \ctx
        }{\esubs\in\Gamma\into\ectx}
\qquad
  \tyrule{ext-subs%
        }{\REDUNDANT{\Gamma,\Delta \in \ctx\\ } 
          \sigma\in\Gamma\into\Delta\\
          \REDUNDANT{A\in \type{\Delta}\\ } 
          t\in \term{\Gamma}{\subsTy{A}{\sigma}}
        }{\exsubs{\sigma}{t} \in\Gamma\into\ctxe{\Delta}{A}}
\end{gather*}
\begin{gather*}
  \tyrule{id-subs%
        }{\Gamma \in \ctx
        }{\idsubs{\Gamma}\in\Gamma\into\Gamma} 
\qquad
  \tyrule{comp-subs%
        }{\REDUNDANT{\Gamma,\Delta,\Theta \in \ctx\\ }
          \delta\in\Gamma\into\Theta\\ 
          \sigma\in\Theta\into\Delta
        }{\subsc{\sigma}{\delta}\in\Gamma\into\Delta}
\\
  \tyrule{fst-subs%
        }{\REDUNDANT{\Gamma \in \ctx\\ } 
          A\in \type{\Gamma}
        }{\p \in \ctxe{\Gamma}{A} \into\Gamma}
\end{gather*}

\para{Types}
\label{sec:types}
The set of operators for types is $S_T =
\{\TmU,\F{\_}{\_},\subsTy{\_}{\_},\singTm{\_}{\_}\}$. 
$\TmU$ is a universe of small types \`a la Russell, which means its
elements are directly usable as types \rulename{u-el} without
coercion. Besides dependent function types $\F A B$ we have the
singleton type $\singTm t A$---a subtype of $A$
containing $t$ as single inhabitant.  Types $A$ are closed under
substitution $\subsTy A \sigma$.

\begin{gather*}
  \tyrule{u-f}{\Gamma \in \ctx }{\TmU \in \type{\Gamma}}  
\quad
  \tyrule{u-el%
        }{\REDUNDANT{\Gamma \in \ctx\\ } 
          A\in \term{\Gamma}{\TmU}
        }{A \in \type{\Gamma}}
\quad
  \tyrule{fun-f%
        }{\REDUNDANT{\Gamma \in \ctx\\ } 
          A\in \type{\Gamma}\\ 
          B\in \type{\ctxe{\Gamma}{A}}
        }{\F{A}{B} \in \type{\Gamma}} 
\\[2ex]
  \tyrule{sing-f%
        }{\REDUNDANT{\Gamma \in \ctx\\ } 
          A\in \type{\Gamma}\\
          t\in \term{\Gamma}{A}
        }{\singTm{t}{A} \in \type{\Gamma}} 
\quad
  \tyrule{subs-type%
        }{\REDUNDANT{\Gamma,\Delta \in \ctx\\ } 
          A\in \type{\Delta}\\ 
          \sigma\in\Gamma\into\Delta
        }{\subsTy{A}{\sigma} \in \type{\Gamma}}
\end{gather*}

\para{Terms}
\label{sec:terms}
The set of operators for terms is $S_E =
\{\F{\_}{\_},\lambda\_,
\appTm{\_}{\_},\subsTm{\_}{\_},\q,\singTm{\_}{\_}\}$. 
It includes function space $\F A B$ and singleton $\singTm t A$ as
small-type constructors in $\TmU$.  Lambda terms with explicit
substitutions are obtained via the constructions $\lambda t$, $\appTm t
u$, $\q$, and $\subsTm t \sigma$.
Since we have used juxtaposition for composition and application of
substitutions, we have the explicit $\mathsf{app}$ for term application.  Note
that $\q$ stands for the top ($0$th) variable, the $n$th variable is
expressed as $\q\,\p^n$. 

\begin{gather*}
  \tyrule{fun-u-i%
        }{\REDUNDANT{\Gamma \in \ctx\\ } 
          A\in \term{\Gamma}{\TmU}\\
          B\in \term{\ctxe{\Gamma}{A}}{\TmU}
        }{\F{A}{B} \in \term{\Gamma}{\TmU}}
\qquad
  \tyrule{fun-i%
        }{\REDUNDANT{
            \Gamma \in \ctx\\ 
            A\in \type{\Gamma}\\ 
            B\in \type{\ctxe{\Gamma}{A}}\\
          } 
          t\in \term{\ctxe{\Gamma}{A}}{B}
        }{\lambda t \in \term{\Gamma}{\F{A}{B}}} 
\\[2ex]
  \tyrule{fun-el%
        }{\REDUNDANT{\Gamma \in \ctx\\ A\in \type{\Gamma} \\ }
          B \in \type{\ctxe{\Gamma}{A}}\\
          t\in \term{\Gamma}{\F{A}{B}}\\ 
          u\in \term{\Gamma}{A} 
        }{\appTm{t}{u}\in \term{\Gamma}{\subsTy{B}{\subid{\Gamma}{u}}}}
\\[2ex] 
    \tyrule{hyp%
          }{\REDUNDANT{\Gamma \in \ctx\\ } 
            A\in \type{\Gamma}
          }{\q \in \term{\ctxe{\Gamma}{A}}{\subsTy{A}{\p}}}
\qquad
  \tyrule{subs-term%
        }{\REDUNDANT{\Gamma,\Delta \in \ctx\\ A\in \type{\Delta}\\ }
          \sigma\in\Gamma\into\Delta\\ t\in \term{\Delta}{A}
        }{\subsTm{t}{\sigma} \in \term{\Gamma}{\subsTy{A}{\sigma}}}
\end{gather*}
\begin{gather*}
    \tyrule{sing-u-i%
          }{\REDUNDANT{\Gamma \in \ctx\\ } 
            A\in \term{\Gamma} {\TmU}\\ 
            t\in \term{\Gamma}{A}
          }{\singTm{t}{A} \in \term{\Gamma}{\TmU}}
\qquad
    \tyrule{sing-i%
          }{\REDUNDANT{\Gamma \in \ctx\\ A\in \type{\Gamma}\\ }
            t \in \term{\Gamma}{ A}\\
          }{t \in \term{\Gamma}{ \singTm{t}{A}}} 
\\[2ex]
    \tyrule{sing-el%
          }{\REDUNDANT{\\Gamma \in \ctx\\ A\in \type{\Gamma}\\ }
            a \in \term{\Gamma}{A}\\
            t\in \term{\Gamma}{\singTm{a}{A}}
          }{t \in \term{\Gamma}{A}}
\end{gather*}

\subsubsection{Axioms for Equational Theory}
\label{sec:ax}

In the following, we present the axioms of the equational theory of
$\lambdaSing$.  Equality is considered as the congruence closure of
these axioms.  Congruence rules, also called derived rules, are
generated mechanically for each symbol from its typing.  For instance,
rule (\textsc{subs-type}) induces the derived rule
\begin{equation*}
   \inferrule{A = B \in \type{\Gamma}\\ \gamma = \delta \in
  \Delta\into\Gamma}{\subsTy{A}{\gamma} = \subsTy{B}{\delta}\in
  \type{\Delta}}\label{eq:example} \enspace .
\end{equation*}
Another instance of a derived rule is conversion, it holds because
equality between sorts, such as $\term \Gamma A = \term \Gamma {A'}$:
\[
  \ru{t \in \term \Gamma A \qquad
      A = A' \in \type \Gamma
    }{t \in \term \Gamma {A'}}
\]\medskip

\noindent  In the following, we present equality axioms 
without the premises concerning typing, 
except in the cases where they cannot be inferred.

\para{Substitutions}  The first two equations witness extensionality
for the identity substitution, the next three the composition laws for
the category of substitutions.  Then there is a law for the first
projection $\p$, and the last two laws show how to propagate a substitution
$\delta$ into a tuple.
\label{sec:ax-subs}
\begin{align*}
  \idsubs{\ectx} &= \esubs& 
  \idsubs{\ctxe{\Gamma}{A}} & =  \exsubs{\p}{\q}& 
\\
  \subsc{\idsubs{}}{\sigma} &=  \sigma & 
  \subsc{\sigma}{\idsubs{}} &=  \sigma & 
\\
  \subsc{(\subsc{\sigma}{\delta})}{\gamma} &=  \subsc{\sigma}{(\subsc{\delta}{\gamma})} & 
  \subsc{\p}{\exsubs{\sigma}{t}} & =  \sigma & 
\\
  \subsc{\esubs}{\delta} &=  \esubs& 
  \subsc{\exsubs{\sigma}{t}}{\delta} &=
    \exsubs{\subsc{\sigma}{\delta}}{\subsTm{t}{\delta}}&
\end{align*}

\para{Axioms for $\beta$ and $\eta$, propagation and resolution of
  substitutions}
An explicit substitution $\subid{\Gamma}{r}$ is created by contracting
a $\beta$-redex (first law).  It is then propagated into the various term
constructions until it can be resolved (last two laws).
\begin{align*}
  \appTm{(\lambda t)}{r} &=  \subsTm{t}{\subid{\Gamma}{r}} &
  \lambda (\appTm{(\subsTm{t}{\p})}{\q})& =  t &
\\
  \subsTy{\TmU}{\sigma} &=  \TmU& 
  \subsTy{(\singTm{t}{A})}{\sigma} &=
    \singTm{\subsTm{t}{\sigma}}{\subsTy{A}{\sigma}}&
\\
  \subsTy{(\F{A}{B})}{\sigma} &=
    \F{(\subsTy{A}{\sigma})}{(\subsTy{B}{\exsubs{\subsc{\sigma}{\p}}}{\q})} & 
  \subsTm{(\lambda t)}{\sigma} &=  \lambda (\subsTm{t}
    {\exsubs{\subsc{\sigma}{\p}}{\q}})& 
\\
  \subsTm{(\appTm{r}{s})}{\sigma} &=
    \appTm{(\subsTm{r}{\sigma})}{(\subsTm{s}{\sigma})}&
  \subsTm{(\subsTm{t}{\delta})}{\sigma}& =
    \subsTm{t}{(\subsc{\delta}{\sigma})}  &
\\
  \subsTm{\q}{\exsubs{\sigma}{t}}&=  t& 
  \subsTm{t}{\idsubs{}} &=  t & 
  \end{align*}

\para{Singleton types} All inhabitants of a singleton type are equal
\rulename{sing-eq-i}.  We mention the important derived rule
$\rulename{sing-eq-el}$ here explicitly. 
\begin{gather*}
  \tyrule{sing-eq-i}{t,t'\in \term{\Gamma}{\singTm{a}{A}}}
  {t = t'  \in \term{\Gamma}{\singTm{a}{A}}}
\qquad
  \tyrule{sing-eq-el}{t = t' \in \term{\Gamma}{\singTm{a}{A}}}
  {t = t'  \in \term{\Gamma}{A}}
\end{gather*}
\LONGVERSION{There is a choice how to express the last two rules;
  they could be replaced with
\begin{center}
  \begin{tabular}{lr}
    $\tyrule{sing-eq-i'}{t \in
      \term{\Gamma}{\singTm{a}{A}}}{t=a\in\term{\Gamma}{\singTm{a}{A}}} $ &
    $\tyrule{sing-eq-el'}{t \in
      \term{\Gamma}{\singTm{a}{A}}}{t=a\in\term{\Gamma}{A}} $ \\
  \end{tabular}
\end{center}
The rule \ruleref{sing-eq-el} is essential; in fact, since we have
eta-expansion for singletons, we would like to derive
\begin{equation*}
  \label{eq:eq-fst-q}
  \deqterm{\ctxe{\Gamma}{\singTm{\lambda t}{\F{A}{B}}}}{\subsTy{B}{\subid{}{a}}}
  {\appTm{\q}{a}}{\subsTm{t}{\subid{}{a}}}
\end{equation*}
from $ \deqterm{\ctxe{\Gamma}{\singTm{\lambda t}{\F{A}{B}}}}
{\singTm{\lambda t}{\F{A}{B}}}{\q}{\lambda t}$, and
$\dterm{\Gamma}{A}{a}$. Which would be impossible if
\ruleref{sing-eq-el} were not a rule.


}

\para{Conventions}
We denote with $|\Gamma|$ the length of the context $\Gamma$; and
$\Gamma!i$ is the projection of the $i$-th component of $\Gamma$, for
$0\leqslant i < |\Gamma|$; i.e. if $\Gamma=A_{n-1}\ldots A_{0}$ and
$0 \leqslant i < n$, then $\Gamma!i = A_{i}$. We say
$\Delta\leqslant^{i}\Gamma$ if $\dsubs{\Delta}{\Gamma}{\p^{i}}$; where
$\p^i$ is the $i$-fold composition of $\p$ with itself.  
\footnote{The direction
$\Delta\leqslant^{i}\Gamma$ (as opposed to
$\Delta\geqslant^{i}\Gamma$) has been chosen to be compatible with
subtyping $A \leqslant B$.  Weakening (Remark~\ref{rem:weak}) is a
special case of subsumption which states that $\Delta \leqslant \Gamma
\derN t : A \leqslant B$ implies $\Delta \derN t : B$.}

We denote with $\terms$ the set of words freely generated using symbols in
$S_S\cup S_T\cup S_E$.  We write $t \equiv_T t'$ for denoting
syntactically equality of $t$ and $t'$ in $T \subseteq \terms$. We
call $A$ the {\em tag} of $\singTm{a}{A}$.

\LONGVERSION{
\begin{rem}
  \label{rem:lift}
  Note that if $\dsubs{\Delta}{\Gamma}{\p^i}$, and
  $\dsubs{\Gamma}{\Theta}{\p^j}$, then
  $\dsubs{\Delta}{\Theta}{\p^{i+j}}$.
\end{rem}
}
\begin{defi}[de Bruijn index] The $i$th de~Bruijn index $\ind i$
  is defined as
\[
  \ind i = \left\{
    \begin{array}{ll}
      \q     & \mbox{ if } i \leq  0 \\
      \q\p^i & \mbox{ if } i > 0 . \\
    \end{array}
\right.
\]
For convenience, we identify negative indices with the 0th index.
\end{defi}
The following grammar describes the set $\nfterms$ of $\beta$-normal
forms.  As auxiliary notion, 
it uses the set $\neterms$ of neutral normal forms, \ie,
normal forms with a variable in head position, which blocks reduction.
A bit sloppily, we refer to elements of $\neterms$ as ``neutral
terms''; in general, the attribute \emph{neutral} shall mean
\emph{variable in head position} (this is stricter than Girard's
concept of neutral \cite{girardLafontTaylor:proofsAndTypes}).
\begin{defi}[Neutral terms, and normal forms]
  \label{def:nfterms}
  \begin{align*}
    \neterms \ni k &::= \ind i 
      \mid \appTm{k}{v}&\\
    \nfterms \ni v, V,W &::= \TmU \mid \F{V}{W} \mid \singTm{v}{V}
    \mid \lambda v\mid k&
  \end{align*}
\end{defi}\medskip

\LONGVERSION{\noindent An advantage of introducing the calculus as a GAT is
  that we can derive several syntactical results from the meta-theory
  of GATs; for instance, some of the following inversion results,
  which are needed in the proof of completeness of the type-checking
  algorithm.  }
\begin{rem}[Weakening of judgements] 
  \label{rem:weak}
  Let $\Delta\leqslant^i\Gamma$, $\deqtype{\Gamma}{A}{A'}$, and
  $\deqterm{\Gamma}{A}{t}{t'}$; then
  $\deqtype{\Delta}{\lift{i}{A}}{\lift{i}{A'}}$, and
  $\deqterm{\Delta}{\lift{i}{A}}{\lift{i}{t}}{\lift{i}{t'}}$.
\end{rem}

\begin{rem}[Syntactic validity]
  \label{rem:invctx}\hfill
  \begin{enumerate}[\em(1)]
  \item If $\dterm{\Gamma}{A}{t}$, then $\dtype{\Gamma}{A}$.
  \item If $\deqterm{\Gamma}{A}{t}{t'}$, then both
    $\dterm{\Gamma}{A}{t}$, and $\dterm{\Gamma}{A}{t'}$.
  \item If $\deqtype{\Gamma}{A}{A'}$, then both $\dtype{\Gamma}{A}$,
    and $\dtype{\Gamma}{A'}$.
  \end{enumerate}
\end{rem}

\begin{lem}[Inversion of types]
  \label{lem:inv-ty}\hfill
  \begin{enumerate}[\em(1)]
  \item If $\dtype{\Gamma}{\F{A}{B}}$, then $\dtype{\Gamma}{A}$, and
    $\dtype{\ctxe{\Gamma}{A}}{B}$.
  \item If $\dtype{\Gamma}{\singTm{a}{A}}$, then $\dtype{\Gamma}{A}$,
    and $\dterm{\Gamma}{A}{a}$.
  \item If $\dtype{\Gamma}{k}$, then $\dterm{\Gamma}{\TmU}{k}$.
  \end{enumerate}
\end{lem}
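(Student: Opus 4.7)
The plan is to proceed by induction on derivations, exploiting the freeness of the term and type operator algebras: the outermost syntactic constructor of the conclusion heavily restricts which operator rules could have been applied last.

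For part (1), suppose $\dtype \Gamma {\F A B}$. Among the type-formation rules, only (fun-f) and (u-el) can conclude with an expression of this shape; the rules (u-f), (sing-f), and (subs-type) yield expressions headed respectively by $\TmU$, $\singTm{a}{A}$, and $\subsTy{A}{\sigma}$, none of which match $\F A B$. In the (fun-f) case, the premises $\dtype \Gamma A$ and $\dtype {\ctxe \Gamma A} B$ are precisely what we need. In the (u-el) case, we have a sub-derivation of $\dterm \Gamma \TmU {\F A B}$ that we invert in turn: among the term-formation rules, only (fun-u-i) introduces a term of this shape, while the coercions (sing-i) and (sing-el) preserve the term but alter its type, and every remaining rule (e.g.\ (fun-i), (fun-el), (hyp), (subs-term), (sing-u-i)) produces a term of an incompatible syntactic shape. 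A strong induction on derivation height peels off any interposed singleton coercions and eventually exposes (fun-u-i), from which we extract $A \in \term \Gamma \TmU$ and $B \in \term {\ctxe \Gamma A} \TmU$; applying (u-el) to each yields the desired type judgements.

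Part (2) is symmetric. Either the last rule is (sing-f), whose premises $\dtype \Gamma A$ and $\dterm \Gamma A a$ are what is required, or it is (u-el); in the latter case, the analogous strong-induction argument inverts $\dterm \Gamma \TmU {\singTm a A}$ down to (sing-u-i), yielding $A \in \term \Gamma \TmU$ and $a \in \term \Gamma A$, whence (u-el) gives $\dtype \Gamma A$. For part (3), since $k \in \neterms$, the outermost constructor of $k$ is either a term application $\appTm{k'}{v}$ or a de Bruijn index $\q$ or $\subsTm \q {\p^i}$; all four type-formation rules produce specifically shaped expressions ($\TmU$, $\F A B$, $\singTm a A$, $\subsTy A \sigma$) incompatible with $k$, so the last rule must be (u-el), immediately yielding $\dterm \Gamma \TmU k$.

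The main subtlety throughout is the pair of singleton coercions (sing-i) and (sing-el), which preserve the term expression while changing the associated type and thus may interpose arbitrarily between the (u-el) step and the genuine introduction rule; strong induction on derivation height absorbs these uniformly in parts (1) and (2).
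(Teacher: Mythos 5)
Your overall strategy --- induction on derivations with a case split on the last rule, observing that only \ruleref{fun-f} or \ruleref{u-el} (resp.\ \ruleref{sing-f} or \ruleref{u-el}) can conclude with a type of the given outermost shape, and then inverting the underlying term judgement --- is exactly the route the paper takes: it proves the companion Lemma~\ref{lem:inv-tm} this way and treats the present lemma as following by the same kind of case analysis. There are, however, two points where your argument as literally written would not go through. First, in the \ruleref{u-el} branch your case list for inverting $\dterm{\Gamma}{\TmU}{\F{A}{B}}$ omits the conversion rule ($t\in\term{\Gamma}{A}$ and $A=A'\in\type{\Gamma}$ give $t\in\term{\Gamma}{A'}$), which, exactly like \ruleref{sing-i} and \ruleref{sing-el}, preserves the term while changing its type; the paper explicitly lists \ruleref{conv} among the possible last rules. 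Second, and more importantly, after one peeling step the premise has the form $\dterm{\Gamma}{C}{\F{A}{B}}$ where $C$ is no longer syntactically $\TmU$, so the statement you induct on cannot be fixed at the type $\TmU$: you must prove the strengthened claim that $\dterm{\Gamma}{C}{\F{A'}{B'}}$ for an \emph{arbitrary} $C$ yields $\deqtype{\Gamma}{C}{\TmU}$ together with the well-typedness of $A'$ and $B'$ (this is precisely the paper's Lemma~\ref{lem:inv-tm}(1)). Your remark that ``strong induction on derivation height absorbs these uniformly'' gestures at this, but without the generalized induction statement the hypothesis is simply not applicable to the premises of \ruleref{conv} and \ruleref{sing-el}. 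Both repairs are routine, and with them your proof coincides with the paper's.
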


\SHORTVERSION{
\begin{lem}\label{lemma:inv-tm}\hfill
  \begin{enumerate}
  \item If $\dterm{\Gamma}{A}{\F{A'}{B'}}$, then
    $\dterm{\Gamma}{\TmU}{A'}$, and also
    $\dterm{\ctxe{\Gamma}{A'}}{\TmU}{B'}$;
  \item If $\dterm{\Gamma}{A}{\singTm{b}{B}}$, then
    $\dterm{\Gamma}{\TmU}{B}$, and also $\dterm{\Gamma}{B}{b}$;
  \item If $\dterm{\Gamma}{A}{\lambda t}$, then 
    $\dterm{\ctxe{\Gamma}{A'}}{B'}{t}$.
  \item If $\dterm{\Gamma}{\singTm{a}{A}}{t}$, then 
    $\dterm{\Gamma}{A}{t}$, and $\deqterm{\Gamma}{A}{t}{a}$.
  \item If $\dterm{\Gamma}{A}{\lift{i}{\q}}$, then either
    $\deqtype{\Gamma}{A}{\lift{i+1}{(\Gamma!i)}}$; or
    $\deqtype{\Gamma}{A}{\singTm{a}{A'}}$, and 
      $\deqterm{\Gamma}{A'}{a}{\lift{i}{\q}}$.
  \end{enumerate}
\end{lem}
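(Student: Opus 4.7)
The plan is to prove each of the five parts by induction on the derivation of the stated typing judgement, with part (5) relying on part (4) as a sublemma. The key observation from the GAT presentation is that each term-constructor admits exactly one typing rule; the only other rules that can produce a typing judgement with an already-formed subject term are singleton introduction \rulename{sing-i}, singleton elimination \rulename{sing-el}, and the type-conversion rule coming from the congruence on types generated by the equational theory, and all three of these leave the term itself unchanged.

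For parts (1)--(3) the strategy is essentially formal. The terms $\F{A'}{B'}$, $\singTm b B$, and $\lambda t$ can only be introduced by their dedicated rules \rulename{fun-u-i}, \rulename{sing-u-i}, and \rulename{fun-i} respectively, whose premises already state the required conclusions. The three type-modifying rules leave the term unchanged, and since the conclusions never mention the outer type $A$, they pass through the induction hypothesis unchanged.

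Part (4) does not require induction. From $\dterm{\Gamma}{\singTm a A}{t}$, syntactic validity (Remark~\ref{rem:invctx}) and inversion of types (Lemma~\ref{lem:inv-ty}(2)) together supply $\dterm{\Gamma}{A}{a}$, and one application of \rulename{sing-el} then yields $\dterm{\Gamma}{A}{t}$. For the equality, \rulename{sing-i} applied to $a$ gives $\dterm{\Gamma}{\singTm a A}{a}$, so \rulename{sing-eq-i} equates $t$ and $a$ at $\singTm a A$, and the derived rule \rulename{sing-eq-el} finally lowers the equality to $\deqterm{\Gamma}{A}{t}{a}$.

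Part (5) proceeds by induction on the derivation of $\dterm{\Gamma}{A}{\lift i \q}$. In the direct case $\subsTm{\q}{\p^i}$ is built by \rulename{subs-term} applied to a use of \rulename{hyp}, and the resulting type $\subsTy{(\subsTy{(\Gamma!i)}{\p})}{\p^i}$ equals $\lift{i+1}(\Gamma!i)$ by the substitution-propagation equations, giving the first disjunct. The \rulename{sing-i} case produces the second disjunct immediately by choosing $a := \lift i \q$ and $A'$ as the type of the subpremise. The conversion case transports whichever disjunct the induction hypothesis supplies along the type equality by transitivity. The main obstacle is the \rulename{sing-el} case: here part (4), applied to the singleton-typed subpremise $\dterm{\Gamma}{\singTm{a^*}{A}}{\lift i \q}$, collapses it into the definitional equation $\deqterm{\Gamma}{A}{\lift i \q}{a^*}$; combined with the second subpremise $\dterm{\Gamma}{A}{a^*}$ and a case analysis on which disjunct the induction hypothesis produces for $\singTm{a^*}{A}$, this lets one reconstruct the appropriate disjunct for the outer type $A$.
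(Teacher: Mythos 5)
Your handling of parts (1)--(3) coincides with the paper's own argument for the corresponding inversion lemma: induction on the derivation, noting that the last rule is either the dedicated introduction rule (whose premises are the desired conclusions) or one of \ruleref{sing-i}, \ruleref{sing-el}, \ruleref{conv}, which keep the subject term and let the induction hypothesis pass through. Part (4) you prove by a direct, non-inductive argument (syntactic validity and inversion of types give $\dterm{\Gamma}{A}{a}$, then \ruleref{sing-el}, \ruleref{sing-i}, \ruleref{sing-eq-i}, \ruleref{sing-eq-el}); this is correct and arguably cleaner than threading it through the induction.

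Part (5), however, has a genuine gap exactly at the step you describe as ``reconstructing the appropriate disjunct for the outer type $A$'' in the \ruleref{sing-el} case. There the derivation ends with premises $\dterm{\Gamma}{A}{a^*}$ and $\dterm{\Gamma}{\singTm{a^*}{A}}{\lift{i}{\q}}$ and conclusion $\dterm{\Gamma}{A}{\lift{i}{\q}}$. The induction hypothesis yields one of the two disjuncts \emph{for the type $\singTm{a^*}{A}$}, and part (4) yields $\deqterm{\Gamma}{A}{\lift{i}{\q}}{a^*}$; but none of this constrains the \emph{shape} of the tag $A$, which is what both disjuncts of the goal demand ($A$ judgmentally equal to $\lift{i+1}{(\Gamma!i)}$, or to a singleton). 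Concretely, let $\Gamma = \ctxe{\ectx}{\singTm{\lambda\q}{\F{\TmU}{\TmU}}}$. By \ruleref{hyp} and propagation of $\p$ one obtains $\dterm{\Gamma}{\singTm{\lambda\q}{\F{\TmU}{\TmU}}}{\q}$, and \ruleref{sing-el} then gives $\dterm{\Gamma}{\F{\TmU}{\TmU}}{\q}$. Here $A = \F{\TmU}{\TmU}$ is equal neither to $\lift{1}{(\Gamma!0)} = \singTm{\lambda\q}{\F{\TmU}{\TmU}}$ nor to any singleton type (by soundness and completeness of $\nbe{\_}$ their normal forms differ, a singleton always normalising to a singleton). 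So the case cannot be closed: the one-level disjunction is too strong as stated, and what actually survives \ruleref{sing-el} --- and what the completeness proof for type inference in fact uses --- is the weaker conclusion on deepest tags, $\deqtype{\Gamma}{\ertype{\nbe{A}}}{\ertype{\nbe{\lift{i+1}{(\Gamma!i)}}}}$, together with the equations $\deqterm{\Gamma}{\cdot}{\lift{i}{\q}}{a}$ collected while stripping singleton layers. You would need to prove that reformulated statement (or iterate the second disjunct along a chain of singletons) for the induction to go through.
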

}

\LONGVERSION{

  The following lemma can be proved directly by induction on
  derivations by checking the possibles rules used in the last step.

\begin{lem}[Inversion of typing]
  \label{lem:inv-tm}\label{lem:invneut}\label{L:invtm}\hfill
  \begin{enumerate}[\em(1)]
  \item If $\dterm{\Gamma}{A}{\F{A'}{B'}}$, then
    $\deqtype{\Gamma}{A}{\TmU}$, 
    $\dterm{\Gamma}{\TmU}{A'}$, and also
    $\dterm{\ctxe{\Gamma}{A'}}{\TmU}{B'}$;
  \item if $\dterm{\Gamma}{A}{\singTm{b}{B}}$, then 
    $\deqtype{\Gamma}{A}{\TmU}$, 
    $\dterm{\Gamma}{\TmU}{B}$, and also
    $\dterm{\Gamma}{B}{b}$;
  \item if $\dterm{\Gamma}{A}{\lambda t}$, then either
    \begin{enumerate}[\em(a)]
    \item $\deqtype{\Gamma}{A}{\F{A'}{B}}$ with $\dterm{\ctxe{\Gamma}{A'}}{B}{t}$; or
    \item $\deqtype{\Gamma}{A}{\singTm{a}{A'}}$ with 
      $\deqterm{\Gamma}{A'}{\lambda t}{a}$.
    \end{enumerate}
  \item if $\dterm{\Gamma}{A}{\appTm{t}{r}}$, then 
    $\dterm{\Gamma}{\F{A'}{B'}}{t}$, $\dterm{\Gamma}{A'}{r}$,
    and $\deqtype{\Gamma}{A}{\subsTy{B'}{\exsubs{\idsubs{}}{r'}}}$.
  \end{enumerate}
\end{lem}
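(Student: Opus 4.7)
The plan is to prove all four parts by simultaneous induction on the typing derivation, performing case analysis on the last rule applied. For each target syntactic shape---$\F{A'}{B'}$, $\singTm{b}{B}$, $\lambda t$, or $\appTm{t}{r}$---only a handful of rules can conclude a judgement whose subject has that shape: the corresponding direct rule (\rulename{fun-u-i}, \rulename{sing-u-i}, \rulename{fun-i}, and \rulename{fun-el} respectively), the singleton rules \rulename{sing-i} and \rulename{sing-el}, and the conversion rule derived from the sort equality $\term{\Gamma}{A} = \term{\Gamma}{A'}$. Notably, \rulename{subs-term} is excluded because its conclusion has the form $\subsTm{t}{\sigma}$, syntactically distinct from each of the target shapes; the same applies to the substitution-propagation equations, since they act at the level of equality, not derivation.

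In the direct-rule case, the desired data are simply the premises of that rule. In the conversion case, the premise of the last rule already has the target syntactic shape, so I invoke the induction hypothesis and close by transitivity of type equality, threading the ambient equation $\deqtype{\Gamma}{A}{A''}$ through the conclusion.

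The heart of the argument is the handling of \rulename{sing-i} and \rulename{sing-el}. For parts (3) and (4), the disjunctive form of the conclusion exactly anticipates these rules: a last application of \rulename{sing-i} forces the ambient type to the form $\singTm{t}{A'}$, and the additional equality $\deqterm{\Gamma}{A'}{\lambda t}{a}$ required in the singleton disjunct is supplied by the derived rule \rulename{sing-eq-i}. When \rulename{sing-el} is the last rule, the induction hypothesis delivers the statement for the premise at the singleton type, and \rulename{sing-eq-el} lets us descend the equality to the tag type, preserving the disjunctive structure. For parts (1) and (2), the witnesses for $A'$ and $B'$ (resp.\ $B$ and $b$) sit under the singleton wrappings unchanged, so they can simply be extracted by the induction hypothesis.

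The main obstacle I anticipate is the bookkeeping across alternating chains of \rulename{sing-i}, \rulename{sing-el}, and conversion, particularly in part (3), where one must ensure that the disjunct witnessing a singleton ambient type is correctly maintained and that all required equations transport along \rulename{sing-eq-el}. This is precisely the scenario motivating the paper's insistence that \rulename{sing-eq-el} be available as a rule rather than being replaced by the weaker \rulename{sing-eq-el'} of the preceding discussion, so the required manipulations should go through smoothly once the mutual statement is set up with enough disjuncts to absorb each singleton step.
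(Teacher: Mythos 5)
Your proposal matches the paper's own proof: induction on the typing derivation with case analysis on the last rule, where the direct introduction rule yields exactly the required premises and the remaining possible cases (\ruleref{conv}, \ruleref{sing-i}, \ruleref{sing-el}) are closed by the induction hypothesis because the premise retains the same subject term. Your extra discussion of how \ruleref{sing-eq-i} and \ruleref{sing-eq-el} transport the equations through singleton steps supplies detail the paper leaves implicit, but it is the same argument.
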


\proof\hfill
(1) The last rule used is one of \ruleref{fun-u-i}, \ruleref{conv},
  \ruleref{sing-i}, or \ruleref{sing-e}. In the first case the
  premises of the rule are what is to be proved; in all other cases
  we have a premise with the form $\dterm{\Gamma}{B}{\F{A'}{B'}}$,
  hence we can apply the i.h.
(2-4) Analogously.
\qed  
}




\begin{rem}[Inversion of substitution]
  \label{invsubs}
  Any substitution $\dsubs{\Delta}{\ctxe{\Gamma}{A}}{\sigma}$ is equal
  to some substitution
  $\dsubs{\Delta}{\ctxe{\Gamma}{A}}{\exsubs{\sigma'}{t}}.$ It
  is enough to note $\idsubs{\ctxe{\Gamma}{A}} = \exsubs{\p}{\q}$, hence we have the
  equalities $\sigma = \subsc{\idsubs{}}{\sigma}
  =
  \subsc{\exsubs{\p}{\q}}{\sigma}
  =
  \exsubs{\subsc{\p}{\sigma}}{\subsTm{\q}{\sigma}}.$
\end{rem}

\LONGSHORT{\subsection{\lambdaPI: 
A type theory with  proof-irrelevance.}}
  {\paragraph*{{\bf Calculus with Proof-Irrelevance.}}}
\label{sec:pi-calc}

\LONGVERSION{ In this section we keep the basic rules of the previous
  calculus (those that do not refer to singleton types), and introduce
  types for natural numbers, enumeration sets, sigma types, and proof-irrelevant
  types.  The main difference with other presentations 
  \cite{nordstroem:mltt,mlitt}, on the syntactic level, is that the
  eliminator operator (for each type) has as an argument the type of
  the result. The presence of the resulting type in the eliminator is
  needed in order to define the normalisation function; it is also
  necessary for the type-inference algorithm.  }





\para{Sigma types}  
Both $\TmU$ and $\mathsf{Type}$ are closed under (strong) sigma-type formation;
$\depair a b$ introduces a dependent pair and $\dfst t$ and $\dsnd t$
eliminate it.

\begin{gather*}
\tyrule{sum-u-i%
  }{A\in \term{\Gamma}{\TmU}\\ B\in \term{\ctxe{\Gamma}{A}}{\TmU}}
  {\DSum{A}{B}\in \term{\Gamma}{\TmU}}
\
\tyrule{sum-f%
  }{\REDUNDANT{\Gamma\in \ctx\\ }A \in \type{\Gamma} \\ 
    B \in \type{\ctxe{\Gamma}{A}}}
  {\DSum{A}{B}\in \type{\Gamma}}
\\[1ex]  
\tyrule{sum-in%
  }{
    \REDUNDANT{\Gamma \in \ctx\\ A\in \type{\Gamma}\\ }B\in \type{\ctxe{\Gamma}{A}}\\
      a \in \term{\Gamma}{A}\\ 
      b \in \term{\Gamma}{\subsTy{B}{\subid{}{a}}}}
    {\depair{a}{b} \in \term{\Gamma}{\DSum{A}{B}}}
\\[1ex]
  \tyrule{sum-el1%
  }{
    \REDUNDANT{\Gamma \in \ctx\\ A\in \type{\Gamma}\\ B\in \type{\ctxe{\Gamma}{A}}\\}
    t \in \term{\Gamma}{\DSum{A}{B}}}{\dfst{t} \in \term{\Gamma}{A}}
\qquad
  \tyrule{sum-el2%
  }{
    \REDUNDANT{\Gamma \in \ctx\\ A\in \type{\Gamma}\\ B\in \type{\ctxe{\Gamma}{A}}\\}
    t \in \term{\Gamma}{\DSum{A}{B}}}
  {\dsnd{t} \in \term{\Gamma}{\subsTy{B}{\subid{}{\dfst{t}}}}
  }
\end{gather*}

The $\beta$- and $\eta$-laws for pairs are given by the first three
equations to follow.  The remaining equations propagate substitutions
into the new term constructors.
\begin{align*}
  \dfst{\depair{a}{b}} &= a &
  \dsnd{\depair{a}{b}} &= b &
  \depair{\dfst{t}}{\dsnd{t}} &= t
\\
  \subsTm{(\dfst{t})}{\sigma} &= \dfst{(\subsTm{t}{\sigma})}&
  \subsTm{(\dsnd{t})}{\sigma} &= \dsnd{(\subsTm{t}{\sigma})}&
  \subsTm{\depair{a}{b}}{\sigma}  &=
  \depair{\subsTm{a}{\sigma}}{\subsTm{b}{\sigma}} &
\\
& &
    \subsTy{(\DSum{A}{B})}{\sigma} &=  \DSum{(\subsTy{A}{\sigma})}
                       {{(\subsTy{B}{\exsubs{\subsc{\sigma}{\p}}{\q}})}\hspace*{-6ex}} &
\end{align*}
Propagation laws can be obtained mechanically: to propagate $\sigma$
into $\mathsf{c}\,\vec t$, just compose it with each $t_i$ that is not
a binder (e.g., $A$ in $\DSum A B$), and compose its lifted version
$\exsubs{\subsc{\sigma}{\p}}{\q}$ with each $t_j$ that is a binder
(e.g., $B$ in $\DSum A B$).  Binders are those formed in an extended context
(here, $B \in \type{\ctxe \Gamma A}$).   In the following, we will
skip the propagation laws.

\para{Natural numbers}  We add an inductive type $\natty$ with
constructors $\ztm$ and $\suctm{}$ and primitive recursion
$\mathsf{natrec}$.
\begin{gather*}
  \tyrule{nat-u-i%
  }{\Gamma\in\ctx}
  {\natty\in \term{\Gamma}{\TmU}}
\
  \tyrule{nat-z-i%
  }{
    \Gamma \in \ctx}{\ztm \in \term{\Gamma}{\natty}}
\
  \tyrule{nat-s-i%
  }{
    \REDUNDANT{\Gamma \in \ctx\\} t\in \term{\Gamma}{\natty}
  }{\suctm{t} \in \term{\Gamma}{\natty} }
\\[1ex]
    \tyrule{nat-el%
  }{
    \REDUNDANT{\Gamma \in \ctx\\}
    B\in \type{\ctxe{\Gamma}{\natty}}
      \\ t \in \term{\Gamma}{\natty}
      \\ z\in \term{\Gamma}{\subsTy{B}{\subid{}{\ztm}} }
      \\ s\in \term{\Gamma}{\recty{B}}
  }{
    \natrec{B}{z}{s}{t}\in \term{\Gamma}{\subsTy{B}{\subid{}{t}}}}
\end{gather*}
Here, we used $\recty{B}$ as an abbreviation for
  $\F{\natty}{(\F{B}{\,(\subsTy{\subsTy{B}{\exsubs{\p}{\suctm{\q}}}}{\p})})}$
  which in conventional notation reads $\Pi x\!:\! \natty.\, B \to
  B[\suctm x/x]$.
Since $B$ is a big type, it can mention the universe $\TmU$, thus, we
can define small types by recursion via $\mathsf{natrec}$.  This so
called \emph{large elimination} excludes normalization
proofs which use induction on type expressions 
\cite{crary:lfmtp08,courant:itrs02}.  We add the usual computation
laws for primitive recursion.
\begin{align*}
\natrec{B}{z}{s}{\ztm} &= z \\
\natrec{B}{z}{s}{(\suctm{t})} &= \appTm{(\appTm{s}{t})}{(\natrec{B}{z}{s}{t})}
\end{align*}

\para{Enumeration sets}  The type $\enum n$ has the $n$ canonical
inhabitants $\const n 0$, \dots, $\const n {n-1}$, which can be
eliminated by the dependent case distinction  $\elim{n}{B}{t_{0} \cdots
  t_{n-1}}{t}$ with $n$ branches.
\begin{gather*}
  \tyrule{\enumrule{u-i}%
  }{\Gamma \in \ctx}{\enum{n}\in\term{\Gamma}{\TmU}}
\qquad
  \tyrule{\enumrule{i}%
  }{
    \Gamma \in \ctx\\ i<n
  }{
    \const{n}{i} \in \term{\Gamma}{\enum{n}}}
\\[1ex]
  \tyrule{\enumrule{e}
  }{
    \REDUNDANT{\Gamma \in \ctx\\}
    B\in \type{\ctxe{\Gamma}{\enum{n}}}
      \\ t\in \term{\Gamma}{\enum{n}}
      \\\\ t_{0}\in \term{\Gamma}{\subsTy{B}{\subid{}{\const{n}{0}}}}\ \cdots\ 
      t_{n-1}\in \term{\Gamma}{\subsTy{B}{\subid{}{\const{n}{n-1}}}}
  }{\elim{n}{B}{t_{0} \cdots t_{n-1}}{t}\in
    \term{\Gamma}{\subsTy{B}{\subid{}{t}}}}  
\end{gather*}
We add the usual computational law for case distinction, and weak
extensionality, which for booleans ($\enum 2$) reads
``$\mathsf{if}\;t\;\mathsf{then}\;\mathsf{true}\;\mathsf{else}\;\mathsf{false}
= t$''in sugared syntax.
\begin{align*}
\elim{n}{B}{t_{0} \cdots t_{n-1}}{\const{n}{i}}&= t_{i} \\
\elim{n}{\enum n}{\const{n}{0} \cdots \const{n}{n-1}}{t}&= t 
\end{align*}
For $\enum
0$ and $\enum 1$ we can formulate strong $\eta$-laws:  all their
inhabitants are considered equal, since there is at most one.  To
realize this, we introduce a new term $\oprf$ in $\enum 0$ \emph{if
  it already has an inhabitant $t$}; we consider $\oprf$ as normal
form of $t$.  Note that this seemingly paradoxical canonical form
$\oprf \in \enum 0$ does not threaten consistency, since it cannot
exist in the empty context $\Gamma = \ectx$; otherwise there would
have already been a term $t \in \term\ectx{\enum 0}$.
\begin{gather*}
  \tyrule{{\enumrule[0]tm}%
        }{t \in \term \Gamma {\enum 0}
        }{\oprf \in \term \Gamma {\enum 0}}
\quad
  \tyrule{{\enumrule[0]eq}%
        }{t,t' \in \term \Gamma {\enum 0}%
        }{t = t' \in \term \Gamma {\enum 0}}
\quad
   \tyrule{{\enumrule[1]eq}%
         }{t,t' \in \term \Gamma {\enum 1}%
         }{t = t' \in \term \Gamma {\enum 1}}
\end{gather*}
On the one hand, rule (\textsc{\enumrule[0]tm}) destroys decidability
of type checking: to check whether $\oprf \in \term \Gamma {\enum 0}$
we would have to decide the consistency of $\Gamma$ which is certainly
impossible in a theory with natural numbers.  On the other hand, it
allows us to decide equality by computing canonical forms.  We solve
this dilemma by forbidding $\oprf$ in the user syntax which is input
for the type-checker; $\oprf$ is only used internally in the NbE
algorithm and in the canonical forms it produces. Formally this is
reflected by having two calculi: one with the rule
\ruleref{\enumrule[0]tm} and one without it. For distinguishing the
calculi, we decorate the turnstile ($\vdashp$) in judgements of the
former and leave $(\vdash)$ for the calculus without
\ruleref{\enumrule[0]tm}. We also use the different turnstiles for
referring to each calculus. In Sect. \ref{sec:cons-vdashp} we prove
that $(\vdashp)$ is a conservative extension of $(\vdash)$.

Strong extensionality for booleans and larger enumeration sets is hard
to implement
\cite{altenkirchDybjerHofmannScott:lics01,balatDiCosmoFiore:popl04} 
and beyond the scope of this work.

In the sequel we use $\vec{t}$ for denoting the $n$ terms
$t_0 \cdots t_{n-1}$ in $\elim{n}{B}{t_0\cdots t_{n-1}}{r}$. We will
omit the superscript $n$ in $\constn{i}$, and in $\elimn{B}{\vec{t}}{r}$.

\para{Proof irrelevance}  
Our treatment of proof-irrelevance is based on
Awodey and Bauer \cite{awodeyBauer:propositionsAsTypes} 
and Maillard \cite{odalric-pi-sn}. 
The constructor $\boxty{}$ turns a type $A$
into the \emph{proposition} $\boxty A$ in the sense that only the fact
matters \emph{whether} $A$ is inhabited, not by \emph{what}.  An
inhabited proposition is regarded as \emph{true}, an uninhabited as
\emph{false}.  The proposition
$\boxty A$ still has all inhabitants of $A$, but now they are
considered equal.  If $A$ is not empty, we introduce a trivial proof
$\oprf$ in $\boxty A$ which we regard as the normal form of any $t \in
\term \Gamma {\boxty A}$.   
\begin{gather*}  
  \tyrule{prf-f%
  }{\REDUNDANT{\Gamma \in \ctx\\ } 
    A \in \term{\Gamma}{\TmU}
  }{\boxty{A} \in \term{\Gamma}{\TmU}}
\qquad
  \tyrule{prf-f%
  }{\REDUNDANT{\Gamma \in \ctx\\ } 
    A \in \type{\Gamma}
  }{\boxty{A} \in \type{\Gamma}}
\\
  \tyrule{prf-i%
  }{\REDUNDANT{\Gamma \in \ctx\\ A \in \type{\Gamma}\\ }
    a\in\term{\Gamma}{A}
  }{\boxtm{a} \in \term{\Gamma}{\boxty{A}}}
\qquad
  \tyrule{prf-tm%
  }{\REDUNDANT{\Gamma \in \ctx\\ A \in \type{\Gamma}\\ }
    a\in\term{\Gamma}{A}
  }{\oprf \in \term{\Gamma}{\boxty{A}}}
\\
  \tyrule{prf-eq%
  }{\REDUNDANT{\Gamma \in \ctx\\ } 
    A \in \type{\Gamma}\\
    t,t'\in\term{\Gamma}{\boxty{A}}
  }{t = t' \in \term{\Gamma}{\boxty{A}}}
\end{gather*}
Note that (\textsc{prf-tm}) is analogous to (\textsc{\enumrule[0]tm})
and the same remarks apply; in particular, (\textsc{prf-tm}) is also a
rule in $(\vdashp)$ but not in ($\vdash$).

We use Awodey and Bauer's \cite{awodeyBauer:propositionsAsTypes}
elimination rule for proofs.
\[
  \dfrac{\Gamma \derN t : \boxty A \qquad
            \Gamma \derN B \qquad
            \Gamma, x \of A \derN b : B \qquad
            \Gamma, x \of A, y \of A \derN b = b[y/x] : B
          }{\Gamma \derN \ABwhere b x t : B}
\]
The content $x : A$ of a proof $t : \boxty A$ can be used in $b$ via
the elimination
$b \; \mathsf{where}\; [x] = t$ if $b$ does not actually depend on it,
which is expressed via the hypothesis that $b$ should be equal to
$b[y/x]$ for an arbitrary $y$.  This elimination principle is stronger
than ``proofs can only be used inside of proofs'' which is witnessed by
the rule:
\[
  \dfrac{\Gamma \derN t : \boxty A \qquad 
            \Gamma \derN B \qquad
            \Gamma, x \of A \derN b : \boxty B 
          }{\Gamma \derN \ABwhere b x t : \boxty B}
\]
Note that this weaker elimination rule in the style of a \emph{bind}
operation for monads is an instance of the Awodey-Bauer rule, since
the equation $\Gamma, x \of A, y \of A \derN b = b[y/x] : \boxty B$
holds trivially due to proof irrelevance.  An example which is typable
with the Awoday-Bauer rule but not the monadic rule is the term
$\tmagicraw$ given in the next section.

The Awodey-Bauer $\mathsf{where}$ fulfills $\beta$-, $\eta$-, and
associativity laws analogous to the ones of a monad.
\begin{align*}
  \ABwhere b x {\boxtm a} & = b[a/x] \\
  \ABwhere {b[\boxtm x/y]} x t & =  b[t/y] \\
  \ABwhere a x (\ABwhere b y c) & = \ABwhere {(\ABwhere a x b)} y c &
   \mbox{if } y \not\in\FV(a)
\end{align*}
After this more readable presentation in named syntax, we add the
eliminator and its equations to our GAT in de Bruijn style:
\begin{gather*}
  \tyrule{prf-el%
  }{t \in \term{\Gamma}{\boxty{A}} \\
    B\in \type{\Gamma}\\ 
    b \in \term{\ctxe{\Gamma}{A}}{\subsTy{B}{\p}}\\
    \subsTm b \p = \subsTm b {\exsubs{\subsc \p \p}{\q}}
      \in \term{\ctxe{\ctxe \Gamma A} {\subsTy A \p}}{\subsTy B {\subsc \p \p}}
  }{\wheretm{b}{t}{B} \in \term{\Gamma}{B}}  
\end{gather*}
\begin{align*}
  \wheretm{b}{\boxtm{a}}{B} 
      & = \subsTm{b}{\subid{}{a}} 
    & \rulename{prf-$\beta$} \\
  \wheretm{\subsTm b {\exsubs \p {\boxtm \q}}}{t}{B}
    & = \subsTm b {\subid{} t} 
      & \rulename{prf-$\eta$}\\
  \wheretm a {(\wheretm b c B)} A 
    & = \wheretm {(\wheretm {\subsTm a {\exsubs {\subsc \p \p} \q}} b {\subsTy A \p})} c B
      & \rulename{prf-assoc}\\
\end{align*}


\noindent After exposition of the formation, introduction, elimination, and
equality rules for the types of \lambdaPI, we continue with basic
properties of derivations.  From now, we use the more conventional
notation for judgements.

\LONGSHORT{
  \begin{defi}[Neutral terms and normal forms]\hfill
    \begin{align*}
      \neterms \ni k ::= &\ldots \mid \dfst{k} \mid \dsnd{k} \mid
        \natrec{V}{v}{v'}{k} 
        \mid \elim{n}{V}{v_0 \cdots v_{n-1}}{k} \mid \wheretm{v}{k}{V} \mid \oprf &\\
      \nfterms \ni v,V ::= &\ldots \mid \DSum{V}{W}\mid \natty\mid
      \enum{n} \mid \boxty{V}\mid \depair{v}{v'} 
       \mid \ztm \mid \suctm{v} \mid 
      \const{n}{i} \mid \boxtm{v} &
    \end{align*}
  \end{defi}
}{
As is expected we have now more normal forms, and more neutral terms:
\begin{align*}
  \neterms \ni k &::= \ldots \mid \wheretm{v}{k}{V} &\\
  \nfterms \ni v,V &::= \ldots \mid \boxty{V}\mid \boxtm{v} \mid \oprf &
\end{align*}
}

\LONGSHORT{
\begin{lem}[Inversion of types]\hfill
  \label{lem:inv-pi} 
  \begin{enumerate}[\em(1)]
  \item If $\dtype{\Gamma}{\DSum{A}{B}} $, then $\dtype{\Gamma}{A}$,
    and $\dtype{\ctxe{\Gamma}{A}}{B}$.
  \item If $\dtype{\Gamma}{\boxty{A}} $, then $\dtype{\Gamma}{A}$.
  \end{enumerate}
\end{lem}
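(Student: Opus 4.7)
The plan is to prove both items by induction on the derivation of the type-formation judgement, performing a case analysis on the last operator rule that produced a judgement of sort $\type{\Gamma}$ whose term component has the required shape. Because the syntactic forms $\DSum{A}{B}$ and $\boxty{A}$ are specific, only two operator rules can have concluded the derivation in each case: the direct type-level formation rule, or the universe-embedding rule (\textsc{u-el}) applied to a term of sort $\term{\Gamma}{\TmU}$ having the corresponding shape.

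For item (1), if the last rule is (\textsc{sum-f}), then its premises $A \in \type{\Gamma}$ and $B \in \type{\ctxe{\Gamma}{A}}$ are exactly what is required. If instead the last rule is (\textsc{u-el}), we have a derivation of $\DSum{A}{B} \in \term{\Gamma}{\TmU}$; I would invoke the analogue of Lemma~\ref{lem:inv-tm}, extended to \lambdaPI\ with an inversion clause for $\DSum{A}{B}$ at $\TmU$, to obtain $A \in \term{\Gamma}{\TmU}$ and $B \in \term{\ctxe{\Gamma}{A}}{\TmU}$. Two applications of (\textsc{u-el}) then yield $A \in \type{\Gamma}$ and $B \in \type{\ctxe{\Gamma}{A}}$. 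Item (2) is entirely parallel: either the derivation ends in the type-level (\textsc{prf-f}), giving $A \in \type{\Gamma}$ directly, or it ends in (\textsc{u-el}) on $\boxty{A} \in \term{\Gamma}{\TmU}$, which by the corresponding inversion-of-typing clause must come from a derivation of $A \in \term{\Gamma}{\TmU}$, and (\textsc{u-el}) again delivers $A \in \type{\Gamma}$.

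The only subtlety — and it is entirely standard for a GAT presentation — is that the inversion at the term level must account for the fact that $\DSum{A}{B}$ (respectively $\boxty{A}$) may have been derived at a type different from $\TmU$ and then converted via equality of sorts; this is handled, exactly as in Lemma~\ref{lem:inv-tm}, by an auxiliary induction that follows (\textsc{sing-i}), (\textsc{sing-el}), and the sort-equality congruences until a primary formation rule at $\TmU$ is exposed. Once this bookkeeping is set up, both items of the lemma are immediate, so I do not expect any real obstacle.
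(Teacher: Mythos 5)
Your proof is correct and takes essentially the approach the paper itself uses for its inversion lemmas: the paper states Lemma~\ref{lem:inv-pi} without an explicit proof, but proves the analogous Lemma~\ref{lem:inv-tm} exactly by this case analysis on the last rule of the derivation, discharging the \rulename{u-el} case via the corresponding term-level inversion (here Lemma~\ref{lem:inv-pi-tm}). One small nitpick: since \lambdaPI\ drops singleton types, the \rulename{sing-i}/\rulename{sing-el} cases you mention in the auxiliary term-level induction do not actually arise in this calculus---only the sort-equality (conversion) case does.
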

\begin{lem}[Inversion of typing]\hfill
  \label{lem:inv-pi-tm} 
  \begin{enumerate}[\em\phantom0(1)]
  \item If $\dterm{\Gamma}{A}{\DSum{A'}{B}} $, then
    $\deqtype{\Gamma}{A}{\TmU}$, and
    $\dterm{\Gamma}{\TmU}{A'}$, and
    $\dterm{\ctxe{\Gamma}{A'}}{\TmU}{B}$.
  \item If $\dterm{\Gamma}{A}{\natty} $, then
    $\deqtype{\Gamma}{A}{\TmU}$.
  \item If $\dterm{\Gamma}{A}{\enum{n}} $, then
    $\deqtype{\Gamma}{A}{\TmU}$.
  \item If $\dterm{\Gamma}{A}{\depair{t}{b}}$ , then
    $\deqtype{\Gamma}{A}{\DSum{A'}{B}}$, and 
    $\dterm{\Gamma}{A'}{t}$, and
    $\dterm{\Gamma}{\subsTy{B}{\subid{}{t}}}{b}$.
  \item If $\dterm{\Gamma}{A}{\dfst{t}}$ , then
    $\deqtype{\Gamma}{A}{A'}$, and $\dterm{\Gamma}{\DSum{A'}{B}}{t}$,
    for some $A'$, and $B$.
  \item If $\dterm{\Gamma}{B}{\dsnd{t}}$, then
    $\deqtype{\Gamma}{B}{\subsTy{B'}{\subid{}{\dfst{t}}}}$, and 
    $\dterm{\Gamma}{\DSum{A}{B'}}{t}$, for some $A$,
    and $B'$.
  \item If $\dterm{\Gamma}{A}{\ztm}$, then
    $\deqtype{\Gamma}{A}{\natty}$.
    \item If $\dterm{\Gamma}{A}{\suctm{t}}$, then 
      $\dterm{\Gamma}{\natty}{t}$, and
      $\deqtype{\Gamma}{A}{\natty}$.
    \item If $\dterm{\Gamma}{A}{\natrec{B}{z}{s}{t}}$ , then
      $\dtype{\ctxe{\Gamma}{\natty}}{B} $,
      $\dterm{\Gamma}{\subsTy{B}{\subid{}{\ztm}}}{z}$,
      $\dterm{\Gamma}{\recty{B}}{s}$, $\dterm{\Gamma}{\natty}{t}$,
      and $\deqtype{\Gamma}{A}{\subsTy{B}{\subid{}{t}}}$.
   \item if $\dterm{\Gamma}{A}{\const{n}{i}}$, then    
      $\deqtype{\Gamma}{A}{\enum{n}}$;
    \item If $\dterm{\Gamma}{A}{\elimn{B}{\vec{t}}{t'}}$, then
      $\dtype{\ctxe{\Gamma}{\enum{n}}}{B} $,
      $\dterm{\Gamma}{\subsTy{B}{\subid{}{\constn{i}}}}{t_i}$,
       $\dterm{\Gamma}{\enum{n}}{t'}$, and
      $\deqtype{\Gamma}{A}{\subsTy{B}{\subid{}{t}}}$.  
  \item If $\dterm{\Gamma}{A}{\boxtm{t}}$, then
    $\deqtype{\Gamma}{A}{\boxty{A'}}$ and
    $\dterm{\Gamma}{A'}{t'}$.
  \item If $\dterm{\Gamma}{A}{\wheretm{b}{t}{B}}$, then
    $\deqtype{\Gamma}{A}{B}$, 
    $\dterm{\Gamma}{\boxty{A'}}{t}$ for some $A'$, 
    $\dterm{\ctxe{\Gamma}{A'}}{\subsTy{B}{\p}}{b}$, and
    $\deqterm{\ctxe{\ctxe{\Gamma}{A'}}{\subsTy{A'}{\p}}}
    {\subsTy{B}{\p}}{\subsTm{b}{\p}}{\subsTm{b}{\exsubs{\p\p}{\q}}}$.
    \end{enumerate}
  \end{lem}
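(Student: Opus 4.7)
The plan is to prove items~(1)--(13) uniformly by induction on the given typing derivation, performing case analysis on the last rule used. The crucial observation is that each listed term shape ($\DSum{A'}{B}$, $\ztm$, $\dfst{t}$, $\wheretm{b}{t}{B}$, and so on) has a distinct syntactic head, so only two classes of rules can appear as the last inference step: (i)~the introduction or formation rule whose conclusion has exactly that head; and (ii)~the conversion rule, a derived rule witnessing that equal types classify the same terms. Note that the explicit-substitution rule \rulename{subs-term} produces a term with syntactic head $\subsTm{t}{\sigma}$, not, for instance, $\DSum{A'}{B}$ or $\suctm{t}$, so substitution never appears as the last rule in the shapes of interest; similarly \rulename{prf-tm} and the analogous $\enum{0}$ rule introduce only $\oprf$, which is not among the listed shapes.

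In case~(i), the premises of the introduction rule already match the desired typing conclusions, the type read off the conclusion coincides with the shape claimed by the lemma, and the clause $\deqtype{\Gamma}{A}{\cdots}$ holds by reflexivity. In case~(ii), we have a derivation of $\dterm{\Gamma}{A_0}{t}$ together with $\deqtype{\Gamma}{A_0}{A}$; the inductive hypothesis applied to the premise yields the subsidiary typings and an equality $\deqtype{\Gamma}{A_0}{C}$ for the type shape $C$ predicted by the lemma. Composing by symmetry and transitivity with $\deqtype{\Gamma}{A_0}{A}$ then gives $\deqtype{\Gamma}{A}{C}$, as required.

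The main bookkeeping subtlety arises in item~(13) for $\wheretm{b}{t}{B}$, where the rule \rulename{prf-el} carries, besides the three natural premises on $t$, $B$, and $b$, the proof-irrelevance side condition relating $\subsTm{b}{\p}$ to $\subsTm{b}{\exsubs{\subsc{\p}{\p}}{\q}}$. In the introduction case all four conclusions are read off directly; in the conversion case the inductive hypothesis already delivers the last three unchanged---they do not involve the ambient type $A$---so only the $\deqtype{\Gamma}{A}{B}$ clause needs the transitivity step. An analogous but lighter accounting is required in items~(9) and~(11) for $\natrecraw$ and the case operator, where one must carry along the motive $B$ and all the branches through the induction. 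Beyond this routine bookkeeping, there is no conceptual obstacle.
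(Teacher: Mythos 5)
Your proposal is correct and follows essentially the same route as the paper: the paper proves the analogous inversion lemma for \lambdaSing{} ``directly by induction on derivations by checking the possible rules used in the last step,'' where the introduction rule yields the premises outright and the remaining cases (conversion, and in \lambdaSing{} also \ruleref{sing-i}/\ruleref{sing-el}) are handled by the induction hypothesis plus transitivity of type equality. Your observation that in \lambdaPI{} (which lacks singletons) only the head-introduction rule and conversion can conclude a judgement for each listed term shape, together with the bookkeeping for \ruleref{prf-el}, matches the intended argument.
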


} 
{
\begin{lem}[Inversion]\hfill
  \label{lem:inv-pi} 
  \begin{enumerate}[\em(1)]

  \item If $\dterm{\Gamma}{A}{\boxtm{t}}$, then
    $\deqtype{\Gamma}{A}{\boxty{A'}}$ and
    $\dterm{\Gamma}{A'}{t'}$.

  \item If $\dterm{\Gamma}{A}{\wheretm{b}{t}{B}}$, then
    $\deqtype{\Gamma}{A}{\boxty{B}}$, and
    $\dterm{\Gamma}{\boxty{A'}}{t}$, and
    $\dterm{\ctxe{\Gamma}{A'}}{\subsTy{B}{\p}}{b}$.
    \end{enumerate}
  \end{lem}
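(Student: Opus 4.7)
The plan is to prove both items by induction on the typing derivation, performing case analysis on the last rule applied. The key observation is that in the GAT presentation of $\lambdaPI$, the only rules whose conclusion has a judgement with the syntactic form $\dterm{\Gamma}{A}{\boxtm{t}}$ are the introduction rule (prf-i) and the derived conversion rule induced by the sort equality $\term{\Gamma}{A} = \term{\Gamma}{A'}$ whenever $\deqtype{\Gamma}{A}{A'}$; analogously, for $\dterm{\Gamma}{A}{\wheretm{b}{t}{B}}$ the relevant rules are (prf-el) and conversion. All remaining operator rules have conclusions with a different top-level term constructor and therefore cannot contribute.

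For part (1), the base case is (prf-i): we read off directly that $A \equiv \boxty{A'}$ and that the premise delivers $\dterm{\Gamma}{A'}{t}$, so we take $t' \equiv t$. In the conversion step we have $\dterm{\Gamma}{A''}{\boxtm{t}}$ with $\deqtype{\Gamma}{A}{A''}$; the induction hypothesis provides $\deqtype{\Gamma}{A''}{\boxty{A'}}$ together with $\dterm{\Gamma}{A'}{t'}$, and transitivity of type equality yields $\deqtype{\Gamma}{A}{\boxty{A'}}$, as required.

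Part (2) proceeds analogously: if the last rule is (prf-el), the premises expose exactly $\dterm{\Gamma}{\boxty{A'}}{t}$ and $\dterm{\ctxe{\Gamma}{A'}}{\subsTy{B}{\p}}{b}$, with the stated conclusion type; the conversion step is absorbed by applying the induction hypothesis and then composing the two type equalities by transitivity. Throughout, syntactic validity (Remark~\ref{rem:invctx}) guarantees that the ambient well-formedness premises are in place.

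The main obstacle is to justify cleanly that no other rule can yield a judgement whose term has $\boxtm{\cdot}$ or $\wheretm{\cdot}{\cdot}{\cdot}$ at the head. This rests on the freeness of the GAT term algebra over the signature $S_E$: the propagation equations such as $\subsTm{(\boxtm{t})}{\sigma} = \boxtm{(\subsTm{t}{\sigma})}$ and $\subsTm{(\wheretm{b}{t}{B})}{\sigma} = \wheretm{\subsTm{b}{\exsubs{\subsc{\sigma}{\p}}{\q}}}{\subsTm{t}{\sigma}}{\subsTy{B}{\sigma}}$ are equality axioms contributing only to the congruence closure, rather than additional typing rules, and therefore do not open up alternative routes for deriving such judgements. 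Once this is in place, the induction goes through by the straightforward rule-by-rule inspection sketched above.
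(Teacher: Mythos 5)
Your proof is correct and follows essentially the same route as the paper, which proves the analogous inversion lemma for \lambdaSing\ (Lemma~\ref{lem:inv-tm}) by induction on derivations, observing that the last rule is either the relevant introduction/elimination rule (where the premises give the conclusion directly) or a rule such as conversion whose premise has the same head constructor, so the induction hypothesis applies; the \lambdaPI\ cases are left implicit as "analogous". Your additional remark that the equational (propagation) axioms only feed the congruence closure and cannot produce new typing derivations is a reasonable, slightly more explicit justification of the exhaustiveness of the case analysis than the paper offers.
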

}

\LONGSHORT{

\subsection{Conservativity of \texorpdfstring{$\oprf$}{*}}
\label{sec:cons-vdashp}

In this section we prove that $(\vdashp)$ is a \emph{conservative}
extension of $(\vdash)$; i.e.,  any derivation in $(\vdashp)$ has a
counterpart derivation in $(\vdash)$ and the components of the
conclusions of those derivations are judgmentally equal in
$(\vdashp)$.


  


  \begin{defi}\label{D:lifting} 
    A term is called {\em pure} if it does not contain any occurrence
    of $\oprf$.  Let $\nu$ be a syntactical entity, if $\mu$ is
    obtained from $\nu$ by replacing all occurrences of $\oprf$ by
    pure terms, then $\mu$ is called a {\em lifting} of $\nu$.
  \end{defi}
  
  We will distinguish those liftings that are judgmentally equal to
  the lifted entity, these liftings are called {\em good liftings}.

  \begin{defi}[Good lifting]\label{D:good-lift}\bla
    \begin{enumerate}[(1)]
    \item A context $\dctx{\Gamma'}$ is a good lifting of $\sdctx{\Gamma}$
      if $\Gamma'$ is a lifting of $\Gamma$, such that $\vdashp \Gamma = \Gamma'$.
    \item A substitution $\dsubs{\Gamma'}{\Delta'}{\sigma'}$ is a good
      lifting of $\sdsubs{\Gamma}{\Delta}{\sigma}$ if $\dctx{\Gamma'}$
      and $\dctx{\Delta'}$ are good liftings of $\sdctx{\Gamma}$ and
      $\sdctx{\Delta}$, resp., and $\sigma'$ is a lifting of $\sigma$,
      such that $\sdeqsubs{\Gamma}{\Delta}{\sigma}{\sigma'}$.
    \item A type $\dtype{\Gamma}{A'}$ is a good lifting of
      $\sdtype{\Gamma}{A}$ if $\dctx{\Gamma'}$ is a good lifting of
      $\sdctx{\Gamma}$ and $A'$ is a lifting of $A$, such that
      $\sdeqtype{\Gamma}{A}{A'}$.
    \item A term $\dterm{\Gamma}{A'}{t'}$ is a good lifting of
      $\sdterm{\Gamma}{A}{t}$ if $\dtype{\Gamma'}{A'}$ is a good
      lifting of $\sdtype{\Gamma}{A}$ and $t'$ is a lifting of $t$,
      such that $\sdeqterm{\Gamma}{A}{t}{t'}$.
    \end{enumerate}
  \end{defi}\medskip

  \noindent Now we can prove that there is a good lifting for each syntactic 
  entity; for proving this, we need the stronger condition that 
  any pair of good liftings for some entity are judgmentally equal.

  \begin{thm}\label{T:good-lifting}\hfill
    \begin{enumerate}[\em(1)]
    \item Let $\sdctx{\Gamma}$; then there is a good lifting 
      $\dctx{\Gamma'}$ of $\sdctx{\Gamma}$; moreover if
      $\dctx{\Gamma''}$ is also a good lifting of $\sdctx{\Gamma}$
      then $\deqctx{\Gamma'}{\Gamma''}$.
    \item Let $\sdsubs{\Gamma}{\Delta}{\sigma}$; then there is a good
      lifting $\dsubs{\Gamma'}{\Delta'}{\sigma'}$ of
      $\sdsubs{\Gamma}{\Delta}{\sigma}$; moreover if
      $\dsubs{\Gamma''}{\Delta''}{\sigma''}$ is also a good lifting of
      $\sdsubs{\Gamma}{\Delta}{\sigma}$ then $\deqctx{\Gamma'}{\Gamma''}, \deqctx{\Delta'}{\Delta''}$,
      and $\deqsubs{\Gamma'}{\Delta'}{\sigma'}{\sigma''}$.
    \item Let $\sdtype{\Gamma}{A}$; then there is a good lifting
      $\dtype{\Gamma'}{A'}$ of $\sdtype{\Gamma}{A}$ ; moreover if
      $\dtype{\Gamma''}{A''}$ is also a good lifting of
      $\sdtype{\Gamma}{A}$ then $\deqctx{\Gamma'}{\Gamma''}$ and
      $\deqtype{\Gamma'}{A'}{A''}$.
    \item Let $\sdterm{\Gamma}{A}{t}$; then there is a good lifting
      $\dterm{\Gamma'}{A'}{t'}$ of $\sdterm{\Gamma}{A}{t}$ ; moreover
      if $\dterm{\Gamma''}{A''}{t''}$ is also a good lifting of
      $\sdterm{\Gamma}{A}{t}$ then $\deqctx{\Gamma'}{\Gamma''}$,
      $\deqtype{\Gamma'}{A'}{A''}$, and $\deqterm{\Gamma'}{A'}{t'}{t''}$.
    \end{enumerate}
  \end{thm}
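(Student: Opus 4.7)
\emph{Plan.} The proof is by a single simultaneous induction on derivations in $(\vdashp)$ covering the four formation judgments together with the corresponding equality judgments. The latter need to be included so that uniqueness of good liftings at each level has a direct counterpart for equal subjects; concretely, one strengthens the statement to: if $\sdeqterm{\Gamma}{A}{t}{t'}$ in $(\vdashp)$ and $t_1, t_2$ are good liftings of $t, t'$ respectively, then $\deqterm{\Gamma'}{A'}{t_1}{t_2}$ in $(\vdash)$ (and similarly for contexts, substitutions, and types).

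For every rule of $(\vdashp)$ that is also a rule of $(\vdash)$, the construction is routine: apply the induction hypothesis to the premises, choose any good liftings, and form the conclusion by the same rule in $(\vdash)$. The witnessing $(\vdashp)$-equation is obtained from the corresponding congruence rule in $(\vdashp)$; uniqueness at the conclusion is derived from the uniqueness clauses at the premises together with the congruence rules of $(\vdash)$.

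The only rules of $(\vdashp)$ without a counterpart in $(\vdash)$ are \textsc{(prf-tm)} and the analogous rule introducing $\oprf$ at $\enum 0$. For $\oprf \in \term \Gamma {\boxty A}$ derived from a witness $a \in \term \Gamma A$, the induction hypothesis yields a good lifting $a' \in \term{\Gamma'}{A'}$, and I set $\boxtm{a'}$ as the good lifting of $\oprf$. It is a pure term of type $\boxty{A'}$, and the required $(\vdashp)$-equation $\sdeqterm{\Gamma}{\boxty A}{\oprf}{\boxtm{a'}}$ is an instance of \textsc{(prf-eq)}, which belongs to both calculi. The case of $\enum 0$ is analogous, using any pure inhabitant supplied by the induction hypothesis. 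Uniqueness in both cases is immediate: any two pure inhabitants of a proof-irrelevant type, of $\enum 0$, or of $\enum 1$ are equal in $(\vdash)$ by the corresponding equality rules, which are shared between the two calculi.

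The main obstacle is organisational rather than conceptual: because contexts, types, terms, and substitutions all depend on each other, at every inductive step one must fix canonical good liftings of the ambient context and types so that the liftings of the dependent subjects land in matching ambient data. The uniqueness clauses in the induction hypothesis are exactly what makes these choices cohere, since any two candidate liftings of the ambient data are already equal in $(\vdash)$, and the congruence rules of $(\vdash)$ transport a lifting across equal contexts and types. Once this coherence is set up carefully, every remaining case reduces either to a routine congruence step or to one of the two genuinely new cases above.
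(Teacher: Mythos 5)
Your proposal is correct and follows essentially the same route as the paper: induction on $(\vdashp)$-derivations, routine congruence steps for the rules shared with $(\vdash)$, and for the only genuinely new rules \ruleref{prf-tm} and \ruleref{\enumrule[0]tm} the good lifting of $\oprf$ is built from the witness in the premise (namely $\boxtm{a'}$, resp.\ a pure inhabitant of $\enum 0$), with uniqueness discharged by \ruleref{prf-eq} and \ruleref{\enumrule[0]eq} together with conversion. Your explicit strengthening of the induction to cover equality judgements, and your remarks on keeping the ambient context and type liftings coherent, are organisational refinements of what the paper leaves implicit in its one-case sketch.
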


  \begin{proof}
    By induction on derivations, in each rule we use i.h., and build
    up the corresponding entity to the good lifting for each part of
    the judgement; then, given any other good lifting of the whole
    judgement, we do inversion on the definition of good lifting, and
    get the equalities for each part; we finish using congruence for
    showing that both good lifting are judgmental equal.

  We show the case for \ruleref{prf-tm}. First we prove the existence of
  a good lifting.
  \begin{align*}
    \proofLine{\sdterm{\Gamma}{\boxty{A}}{\oprf}}{hypothesis}\tag{*}\label{eq:lift-ex-1}\\
    \proofLine{\sdterm{\Gamma}{A}{t}}{by inversion on \eqref{eq:lift-ex-1}}
      \tag{\textdagger}\label{eq:lift-ex-2} \\
    \proofLine{\dterm{\Gamma'}{A'}{t'}}{by ind. hyp. is a good lifting of \eqref{eq:lift-ex-2}} \\
    \proofLine{\dterm{\Gamma'}{\boxty{A'}}{\boxtm{t'}}}{by \ruleref{prf-i}, is a good lifting of\eqref{eq:lift-ex-1}}      
  \end{align*}

  Now we prove the second half of the theorem.
  \begin{align*}
    \proofLine{\dterm{\Gamma''}{B}{s}}{hypothesis, be other good lifting of \eqref{eq:lift-ex-2}}\tag{**}\label{eq-lift-ex-3}\\
    \proofLine{\dtype{\Gamma''}{B}}{by inversion, good lifting of $\sdtype{\Gamma}{\boxty{A}}$}\\
    \proofLine{\deqtype{\Gamma'}{B''}{\boxty{A'}}}{by ind. hyp.}\\
    \proofLine{\deqterm{\Gamma'}{\boxty{A'}}{\boxtm{t'}}{s}}{by \ruleref{prf-eq} and \ruleref{conv}}.
  \end{align*}
\end{proof}
}{
Now we prove that the calculus with \ruleref{prf-tm} is a
conservative extension of the one without it.  We decorate the
turnstile, and the equality symbol with $^\ast$ for referring to
judgements in the extended calculus.

\begin{defi}
  A term $t'$ is called a {\em lifting} of a term $t$, if all the
  occurrences of $\oprf$ in $t$ have been replaced by terms
  $s_0,\ldots,s_{n-1}$, and $\oprf$ does not occur in any $s_i$. We
  extend this definition to substitutions, contexts, and equality
  judgements.

  If $\Gamma'$ is a lifting of $\Gamma$, and $\Gamma \eqp \Gamma'$,
  and also $\Gamma'\vdash$ then we say that $\Gamma'$ is a {\em
    good lifting} of $\Gamma$.  We extend the definition of
  good lifting to the others kinds of judgement.
\end{defi}

\begin{lem}
  Let $\Gamma \vdashp J$, then there exists a good lifting $\Gamma'
  \vdash J'$; moreover for any other good lifting $\Gamma''\vdash J''$
  of $\Gamma\vdashp J$, we have $\Gamma' = \Gamma''$, and 
  $\Gamma'\vdash J' = J''$.
\end{lem}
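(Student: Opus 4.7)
The plan is to proceed by simultaneous induction on derivations of the four judgement forms $\sdctx{\Gamma}$, $\sdsubs{\Gamma}{\Delta}{\sigma}$, $\sdtype{\Gamma}{A}$, and $\sdterm{\Gamma}{A}{t}$, establishing both halves of the statement at once: existence of a good lifting of the conclusion, and equality with any other good lifting of the same conclusion.

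For every rule that does not produce an occurrence of $\oprf$, the argument is routine. I would apply the induction hypothesis to each premise to obtain good liftings of the premises, apply the corresponding rule of $(\vdash)$ to these liftings so as to reconstruct a candidate lifting of the conclusion, and finally note that the resulting $(\vdashp)$-judgement equals the original by congruence from the equalities furnished by the induction hypothesis. For uniqueness I would use syntactic inversion (Remark~\ref{invsubs} and Lemmas~\ref{lem:inv-ty}, \ref{lem:inv-pi}) to decompose any alternative good lifting of the conclusion into good liftings of the premises, invoke the inductive uniqueness clause on each, and reassemble using congruence of $(\vdash)$-equality.

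The only substantive case is the rule \ruleref{prf-tm}: from $\sdterm{\Gamma}{A}{a}$ one derives $\sdterm{\Gamma}{\boxty{A}}{\oprf}$, and the term $\oprf$ has no counterpart in $(\vdash)$, so the lifting must fabricate a witness. Applying the induction hypothesis to the premise yields a good lifting $\dterm{\Gamma'}{A'}{a'}$; then $\dterm{\Gamma'}{\boxty{A'}}{\boxtm{a'}}$ is the desired candidate, built by \ruleref{prf-f} and \ruleref{prf-i} in $(\vdash)$. That it is good follows from the $(\vdashp)$-equality $\sdeqterm{\Gamma}{\boxty{A}}{\oprf}{\boxtm{a'}}$, which is immediate from \ruleref{prf-eq} combined with the type equality supplied by the induction hypothesis. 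For uniqueness, given any other good lifting $\dterm{\Gamma''}{B}{s}$, the induction hypothesis at the type level yields $\deqctx{\Gamma'}{\Gamma''}$ and $\deqtype{\Gamma'}{\boxty{A'}}{B}$; a single use of \ruleref{prf-eq} in $(\vdash)$ together with conversion then gives $\deqterm{\Gamma'}{\boxty{A'}}{\boxtm{a'}}{s}$.

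The main obstacle is the bookkeeping inherent to the simultaneous induction. Several rules---notably \ruleref{prf-el} for the \textsf{where}-eliminator---have premises of mixed judgement forms plus a side condition that is itself an equation between terms. Threading the $(\vdashp)$-equalities produced by the induction hypothesis into a single $(\vdashp)$-equality between the original conclusion and its reconstructed lifting, while simultaneously verifying that the reconstruction is well-typed in $(\vdash)$ under the reconstructed premises and that the side equation lifts correctly, is where most of the work sits; the rest is a mechanical traversal of the remaining formation, introduction, elimination, and conversion rules.
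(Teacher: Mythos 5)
Your proof takes essentially the same route as the paper's: induction on derivations, building the lifting of the conclusion from the inductive hypotheses plus congruence, and establishing uniqueness by inverting the definition of good lifting and closing with \ruleref{prf-eq} and conversion; your handling of \ruleref{prf-tm} coincides with the paper's worked case. The only addition worth making is that \ruleref{\enumrule[0]tm} also introduces $\oprf$ and requires the identical treatment, using \ruleref{\enumrule[0]eq} in place of \ruleref{prf-eq}.
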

}

\begin{cor}
  \label{cor:cons-prf-tm}
  The calculus $(\vdashp)$ is a conservative extension of $(\vdash)$. \qed
\end{cor}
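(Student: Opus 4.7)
The plan is to observe that the corollary falls out essentially for free from the good-lifting theorem stated immediately above, once we unpack what ``conservative extension'' means here. Conservativity in this setting is the statement: whenever $\Gamma \vdashp J$ is derivable and neither $\Gamma$ nor $J$ contains any occurrence of the special constant $\oprf$ (so that the judgement already belongs to the restricted calculus), then $\Gamma \vdash J$ is derivable too.

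Given such a pure judgement $\Gamma \vdashp J$, I would first invoke the existence part of the preceding theorem to obtain a good lifting $\Gamma' \vdash J'$ of $\Gamma \vdashp J$. This is the key input: the theorem guarantees that every derivation in $(\vdashp)$ has a counterpart in $(\vdash)$, where the counterpart is obtained by replacing each occurrence of $\oprf$ by a pure term.

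The second and final step is the trivial observation that, when the entity being lifted already contains no occurrence of $\oprf$, the definition of ``lifting'' leaves nothing to replace, and so the only lifting is the entity itself: we must have $\Gamma' \equiv \Gamma$ and $J' \equiv J$ syntactically. Combining this with the good lifting $\Gamma' \vdash J'$ produced in the previous step yields $\Gamma \vdash J$, which is exactly what conservativity asks for.

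I do not expect any real obstacle here, since all of the mathematical content sits in the good-lifting theorem; the corollary is a direct reformulation of its existence half. The only minor point to handle carefully is to make sure the definition of ``lifting'' really forces $\Gamma' \equiv \Gamma$ and $J' \equiv J$ on the nose (rather than merely up to judgemental equality) when the original judgement is pure, but this is immediate from reading the definition as ``replace every $\oprf$ by a pure term,'' which is vacuous when there are no $\oprf$'s to begin with. No appeal to the uniqueness half of the theorem is needed.
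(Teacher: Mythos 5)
Your proposal is correct and matches the paper's (implicit) argument: the corollary is drawn directly from the existence half of Theorem~\ref{T:good-lifting}, and your extra observation that a pure judgement is its own unique lifting is exactly the step that turns the good-lifting property into the standard statement of conservativity (the paper simply takes the good-lifting property itself as its definition of ``conservative extension,'' which is why it attaches \qed with no further argument). You are also right that the uniqueness half of the theorem plays no role here.
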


\para{Combining singleton types and proof-irrelevant propositions}
For illustrating the difficulties one can find when extending
\lambdaIrr\ with singleton types, consider a slightly different
calculus where we drop the type annotation of the eliminator for
proof-irrelevance terms; i.e. we would have $\wheretm{b}{t}{\ }$
instead of $\wheretm{b}{t}{B}$. In the resulting system one can
derive: \newcommand{\czero}{\const{2}{0}}
\newcommand{\cone}{\const{2}{1}} \newcommand{\TyN}{\enum{2}}
\newcommand{\TyZero}{\singTm{\czero}{\TyN}}
\newcommand{\TyOne}{\singTm{\cone}{\TyN}}

\[
\begin{prooftree}
  \[
  \[
  \[
  \[
  \[ \justifies
  \derN\czero\of\TyZero
  \]
  \justifies
  \derN\oprf\of\boxty{\TyZero}
  \]
  \justifies
  \derN \ABwhere x x \oprf \of \TyZero
  \]
  \justifies
  \derN \ABwhere x x \oprf = \czero \of \TyZero
  \]
  \justifies
  \derN \ABwhere x x \oprf = \czero \of \TyN
  \]\quad
  \[
  \[
  \[
  \[
  \[ \justifies
  \derN\cone\of\TyOne
  \]
  \justifies
  \derN\oprf\of\boxty{\TyOne}
  \]
  \justifies
  \derN \ABwhere x x \oprf \of \TyOne
  \]
  \justifies
  \derN \ABwhere x x \oprf = \cone \of \TyOne
  \]
  \justifies
  \derN \ABwhere x x \oprf = \cone \of \TyN
  \]
  \justifies
  \derN \czero = \cone \of \TyN
\end{prooftree}
\]
This derivation shows that mixing the rule \ruleref{sing-eq-el} with
erasure of proof-terms leads to inconsistencies. It is yet unclear how
to combine singleton types and erasure of proof-terms; we leave this
topic for a future work. On the other hand, there are no problems in
extending $(\vdash)$ with singletons types; in fact, we can construct
(see Rem. \ref{rem:proof-relevant-model}) a model where
$\semc{\czero}\not=\semc{\cone}$, which assures $\not\derN
\czero=\cone \of \TyN$.



\section{Examples}
\label{sec:examples}

\subsection{Safe vector projection in \lambdaPI}

We give a short demonstration how to use proof irrelevance in
\lambdaPI: we define vectors and a type safe projection function.
While de Bruijn style is good for implementation and reasoning, it is
virtually unreadable for humans, so we allow ourselves named
$\lambda$-terms here which can be mechanically converted into actual
terms of \lambdaPI.  For instance, we write $\F A {(x.B)}$
instead of the de Buijn style $\F A {B}$.
For further convenience, let $\funT x A B = \F A {(x. B)}$ and
$\sigT x A B = \DSum A {(x. B)}$.  The non-dependent versions
are written $A \to B = \funT \_ A B$ and $A \times B = \sigT \_ A B$.
The type $\tVec A n$ of vectors of length $n$ over element type $A$ can
be defined recursively as follows.
\bprg
  \begin{array}{@{}lcl}
  \Vecraw & : & \funT A \TmU \funT n \natty \TmU \\
  \Vecraw & = & \lambda A \lambda n.\; \natrec \TmU {\enum 1} {(\lambda
    n'\lambda V.\, A \times V)} n \\
  \end{array}
\eprg
A more suggestive notation for definitions by recursion is
\emph{pattern matching}; in this the definition of $\Vecraw$ reads as
follows:
\bprg
  \begin{array}{@{}lcl}
  \multicolumn 3 {@{}l} {\Vecraw  :  \funT A \TmU \funT n \natty \TmU }\\
  \tVec A \ztm & = & \enum 1 \\
  \tVec A (\suctm n') & = & A \times \tVec A n' \\
  \end{array}
\eprg
In the following, we use pattern matching as syntactic sugar for
$\natrecraw$.  Our language already has booleans, so let us define
comparison of natural numbers.
\bprg
\begin{array}[t]{@{}l}
\begin{array}[t]{@{}l@{~}l@{~}lcl}
  \multicolumn 5 {@{}l} {\tleqraw  :  \funT m \natty \funT n \natty \Bool} \\
  \tleqraw & \ztm         &  n           & = & \ttrue \\
  \tleqraw & {(\suctm m)} &  \ztm        & = & \tfalse \\
  \tleqraw & {(\suctm m)} & {(\suctm n)} & = & \tleq m n \\
\end{array}
\qquad \qquad
\begin{array}[t]{lcl}
  \Bool & = & \enum 2 \\
  \ttrue & = & \const 2 0 \\
  \tfalse & = & \const 2 1 
\end{array}
\end{array}
\eprg
By reflecting booleans into $\TmU$ we can obtain witnesses of
propositions.  
\bprg
\begin{array}{@{}lcl}
  \multicolumn 3 {@{}l} {\Trueraw :  \funT b \Bool \TmU} \\
  \TrueC \ttrue  & = & \enum 1 \\
  \TrueC \tfalse & = & \enum 0 \\ 
\end{array}
\eprg
$\TrueC (\tleq m n)$ is inhabited if $m \leq n$, because then $\TrueC
(\tleq m n)$ simplifies to $\enum 1$ with trivial inhabitant $\const 1
0$.  If not $m \leq n$ then $\TrueC (\tleq m n)$ simplifies to the
empty type $\enum 0$.  A proposition $\Ltraw$ for ``less than'' is
obtained as:
\bprg
\begin{array}{@{}lcl}
  \multicolumn 3 {@{}l} {\Ltraw  :  \funT m \natty \funT n \natty \TmU} \\
  \Lt m n & = & \TrueC (\tleq {(\suctm m)} n) \\
\end{array}
\eprg
We are now ready to define a safe projection operation for vectors.
\bprg
\begin{array}{@{}l@{~}l@{~}l@{~}l@{~}l@{~}lcl}
  \multicolumn 8 {@{}l} {\tlookupraw  :  \funT A \TmU \funT n \natty \funT m \natty 
                    \funT p {\boxty (\Lt m n)} \funT v {\tVec A n} A
                    }\\
  \tlookupraw & A & \ztm         & m            & p & v & = & \tmagic A p \\
  \tlookupraw & A & {(\suctm n)} & \ztm         & p & v & = & \tfst\;v \\
  \tlookupraw & A & {(\suctm n)} & {(\suctm m)} & p & v & = & \tlookup A n m p
    (\tsnd\,v)  
\end{array}
\eprg
Since $\Lt {(\suctm m)} (\suctm n) = \Lt m n$ we can simply pass $p$
to the recursive call in the last equation.  In the first line we have
to magically conjure an element of $A$ from a proof $p : \boxty{(\Lt
  m \ztm)} = \boxty{(\TrueC (\tleq {(\suctm m)} \ztm))} = \boxty{(\TrueC
\tfalse)} = \boxty{\enum 0}$.
\bprg
\begin{array}{@{}l}
  \tmagicraw  :  \funT A \TmU \funT p {\boxty{\enum 0}} A \\
  \tmagic A p = \ABwhere {\elimraw^0\, A\, q} q p \\
\end{array}
\eprg
This is well typed since all inhabitants of $\enum 0$ are equal, thus,
the Awodey-Bauer rule \rulename{prf-el} is applicable.  

The benefit of proof irrelevance is that now for any $p,q : \Lt m n$,
$\tlookup A n m p v = \tlookup A n m q v : A$; for a more detailed
discussion consult Werner \cite{werner:strengthProofIrrelevance}.

\subsection{Isomorphisms in \lambdaPI}

Any already irrelevant type is isomorphic to its $\prfraw$ version,
i.e., for $A \in \{ \enum 0, \enum 1, \boxty B, \singTm t B \}$ we have
coercions
\bprg
\begin{array}{@{}lclcl}
  \phi & = & \lambda x.\, \boxtm x & : & A \to \boxty A \\
  \psi & = & \lambda x.\, \ABwhere y y x & : & \boxty A \to A \\ 
\end{array}
\eprg
with $\psi \circ \phi = \lambda x.\,x$ by \rulename{prf-$\beta$} and
$\phi \circ \psi = \lambda x.\,x$ by proof irrelevance.  How do these
coercions extend to higher types? For arbitrary $A,B$ we have.
\bprg
\begin{array}{@{}lcl}
  \phi^\Sigma & : & \sigT x {\boxty A} {\boxty B} \to \boxpar{\sigT x A B} 
\\
  \phi^\Sigma & = &\lambda p.\, \ABwhere 
    {\ABwhere {\boxtm{(a,b)}} a {\dfst p}} b {\dsnd p}
\\[1ex]
  \psi^\Sigma & : & \boxpar{\sigT x A B} \to \sigT x {\boxty A} {\boxty B}  
\\
  \psi^\Sigma & = & \lambda q.\, \ABwhere 
     {(\boxtm{\dfst p},\, \boxtm{\dsnd p})} p q
\\[2ex]
  \phi^\Pi & : & (\funT x A {\boxty B}) \to \boxpar{\funT x A B}
\\
  \phi^\Pi & = & \lambda f.\, \boxtm{\lambda x.\, \ABwhere ? y {f\,x}}
\\[1ex]
  \psi^\Pi & : & \boxpar{\funT x A B} \to \funT x A {\boxty B}
\\
  \psi^\Pi & = & \lambda f \lambda x.\, \ABwhere {\boxtm{g\,x}} g f
\end{array}
\eprg
We would like to put $y$ for the $?$ in $\phi^\Pi$, but this is not
well typed, since we do not have $y = z : B$ for arbitrary $z$.  It
seems that $\phi^\Pi$ is not definable in $\lambdaPI$, as it is not
definable in computational lambda-calculus \cite{complambda}
for an arbitrary monad
$\prfraw$.  Awodey and Bauer also have only $\phi^\Pi : (\funT x A {\boxty B})
\to \boxpar{\funT x A \boxty B}$, which is trivial.

For arbitrary $p : \sigT x {\boxty A} {\boxty B}$ we have
\[
\begin{array}{lcl}
  (\psi^\Sigma \circ \phi^\Sigma)(p) 
    & =_\beta & \ABwhere {\ABwhere {(\boxtm a,\, \boxtm b)} a {\dfst p}} b {\dsnd p}
\\  & =_\eta  & (\dfst p,\, \dsnd p) =_\eta  p .
\end{array}
\]
In the opposite direction, $\phi^\Sigma \circ \psi^\Sigma = \lambda
q.\, q$ by proof irrelevance.  Thus, $\phi$ and $\psi$ establish an
isomorphism, which means that $\prfraw$ distributes over $\Sigma$.

\subsection{On subtyping in \lambdaSing}

Subtyping can be defined in several ways, for instance, $A$ is a
subtype of $A'$ in $\Gamma$, written $\Gamma \derN A \subtype A'$,
iff $\Gamma, x \of A \derN x \of A'$.
Most presentations of singleton types include
subtyping \cite{aspinall:csl94,courant:itrs02,stoneHarper:tocl06}, so
it is natural to ask whether the usual rules hold in our calculus.
Using the principle $u = t : A$ iff $u : \singTm t A$, it is easy to
see that we have Aspinall's two axioms \cite{aspinall:csl94}:
\[
\begin{array}{lcl@{~}c@{~}ll}
  x : \singTm t A & \derN & x & : & A \\
  x : \singTm t A & \derN & x & : & \singTm t {\singTm t A}
   & \mbox{since } x = t : \singTm t A
\end{array}
\]
Also, singleton formation is compatible with subtyping, if $\Gamma
\derN A \subtype B$ then $\Gamma \derN \singTm t A \subtype \singTm t
B$.  Contravariant subtyping, however, only holds up to
$\eta$-equality.  If we relax the definition of subtyping $\Gamma
\derN A \subtype A'$ to $\Gamma, x \of A \derN \eta(x) \of A'$ where
$\eta(x)$ denotes any $\eta$-expansion of $x$, then we get
contravariant subtyping
\[
  \dfrac{\Gamma, x \of A' \derN \eta_A(x) : A \qquad
         \Gamma, x \of A', y \of B[\eta_A(x)/x] \derN \eta_B(y) : B'
       }{\Gamma, f \of \funT x A B \derN 
           \lambda x.\, \eta_B(f(\eta_A(x))) : \funT x {A'} B'}.
\]
Furthermore, we have following two axioms, which hold definitionally in Stone and
Harper's system \cite{stoneHarper:tocl06}
\[
\begin{array}{lcl@{~}c@{~}ll}
  f : \singTm t {\funS x A B} & \derN & 
    \lambda x.\,f\,x 
     & : &  \funT x A \singTm {t\,x} B 
     & \mbox{since } f = t : \funT x A B \\
  f : \funT x A \singTm {t\,x} B & \derN & 
  \lambda x.\,f\,x 
     & : & \singTm t {\funS x A B} 
     & \mbox{since } f\,x = t\,x : B \\
\end{array}
\]
The first axiom is one of Courant's subtyping rules \cite{courant:itrs02}. 



\section{From Untyped to Typed Normalisation by Evaluation}
\label{sec:nbe-primer}

\noindent In this section, we give a short introduction into
normalisation-by-evaluation for typed lambda calculi, with a special
emphasis on our novel method for generation of fresh identifiers during
reification.

\subsection{Fresh Name Generation in NbE}

The basic idea of NbE is to \emph{evaluate} a term of type $A$ into a
suitable semantics $\den A$ from which its \emph{normal form} can be
extracted by \emph{reification}.  In case of the simply-typed
lambda-calculus, this is possible if we choose for base types $\den o$
the set of terms of type $o$ and for function types $\den{A \to B}$ a
suitable subset of the function space $\den A \to \den B$.  During
reification of a function $f \in \den{A \to B}$ to a term $\lambda x.
t$, the identifier $x$ has to be chosen \emph{fresh} to avoid capture
of names in the body of the function $f$.  However, since $f$ is a
semantic object, it is a non-trivial problem to compute a name which
is fresh for $f$.  Garillot and Werner \cite{garillotWerner:tphols07}
solve it by first letting $x$ be a dummy identifier, computing the
free variables in the reified function body $t$, and then reify $f$
again with a name $x$ which is fresh for $t$.  This is, of course,
horribly inefficient, and there are other solutions.  In the original
publication on NbE by Berger and Schwichtenberg
\cite{bergerSchwichtenberg:lics91}, base types $\den o =
\widehat\Lambda$ are interpreted by \emph{term families}.  These are
functions $g$ from the natural numbers into a \emph{de Bruijn level}
representation of terms such that all instances $g(n)$ are
$\alpha$-equivalent but in $g(n)$ the bound variables are levels
starting with $n$. In this setting, the reification of a function $f
\in \den{A \to B}$ is not a term but a term family, mapping $f$ to the
term family $n \mapsto \lambda x_n . \phi (f(\widehat{x_n})) (n+1)$,
where $\phi$ denotes the reification function and $\widehat{x_n}$ is
the variable $x_n$ seen as an element in $\den A$. 
Note that every $\lambda$ in $\phi (f(\widehat{x_n})) (n+1)$, 
the body of the reified abstraction, will bind a variable 
from the set $\{x_{n+1}, x_{n+2}, \ldots\}$.

When considering NbE for the \emph{untyped} lambda calculus, the type
semantics collapses to a single domain $\D \cong \widehat\Lambda +
[\D \to \D]$ which contains terms and functions,\footnote{
Let us notice
here the \emph{tagging} introduced by the disjoint sum operator $+$.
Indeed, in the absence of a type structure, \emph{tagless
  normalisation} seems impossible.  
}
as observed by Filinski and Rhode \cite{filinskiRohde:untypedNbE}.  
Aehlig and Joachimski \cite{aehlig:nbe} replace term families by
functions $h$ from natural numbers to a \emph{de Bruijn index}
representation of terms, where $h(n)$ shifts all \emph{free} indices
by $n$. 

In this paper, instead of having term families $\widehat\Lambda$ in
the semantics, we have a notion of \emph{neutral (``term-like'')
  value} built up from free variables $x_i$ and application of the
free variables to sequences of values $\vec d$.  The free variables
are \emph{de Bruijn levels} in spirit, thus, no shifting is needed,
just like in the \emph{locally nameless} approach
\cite{pollack:alpha}.  The second author has given a semantics with
neutrals before \cite{coquand:type}, calling the free variables
\emph{generic values}.  Also, this approach has been used by the first
two authors together with Dybjer \cite{abelCoquandDybjer:mpc08} for
NbE without a reflection operation, and independently by L\"oh,
McBride, and Swierstra
\cite{loehMcBrideSwierstra:tutorialDependentlyLambda}.  In this
article, we put the technique to a novel use by defining typed
reification \emph{and reflection} for this semantics.

\subsection{Untyped NbE}
\label{sec:untyped-nbe}
Let $\Var = \{x_0, x_1, \dots\}$ be a denumerable set of variables.
We consider a set $\D$ and a notion of function space $[\D \to \D]$
with an embedding constructor $\Lam : [[\D \to \D] \to \D]$ and two
further constructors $\mathsf{Var} : \Var \to \D$ and $\App : [\D \times \D \to
\D]$ for neutral values.  An application function $\_\cdot\_ : [\D
\times \D \to \D]$ is given by
\[
\begin{array}{l@{~}l@{~}lll@{\quad}l}
  (\Lam\,f) & \cdot & d & = & f(d) \\
  e         & \cdot & d & = & \App(e,d) & \mbox{if } e \mbox{ not a }
    \Lam .\\ 
\end{array}
\]
Such a $\D$ can be realised by solving the recursive domain equation 
$\D \cong [\D \to \D] \oplus \Var_{\bot}  \oplus (\D \times \D)$ or, for
the practically minded, by defining a Haskell data type
\begin{verbatim}
    data D where
      Lam : (D -> D) -> D
      Var : Var -> D
      App : D -> D -> D
\end{verbatim}
and programming $\_\cdot\_$ by pattern matching.  Our definition of
$\D$ is a bit ``too big'' since it does not restrict $\App$ to the
construction of neutral values
$\App(\dots\App(\iVar i,d_1)\dots,d_n)$ but we have also
$\App(\Lam\,f,d)$.  However, we can ignore these unwanted elements
since our NbE algorithm never produces any. 

\begin{rem}
\label{rem:comp-adequacy}
  The relationship between the denotational model\/ $\D$ and the
  Haskell data type $\texttt{D}$ is not without subtleties.  Domain
  theoretic functions such as application $\_\cdot\_$ correspond to
  Haskell programs if our denotational semantics is computationally
  adequate for Haskell's operational semantics
  \cite{plotkin:adequacy}. Filinski and Rhode
  \cite{filinskiRohde:untypedNbE} formally relate a NbE
  function on a reflexive domain\/ $\D$ to a NbE program written in an
  ML-like, call-by-value language, by exploiting computational
  adequacy.  We do not formally prove this connection for Haskell in
  this article, this is deferred to future work.  
\end{rem}

Untyped NbE is now given by a standard evaluator $\Den t \rho \in \D$
of terms $t$ in environments $\rho$ and a readback function $\reify j
d$ from values $d$ at de Bruijn level $j$ to terms
\cite{gregoireLeroy:compiledReduction}.   For the sake of
readability, we use names instead of de Bruijn indices in the syntax
of untyped terms.
\[
\begin{array}{c@{\qquad}c}
  \begin{array}[t]{rll}
    \Den x \rho & = & \rho(x) \\
    \Den {r\, s} \rho & = & \Den r \rho \cdot \Den s \rho \\
    \Den {\lambda x. t} \rho & = & \Lam\,(d \mapsto 
      \Den t {\update \rho x d}) \\
  \end{array}
&
  \begin{array}[t]{lll}
    \reify j (\iVar i) & = & x_i \\
    \reify j (\App(r,s)) & = & (\reify j r) \, (\reify j s) \\
    \reify j (\Lam\, f) & = & \lambda x_j.\, \reify {j+1}
      (f(\iVar j)) \\
  \end{array}
\end{array}
\]
To normalise a closed term $t$, compute $\reify 0 \den t$.  To
normalise an open term $t$ with free variables $y_0,\dots y_{n-1}$
compute $\reify n {\Den t \rho}$ with environment $\rho(y_i) = \iVar i$.

To prepare for applying our method to $\lambdaSing$ and $\lambdaIrr$,
let us switch to de Bruijn representation.  Environments become tuples
and variables de Bruijn indices $\ind i$.
\[
\begin{array}{c@{\qquad}c}
  \begin{array}[t]{lll}
    \Den \tq (\rho,d) & = & d \\
    \Den {t\,\tp} (\rho,d) & = & \Den t \rho \\
    \Den {r\, s} \rho & = & \Den r \rho \cdot \Den s \rho \\
    \Den {\lambda t} \rho & = & \Lam\,(d \mapsto 
      \Den t (\rho,d)) \\
  \end{array}
&
  \begin{array}[t]{lll}
    \reify j (\iVar i) & = & \ind{j - (i+1)} \\
    \\
    \reify j (\App(r,s)) & = & (\reify j r) \, (\reify j s) \\
    \reify j (\Lam\, f) & = & \lambda \, (\reify {j+1}
      (f(\iVar j))) \\
  \end{array}
\end{array}
\]
To read back a de Bruijn level $\iVar i$ as a de Bruijn index, we have
to take the current length $j$ of the variable context into account.
While de Bruijn levels are absolute references, they are numbered
$x_0, x_1, \dots, x_{j-1}$ in a context of length $j$, de Bruijn
indices are relative to the length of the context, they are enumerated
from right to left: 
$\ind{j-1}, \dots, \ind 1, \ind 0$.  
The formula $j - (i+1)$ (assuming $i < j$) converts level $i$
into the corresponding index.

\subsection{Typed NbE}

While untyped NbE returns a $\beta$-normal form (if it exists), typed
normalisation by evaluation yields a $\beta\eta$-normal form, usually
the $\eta$-long form.  To obtain the $\eta$-long form, we have to
modify our reification procedure.  One method is to make read-back
type directed \cite{abelCoquandDybjer:mpc08}, which corresponds to
postponing $\eta$-expansion after $\beta$-normalisation. 
However, this strategy
is not sufficient in the case of $\lambdaSing$, because
$\eta$-expansions at singleton types can trigger new
$\beta$-reductions.   The other method is to divide $\eta$-expansion
into \emph{reflection} and \emph{``reification''}, the first expanding
variables to enable new reductions, and the second expanding the
result of $\beta$-normalisation to obtain an $\eta$-long
form.\footnote{In tagless normalisers
  \cite{bergerSchwichtenberg:lics91}, 
  reflection is necessary to
  inject variables $x : A$ of non-base types $A$ into the semantics
  $\den A$.  However, for languages beyond pure type systems it is
  hard to obtain tagless normalisation.  Classic is the problem of
  disjoint sum types \cite{altenkirchDybjerHofmannScott:lics01}: to
  display a free variable of type $A + B$ as either a left or a right
  injection, we need control structures
  \cite{balatDiCosmoFiore:popl04}.  Alternatively, one can replace
  data types by their Church encodings.  None of these approaches fit
  our purposes, thus, we are currently not aiming at tagless normalisation.}

The novel approach of this article is to do reflection $\uparrow$ and
``reification'' $\downarrow$, hence, $\eta$-expansion, completely at
the level of
the semantics $\D$.  Since our value domain $\D$ allows us to
construct functions via $\Lam$, the process of $\eta$-expansion is
independent of any fresh name considerations.
\[
\begin{array}{c@{\qquad}c}
  \begin{array}{l@{~}lll}
    \up o        {}& e & = & e \\
    \up{A \to B} {}& e & = & \Lam\,(d \mapsto \up B \App(e,\down A d)) \\
  \end{array}
&
\begin{array}{l@{~}lll}
  \down o         {}& d & = & d \\
  \down {A \to B} {}& d & = & \Lam\,(e \mapsto \down B (d \dapp (\up A
  e))) \\
\end{array}
\end{array}
\]
To compute the long normal form of a closed term $t$ of type $A$, 
run $\reify 0 (\down A {\den t})$.  For an open term $y_0 \of A_0,
\dots, y_{n-1} \of A_{n-1} \derN t : A$, execute $\reify n (\down A
{\Den t \rho})$, where $\rho(y_i) = \up{A_i} {(\iVar i)}$.


\section{Semantics}
\label{sec:semantics}

\noindent In this section we define a domain $D$ for denoting types,
  terms, and substitutions. Then we introduce a partial function
  $\reify j {}$ for
  reifying elements of the domain into the calculus; this function
  takes an extra argument $j \in \NN$ indicating the next free variable.  We
  continue by defining PERs over the domain; these PERs denote the
  axioms for types, terms, and substitutions. We need PERs for the
  evaluation function is defined over syntactical entities and not for
  typing judgements. We also introduce PERs $\pernf$ and $\perne$ 
  whose elements are
  invariably, in every context, reified as normal forms and
  neutral terms respectively.  Using these PERs we define a family
  (indexed by denotations of types) of functions for ``normalising''
  in the domain. We conclude this section proving completeness for
  this family of normalisation functions; here completeness means that
  two terms in the theory are read back as the same normal form.
  In this section we define a PER model of the calculus presented in
  the previous section. The model is used to define a normalisation
  function later.

\subsection{PER semantics}
\label{sec:persem}

\LONGVERSION{
  \newcommand{\pairapp}[2]{\mathsf{Pair}\ #1\ #2}
\newcommand{\singapp}[2]{\mathsf{Sing}\ #1\ #2}
\newcommand{\per}[1]{\mbox{PER}(#1)}
\newcommand{\calA}{\mathcal{A}}
\newcommand{\calM}{\mathcal{M}}
\newcommand{\calS}{\mathcal{S}}
\newcommand{\semcs}[1]{{}^{\mathsf{s}}\kern-0.2ex\semc{#1}}
\newcommand{\semctx}[1]{(\![{#1}]\!)}

In this subsection we introduce the abstract notion of PER models for
our theory. This subsection does not introduce any novelty (except for
some notational issues). We refer the reader to \cite{319872} for a
short report on the historical developments of PER models.

\begin{defi}[Partial Equivalence Relations]\label{sec:pers}
A partial equivalence relation (PER) over a set $\calA$ is a binary
relation over $\calA$ which is symmetric and transitive.

If $\calR$ is a PER over $\calA$, and $(a,a') \in \calR$ then it is
clear that $(a,a) \in \calR$. We define $dom(\calR) = \{ a \in \calA\
|\ (a,a) \in \calR \}$; clearly, $\calR$ is an equivalence relation
over $dom(\calR)$.  If $(a,a') \in \calR$, sometimes we will write $a
= a' \in \calR$, and $a\in \calR$ if $a \in dom(\calR)$. We denote
with $\per{\calA}$ the set of all PERs over $\calA$. Given two PERs $\calR$
and $\calR'$ over $\calA$, we say $\calR$ is included in $\calR'$ if
$(a,a') \in \calR$ implies $(a,a') \in \calR'$; we denote this
inclusion with $\calR \subseteq \calR'$.

If $\calR \in \per{\calA}$ and $\calF \colon dom(\calR) \into
\per{\calA}$, we say that $\calF$ is {\em a family of PERs indexed by
  $\calR$}\/ iff for all $a = a' \in \calR$, $\calF\ a = \calF\
a'$. If $\calF$ is a family indexed by $\calR$, we write $\calF \colon
\calR \into \per{\calA}$.
\end{defi}

\begin{defi}[Applicative structure]
  An applicative structure is given by a pair $\calA = (\calA,\cdot)$,
  where $\calA$ is a set and $\cdot$ is a binary operation on $\calA$.
\end{defi}

The following definitions are standard (\eg\
\cite{aspinall:csl94,coquandPollackTakeyama:fundinf05}) in definitions
of PER models for dependent types. The first one is even standard for
non-dependent types (\cf\ \cite{DBLP:books/el/leeuwen90/Mitchell90})
and ``F-bounded polymorphism'' (\cite{143230}); its definition clearly
shows that equality is interpreted extensionally for dependent
function spaces. The second one is the PER corresponding to the
interpretation of singleton types; it has as its domain all the
elements related to the distinguished element of the singleton, and it
relates everything in its domain.

\begin{defi} Let $\calA$ be an applicative structure,
  $\calX\in\per{\calA}$, and $\calF\in \calX\into \per{\calA}$.
  \begin{enumerate}[(1)]
  \item $\prod\,\calX\,\calF = \{(f,f')\ |\ f\cdot a = f'\cdot a' \in \calF\ a,
    \mbox{ for all } a = a' \in \calX\}$;
  \item $\singD{a}{\calX} = \{(b,b')\ |\ a = b \in \calX \mbox{ and } a = b'
    \in \calX\}$.
  \end{enumerate}
\end{defi}

 Besides interpreting function spaces and singletons we need PERs for
the denotation of the universe of small types, and for the set of
large types; jointly with these PERs we need functions assigning a PER
for each element in the domain of these universe PERs. Note that this
forces the applicative structure to have some distinguished elements.

\begin{defi}[Universe]
  Given an applicative structure $\calA$ with distinguished elements
  $\mathsf{Fun}$ and $\mathsf{Sing}$, a universe $(\perU, [\_])$ is a
  PER $\perU$ over $\calA$ and a family $[\_] \colon \perU \into \Per(\calA)$
  with the condition that $\perU$ is closed under function and 
  singleton types.  This means:
  \begin{enumerate}[(1)]
  \item Whenever $X = X' \in \perU$ and
    for all $a = a' \in [X]$, $F\ a = F'\ a' \in \perU$, then $\F{X}{F}
    = \F{X'}{F'} \in \perU$, with $[\F{X}{F}] = \prod\,[X]\,(a \mapsto [F\ a])$.
  \item Whenever $X = X' \in \perU$ and $a = a' \in [X]$, then $\iSing a
    X = \iSing{a'}{X'} \in \perU$ and $[\iSing a X] = \singD{a}{[X]}$.
  \end{enumerate}
\end{defi}\medskip

\noindent An applicative structure paired with one universe for small
types and one universe for large types is the minimal structure needed
for having a model of our theory.

\begin{defi}[PER model]
  Let $\calA$ be an applicative structure with distinguished 
  elements $\iU,\mathsf{Fun}$, and $\mathsf{Sing}$; a {\em
    PER model} is a tuple $(\calA,\perU,\perT,[\_])$ satisfying:
  \begin{enumerate}[(1)]
  \item $\perU\subset \perT \in PER(\calA)$, such that
    $(\perT,[\_])$ and $(\perU, [\_] \!\restriction_{\perU})$ are
    both universes, and
  \item $\iU \in dom(\perT)$, with $[\iU] = \perU$.
  \end{enumerate}
\end{defi}

In the following definition we introduce an abstract concept for
environments: since variables are represented as projection functions
from lists (think of $\q$ as taking the head of a list, and $\p$ as
taking the tail), it is enough having sequences together with
projections.

\begin{defi}[Sequences]
  Given a set $\calA$, a set $\Aseq$ \emph{has sequences} over $\calA$ if
  there are distinguished operations $\iO : \Aseq$, $\tPair : \Aseq
  \times \calA \to \Aseq$, $\tfst : \Aseq \to \Aseq$,  and $\tsnd : \Aseq
  \to \calA$ such that
  \begin{eqnarray*}
    &\tfst\, (\tPair\, a\, b)= a \\
    &\tsnd\, (\tPair\, a\, b)= b .
  \end{eqnarray*}
\end{defi}

Now we need to extend the notion of PERs over $\calA$ to PERs over
$\Aseq$ for interpreting substitutions.\footnote{The reader is invited
  to think of $\one$ as the terminal object of the category of PERs
  over $\Aseq$ and PER preserving morphisms; looked this way our
  definition for $\one$ does not differ very much from others
  \cite{aspinall:csl94,coquandPollackTakeyama:fundinf05}.} 

\begin{defi} Let $\calA$ be an applicative structure and let $\Aseq$
  have sequences over $\calA$; moreover let $\calX\in\per{\Aseq}$ and $\calF\in \calX\into
  \per{\calA}$.
  \begin{enumerate}[(1)]
   \item $\one = \{(\iO,\iO)\}$;
   \item $\sigD{\calX}{\calF} = \{(a,a')\ |\ \tfst\ a = \tfst\ a' \in \calX \mbox{ and }
     \tsnd\ a = \tsnd\ a' \in \calF\ (\tfst\ a)\}$;
  \end{enumerate}
\end{defi}

Until here we have introduced semantic concepts. Now we are going to
axiomatise the notion of evaluation, connecting the syntactic realm
with the semantic one.

\begin{defi}[Environment model]\label{def:syn-app-str}
  Let $(\calA,\perU,\perT,[\_])$ be a PER model and let $\Aseq$ have
  sequences over $\calA$.  We call $\calM = (\calA, \perU, \perT,
  [\_], \Aseq, \semc{\,},\semcs{\,}{})$ an {\em environment model} if
  the {\em evaluation functions} $\semc{\_}\_\colon \terms\times
  \Aseq \to \calA$ and $\semcs{\_}{\_}\colon \terms\times \Aseq \to
  \Aseq$ satisfy:
  \newlength{\originalArrayColSep} 
  \setlength{\originalArrayColSep}{\arraycolsep} 
  \setlength\arraycolsep{0.3em}
  \[
  \begin{array}[t]{rcl}
    \semcs{\idsubs{}}a &=& a\\
    \semcs{\esubs{}}a &=& \iO\\
    \semcs{\subsc{\sigma}{\delta}}a &=& \semc{\sigma}(\semc{\delta}a)\\
    \semcs{\exsubs{\sigma}{t}}a &=& \pairapp{(\semc{\sigma}a)}{(\semc{t}a)}\\
    \semcs{\p}a &=& \tfst\ a
  \end{array}
  \quad
  \begin{array}[t]{rcl}
  \semc{\TmU}a &=& \iU\\
  \semc{\F{A}{B}}a &=& \F{(\semc{A}a)}{F}\text{, where }F\;b = \semc{B}{(\pairapp{a}{b})}\\
  \semc{\singTm{t}{A}}a &=& \singapp{(\semc{t}a)}{(\semc{A}a)}\\
  \semc{\subsTm{t}{\sigma}}a &=& \semc{t}(\semcs{\sigma}a)\\
  \semc{\lambda{t}}a &=& f \text{, where } f\cdot b = \semc{t}{(\pairapp{a}{b})}\\
  \semc{\appTm{t}{u}}a &=& (\semc{t}a)\cdot (\semc{u}a)\\
  \semc{\q}a &=& \tsnd \ a
 \end{array}
 \]
  \setlength\arraycolsep{\originalArrayColSep}
Since no ambiguities arise, we shall henceforth
write $\semc\sigma$ instead of $\semcs\sigma$.
\end{defi}

Once we have an environment model $\calM$ we can define the denotation
for contexts. The second clause in the next definition is not
well-defined a priori; its totality is a corollary of
Thm. \ref{thm:soundness}.

\begin{defi}Given an environment model $\calM$, we define recursively
  the semantic of contexts $\semctx{\_}\colon\ctx \into\per{\Aseq}$:
  \begin{enumerate}[(1)]
  \item $\semctx{\ectx} =\one$,
  \item $\semctx{\ctxe{\Gamma}{A}} = \sigD{\semctx{\Gamma}(a\mapsto [\semc{A}a])}$.
  \end{enumerate}  
\end{defi}\medskip

\noindent We use PERs for validating equality judgements and the domain of each
PER for validating typing judgements.

\begin{defi}[Validity] Let $\calM$ be an environment model. We
  define inductively the predicate of satisfability of judgements by
  the model, denoted with $\Gamma\vDash^{\calM} J$:
  \begin{enumerate}[(1)]
  \item $\ectx \vDash$ iff true
  \item $\ctxe{\Gamma}{A}\vDash$ iff $\Gamma\vDash A$
  \item $\Gamma\vDash A$ iff $\Gamma\vDash A = A$
  \item $\Gamma\vDash A=A'$ iff $\Gamma\vDash$ and for all $d = d' \in
    \semctx{\Gamma}$, $\semc{A}{d} = \semc{A'}{d'} \in \perT $
  \item $\Gamma\vDash t: A$ iff $\Gamma\vDash t = t:A$
  \item $\Gamma\vDash t=t':A$ iff $\Gamma\vDash A$ and for all $d =
    d'\in \semctx{\Gamma}$, $\semc{t}{d} = \semc{t'}{d'} \in
    [\semc{A}{d}] $
  \item $\Gamma\vDash \sigma: \Delta$ iff $\Gamma\vDash \sigma =
    \sigma:\Delta$
  \item $\Gamma\vDash \sigma=\sigma':\Delta$ iff $\Gamma\vDash$,
    $\Delta\vDash$, and for all $d = d' \in \semctx{\Gamma}$,
    $\semc{\sigma}{d} = \semc{\sigma'}{d'}\in \semctx{\Delta}$.
  \end{enumerate}
\end{defi}

\begin{thm}[Soundness of the Judgements]
  \label{thm:soundness} Let $\calM$ be a model. If $\Gamma \vdash J$,
  then $\Gamma\vDash^{\calM} J$.
\end{thm}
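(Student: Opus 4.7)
The natural approach is a straightforward induction on the derivation of $\Gamma \vdash J$, treating the various judgement forms (context, type, term, substitution, and their equalities) simultaneously. Since the definition of $\semctx{\ctxe{\Gamma}{A}}$ is not a priori well-defined (it presupposes that $a \mapsto [\semc{A}{a}]$ is a family of PERs indexed by $\semctx{\Gamma}$), I would make validity $\Gamma \vDash$ include this well-definedness as part of the statement, so that the context clauses for $\ectx$ and $\ctxe{\Gamma}{A}$ are proved in lock-step with soundness of type and term formation. Concretely, the inductive hypothesis for $\Gamma \vDash A$ gives precisely what is needed to justify the semantic extension of $\Gamma$ by $A$.

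The plan is to split the rules into three groups and handle each uniformly. First, for the \emph{formation and congruence} rules (e.g.\ \textsc{fun-f}, \textsc{sing-f}, \textsc{subs-type}, \textsc{ext-subs}, \textsc{comp-subs}, and their derived congruence variants), soundness reduces to verifying that the PER constructors $\prod$, $\singD{}{}$, and $\sigD{}{}$ respect equality of their arguments; this is immediate from their definitions and the universe closure conditions in the definition of PER model. Second, for the \emph{evaluation/defining} axioms (the substitution propagation laws, $\beta$, $\eta$, the $\q/\p$ equations), soundness follows directly from the clauses in Definition~\ref{def:syn-app-str} of environment model together with the sequence axioms $\tfst(\tPair\,a\,b)=a$ and $\tsnd(\tPair\,a\,b)=b$. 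For example, the $\beta$-law $\appTm{(\lambda t)}{r} = \subsTm{t}{\subid{\Gamma}{r}}$ unfolds on both sides to $\semc{t}{(\pairapp{a}{\semc{r}{a}})}$, and the $\eta$-law follows by extensionality of $\prod\,\X\,\calF$. Third, for \emph{structural} rules such as conversion and hypothesis (\textsc{hyp}), soundness is a direct consequence of the definition of validity; conversion in particular uses that $\Gamma \vDash A = A'$ entails $[\semc{A}{d}] = [\semc{A'}{d}]$ for $d \in \semctx{\Gamma}$, which in turn follows from $\perT$ being a PER.

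The singleton-specific rules require a little more care. For \textsc{sing-i}, $t \in \term{\Gamma}{A}$ validates $t \in \term{\Gamma}{\singTm t A}$ because, by the IH, $\semc{t}{d} = \semc{t}{d'} \in [\semc{A}{d}]$, which is precisely membership in $\singD{\semc{t}{d}}{[\semc{A}{d}]}$. The elimination \textsc{sing-el} and the key equality rule \textsc{sing-eq-i} (stating that any two inhabitants of $\singTm{a}{A}$ are equal) both follow at once from the definition $\singD{a}{\X} = \{(b,b') \mid a = b, a = b' \in \X\}$: any two elements of this PER are related to each other. The \lambdaPI-specific rules, once the model is extended with PERs $\perNat$, $\enumD{n}$, $\sigD{}{}$, and a PER for $\prfraw$ that relates all elements of its domain, are handled analogously; proof irrelevance \textsc{prf-eq} mirrors \textsc{sing-eq-i}, and the \enumrule{eq} rules for $\enum 0$ and $\enum 1$ follow from the corresponding semantic PERs relating any two terms.

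The main obstacle is bookkeeping rather than mathematical depth: one must carefully thread the well-formedness of semantic contexts through every rule that extends a context, and verify that the various implicit premises (such as $\Gamma \vDash A$ whenever $\Gamma \vDash t : A$) are maintained by the IH. A secondary subtlety is the congruence rule for the $\Pi$-type, where to conclude $\semc{\F{A}{B}}{d} = \semc{\F{A'}{B'}}{d'} \in \perT$ one has to establish that the semantic function sending $b \in [\semc{A}{d}]$ to $\semc{B}{(\pairapp{d}{b})}$ is a PER-indexed family and agrees with its primed counterpart on equal inputs; this is where the extended-context IH $\ctxe{\Gamma}{A} \vDash B = B'$ is used essentially.
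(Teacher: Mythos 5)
Your proposal is correct and follows the same route as the paper, which proves this theorem simply ``by easy induction on $\Gamma \vdash J$'' and supplies no further detail. Your extra care about the a priori ill-definedness of the interpretation of $\ctxe{\Gamma}{A}$ --- strengthening the induction so that well-definedness of the semantic context is established in lock-step --- is precisely the point the paper itself acknowledges when it remarks that totality of that clause is a corollary of this very theorem, so your elaboration faithfully fills in the bookkeeping the paper omits.
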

\begin{proof} By easy induction on $\Gamma \vdash J$. 
\end{proof}


  \subsection{A concrete PER model} In this subsection we define a
  concrete PER model over a Scott domain. The definition of the
  evaluation function is post-poned to the next subsection after
  introducing the NbE machinery.}

\begin{defi}
  \label{def:domain}
  We define a domain 
  $$D = \mathbb{O} \oplus \Var_\bot \oplus  [D \into D]
  \oplus (D \times D) \oplus (D \times D)  \oplus \mathbb{O}
  \oplus (D \times [D\into D])
  \oplus (D\times D)\enspace,$$ where $\Var$ is a denumerable set
  of variables (as usual we write $x_i$ and assume $x_i\neq x_j$ if
  $i\neq j$, for $i,j\in \mathbb{N}$), 
  $E_\bot = E \cup \{ \bot \}$ is lifting, $\mathbb{O} = \{\top\}_\bot$ 
  is the Sierpinski
  space, $[D\into D]$ is the set of continuous functions from $D$ to
  $D$, $\oplus$ is the coalesced sum\LONGSHORT{ (this is the
    disjoint union where all the bottoms elements are identified),}{,}
  and $D\times D$ is the Cartesian product of $D$
  \cite{dom-theory}. 


\end{defi}
An element of
$D$ which is not $\bot$ can be of one of the forms:
\begin{align*}
&\iO           & \mbox{\hspace{1cm}}&\iPair{d}{d'}&
\mbox{\hspace{1cm}}&\mbox{for } d, d' \in D\\
&\iVar{i}      & \mbox{\hspace{1cm}}&\iU          &
\mbox{\hspace{1cm}}&\mbox{for } x_i \in \Var\\
&\iLam{f}      & \mbox{\hspace{1cm}}&\iPi{d}{f}    &
\mbox{\hspace{1cm}}&\mbox{for } d\in D\mbox{, and } f \in [D\into D]\\
&\iNe{d}{d'} & \mbox{\hspace{1cm}}&\iSing{d}{d'}  &\
\mbox{\hspace{1cm}}&\mbox{for } d, d' \in D .
\end{align*} 
Elements of the form $\iVar i$ and $\iNe d d'$ are called
\emph{neutral}; in this section, we reuse the letter $k$ to denote
neutral elements of $D$.

In order to define an environment model over $D$, we endow it with an
applicative structure. Note also that $D$ has pairing, letting us to
take the set of sequences over $D$ simply as $D^* = D$ with
$\tPair\,a\,b = (a,b)$. We define application $\_\cdot\_ : [D \times D
\to D]$ and the projections $\fst,\snd : [D \to D]$ by
\[
\begin{array}{lll}
  f \cdot d & = & 
   \mbox{if } f = \iLam f' \mbox{ then } f'\,d \mbox{ else } \bot
,\\
  \fst\,d & = & 
   \mbox{if } d = (d_1,d_2) \mbox{ then } d_1 \mbox{ else } \bot
,\\
  \snd\,d & = & 
   \mbox{if } d = (d_1,d_2) \mbox{ then } d_2 \mbox{ else } \bot
.\\
\end{array}
\] 
We define a partial function $\reify{\_}{\_}: \mathbb{N} \into D \into
\terms$ which reifies elements from the model into terms; this
function is similar to Gr\'egoire and Leroy's read-back function
\cite{gregoireLeroy:compiledReduction}. 
\begin{defi}[Read-back function]
  \label{def:reify}
\[
  \begin{array}[t]{r@{~}c@{~}l}
    \reify{j} {\iU}& = & \TmU \\
    \reify{j} {(\iPi{X}{F})} & = & \F{(\reify{j}{X}) 
                 }{(\reify{j+1}{(F(\iVar{j}))})} \\
    \reify{j} {(\iSing{d}{X})} & = &
    \singTm{\reify{j}{d}}{\reify{j}{X}} \\
  \end{array}
\quad
  \begin{array}[t]{r@{~}c@{~}l}
    \reify{j} {(\iNe{d}{d'})} & = &   \appTm{(\reify{j}{d})}{(\reify{j}{d'})}  \\
    \reify{j} {(\iLam{f})} & = & \lambda(\reify{j+1}{(f (\iVar{j}))}) \\
    \reify{j} {(\iVar{i})}& = & \ind{j - (i + 1)}
  \end{array}
\]
\end{defi}

As explained in Sect.~\ref{sec:untyped-nbe}, the reification of
variables turns de Bruijn levels into de Bruijn indices. 
Note that in case $j < i + 1$ we
return the $0$th de~Bruijn index just not to be undefined;
we will come back to this later. 


The next PERs contain those elements of the domain $D$ whose
reification is defined for any context length. Moreover, their
elements are reified as neutral terms and normal forms, respectively;
allowing us to reason semantically about normal forms. Remember that
$t\equiv_T t'$ denotes that $t$ is syntactically equal to $t'$ and $t
\in T$.

\begin{defi}[(Semantic) neutral terms and normal forms]
\label{def:pernf}
  \[
  \begin{array}{lll}
    d = d' \in \perne & :\iff &
    \text{for all\ } i\in \mathbb{N},\, \reify{i}{d} \mbox{ and }
    \reify{i}{d'} \mbox{ are defined and } \reify{i}{d}\equiv_{\neterms}
    \reify{i}{d'}
    \\
    d = d' \in \pernf & :\iff &
    \text{for all\ } i\in \mathbb{N},\, \reify{i}{d} \mbox{ and }
    \reify{i}{d'} \mbox{ are defined and } \reify{i}{d}\equiv_{\nfterms}
    \reify{i}{d'}
  \end{array}
  \]
\end{defi}\medskip

\noindent Notice that if the case $j < i+1$ were undefined in the clause for
variables in \ref{def:reify}, then for any $m$ the application
$\reify{0}{\iVar{m}}$ would be undefined; hence $\iVar{m} \not\in
\perne$ and, consequently, $\perne$ would be empty. Since we depend on
having a semantic representation of variables and neutrals we add the
case $j < i+1$. This case will not arise in our use of the readback
function.

\LONGVERSION{
\begin{rem}
  \label{rem:pers}
  These are clearly PERs over $D$: symmetry is trivial and
  transitivity follows from transitivity of the syntactical equality.
\end{rem}

\begin{lem}[Closure properties of $\perne$ and $\pernf$]
\label{rem:presnf} \bla
  \begin{enumerate}[\em(1)]
  \item $\iU = \iU \in \pernf$.
  \item Let $X = X' \in \perne$. If $F \cdot k = F' \cdot k' \in \pernf$ for all $k = k' \in
    \perne$, then $\iPi X F = \iPi {X'}{F'} \in \pernf$.
  \item If $d=d' \in \pernf$ and $X = X' \in \pernf$, then $\iSing{d}{X} =
  \iSing{d'}{X'} \in \pernf$.
  \item If $f \cdot k = f' \cdot k' \in \pernf$ for all $k = k' \in
    \perne$, then $f = f' \in \pernf$.
  \item $\iVar i = \iVar i \in \perne$ for all $i \in \mathbb N$.
  \item If $k=k' \in \perne$ and $d=d' \in
  \pernf$, then $\iNe{k}{d} = \iNe{k'}{d'} \in \perne$. 
  \end{enumerate}
\end{lem}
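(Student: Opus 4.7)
All six parts follow by directly unwinding Definitions~\ref{def:reify} and~\ref{def:pernf}. One preliminary observation that streamlines several cases is that, by the grammar in Definition~\ref{def:nfterms}, every neutral term is already a normal form; hence $\perne \subseteq \pernf$ at the semantic level as well.

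I would dispatch~(5) first, since it is the only statement used in the others. Fix $i \in \NN$; by the variable clause of Definition~\ref{def:reify}, $\reify j {(\iVar i)}$ is defined for every $j$ and is a de~Bruijn index, hence an element of $\neterms$; this gives $\iVar i \in \perne$. Item~(1) is then immediate since $\reify j \iU = \TmU \in \nfterms$ for every $j$. Item~(6) unfolds to $\reify j {(\iNe k d)} = \appTm{(\reify j k)}{(\reify j d)}$; the two hypotheses yield $\reify j k \in \neterms$ and $\reify j d \in \nfterms$, and the grammar places the resulting application back in $\neterms$.

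For~(2), Definition~\ref{def:reify} gives $\reify j {(\iPi X F)} = \F{(\reify j X)}{(\reify{j+1}{F(\iVar j)})}$. The left-hand reification lies in $\nfterms$ because $X \in \perne \subseteq \pernf$; for the right-hand one, item~(5) supplies $\iVar j \in \perne$, so the hypothesis on $F$ gives $F(\iVar j) \in \pernf$ and its reification at level $j+1$ is a normal form. Item~(3) is analogous and even simpler, not requiring~(5): both subterms are in $\pernf$ by assumption, so $\reify j {(\iSing d X)} = \singTm{\reify j d}{\reify j X} \in \nfterms$.

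The only delicate case is~(4). The intended reasoning is to apply the $\iLam$-clause, $\reify j {(\iLam g)} = \lambda (\reify{j+1}{g(\iVar j)})$: instantiating the hypothesis at $k = k' = \iVar j \in \perne$ (by~(5)) produces $g(\iVar j) = g'(\iVar j) \in \pernf$, from which one reads off matching $\lambda$-normal forms at level $j$. The obstacle to clear first is that $f$ and $f'$ need not a priori be of the form $\iLam$; however, since application on $D$ is defined so that $f \cdot d = \bot$ whenever $f$ is not a $\iLam$, and $\bot \notin \pernf$, the hypothesis forces $f = \iLam g$ and $f' = \iLam g'$. Once this small case analysis is done, the previous sentence's computation concludes the proof.
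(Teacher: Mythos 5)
Your proof is correct and matches the paper's intent exactly: the paper states this lemma without any proof, treating it as immediate from unwinding Definitions~\ref{def:reify} and~\ref{def:pernf}, which is precisely what you do (including the key uses of item~(5) at the fresh level $j$ in items~(2) and~(4)). The one place where a genuine, if small, argument is needed — forcing $f$ and $f'$ in item~(4) to be of the form $\iLam{g}$ because application of anything else yields $\bot \notin \dom(\pernf)$ — you identify and handle correctly.
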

}


\SHORTVERSION{
  The following definitions are standard
  \cite{aspinall:csl94,coquandPollackTakeyama:fundinf05} (except for
  $\one$); they will be used in the definition of the model.  
\begin{defi} Let $\calX\in\perD$ and $\calF\in \calX\into \perD$.
  \begin{enumerate}[\em(1)]
  \item $\one = \{(\iO,\iO)\}$;
  \item $\sigD{\calX}{\calF} = \{(d,d')\ |\ \p\ d = \p\ d' \in \calX \mbox{ and }
    \q\ d = \q\ d' \in \calF\ (\p\ d)\}$;
  \item $\prod\,\calX\,\calF = \{(f,f')\ |\ f\cdot d = f'\cdot d' \in \calF\ d,
    \mbox{ for all } d = d' \in \calX\}$;
  \item $\singD{d}{\calX} = \{(e,e')\ |\ d = e \in \calX \mbox{ and } d = e'
    \in \calX\}$.
  \end{enumerate}
\end{defi}
}

We define $\perU, \perT\in \perD$ and $[\_] : \dom(\perT) \into
\perD$ using Dybjer's schema of inductive-recursive definition 
\cite{dybjer:jsl00}. We show then that $[\_]$ is a family of PERs
over $D$.

\begin{defi}[PER model]\label{def:peru}\hfill
\begin{enumerate}[(1)]
\item Inductive definition of $\perU \in \perD$.\hfill
  \begin{enumerate}[(a)]
  \item $\perne \subseteq \perU$,
  \item if $X = X' \in \perU$ and $d = d'\in [X]$, then $\iSing{d}{X}
    = \iSing{d'}{X'} \in \perU$,
  \item if $X = X'\in \perU$ and for all $d = d' \in [X]$, $F\ d = F'\
    d' \in \perU$ then $\iPi{X}{F} = \iPi{X'}{F'}\in \perU$.
  \end{enumerate}
\item Inductive definition of $\perT \in \perD$.\hfill
  \begin{enumerate}[(a)]
  \item $\perU \subseteq \perT$,
  \item $\iU = \iU \in \perT$,
  \item if $X = X' \in \perT$, and $d = d' \in [X]$ then $\iSing{d}{X}
    = \iSing{d'}{X'}\in \perT$,
  \item if $X = X' \in \perT$, and for all $d = d' \in [X]$, $F\ d =
    F'\ d'\in\perT$, then $\iPi{X}{F} = \iPi{X'}{F'}\in \perT$.
  \end{enumerate}
\item Recursive definition of $[\_] \in \vdom(\perT) \to \perD$.\hfill
  \begin{enumerate}[(a)]
  \item $[\iU] = \perU$,
  \item $[\iSing{d}{X}]= \singD{d}{[X]}$,
  \item $[\iPi{X}{F}] = \prod\,[X]\,(d\mapsto [F\ d])$,
  \item $[d] = \perne$, in all other cases.
  \end{enumerate}
\end{enumerate}
\end{defi}

\LONGVERSION{
\begin{rem}
  \label{rem:ord-pert}
  The generation order $\sqsubset$ on $\perT$ is well-founded. The
  minimal elements are $\iU$, and elements in $\perne$; $X \sqsubset
  \iPi{X}{F}$, and for all $d\in [X]$, $F\ d\sqsubset \iPi{X}{F}$;
  and, finally, $X \sqsubset \iSing{d}{X}$.
\end{rem}
}

\begin{lem}
  \label{lem:famperd}
\LONGSHORT{
  The function $[\_] : dom(\perT) \into \perD$ is a family of
  $\perD$ over $\perT$, i.e., $[\_] : \perT \into \perD$.
}{
  The function $[\_]$ is a family of $\perD$ over $\perT$.  
}
\end{lem}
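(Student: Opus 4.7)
The plan is to prove two statements simultaneously by well-founded induction on the generation order $\sqsubset$ of $\perT$ (cf.\ Remark~\ref{rem:ord-pert}): (i) for every $X \in \vdom(\perT)$, the relation $[X]$ is a PER on $D$, and (ii) whenever $X = X' \in \perT$, the two PERs $[X]$ and $[X']$ coincide. Statement (ii) is precisely what is required to conclude $[\_]\colon \perT \to \perD$.

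I would proceed by case analysis on how the pair $X = X' \in \perT$ was generated. If $X,X' \in \perne$ with neither being $\iU$, $\iPi{\_}{\_}$, nor $\iSing{\_}{\_}$, then by the default clause $[X] = [X'] = \perne$, and symmetry/transitivity of $\perne$ are immediate from the corresponding properties of syntactic equality. If $X = X' = \iU$, then $[X] = [X'] = \perU$, which is a PER by construction. If $X = \iSing{d}{Y}$ and $X' = \iSing{d'}{Y'}$ arise from $Y = Y' \in \perT$ and $d = d' \in [Y]$, then $Y \sqsubset X$, so the induction hypothesis gives $[Y] = [Y'] \in \perD$; since $d = d' \in [Y]$, we conclude $\singD{d}{[Y]} = \singD{d'}{[Y']}$, and being a PER follows easily (symmetry and transitivity of $\singD{d}{[Y]}$ come straight from those of $[Y]$). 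If $X = \iPi{Y}{F}$ and $X' = \iPi{Y'}{F'}$ with $Y = Y' \in \perT$ and $F\,d = F'\,d' \in \perT$ for all $d = d' \in [Y]$, then $Y \sqsubset X$ and, for every $d \in [Y]$, $F\,d \sqsubset X$, so induction on both gives $[Y] = [Y']$ and $[F\,d] = [F'\,d']$ for each $d = d' \in [Y]$; this yields $\prod\,[Y]\,(d\mapsto [F\,d]) = \prod\,[Y']\,(d'\mapsto [F'\,d'])$, and the usual argument shows dependent products of PERs are again PERs.

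For (ii) to be well-formulated, I also need the interpretation to be well-defined on the \emph{representatives}: the clauses of Def.~\ref{def:peru}(3) match on $X$ itself, so I should briefly observe that, by the inductive generation of $\perT$, the shape of any $X \in \vdom(\perT)$ is fixed (i.e.\ an element cannot be simultaneously $\iU$ and $\iPi{Y}{F}$, etc., modulo the neutral case which is handled by the default clause). Formally this can be stated as a preceding inversion lemma on membership in $\perT$; combined with the well-foundedness of $\sqsubset$, this makes the recursive definition of $[\_]$ total on $\vdom(\perT)$.

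The only mildly delicate step is the $\iPi$ case, where the induction hypothesis has to be applied \emph{uniformly} over all $d = d' \in [Y]$ before one can compare the two dependent product PERs. This relies on knowing $[Y] = [Y']$ \emph{before} quantifying over its elements, which is exactly why simultaneous induction on $\sqsubset$ (rather than separate inductions for (i) and (ii)) is the right organisation of the argument.
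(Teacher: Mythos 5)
Your proposal is correct and follows essentially the same route as the paper: induction on the generation of $X = X' \in \perT$ (equivalently, well-founded induction on the order $\sqsubset$), with a case analysis that uses the induction hypothesis on the subcomponents to get $[Y]=[Y']$ before comparing $\singD{d}{[Y]}$ with $\singD{d'}{[Y']}$ and $\prod\,[Y]\,(d\mapsto[F\,d])$ with $\prod\,[Y']\,(d'\mapsto[F'\,d'])$. The extra observations you make (that each $[X]$ is itself a PER, and that the recursion defining $[\_]$ is well-defined on representatives) are not spelled out in the paper's proof — the former is folded into the standard closure properties of $\singD{\_}{\_}$ and $\prod$, and the latter into the inductive-recursive definition schema — but they are harmless and, if anything, make the argument more self-contained.
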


\LONGVERSION{
  \begin{proof} By induction on $X = X' \in \perT$. 
    See Appendix~\ref{prf:famperd}.
  \end{proof}

The previous lemma leads us to the definition of a PER model over
$D$. Note also that $D$ has all the distinguished elements needed to
call it a syntactical applicative structure.

\begin{cor}
  The tuple $(D, \perU, \perT, [\_])$ is a PER model. \qed
\end{cor}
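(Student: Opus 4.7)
The plan is to check each clause of the definition of ``PER model'' against the data $(D,\perU,\perT,[\_])$ supplied by Definition~\ref{def:peru}, with Lemma~\ref{lem:famperd} doing the bulk of the work. Most obligations are immediate by inspection, so the proof will essentially be an unfolding of definitions.

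First I would observe that $D$ is an applicative structure in the required sense: application $\_\cdot\_$, the projections $\fst,\snd$, and the pairing $\tPair\,d\,d' = (d,d')$ have already been defined on $D$, and the distinguished constants $\iU$, $\iPi$, $\iSing$ (and $\iO$) are built into the sum structure of $D$ from Definition~\ref{def:domain}. Next I would check that $\perU$ and $\perT$ are PERs over $D$: symmetry and transitivity follow by straightforward induction on the inductive generation clauses of Definition~\ref{def:peru}(1) and (2), using symmetry/transitivity of $[X]$ on sub-derivations (which is available because $[\_]$ is a family of PERs by Lemma~\ref{lem:famperd}).

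Then I would verify the inclusion and closure conditions. The inclusion $\perU \subseteq \perT$ is clause (2a) of Definition~\ref{def:peru}. Closure of $\perT$ under function and singleton formation is clauses (2c) and (2d); the matching equations for $[\_]$, namely $[\iPi{X}{F}] = \prod[X](d\mapsto [F\,d])$ and $[\iSing{d}{X}] = \singD{d}{[X]}$, come directly from clauses (3b) and (3c). The same holds for $\perU$ by clauses (1b), (1c) together with the observation that the recursive equations for $[\_]$ apply uniformly whether the argument comes from $\perU$ or $\perT$. Hence both $(\perT,[\_])$ and $(\perU,[\_]\!\restriction_{\perU})$ are universes in the sense of the abstract definition.

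Finally, the condition $\iU \in \dom(\perT)$ with $[\iU] = \perU$ is clause (2b) of Definition~\ref{def:peru} together with clause (3a). The only subtle point is that throughout these checks one needs $[\_]$ to be well-defined as a \emph{family} of PERs indexed by $\perT$, so that $X = X' \in \perT$ really gives $[X] = [X']$ and allows the quantification ``for all $d = d' \in [X]$'' in clauses (1c) and (2d) to make sense; but this is exactly the content of Lemma~\ref{lem:famperd}, which has already been established. Thus there is no real obstacle, and the corollary follows by assembling the pieces; I would simply write ``By Definition~\ref{def:peru} and Lemma~\ref{lem:famperd}'' and leave the verification to the reader.
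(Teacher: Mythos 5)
Your proposal is correct and matches the paper's own treatment: the corollary is stated immediately after Lemma~\ref{lem:famperd} with no further proof, precisely because it amounts to reading off the clauses of Definition~\ref{def:peru} against the abstract definition of a PER model, with the family condition supplied by that lemma and the applicative/distinguished-element structure supplied by Definition~\ref{def:domain}. Your unfolding of the individual obligations is a faithful (indeed slightly more explicit) version of what the paper leaves to the reader.
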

}
\newcommand{\etatitle}{\texorpdfstring{$\eta$}{eta}}
\subsection{Normalisation and \etatitle-Expansion in the Model}
\label{sec:norm-model}


In the following, we adopt the NbE algorithm outlined in
Section~\ref{sec:nbe-primer} to the dependent type theory
$\lambdaSing$.  Since read-back has already be defined, we only
require reflection, reification and evaluation functions.



\LONGSHORT{
\begin{defi}[Reflection and reification]
  \label{def:up-down}
  The partial functions $\upa{\_}{\_},\da{\_}{\_} : [D \into
  [D\into D]]$ and $\Da{} : [D \into D]$ are given as follows: 
\[
  \begin{array}{rcl}
    \upa{\iPi{X}{F}}{k} & = &  \iLam{(d \mapsto \upa{F\, d}{(\iNe {k}{\da{X}{d}})})}  \\
    \upa{\iSing{d}{X}}{k}& = & d \\
    \upa{\iU}{k} & = &  k \\
    \upa{X}{k} & = & k\mbox{, in all other cases.}
  \end{array}
\
  \begin{array}{rcl}
    \da{\iPi{X}{F}}{d} & = & \iLam{(e \mapsto \da{F\, \upa{X}{e}}{(d\,\cdot \upa{X}{e})})}  \\
    \da{\iSing{d}{X}}{e} & = & \da{X}{d} \\
    \da{\iU}{d} & = &  \Da{d} \\
    \da{X}{e} & = & e \mbox{, in all other cases.}
  \end{array}
\]
  \begin{align*}
    &\Da{(\iPi{X}{F})}   = \iPi{(\Da{X})}{(d \mapsto \Da{(F\, \upa{X}{d})})} & \\
    &\Da{(\iSing{d}{X})} = \iSing{(\da{X}{d})}{(\Da{X})}& \\
    &\Da{\iU} =  \iU& \\
    &\Da{X}  = X\mbox{, in all other cases.}&
  \end{align*}
\end{defi}
}{
\begin{defi}
  \label{def:up-down}
  The partial functions $\upa{\_}{\_},\da{\_}{\_} : D \into
  D\into D$ and $\Da{} : D \into D$ are given as follows: 
  \begin{align*}
    \upa{\iPi{X}{F}}{d}&= \iLam{(e \mapsto \upa{F\,e}{(\iNe
        {d}{\da{X}{e}})})} &
    \da{\iPi{X}{F}}{d}&= \iLam{(e \mapsto 
        \da{F\,\upa{X}{e}}{(d\,\cdot \upa{X}{e})})} & \\
    \upa{\iSing{d}{X}}{e}&= d&
    \da{\iSing{d}{X}}{e}&= \da{X}{d}& \\
    \upa{\iU}{d}&= d&
    \da{\iU}{d}&=\Da{d}& \\
    \upa{d}{e}&= e& \da{d}{e}&= e\mbox{, in all other cases.} &
  \end{align*}
  \begin{align*}
    \Da{(\iPi{X}{F})}   &= \iPi{(\Da{X})}{(d \mapsto \Da{(F\,\upa{X}{d})})} &
    \Da{\iU} &=  \iU& \\
    \Da{(\iSing{d}{X})} &= \iSing{(\da{X}{d})}{(\Da{X})}& 
    \Da{d} &= d\mbox{, in all other cases.}&
  \end{align*}
\end{defi}
}\medskip

\LONGVERSION{\noindent In the following lemma we show that reflection $\uparrow$
  corresponds to Berger and Schwichtenberg's ``make self evaluating''
  and both reification functions $\downarrow$ and
  $\Downarrow$ correspond to``inverse of the evaluation function''
  \cite{bergerSchwichtenberg:lics91}.
  Note that they are indexed by types values instead of 
  syntactic types, since we are dealing with dependent instead of
  simple types.
}

\newcommand{\uptitle}{\texorpdfstring{$\uparrow$}{\textuparrow}}
\newcommand{\downtitle}{\texorpdfstring{$\downarrow$}{\textdownarrow}}
\newcommand{\Downtitle}{\texorpdfstring{$\Downarrow$}{Down}}

\begin{lem}[Characterisation of \uptitle, \downtitle, and \Downtitle]
  \label{lem:reify}
  Let $X = X' \in \perT$, then\hfill
  \begin{enumerate}[\em(1)]
  \item if $k = k'\in \perne$ then $\upa{X}{k} = \upa{X'}{k'} \in
    [X]$;
  \item if $d=d'\in [X]$, then $\da{X}{d} = \da{X'}{d'} \in \pernf$;
  \item and also $\Da{X} = \Da{X'} \in \pernf$.
  \end{enumerate}
\end{lem}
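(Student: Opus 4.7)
The plan is to prove the three statements simultaneously by induction on the inductive--recursive construction of $\perT$ (which builds $\perU$ and $[\_]$ at the same time). All three parts refer to each other across cases, so they must be established together. Since $\perU \subseteq \perT$ by Definition~\ref{def:peru}, the case analysis follows the clauses introducing members into $\perT$: neutrals, $\iU$, $\iSing\,d\,Y$, and $\iPi\,Y\,F$.

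The two easy cases are the neutrals and $\iU$. When $X = X' \in \perne$, every one of $\uparrow$, $\downarrow$, $\Downarrow$ falls into its default clause, so (1) reduces to the assumption, while (2) and (3) use the inclusion $\perne \subseteq \pernf$ from Lemma~\ref{rem:presnf}. When $X = X' = \iU$, the value $[X] = \perU$; part (1) is immediate from $\perne \subseteq \perU$, and part (3) from $\iU \in \pernf$. The interesting subcase is (2): $\da{\iU}{d} = \Da{d}$, and the claim becomes exactly part~(3) applied to $d = d' \in \perU \subseteq \perT$, which is available by the induction hypothesis. The remaining two cases are larger but follow a uniform recipe: unfold the defining equations of $\uparrow$, $\downarrow$, $\Downarrow$ at the constructor, then reconstruct membership using Lemma~\ref{rem:presnf}. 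For $\iSing\,d\,Y$, part (1) amounts to $(d,d') \in \singD{d}{[Y]}$, which is immediate from $d = d' \in [Y]$; parts (2) and (3) reduce to the IH for $Y$ applied to $d$. For $\iPi\,Y\,F$, to establish (1) we must show that $\iLam(d \mapsto \upa{F\,d}{(\iNe{k}{\da{Y}{d}})}) \in \prod[Y](d \mapsto [F\,d])$; for an arbitrary $d = d' \in [Y]$, IH(2) on $Y$ gives $\da{Y}{d} = \da{Y'}{d'} \in \pernf$, hence $\iNe{k}{\da{Y}{d}} = \iNe{k'}{\da{Y'}{d'}} \in \perne$ by closure, and then IH(1) on $F\,d = F'\,d' \in \perT$ supplies the required equality in $[F\,d]$. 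Parts (2) and (3) are symmetric: pick $e = e' \in \perne$, reflect through $Y$ via IH(1), and then invoke IH(2)/IH(3) on $F\,(\upa{Y}{e})$; Lemma~\ref{rem:presnf} reassembles the lambda and the $\iPi$.

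The main delicate point is the well-foundedness of the induction in the $\iU$-case of part (2): the generation order of Remark~\ref{rem:ord-pert} lists $\iU$ as minimal, yet we must invoke the IH at elements of $\perU$. The resolution is to run the induction on the full IR-definition of $\perT$ and $[\_]$, so that any $d \in \perU$ has a strictly smaller derivation than the leaf-clause placing $\iU$ into $\perT$. Once this is set up cleanly, the remaining obligations are routine bookkeeping chaining the defining equations of Definition~\ref{def:up-down} with the closure properties recorded in Lemma~\ref{rem:presnf}.
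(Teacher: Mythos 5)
Your proof follows essentially the same route as the paper's: a simultaneous induction on $X = X' \in \perT$ over the clauses of Definition~\ref{def:peru}, unfolding Definition~\ref{def:up-down} at each constructor and reassembling with the closure properties of Lemma~\ref{rem:presnf}, with the $\iPi$ case threading reflection and reification through the domain and codomain exactly as in the paper's appendix proof. The paper silently omits the base cases, so your explicit stratification of the induction (handling all of $\perU$ before the clause that puts $\iU$ into $\perT$, which licenses the appeal to part~(3) inside the $\iU$ case of part~(2)) is a welcome extra precision rather than a divergence.
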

\begin{proof}
  By induction on $X=X' \in \perT$. See \ref{prf:reify}.
\end{proof}




Let us recapitulate what we have achieved: we have defined a PER model
over the domain $D$; then we defined a family of functions $\da{X}{ }$
indexed over denotation of types with the property that when applied
to elements in the corresponding PER we get back elements which will
be reified as normal forms. In fact, we have the stronger result that
whenever we apply $\da{X}{ }$ to two related elements $d = d' \in [X]$
we get elements to be reified as the same term.

Now we define evaluation which clearly satisfies the environment model
conditions in Def.~\ref{def:syn-app-str}; hence, we have a model and, using 
Thm.~\ref{thm:soundness}, we conclude completeness for our normalisation
algorithm.

\begin{defi}[Semantics] Evaluation of substitutions and terms
  into $D$ is defined inductively by the following equations.
  \begin{align*}
    \intertext{Substitutions.}
    \semc{\ectx}d &= \iO& 
    \semc{\idsubs{}}d &= d& \\
    \semc{\exsubs{\gamma}{t}}d &=  \iPair{\semc{\gamma}d}{\semc{t}d}& 
    \semc{\p}d &=  \tfst\; d & \\
    \semc{\subsc{\gamma}{\delta}}d &= \semc{\gamma} (\semc{\delta} d)&
  \intertext{Terms (and types).}
    \semc{\TmU}d          & =  \iU& 
    \semc{\F{A}{B}}d      &=  \iPi{(\semc{A}d)}{(e\mapsto \semc{B}\iPair{d}{e})}& \\
    \semc{\singTm{a}{A}}d &=  \iSing{(\semc{a}d)}{(\semc{A}d)}& 
    \semc{\appTm{t}{u}}d  &=  \semc{t}d\cdot\semc{u}d& \\
    \semc{\lambda t}d     &=  \iLam{(d'\mapsto \semc{t}\iPair{d}{d'})}& 
    \semc{\subsTm{t}{\gamma}}d &=   \semc{t} (\semc{\gamma}d)& \\
    \semc{\q}d &= \tsnd\; d &
  \end{align*}
\end{defi}

\SHORTVERSION{
\begin{defi}[Validity] \hfill
  \begin{enumerate}[\em(1)]
  \item $\ectx \vDash$ iff true
  \item $\ctxe{\Gamma}{A}\vDash$ iff $\Gamma\vDash A$
  \item $\Gamma\vDash A$ iff $\Gamma\vDash A = A$
  \item $\Gamma\vDash A=A'$ iff $\Gamma\vDash$ and for all $d = d' \in
    \semc{\Gamma}$, $\semc{A}{d} = \semc{A'}{d'} \in \perT $
  \item $\Gamma\vDash t: A$ iff $\Gamma\vDash t = t:A$
  \item $\Gamma\vDash t=t':A$ iff $\Gamma\vDash A$ and for all $d =
    d'\in \semc{\Gamma}$, $\semc{t}{d} = \semc{t'}{d'} \in
    [\semc{A}{d}] $
  \item $\Gamma\vDash \sigma: \Delta$ iff $\Gamma\vDash \sigma =
    \sigma:\Delta$
  \item $\Gamma\vDash \sigma=\sigma':\Delta$ iff $\Gamma\vDash$,
    $\Delta\vDash$, and for all $d = d' \in \semc{\Gamma}$,
    $\semc{\sigma}{d} = \semc{\sigma'}{d'}\in \semc{\Delta}$.
  \end{enumerate}
\end{defi}
}

\SHORTVERSION{
\begin{thm}[Soundness of the Judgements]
  \label{thm:soundness}
  if $\Gamma \vdash J$, then $\Gamma\vDash J$.
\end{thm}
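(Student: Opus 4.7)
The plan is to proceed by induction on the derivation of $\Gamma \vdash J$, checking each of the formation, introduction, elimination, congruence, and equality rules of $\lambdaSing$ (the corresponding statement for $\lambdaPI$ is analogous). Before starting the induction, I would note that Lemma~\ref{lem:famperd} ensures $[\_]$ is a family over $\perT$, so $\semc{A}d = \semc{A'}d' \in \perT$ yields $[\semc{A}d] = [\semc{A'}d']$; this is what makes the clauses defining validity of term and substitution judgements well-posed.

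For the context and type formation rules the verification is immediate: the definition of $\semc{\Gamma}$ tracks \textsc{empty-ctx} and \textsc{ext-ctx} directly, while \textsc{u-f}, \textsc{u-el}, \textsc{fun-f}, \textsc{sing-f} follow from the closure clauses in Definition~\ref{def:peru}. The substitution equations (identity, associativity of composition, projection laws, and the two laws $\idsubs{\ectx}=\esubs$ and $\idsubs{\ctxe{\Gamma}{A}} = \exsubs{\p}{\q}$) reduce to the defining equations of $\semc{\_}$ on sequences. The propagation laws for types and terms (e.g.\ $\subsTy{(\F A B)}{\sigma} = \F{(\subsTy A \sigma)}{\ldots}$) are verified by unfolding both sides using the compositionality clauses of $\semc{\_}$. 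The $\beta$-rule $\appTm{(\lambda t)}{r} = \subsTm t {\subid{\Gamma}{r}}$ is witnessed by the equation $(\iLam f)\cdot d = f\,d$ together with compositionality, and the $\eta$-rule $\lambda(\appTm{(\subsTm t \p)}{\q}) = t$ follows from extensionality of $\prod\,[X]\,F$, i.e., from the fact that PER equality on function spaces is extensional.

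The singleton cases are the ones to handle with care. For \textsc{sing-i}, if $d = d' \in [\semc{A}a]$ then by definition $d = d' \in \singD{d}{[\semc{A}a]} = [\semc{\singTm t A}a]$. For \textsc{sing-el}, if $d = d' \in \singD{\semc{a}a}{[\semc{A}a]}$ then in particular $d = d' \in [\semc{A}a]$. For \textsc{sing-eq-i} (``any two inhabitants of a singleton are equal'') we exploit exactly the defining clause of $\singD{d}{\calX}$: any $e, e'$ with $e, e' \in \dom(\singD{d}{\calX})$ are automatically related, since each is related to $d$ and $\calX$ is symmetric and transitive. The conversion rule, a derived rule obtained from the sort equation $\term \Gamma A = \term \Gamma {A'}$ when $A = A' \in \type\Gamma$, reduces to the fact that $\semc{A}d = \semc{A'}d' \in \perT$ implies $[\semc{A}d] = [\semc{A'}d']$ (Lemma~\ref{lem:famperd} again).

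The main obstacle is not any individual case but rather the bookkeeping of presuppositions: throughout the induction one must simultaneously maintain $\Gamma \vDash$, $\Gamma \vDash A$, and the actual validity statement, so that when a congruence rule is invoked (e.g.\ in the $\sigma = \sigma' \in \semc{\Gamma} \Rightarrow \semc{A}\sigma = \semc{A'}\sigma' \in \perT$ step) all the premise judgements have the right semantic counterparts. Once this strengthened invariant is set up, every rule becomes a one-line equational check. For $\lambdaPI$ the same scheme applies, with extra cases for $\natty$, $\enum n$, $\DSum A B$, and $\boxty A$; here one uses the analogous PER clauses for sum types and proof-irrelevant propositions, and the proof-irrelevance equality rule \textsc{prf-eq} is validated exactly as \textsc{sing-eq-i}, by the fact that the interpreting PER identifies all its inhabitants.
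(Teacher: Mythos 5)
Your proposal is correct and follows exactly the route the paper takes: the paper's entire proof is ``By easy induction on $\Gamma \vdash J$,'' and your case analysis (formation rules via the closure clauses of $\perU/\perT$, substitution and propagation laws via the evaluation equations, $\beta$ via $(\iLam f)\cdot d = f\,d$, $\eta$ via extensionality of $\prod\,[X]\,F$, the singleton rules via the definition of $\singD{d}{\calX}$, and conversion via Lemma~\ref{lem:famperd}) is precisely the elaboration of that induction. No discrepancy to report.
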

\begin{proof} By easy induction on $\Gamma \vdash J$. 
\end{proof}
}

\begin{thm}[Completeness of NbE]
  \label{thm:completeness}
  Let $\deqterm{\Gamma}{A}{t}{t'}$ and let also $d\in \semctx{\Gamma}$, then\linebreak
  $\da{\semc{A}d}{\semc{t}d}=\da{\semc{A}d}{\semc{t'}d}\in \pernf$.
\end{thm}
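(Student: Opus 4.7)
The plan is to chain two results already established in the paper: the soundness of the judgements (Theorem~\ref{thm:soundness}) and the characterisation of $\downarrow$ (Lemma~\ref{lem:reify}). The hypothesis $d\in\semctx{\Gamma}$ should be read as $d=d\in\semctx{\Gamma}$, \ie, $d\in\dom(\semctx{\Gamma})$.

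First I would apply Theorem~\ref{thm:soundness} to the judgement $\deqterm{\Gamma}{A}{t}{t'}$, which yields $\Gamma\vDash t = t' : A$. Unfolding validity, this gives two pieces of information simultaneously: (i) $\Gamma\vDash A$, so that $\semc{A}d = \semc{A}d \in \perT$, \ie, $\semc{A}d \in \dom(\perT)$; and (ii) $\semc{t}d = \semc{t'}d \in [\semc{A}d]$. Both conclusions use the instance $d=d\in\semctx{\Gamma}$ of the environment.

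Second, I would invoke Lemma~\ref{lem:reify}(2) with $X=X'=\semc{A}d$ and with the two related values $\semc{t}d, \semc{t'}d \in [\semc{A}d]$. The lemma directly yields
\[
\da{\semc{A}d}{\semc{t}d} \;=\; \da{\semc{A}d}{\semc{t'}d} \;\in\; \pernf,
\]
which is exactly the statement of the theorem.

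There is no real obstacle here: all the work has been done upstream in proving soundness and in establishing that $\downarrow$ sends semantically equal values at equal types to syntactically equal normal forms. The only thing to double-check is the bookkeeping of reflexivity in the PER model (so that $d\in\semctx\Gamma$ is indeed enough to launch both the type-level and term-level parts of validity), but this is immediate from the standard unfolding of the $\vDash$ clauses in Definition of validity.
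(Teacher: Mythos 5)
Your proposal is correct and matches the paper's own proof exactly: the paper also applies Theorem~\ref{thm:soundness} to obtain $\semc{t}d = \semc{t'}d \in [\semc{A}d]$ and then concludes by Lemma~\ref{lem:reify}. The extra bookkeeping you mention (reading $d\in\semctx{\Gamma}$ as $d=d\in\semctx{\Gamma}$ and extracting $\semc{A}d\in\dom(\perT)$ from validity) is exactly what the paper leaves implicit.
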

\begin{proof}
  By Thm.~\ref{thm:soundness} we have $\semc{t}d = \semc{t'}d \in [\semc{A}d]$
  and we conclude by Lem.~\ref{lem:reify}.
\end{proof}

\PrfIrrTitle

\label{sec:pi-model}

\LONGVERSION{
We extend all the definitions concerning the construction of the model.

\begin{defi}[Extension of domain $D$]
  \[
  \begin{split}
    D = \ldots &\oplus D \times [D \into D] \oplus D \oplus D\\ 
    &\oplus\mathbb{O} \oplus \mathbb{O} \oplus D\oplus \tnrecd\\
    &\oplus D \oplus \mathbb{O} \oplus \mathbb{N} \oplus \mathbb{N}
    \times \mathbb{N} \oplus \mathbb{N} \times [D\into D]\times
    D^\omega \times D\enspace .
  \end{split}
  \]
\end{defi}

}
\SHORTVERSION{
  We extend all the definition concerning the construction of the model;
  \[
  D = \ldots \oplus D \oplus \mathbb{O} \enspace ; 
  \]
  the new inhabitants will be written as $\prf{d}$, and $\dprf$,
  respectively.  The read-back function is extended by the equations
  $\reify{j}{(\prf{d})} = \boxty{(\reify{j}{d})}$ and
  $\reify{j}{\dprf} = \oprf$.  We add a new clause in the definition
  of $\perT$,
\[
\mbox{if } X = X' \in \perT \mbox{, then } \prf{X} = \prf{X'} \in
\perT \mbox{, and } [\prf{X}] = \{(\dprf,\dprf)\} \enspace .
\]
The definitions of normalisation and expansion are extended for
$\prf{X}$,
  \begin{align*}
    \upa{\prf{X}}{d} & = \dprf& \da{\prf{X}}{d} & = \dprf&
    \Da{\prf{X}}& = \prf{\Da{X}} \enspace .&
  \end{align*}
  The semantic equations for the new constructions are
  \begin{align*}
    \semc{\boxty{A}}d &= \prf{\semc{A}d}&
    \semc{\boxtm{a}}d &= \dprf&\\
    \semc{\wheretm{b}{t}{B}}d &= \dprf&
    \semc{\oprf}d &=  \dprf \enspace . &
  \end{align*}
} 

\LONGVERSION{
\noindent
We use the following notations for the injections into $D$:
\begin{align*}
  & \iDs{d}{F} &\mbox{\hspace{.2cm}}
  & \Fst\;d, ~~ \Snd\;d &
  \mbox{\hspace{1cm}}&\mbox{for } d\in D, F \in [D \into D]
\\
  & \iZero & \mbox{\hspace{.2cm}} 
  & \iNat&  \mbox{\hspace{1cm}}& \dprf\\
  & \iSuc{d}& \mbox{\hspace{.2cm}} 
  &\prf{d} & \mbox{\hspace{1cm}}& \mbox{for } d \in D\\
  &\enumD{n} & \mbox{\hspace{.2cm}}
  &\constD{n}{i} & 
  \mbox{\hspace{1cm}}&\mbox{for } i,n \in \mathbb{N}\\
  &\iNrec{F}{d}{g}{d'} & \mbox{\hspace{.2cm}}&{ } &
  \mbox{\hspace{1cm}}&\mbox{for } d, d' \in D, F\in [D \into D], g\in [D \into [D \into D]] \\
  &\elimD{n}{F}{\vec{d}}{d'} &  \mbox{\hspace{.2cm}}&{ } &
  \mbox{\hspace{1cm}}&\mbox{for } d, d' \in D,F\in [D\into D],
  \vec{d} \in D^\omega, n\in \mathbb{N}
\end{align*}

\noindent
In this extension, the injections $\Fst$, $\Snd$, $\Nrecraw$, and
$\Elimraw$ construct neutral elements $k$. Soundness for the calculus
$(\vdashp)$ requires the canonical element for proof-irrelevant types
$(\dprf)$ to be in every PER; thus we need to redefine application
$\_\cdot\_$ to have $\dprf\in [\iPi{X}{F}]$:
\[ \dprf \cdot d = \dprf \enspace. \] We also redefine the projections
$\fstnew$ and $\sndnew$ to account for neutrals and because they are
used in the definition of $\coprod{X}{F}$, which will be used as the denotation
of sigma types.
\[
\begin{array}{lll}
  \fstnew\,d =
  \begin{cases}
    d_1 &\mbox{ if } d=(d_1,d_2) \\
    \dprf&\mbox{ if } d=\dprf \\
    \iBoolF{d}&\mbox{ otherwise}
  \end{cases}
&\quad &
  \sndnew\,d = \begin{cases}
    d_2 &\mbox{ if } d=(d_1,d_2) \\
    \dprf&\mbox{ if } d=\dprf \\
    \iBoolT{d}&\mbox{ otherwise}
  \end{cases}
\end{array}
\] 

\setlength{\originalArrayColSep}{\arraycolsep} 
\setlength\arraycolsep{0.2em}

\begin{defi}[Read-back function]\hfill
\[
  \begin{array}[t]{rcl}
    \reify{j}{(\iDs{X}{F})} &=& \DSum{(\reify{j}{X})}{\\ &&\quad(\reify{j+1}{(F\ \iVar{j})})}\\
    \reify{j}{\iNat} &= &\natty \\
    \reify{j}{\iZero} &=& \ztm \\
    \reify{j}{(\iSuc{d})} &=& \suctm{(\reify{j}{d})}\\
    \reify{j}{(\prf{d})}& =& \boxty{(\reify{j}{d})}\\
    \reify{j}{(\enumD{n})} &=& \enum{n}\\
  \end{array}
\quad
\begin{array}[t]{rcl}
    \reify{j}{\iPair{d}{d'}} &=& \depair{\reify{j}{d}}{\reify{j}{d'}} \\
    \reify{j}{(\iBoolF{d})} &=& \dfst{(\reify{j}{d})}\\
    \reify{j}{(\iBoolT{d})} &=& \dsnd{(\reify{j}{d})}\\
    \reify{j}{(\iNrec{F}{d}{f}{e})} &=& \natrec{(\reify{j+1}{(F\;\iVar{j})})}{
      \\ && \quad(\reify{j}{d})}{(\reify{j}{f})}{(\reify{j}{e})}\\
    \reify{j}{\dprf} & = &\oprf \\
    \reify{j}{(\constD{n}{i})} &=& \const{n}{i}\\
  \end{array}
  \]
  \[
  \begin{array}[t]{rcl}
    \reify{j}{(\elimD{n}{F}{\tuple{d_0,\ldots,d_{n-1}}}{e})} &=& 
    \elim{n}{(\reify{j+1}{(F\ \iVar{j})})}{(\reify{j}{d_0})\cdots(\reify{j}{d_{n-1}})}{(\reify{j}{e})}\\
  \end{array}    
  \] 
\end{defi}
\setlength{\arraycolsep}{\originalArrayColSep}


\noindent
We define inductively new PERs for interpreting naturals and finite
types. Note that $\C 0$ and $\C 1$ are irrelevant, in this way we can
model $\eta$-expansion for $\enum{0}$ and $\enum{1}$; $\smashed \X$ is
also irrelevant, even when $\X$ distinguishes its elements.

\begin{defi}[More semantic types] \label{def:pernat}\bla
  \begin{enumerate}[(1)]
  \item $\perNat$ is the smallest PER over $D$, such that
    \begin{enumerate}[(a)]
    \item $\perne \subseteq \perNat$
    \item $\iZero = \iZero \in \perNat$
    \item $\iSuc{d} = \iSuc{d'} \in \perNat $, if $d=d' \in \perNat$
    \end{enumerate}
  \item If $\X \in \perD$ then $\smashed \X := \{ (d,d') \mid d,d' \in
     \dom(\X)\cup\{\star\} \} \in \perD$.
   \item $\C 0 = |\emptyset| = \{(\dprf,\dprf)\}$,
   \item $\C 1 = |\{\constD{1}{0}\}| = \{(d,d') \mid d,d' \in
     \{\dprf,\constD 1 0\}\}$,
  \item 
    $\mathcal{C}_n = \{(\constD{n}{i},\constD{n}{i})\ |\ i < n \} \cup
    \perne$, for $n \geq 2$.
  \end{enumerate}
\end{defi}

We add new clauses in the definitions of the partial equivalences for
universe and types, these clauses do not affect the well-foundedness
of the order $\sqsubset$ defined in \ref{rem:ord-pert}, but now we
have that $\enumD{n}$ and $\iNat$ are also minimal elements for that
order.

\begin{defi}[Extension of $\perU$ and $\perT$]\hfill
\begin{enumerate}[(1)]
 \item Inductive definition of $\perU \in \perD$.
  \begin{enumerate}[(a)]
  \item If $X = X' \in \perU$, and for all $d = d' \in [X]$, $F\,d =
    F'\,d' \in \perU$, then
    $\iDs{X}{F} =\iDs{X'}{F'} \in \perU$.
  \item $\iNat = \iNat \in \perU$,
  \item $\enumD{n} = \enumD{n} \in \perU$,
  \item if $X = X' \in \perU$, then $\prf{X} = \prf{X'} \in \perU$.
  \end{enumerate}
\item Inductive definition of $\perT \in \perD$.
  \begin{enumerate}[(a)]
  \item If $X = X' \in \perT$, and for all $d = d' \in [X]$, $F\,d =
    F'\,d' \in \perT$, then $\iDs{X}{F} = \iDs{X'}{F'}\in \perT$.
  \item if $X = X' \in \perT$, then $\prf{X} = \prf{X'} \in \perT$.
  \end{enumerate}
\item Recursive definition of $[\_] \in \vdom(\perT) \to \perD$.
  \begin{enumerate}[(a)]
  \item $[\iDs{X}{F}] = \sigD{[X]}{(d\mapsto [F\ d])}$,
  \item $[\enumD{n}] = \C n$ 
  \item $[\iNat] = \perNat$,
  \item if $X \in \dom(\perT)$, then $[\prf{X}] = \{(\dprf,\dprf)\}$.
  \end{enumerate}
\end{enumerate}
\end{defi}\medskip

\noindent Note that in the PER model, all propositions $\prf{X}$ are inhabited.
In fact, all types are inhabited, for there is a reflection from
variables into any type, be it empty or not.  So, the PER model is
unsuited for refuting propositions.  However, the logical relation we
define in the next section will only be inhabited for non-empty types.

\begin{rem}
  It can be proved by induction on $X\in\perT$ that $\dprf\in [X]$.
\end{rem}

\begin{defi}[Reflection and reification, \cf~\ref{def:up-down}]
\[
\begin{array}{rcl}
    \upa{\iDs{X}{F}}{k} &=& \iPair{\upa{X}{\Fst\, k}}{\upa{F\, (\upa{X}{\Fst\, k})}{\Snd\, k}} \\
    \upa{\iNat}{k} &=& k \\
    \upa{\enumD{0}}{k} &=& \dprf  \\
    \upa{\enumD{1}}{k} &=& \constD 1 0  \\
    \upa{\enumD{n}}{k} &=& k \qquad \mbox{for } n \geq 2\\
    \upa{\prf{X}}{k}   &=& \dprf \\
\end{array}
\qquad
\begin{array}{rcl}
    \da{\iDs{X}{F}}{d} &=&  \iPair{\da{X}{\fstnew d}}{\da{F\,(\fstnew d)}{\sndnew d}} \\
    \da{\iNat}{d} &=& d \\
    \da{\enumD{0}}{d} &=& \oprf  \\
    \da{\enumD{1}}{d} &=& \constD 1 0  \\
    \da{\enumD{n}}{d} &=&d \\
    \da{\prf{X}}{d} &=& \oprf \\
\end{array}
\]
  \begin{align*}
    \Da{\iDs{X}{F}} &= \iDs{(\Da{X})}{(d\mapsto \Da{(F\, \upa{X}{d})})}&
    \Da{\iNat} &= \iNat&\\
    \Da{\enumD{n}}& = \enumD{n}&
    \Da{\prf{X}}& = \prf{(\Da{X})}&
  \end{align*}
\end{defi}\medskip



\noindent For giving semantics to eliminators for data types we need to define
partial functions $\nrecraw : [D\into D]\times D \times D\times
D\into D$, and $\erecraw : [D\into D]\times D \times D\times
D\into D$.
\begin{defi}[Eliminations on $D$]\bla
  \label{def:recd}
  \begin{enumerate}[(1)]
  \item Elimination operator for naturals.
\[
    \begin{array}{lll}
      \drec{F}{d}{f}{\dprf}   & = & \dprf\\
      \drec{F}{d}{f}{\iZero}   & = & d\\
      \drec{F}{d}{f}{\iSuc{e}} & = & (f\cdot e)\cdot \drec{F}{d}{f}{e}\\
      \drec{F}{d}{f}{k}        & = & 
        \upa{F\ k}(\Nrecraw(
        \begin{array}[t]{@{}l@{}}
          d'\mapsto \Da{F\ d'},\\
          \da{F\ \iZero}{d},   \\
          \iLam{d'\mapsto (\iLam{e' \mapsto  \da{F\ (\iSuc{d'})}{f\cdot
                d' \cdot e'}})}, \\
          k))
    \end{array}
    \end{array}
\]
  \item Elimination operator for finite types.
  \begin{align*}
    \erec{n}{F}{\tuple{d_0,\ldots,d_{n-1}}}{\dprf} &= \dprf\\
    \erec{n}{F}{\tuple{d_0,\ldots,d_{n-1}}}{\constD{n}{i}} &= d_i\\
    \erec{n}{F}{\tuple{\constD n 0,\ldots,\constD n {n-1}}}{d} &= d \\
    \erec{n}{F}{\tuple{d_0,\ldots,d_{n-1}}}{k} &= \upa{F\
      k}{\elimD{n}{e\mapsto \Da{F\ e}} {\tuple{\da{F\
            \constD{n}{0}}{d_0},\ldots,\da{F\
            \constD{n}{n-1}}{d_{n-1}}}}{k}}&
  \end{align*}
\end{enumerate}
\end{defi}

\begin{rem}
  If for all $d=d'\in\perNat$, $F\ d= F' d'\in \perT$, and $z = z'\in
  [F\ \iZero]$, and for all $d=d' \in \perNat$ and $e =e' \in [F\
  d]$, $s\cdot d\cdot e=s'\cdot d'\cdot e' \in [F (\iSuc{d})]$, and $d
  =d' \in \perNat$ then $\drec{F}{z}{s}{d}= \drec{F}{z}{s}{d'} \in [F\
  d]$.
\end{rem}

With these new definitions we can now give the semantic equations for
the new constructs. 

\begin{defi}[Extension of interpretation]
  \begin{align*}
    \semc{\DSum{A}{B}}d &=  \iDs{(\semc{A}d)}{(d'\mapsto \semc{B}\iPair{d}{d'})}& 
    \semc{\enum{n}}d & = \enumD{n}\\
    \semc{\natty}d &=  \iNat& 
    \semc{\boxty{A}}d &= \prf{\semc{A}d}&\\
    \semc{\dfst{t}}d &=  \fstnew \semc{t}d& 
    \semc{\dsnd{t}}d &=  \sndnew \semc{t}d& \\
    \semc{\depair{t}{t'}}d &= \iPair{\semc{t}d}{\semc{t'}d}&
    \semc{\ztm}d &= \iZero& \\
    \semc{\natrec{B}{z}{s}{t}}d &= \drec{e\mapsto \semc{B}{\iPair{d}{e}}}{\semc{z}d}{\semc{s}d}{\semc{t}d}&
    \semc{\suctm{t}}d &=  \iSuc{\semc{t}d}& \\
    \semc{\boxtm{a}}d &= \dprf&
    \semc{\oprf}d &=  \dprf &\\
    \semc{\wheretm{b}{t}{B}}d &= \semc b {\iPair{d}{\dprf}} 
    &
    \semc{\const{n}{i}}{d}& = \constD{n}{i}&
  \end{align*}
  \[
  \semc{\elim{n}{B}{t_0\cdots t_{n-1}}{t}}{d} = 
    \erec{n}{e\mapsto\semc{B}\iPair{d}{e}}{\tuple{\semc{t_0}d,\ldots,\semc{t_{n-1}}d}}{\semc{t}d}
    \]
\end{defi}

\begin{lem}[Laws of proof elimination]
  \label{lem:soundness-irr} $\beta$, $\eta$, and
  associativity for $\whereraw$ are modeled by the extended
  applicative structure.
\end{lem}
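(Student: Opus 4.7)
The plan is to verify the three equations by direct computation in the extended applicative structure, using the newly added semantic clauses $\semc{\wheretm{b}{t}{B}}d = \semc{b}{\iPair{d}{\dprf}}$ and $\semc{\boxtm{a}}d = \dprf$. The crucial thing to notice is that $\semc{\wheretm{b}{t}{B}}d$ is manifestly independent of $t$, so the work reduces to unfolding both sides and checking that they produce the same expression.

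Associativity is a pure calculation. The left-hand side $\semc{\wheretm{a}{(\wheretm{b}{c}{B})}{A}}d$ collapses in one step of the $\wheretm{}{}{}$-clause to $\semc{a}{\iPair{d}{\dprf}}$. For the right-hand side, two successive unfoldings give $\semc{\subsTm{a}{\exsubs{\subsc{\p}{\p}}{\q}}}{\iPair{\iPair{d}{\dprf}}{\dprf}}$, and the substitution clauses yield $\semc{\subsc{\p}{\p}}{\iPair{\iPair{d}{\dprf}}{\dprf}} = d$ together with $\semc{\q}{\iPair{\iPair{d}{\dprf}}{\dprf}} = \dprf$, so this too equals $\semc{a}{\iPair{d}{\dprf}}$.

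For $\beta$ and $\eta$, unfolding will produce $\semc{b}{\iPair{d}{\dprf}}$ on the left and $\semc{b}{\iPair{d}{e}}$ on the right, with $e = \semc{a}d$ in the $\beta$ case and $e = \semc{t}d$ in the $\eta$ case. To identify these I will invoke the irrelevance premise of rule (\textsc{prf-el}), namely $\subsTm{b}{\p} = \subsTm{b}{\exsubs{\subsc{\p}{\p}}{\q}}$ in the context $\ctxe{\ctxe{\Gamma}{A}}{\subsTy{A}{\p}}$. By Theorem \ref{thm:soundness} applied to this judgement, and after unfolding the substitution semantics in its conclusion at an environment of the form $\iPair{\iPair{d}{e_1}}{e_2}$, we obtain the semantic fact $\semc{b}{\iPair{d}{e_1}} = \semc{b}{\iPair{d}{e_2}} \in [\semc{B}{d}]$ for any $e_1, e_2 \in [\semc{A}d]$. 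Instantiating this at $e_1 = \dprf$ is legal because $\dprf$ inhabits every PER $[X]$ with $X \in \perT$, as recorded in the remark following the extension of $\perT$; taking $e_2 = \semc{a}d$ or $\semc{t}d$ then yields $\beta$ and $\eta$.

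The main obstacle, though minor, is making the invocation of soundness precise: one must choose the correct triple environment $\iPair{\iPair{d}{e_1}}{e_2}$ and unfold $\semc{\subsc{\p}{\p}}$ and $\semc{\exsubs{\subsc{\p}{\p}}{\q}}$ in order to extract exactly the ``$b$ does not depend on its last argument'' reading of the premise. Beyond that, the proof is routine equational bookkeeping on top of the semantic clauses just introduced.
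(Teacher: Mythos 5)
Your proof is correct and takes essentially the same route as the paper's: unfold the semantic clause for $\whereraw$, treat associativity as a pure computation, and discharge \rulename{prf-$\beta$}/\rulename{prf-$\eta$} by using Theorem~\ref{thm:soundness} on the premise that $b$ is insensitive to its last argument, together with $\dprf$ inhabiting every PER. The only nuance is that \rulename{prf-$\eta$} has no explicit irrelevance premise to invoke; there the identification $\semc{t}d = \dprf \in [\prf{X}] = \{(\dprf,\dprf)\}$ does the work, which your stated semantic fact already covers.
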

\begin{proof}See \ref{prf:soundness-irr}.
\end{proof}
}

\begin{rem}
\LONGSHORT{
 All of lemmata \ref{lem:famperd}, \ref{lem:reify}, and theorems
  \ref{thm:soundness}, and \ref{thm:completeness} are valid for the
  extended calculus.
}{
  All of lemmata \ref{lem:famperd}, \ref{lem:reify}, and theorems
  \ref{thm:soundness}, and \ref{thm:completeness} are valid for the
  calculus with proof-irrelevance.
}
\end{rem}

Note that we have defined a \emph{proof-irrelevant}
semantics for $(\vdashp)$ that collapses
all elements of $\boxty A$ to $\dprf$, which leads to a more efficient
implementation of the normalisation function. However, this semantics
is not sound if \lambdaIrr\ is extended with singleton types
interpreted analogously to $\C{1}$, i.e., $[\iSing{d}{X}] =
\smashed{\singD{d}{X}}$, because it does not model
\ruleref{sing-eq-el}.  (We have $d = \dprf \in [\iSing{d}{X}]$ for all
$d \in [X]$, but not necessarily $d = \dprf \in [X]$.)
On the other hand, \lambdaIrr\ without $\oprf$
can be extended to singleton types as explained in the following
remark.

\begin{rem}[Extending \lambdaIrr\ by singleton types] 
  \label{rem:proof-relevant-model}
  Singleton types can be added straightforwardly if we employ a
  \emph{proof-relevant} semantics:

  The domain $D$ is not changed; in particular we have $\dprf \in D$,
  and it is readback as before, $\reify{j}{\dprf} = \oprf$; hence
  $\dprf \in dom(\pernf)$.


  All the enumerated types are modelled in a uniform way:
  $[\enumD{n}] = \{(\constD{n}{i},\constD{n}{i})\ |\ i < n \} \cup
  \perne$; proof-irrelevance types $\boxty{A}$ are interpreted as
  the irrelevant PER with the same domain as the PER for $A$: 
  $[ \prf{X} ] = \{(d,d')\ |\ d,d' \in dom([X]) \}$. Reflection
  and reification for $\prf{X}$ are defined respectively as
  $$ \up{\prf{X}}{d} = \up{X}{d} \qquad\mbox{ and } \qquad \down{{\prf{X}}}{d} = \dprf \enspace .$$
  With these definitions it is clear that the corresponding
  result for Lem.~\ref{lem:reify} is still valid.
  
  Since $\dom([\prf{X}]) = \dom([X])$, introduction and elimination of
  proofs can be interpreted as follows
  $$ \semc{\boxtm{a}}{d} = \semc{a}{d} \quad \mbox { and } \quad
  \semc{\wheretm{b}{t}{B}}d = \semc b {\iPair{d}{\semc{t}d}} \enspace ;
  $$
  \noindent this model is sound with respect to the calculus
  $(\vdash)$ extended with singleton types; hence
  Thm. \ref{thm:completeness} is valid.
\end{rem}

\begin{rem}
  As was previously said we cannot use this PER model for proving
  that there is no closed term in $\enum{0}$. Instead, one can build
  up a PER model, in the sense of \ref{sec:persem}, of closed values,
  where $[\enum{0}]
  = \emptyset$. By soundness (Thm.~\ref{thm:soundness}) it follows
  that there is no possible derivation of $\dterm{}{\enum{0}}{t}$.
\end{rem}


\section{Correctness of NbE}
\label{sec:logrel}

\LONGSHORT{

  \noindent In Thm.~\ref{thm:completeness} we have proved that the NbE algorithm
  is complete with respect to the judgemntal equality of our calculi;
  a corollary of that fact is totality of NbE.

  \begin{rem} Let $\dterm{\Gamma}{A}{t}$. Given some $d \in
    \semctx{\Gamma}$, we can conclude
    $\reify{i}{(\da{\semc{A}d}{\semc{t}d})}$ is a well-defined term 
    in normal form.
  \end{rem}

  In this section we prove correctness, with respect to the typing
  rules, for our NbE algorithm. This means that given a typing
  $\dterm{\Gamma}{A}{t}$, when NbE is applied to $t$, the resulting
  normal form $v$, is provable equal to $t$; i.e.
  $\deqterm{\Gamma}{A}{t}{v}$.

  Let us anticipate the main results of this section. As a corollary of
  Thm.~\ref{thm:logrel} we show that a term is related to its
  denotation with respect to some canonical environment (to be defined
  in Def.~\ref{def:canonical-env}). Previously we prove in Lem.
  \ref{lem:judgeq} that if a term is logically related with some
  semantic element, then its reification will be judgmentally equal
  to the term. Composing these facts we obtain correctness. As a
  consequence of having correctness and completeness for NbE, one gets
  decidability for judgmentally equality: normalise both terms and
  check they are syntactically the same. Another important corollary
  is injectivity for constructors.

  \subsection{Logical relations}
  \label{sec:logrel-sing}
  In this subsection we define logical relations and prove some
  technical lemmas about them. As is standard with logical relations
  one defines them by induction on types (here we define by induction
  on semantics of types, \ie\  elements of $\perT$) and for basic types
  they are defined by prescribing the property to be proved; while for
  higher order types they are defined using the relations of the 
  domain and image types.

}{ 

  In order to prove soundness of our normalisation algorithm we
  define logical relations \cite{kripke-models} between types and
  elements in the domain of $\perT$, and between terms and elements in
  the domain of the PER corresponding to elements of $\perT$.

}

\begin{defi}[Logical relations]
  \label{def:logrel} We define simultaneously two families of binary
   relations:
  \begin{enumerate}[(a)]
  \item If $\dctx{\Gamma}$ then $(\Gamma \derN \_ \rel \_ \in \perT) \subseteq \{ A \mid
    \dtype{\Gamma}{A}\} \times \perT$ 
    shall be a $\Gamma$-indexed family of relations
    between well-formed syntactic types $A$ and type values $X$.
  \item If $\dtype{\Gamma}{A}\rel X\in \perT$ then $(\Gamma \derN \_ :
    A \rel \_ \in [X]) \subseteq
    \{ t \mid \dterm{\Gamma}{A}{t}\} \times [X]$
    shall be a $(\Gamma,A,X)$-indexed family of relations between
    terms $t$ of type $A$ and values $d$ in PER $[X]$.
  \end{enumerate}
  These relations are defined simultaneously by induction on $X \in
  \perT$.
  \begin{enumerate}[(1)]
  \item Neutral types: $X \in \perne$.
    \begin{enumerate}[(a)]
    \item $\dtype{\Gamma}{A}\rel X \in \perT$ iff
      for all $\Delta \leqslant^i \Gamma$,
      $\deqtype{\Delta}{\lift{i}{A}}{\reifyC{\Delta}{\Da{X}}}$.
    \item $\dterm{\Gamma}{A}{t}\rel d\in [X]$ iff  $\dtype{\Gamma}{A}\rel X \in \perT$, and
      for all $\Delta \leqslant^i \Gamma$,
      $\deqterm{\Delta}{\lift{i}{A}}{\lift{i}{t}}{\reifyC{\Delta}{\da{X}{d}}}$.
    \end{enumerate}
  \item Universe.
    \begin{enumerate}[(a)]
    \item $\dtype{\Gamma}{A}\rel \iU \in \perT$ iff
      $\deqtype{\Gamma}{A}{\TmU}$.
    \item $\dterm{\Gamma}{A}{t}\rel X\in [\iU]$ iff
      $\deqtype{\Gamma}{A}{\TmU}$, and $\dtype{\Gamma}{t}\rel
      X\in\perT$.
    \end{enumerate}
  \item Singletons.
    \begin{enumerate}[(a)]
    \item $\dtype{\Gamma}{A}\rel \iSing{d}{X} \in \perT$ iff
      $\deqtype{\Gamma}{A}{\singTm{a}{A'}}$ and
      $\dterm{\Gamma}{A'}{a}\rel d\in [X]$.
    \item $\dterm{\Gamma}{A}{t}\rel d'\in [\iSing{d}{X}]$ iff
      $\deqtype{\Gamma}{A}{\singTm{a}{A'}}$ and
      $\dterm{\Gamma}{A'}{t}\rel d\in [X]$, and
      $\dtype{\Gamma}{A'}\rel X\in\perT$.
    \end{enumerate}
  \item Function spaces.
    \begin{enumerate}[(a)]
    \item $\dtype{\Gamma}{A}\rel \iPi{X}{F} \in \perT$ iff
      $\deqtype{\Gamma}{A}{\F{A'}{B}}$, and $\dtype{\Gamma}{A'}\rel
      X\in \perT$, and
      $\dtype{\Delta}{\subsTy{B}{\exsubs{\p^i}{s}}}\rel F\,d\in \perT$
      for all $\Delta\leqslant^i\Gamma$ and
      $\dterm{\Delta}{\lift{i}{A'}}{s}\rel d\in [X]$.
    \item $\dterm{\Gamma}{A}{t}\rel f\in [\iPi{X}{F}]$ iff
      $\deqtype{\Gamma}{A}{\F{A'}{B}}$, $\dtype{\Gamma}{A'}\rel X$,
      and $\dterm{\Delta}{\subsTy{B}{\exsubs{\p^i}{s}}}
      {\appTm{(\lift{i}{t})}{s}}\rel f\cdot d \in [F\,d]$ for all
      $\Delta\leqslant^i\Gamma$ and
      $\dterm{\Delta}{\lift{i}{A'}}{s}\rel d\in [X]$.
    \end{enumerate}
  \end{enumerate}
\end{defi}\medskip

\noindent The following technical lemmata show that the logical relations are
preserved by judgmental equality, weakening of the judgement, and the
equalities on the corresponding PERs.   These lemmata are proved
simultaneously for types and terms.

\begin{lem}[Closure under conversion]
  \label{lem:logrelEqTy} Let $\dtype{\Gamma}{A}\rel X\in \perT$ 
  and $\deqtype{\Gamma}{A}{A'}$.  Then,
  \begin{enumerate}[\em(a)]
  \item $\dtype{\Gamma}{A'}\rel X\in\perT$, and
  \item if $\dterm{\Gamma}{A}{t}\rel d\in [X]$ and $\deqterm{\Gamma}{A}{t}{t'}$ 
  then $\dterm{\Gamma}{A'}{t'}\rel d\in[X]$.
  \end{enumerate}
\end{lem}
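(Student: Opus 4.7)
The plan is to prove parts (a) and (b) simultaneously by well-founded induction on $X \in \perT$, following the four defining clauses of Definition~\ref{def:logrel}. In each case the shape of the argument is the same: unfold the premise according to the clause for $X$, lift the conversion hypothesis via Remark~\ref{rem:weak} to arbitrary $\Delta \leqslant^i \Gamma$, and then apply transitivity of judgmental equality (together with the conversion rule on typing judgements) to repackage the data as a logical relatedness claim for $A'$, respectively $t'$.

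For the neutral case $X \in \perne$, I would combine $\deqtype{\Delta}{\lift{i}{A}}{\reifyC{\Delta}{\Da{X}}}$ with the weakened equality $\deqtype{\Delta}{\lift{i}{A}}{\lift{i}{A'}}$ from Remark~\ref{rem:weak}, obtaining $\deqtype{\Delta}{\lift{i}{A'}}{\reifyC{\Delta}{\Da{X}}}$ by transitivity; the term part is analogous, also using weakening of $\deqterm{\Gamma}{A}{t}{t'}$ and the derived conversion rule to retype $t'$ at $A'$. The universe case $X = \iU$ is immediate from transitivity of $\deqtype{\Gamma}{A}{\TmU}$ and $\deqtype{\Gamma}{A}{A'}$. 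For singletons $X = \iSing{d}{X'}$, the witness $\deqtype{\Gamma}{A}{\singTm{a}{A''}}$ yields $\deqtype{\Gamma}{A'}{\singTm{a}{A''}}$ by transitivity, while the remaining data ($\dterm{\Gamma}{A''}{a}\rel d\in[X']$ and $\dtype{\Gamma}{A''}\rel X'$) do not mention $A$ or $t$.

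The function space case $X = \iPi{X'}{F}$ proceeds the same way for the type part: from $\deqtype{\Gamma}{A}{\F{A''}{B}}$ and $\deqtype{\Gamma}{A}{A'}$ we obtain $\deqtype{\Gamma}{A'}{\F{A''}{B}}$, and the universally quantified clause over $\Delta\leqslant^i\Gamma$ and arguments $s$ is preserved verbatim. For the term part, given the quantified witness $\dterm{\Delta}{\subsTy{B}{\exsubs{\p^i}{s}}}{\appTm{(\lift{i}{t})}{s}}\rel f\cdot d\in[F\,d]$, I would invoke the induction hypothesis at $F\,d \in \perT$ applied to the conversion $\deqterm{\Delta}{\subsTy{B}{\exsubs{\p^i}{s}}}{\appTm{(\lift{i}{t})}{s}}{\appTm{(\lift{i}{t'})}{s}}$. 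This conversion follows from the congruence rule for application applied to the weakened equality $\deqterm{\Delta}{\lift{i}{A}}{\lift{i}{t}}{\lift{i}{t'}}$ (Remark~\ref{rem:weak}), after converting its type to $\F{\lift{i}{A''}}{\lift{i+1}{B}}$ using the weakened version of $\deqtype{\Gamma}{A}{\F{A''}{B}}$.

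The main obstacle is essentially bookkeeping: one must carefully instantiate the universally quantified hypotheses at weakenings $\Delta \leqslant^i \Gamma$, transport conversions through substitutions, and track that $t$ and $t'$ share a type only up to judgmental equality, inserting the conversion rule as needed. Once the definitions are unfolded, every step reduces to an application of transitivity, a weakening via Remark~\ref{rem:weak}, a congruence rule derived from the axioms, or the conversion rule, and the induction on the generation of $X \in \perT$ (which is well-founded, as noted for $\perT$) goes through uneventfully.
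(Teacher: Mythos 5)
Your strategy coincides with the paper's: induction on the generation of $X \in \perT$, unfolding each clause of Definition~\ref{def:logrel} and repairing the witnesses by symmetry/transitivity of judgmental equality, weakening, congruence, and the conversion rule, with the induction hypothesis invoked at the structurally smaller types ($X'$ and $F\,d$). In particular your treatment of the function-space term case --- congruence of application applied to the weakened equality $\deqterm{\Delta}{\lift{i}{A}}{\lift{i}{t}}{\lift{i}{t'}}$, followed by the induction hypothesis at $F\,d$ --- is exactly what the paper does.

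One slip to repair: in the singleton case $X = \iSing{d}{X'}$ you assert that the remaining data do not mention $t$, but the term clause of Definition~\ref{def:logrel} requires the witness $\dterm{\Gamma}{A''}{t}\rel d\in[X']$, where $A''$ is the tag of the singleton --- this witness does mention $t$ and must itself be converted to $\dterm{\Gamma}{A''}{t'}\rel d\in[X']$ by the induction hypothesis at $X'$. The equality needed for that step, $\deqterm{\Gamma}{A''}{t}{t'}$, is obtained from $\deqterm{\Gamma}{A}{t}{t'}$ by converting along $\deqtype{\Gamma}{A}{\singTm{a}{A''}}$ and then applying \ruleref{sing-eq-el}; the paper's proof performs precisely this induction-hypothesis call. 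The fix lies entirely within your stated method, so the argument goes through once this case is completed.
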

\LONGVERSION{
\begin{proof}By induction on $X\in \perT$. See \ref{prf:logrelEqTy}.
\end{proof}
}

\begin{lem}[Monotonicity]
  \label{lem:logrelMon}
  Let $\Delta\leqslant^i\Gamma$, then
  \begin{enumerate}[\em(a)]
  \item if $\dtype{\Gamma}{A}\rel X\in\perT$, then
    $\dtype{\Delta}{\lift{i}{A}}\rel X\in\perT$; and
  \item if $\dterm{\Gamma}{A}{t}\rel d\in [X]$, then
    $\dterm{\Delta}{\lift{i}{A}}{\lift{i}{t}}\rel d\in[X]$.
  \end{enumerate}
\end{lem}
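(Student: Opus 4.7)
The plan is to prove (a) and (b) simultaneously by induction on the generation of $X \in \perT$, doing case analysis on how $X$ was constructed. Throughout, the two main tools are Remark~\ref{rem:weak} (weakening of judgements) and the substitution identity $\lift{j}{\lift{i}{A}} = \lift{j+i}{A}$ (and analogously for terms), which follows from $\subsc{\p^{i}}{\p^{j}} = \p^{i+j}$ in the category of substitutions, together with the fact that $\leqslant$ is transitive: $\Theta \leqslant^{j} \Delta \leqslant^{i} \Gamma$ implies $\Theta \leqslant^{j+i} \Gamma$.

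For the neutral case $X \in \perne$, unfolding the definition, the assumption gives us $\deqtype{\Theta}{\lift{k}{A}}{\reifyC{\Theta}{\Da{X}}}$ for every $\Theta \leqslant^{k} \Gamma$. Given $\Theta \leqslant^{j} \Delta$, we have $\Theta \leqslant^{j+i} \Gamma$, and $\lift{j}{\lift{i}{A}} = \lift{j+i}{A}$, so the required equation follows by instantiating $k := j+i$; the term clause is treated identically. For the universe case $X = \iU$, part (a) is immediate from weakening and the equation $\subsTy{\TmU}{\sigma} = \TmU$, which gives $\lift{i}{\TmU} \equiv \TmU$; part (b) uses (a) to transport the type equation, and then the inductive hypothesis applied to $X' \in \perU \subseteq \perT$ to transport $\dtype{\Gamma}{t}\rel X' \in \perT$ into $\dtype{\Delta}{\lift{i}{t}}\rel X'$. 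For the singleton case $X = \iSing{d}{X'}$, we weaken the type equation using $\subsTy{(\singTm{a}{A'})}{\sigma} = \singTm{\subsTm{a}{\sigma}}{\subsTy{A'}{\sigma}}$ and then apply the inductive hypothesis at $X' \sqsubset \iSing{d}{X'}$ to move the logical relation $\dterm{\Gamma}{A'}{a} \rel d \in [X']$ (and, for the term clause, $\dtype{\Gamma}{A'}\rel X'$) to the weakened context.

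The function case $X = \iPi{X'}{F}$ is the most calculational. We first weaken the type equation using $\subsTy{(\F{A'}{B})}{\sigma} = \F{(\subsTy{A'}{\sigma})}{(\subsTy{B}{\exsubs{\subsc{\sigma}{\p}}{\q}})}$ to obtain $\deqtype{\Delta}{\lift{i}{A}}{\F{\lift{i}{A'}}{B'}}$ with $B' := \subsTy{B}{\exsubs{\subsc{\p^{i}}{\p}}{\q}}$, and apply the inductive hypothesis at $X' \sqsubset \iPi{X'}{F}$ to get $\dtype{\Delta}{\lift{i}{A'}}\rel X'$. To verify the higher-order clause, we take $\Theta \leqslant^{j} \Delta$ and $\dterm{\Theta}{\lift{j}{\lift{i}{A'}}}{s} \rel d \in [X']$, observe that $\Theta \leqslant^{j+i} \Gamma$ and $\lift{j}{\lift{i}{A'}} = \lift{j+i}{A'}$, and then rely on the substitution calculation
\[
  \subsTy{B'}{\exsubs{\p^{j}}{s}} \;=\; \subsTy{B}{\subsc{\exsubs{\subsc{\p^{i}}{\p}}{\q}}{\exsubs{\p^{j}}{s}}} \;=\; \subsTy{B}{\exsubs{\p^{i+j}}{s}},
\]
which follows from the substitution axioms. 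Hence the assumption applied at $\Theta \leqslant^{j+i} \Gamma$ delivers exactly what is required, and (b) works identically with $\appTm{\lift{j}{\lift{i}{t}}}{s}$ replaced by $\appTm{\lift{j+i}{t}}{s}$.

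The main obstacle, such as it is, lies in the universe case: since the term relation at $\iU$ refers back to the type relation at an arbitrary $X' \in \perU \subseteq \perT$, one must make sure that the induction on $\perT$ has IH available at $X'$. This is handled by doing the induction on the combined inductive-recursive generation of $\perT$ and $[\_]$, so that every $X' \in \perU$ is accessible when treating $\iU$. Once that is arranged, the universe case reduces to the straightforward use of the IH described above, and the rest is bookkeeping with the substitution calculus.
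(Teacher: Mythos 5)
Your proposal is correct and follows essentially the same route as the paper's proof: induction on the generation of $X \in \perT$, with the function-space case handled by composing weakenings ($\Theta \leqslant^{j+i} \Gamma$) and pushing the substitution $\exsubs{\p^j}{s}$ through the lifted codomain, exactly as in the paper; your extra care about the universe case (ensuring the IH is available at $X' \in \perU$ when treating $[\iU]$) is a point the paper glosses over as "trivial". The only thing worth making explicit is that $\lift{j}{\lift{i}{A}}$ and $\lift{j+i}{A}$ (and likewise the two forms of the substituted codomain) are only \emph{judgementally} equal, not syntactically identical, so each such transport silently invokes closure of the logical relation under conversion (Lemma~\ref{lem:logrelEqTy}), which the paper cites at precisely those steps.
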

\LONGVERSION{
  \begin{proof}By induction on $X \in \perT$. See \ref{prf:logrelMon}.
  \end{proof}
}

\begin{lem}[Closure under PERs]
  \label{lem:logrelEqD}
  Let $\dtype{\Gamma}{A}\rel X\in \perT$, then
  \begin{enumerate}[\em(a)]
  \item if $X = X' \in\perT$, then $\dtype{\Gamma}{A}\rel X'\in\perT$;
    and
  \item if $\dterm{\Gamma}{A}{t}\rel d\in [X]$ and 
    $d=d' \in [X]$, then $\dterm{\Gamma}{A}{t}\rel d' \in [X]$.
  \end{enumerate}
\end{lem}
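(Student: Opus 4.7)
The plan is to prove (a) and (b) simultaneously by well-founded induction on $X \in \perT$, using the generation order $\sqsubset$ from Remark~\ref{rem:ord-pert}. In each case, $X = X' \in \perT$ forces $X'$ to share $X$'s top-level shape with sub-components related in the appropriate PERs; and Lemma~\ref{lem:famperd} furnishes $[X] = [X']$, so the underlying PER used in part~(b) is the same.

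Working through the four cases: for $X \in \perne$, Lemma~\ref{lem:reify}(3) yields $\Da X = \Da{X'} \in \pernf$, so the defining condition $\deqtype{\Delta}{\lift i A}{\reifyC{\Delta}{\Da X}}$ transfers verbatim, and similarly $\da X d = \da{X'}{d'} \in \pernf$ handles (b). For $X = \iU$, the type-level condition $\deqtype{\Gamma}{A}{\TmU}$ is independent of $X'$, giving (a) immediately; and (b) follows because $d = d' \in \perU \subseteq \perT$, so the already-proven part~(a) at $d$ transfers $\dtype{\Gamma}{t}\rel d\in\perT$ to $\dtype{\Gamma}{t}\rel d'\in\perT$. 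For $X = \iSing{e}{Y}$, we have $X' = \iSing{e'}{Y'}$ with $Y = Y' \in \perT$ and $e = e' \in [Y]$, and both clauses of the definition transfer by the induction hypothesis at $Y$ (for the type part) and at the pair $e = e' \in [Y]$ (for the term part). Finally, for $X = \iPi{Y}{F}$, $X' = \iPi{Y'}{F'}$ with $Y = Y' \in \perT$ and $F\,d = F'\,d' \in \perT$ pointwise, and we invoke the induction hypothesis at $Y$ and at each $F\,d = F'\,d'$.

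The main subtlety will be the function-space case: its defining clause Kripke-quantifies over all weakened contexts $\Delta \leqslant^i \Gamma$ and all logically related arguments $\dterm{\Delta}{\lift i {A'}}{s}\rel d \in [Y]$, and the quantifier's range is preserved only because $[Y] = [Y']$. For each such $d = d' \in [Y]$ I then need the induction hypothesis at the codomain $F\,d = F'\,d' \in \perT$ to swap $F\,d$ for $F'\,d'$, and for part~(b) additionally the PER equation $f \cdot d = f' \cdot d' \in [F\,d]$ (read off from $f = f' \in [\iPi{Y}{F}]$) in order to move from $f$ to $f'$ inside the application relation $\dterm{\Delta}{\subsTy B {\exsubs{\p^i}{s}}}{\appTm{(\lift i t)}{s}} \rel f\cdot d \in [F\,d]$.
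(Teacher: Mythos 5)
Your proposal is correct and takes essentially the same route as the paper's proof: induction on $X = X' \in \perT$ (your well-founded induction on $\sqsubset$ amounts to the same thing), unfolding the definition of the logical relation in each case, using $[X]=[X']$ from Lemma~\ref{lem:famperd} to keep the Kripke quantifier ranges aligned, and closing the function-space case with the induction hypothesis at the codomain together with $f\cdot d = f'\cdot d' \in [F\,d]$. You are in fact more explicit than the paper on the cases it dismisses as trivial or omits; the only point to watch is that in the $\iU$ case the appeal to part~(a) for $d = d' \in \perU$ must be read as a stratified use of the induction (establish the lemma for elements of $\perU$ before the clause for $\iU$ itself), since $\iU$ is \emph{minimal} in the generation order of Remark~\ref{rem:ord-pert} and so has no predecessors in the literal $\sqsubset$ sense.
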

\LONGVERSION{
\begin{proof}By induction on $X = X' \in \perT$. See \ref{prf:logrelEqD}.
\end{proof}
}

The following lemma plays a key r\^ole in the proof of soundness. It
proves that if a term is related to some element in (some PER), then
it is convertible to the reification of the corresponding element in
the PER of normal forms.

\begin{lem}
  \label{lem:judgeq}
  Let $\dtype{\Gamma}{A}\rel X\in\perT$.  Then, 
  \begin{enumerate}[\em(a)]
  \item $\deqtype{\Gamma}{A}{\reifyC{\Gamma}{\Da{X}}} $,
  \item if $\dterm{\Gamma}{A}{t}\rel d \in [X]$ then 
     $\deqterm{\Gamma}{A}{t}{\reifyC{\Gamma}{\da{X}{d}}}$; and
  \item if 
          $k \in \perne$ and for all $\Delta \leqslant^i \Gamma$,
    $\deqterm{\Delta}{\lift{i}{A}}{\lift{i}{t}}{\reifyC{\Delta}{k}}$,
    then $\dterm{\Gamma}{A}{t} \rel \upa{X}{k} \in [X]$.
  \end{enumerate}
\end{lem}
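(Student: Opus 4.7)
The three statements are proved simultaneously by well-founded induction on $X \in \perT$ (Rem.~\ref{rem:ord-pert}). The simultaneity is essential: in the $\iPi$ case, parts (a) and (b) need to reflect a fresh variable via (c) at a strictly smaller semantic type, while (c) itself recursively invokes (a) and (b) at the same fresh variable. The overall approach is to unfold the definition of $\rel$ at each shape of $X$ and match the syntactic equalities produced by the semantic operators $\Da$, $\da$, $\upa$ with those dictated by the judgmental theory (Lem.~\ref{lem:inv-ty}, the $\eta$-laws, and the closure lemmas~\ref{lem:logrelEqTy}--\ref{lem:logrelEqD}).

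For $X \in \perne$, both $\Da X = X$ and $\da X d = d$, so (a) and (b) are immediate by instantiating the hypothesis at $i = 0$, and (c) is a direct reformulation of the definition of the term relation at a neutral type. For $X = \iU$, $\reifyC{\Gamma}{\Da \iU} = \TmU$ gives (a); for (b), the hypothesis already amounts to $\dtype{\Gamma}{t} \rel X' \in \perT$ for the semantic type $X'$ encoded by $t$, and the equation follows from i.h.\ (a) applied to $X'$; (c) uses i.h.\ (c) at the neutral level, which is available because $\perne \subseteq \perU$. For $X = \iSing{d}{X'}$ the only content of (a) is to reify the tag and the witness separately: i.h.~(a) on $X'$ gives $\deqtype{\Gamma}{A'}{\reifyC\Gamma{\Da{X'}}}$ and i.h.~(b) on $X'$ applied to $\dterm{\Gamma}{A'}{a}\rel d \in [X']$ gives $\deqterm{\Gamma}{A'}{a}{\reifyC\Gamma{\da{X'}d}}$, so the rule \rulename{sing-f} yields the desired equality; (b) uses the derived rule \rulename{sing-eq-i} together with conversion, and (c) exploits $\upa{\iSing{d}{X'}}k = d$ after converting the type via \rulename{sing-eq-el}.

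The main case is $X = \iPi{X'}{F}$. For (a), i.h.~(a) on $X'$ reifies the domain to $A' = \reifyC\Gamma{\Da{X'}}$; for the codomain, let $k_0 = \iVar{|\Gamma|}$ and observe that in the extended context $\Gamma.A'$ the variable $\q$ reifies to $k_0$ at every further weakening. Hence i.h.~(c) at $X'$ (which is $\sqsubset \iPi{X'}{F}$) produces $\dterm{\Gamma.A'}{\lift1{A'}}{\q} \rel \upa{X'}{k_0} \in [X']$, and the function-space hypothesis instantiated at this related pair yields $\dtype{\Gamma.A'}{\subsTy{B}{\exsubs{\p}{\q}}} \rel F\cdot\upa{X'}{k_0} \in \perT$; since $F\cdot\upa{X'}{k_0} \sqsubset \iPi{X'}{F}$, i.h.~(a) finishes (a) (after rewriting $\subsTy{B}{\exsubs{\p}{\q}} = B$). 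For (b), the same reflection step combined with the function-space clause applied to $t$ gives a codomain term relation from which i.h.~(b) produces an equation for the body, and the $\eta$-law $t = \lambda(\appTm{\subsTm{t}{\p}}{\q})$ together with the propagation-of-substitution axiom converts this into the desired equation $\deqterm\Gamma{A}{t}{\reifyC\Gamma{\da{\iPi{X'}{F}}{f}}}$. For (c), given $\Delta \leqslant^i \Gamma$ and $\dterm{\Delta}{\lift{i}{A'}}{s}\rel d \in [X']$, i.h.~(b) at $X'$ yields $\deqterm{\Delta}{\lift{i}{A'}}{s}{\reifyC\Delta{\da{X'}d}}$; weakening the hypothesis on $k$ and applying it with this equation to both sides of an application produces exactly $\deqterm{\Delta}{\subsTy{B}{\exsubs{\p^i}{s}}}{\appTm{\lift{i}{t}}{s}}{\reifyC\Delta{\iNe{k}{\da{X'}{d}}}}$, from which i.h.~(c) at the smaller $F\cdot d$ delivers the required term relation.

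The delicate step will be the $\iPi$ case of (b): one has to align the reification of $\da{\iPi{X'}{F}}{f}$, which unfolds to $\lambda$ of the reification of $F\cdot\upa{X'}{k_0}$ applied pointwise, with the $\eta$-expansion of $t$ in a way that is compatible with the propagation of the weakening substitution $\p$ under the binder. A secondary technical point is the well-foundedness argument: one must check that $\upa{X'}{k_0} \in \dom([X'])$ (by i.h.~(c)) so that $F\cdot\upa{X'}{k_0} \in \dom(\perT)$ is a legitimate argument to the induction, which is guaranteed by the definition of $[\iPi{X'}{F}]$.
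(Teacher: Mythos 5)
Your proposal is correct and follows essentially the same route as the paper's proof: simultaneous induction on $X \in \perT$, with the neutral and $\iU$ cases treated as immediate, the singleton cases handled by congruence, \rulename{sing-eq-i}/\rulename{sing-eq-el} and conversion, and the $\iPi{X'}{F}$ cases driven by reflecting the fresh variable $\q$ via part (c) at the domain type, instantiating the function-space clause, and closing part (b) with the $\eta$-law $t = \lambda(\appTm{(\subsTm{t}{\p})}{\q})$ and part (c) with the readback equation for $\iNe{k}{\da{X'}{d}}$. The technical points you flag (alignment of the reified body with the $\eta$-expansion under $\p$, and membership of $\upa{X'}{k_0}$ in $\dom([X'])$ to legitimise the recursive call at $F\cdot\upa{X'}{k_0}$) are exactly the ones the paper's proof discharges via Lemmata~\ref{lem:logrelEqTy}--\ref{lem:logrelEqD} and \ref{rem:presnf}.
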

\LONGVERSION{
\begin{proof}By induction on $X\in \perT$. See \ref{prf:judgeq}.
\end{proof}
}

In order to finish the proof of soundness we have to prove that each
well-typed term (and each well-formed type) is logically related to
its denotation; with that aim we extend the definition of logical
relations to substitutions and prove the fundamental theorem of
logical relations.

\begin{defi}[Logical relation for substitutions] \hfill
  \label{logrelsubs}
  \begin{enumerate}[(1)]
  \item $\dsubs{\Gamma}{\ectx}{\sigma}\rel d \in\one$ always holds.
  \item $\dsubs{\Gamma}{\ctxe{\Delta}{A}}{\exsubs{\sigma}{t}} \rel
    \iPair{d}{d'} \in \sigD{\calX}{(d \mapsto [F\ d])}$ iff
    $\dsubs{\Gamma}{\Delta}{\sigma} \rel d \in \calX$,
    $\dtype{\Gamma}{\subsTy{A}{\sigma}}\rel F\ d\in \perT$, and
    $\dterm{\Gamma}{\subsTy{A}{\sigma}}{t} \rel d'\in [F\ d]$.
  \end{enumerate}
\end{defi}\medskip

\LONGSHORT{

\noindent By the way this relation is defined, the counterparts of
~\ref{lem:logrelEqTy},~\ref{lem:logrelMon}, and~\ref{lem:logrelEqD} are
easily proved by induction on the co-domain of the substitutions.

\begin{rem}
  \label{lem:logrelEqSub}
  If $\deqsubs{\Gamma }{\Delta}{\gamma}{\delta}$, and $\dsubs{\Gamma
  }{\Delta}{\gamma} \rel d \in \calX$, then $\dsubs{\Gamma
  }{\Delta}{\delta}\rel d \in \calX $.
\end{rem}
\begin{rem}
  \label{lem:monsubs}
  If $\dsubs{\Gamma}{\Delta}{\delta} \rel d \in \calX$, then
  for any $\Theta\leqslant^i\Gamma$,
  $\dsubs{\Theta}{\Delta}{\subsc{\delta}{\p^i}} \rel d \in
  \calX$.
\end{rem}
\begin{rem}
  \label{lem:logrelSubEqD}
  If $\dsubs{\Gamma }{\Delta }{\gamma}\rel d\in \calX $, and $d = d' \in
  \calX$, then $\dsubs{\Gamma }{\Delta }{\gamma} \rel d' \in \calX$.
\end{rem}
}{
  After proving the counterparts of \ref{lem:logrelEqTy},
  \ref{lem:logrelMon} and \ref{lem:logrelEqD} for substitutions, we
  can proceed with the proof of the main theorem of logical relations.
}

\begin{thm}[Fundamental theorem of logical relations]
  \label{thm:logrel}
  Let $\dsubs{\Delta}{\Gamma}{\delta}\rel d\in \semctx{\Gamma}$.
  \begin{enumerate}[\em(1)]
  \item If $\dtype{\Gamma}{A}$, then
    $\dtype{\Delta}{\subsTy{A}{\delta}}\rel \semc{A}{d}\in \perT$;
  \item if $\dterm{\Gamma}{A}{t}$, then
    $\dterm{\Delta}{\subsTy{A}{\delta}}{\subsTm{t}{\delta}} \rel
    \semc{t}{d} \in [\semc{A}{d}]$; and
  \item if $\dsubs{\Gamma}{\Theta}{\gamma}$ then
    $\dsubs{\Delta}{\Theta}{\subsc{\gamma}{\delta}} \rel
    \semc{\gamma}{d} \in \semctx{\Theta}$.
  \end{enumerate}
\end{thm}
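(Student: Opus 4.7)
The plan is to prove all three statements simultaneously by induction on the derivations of the three kinds of judgements (types, terms, substitutions). Since the calculus is presented as a GAT with equations, the induction must be on derivations up to the congruence rules, and the closure lemmas \ref{lem:logrelEqTy}, \ref{lem:logrelMon}, \ref{lem:logrelEqD} (together with their substitution counterparts) will be invoked whenever a rule changes the ``shape'' of the type or applies conversion. Throughout the argument we use that judgmental validity (Remark \ref{rem:invctx}) gives us the well-typedness premises we need when unfolding definitions.

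First I would do the structural/substitution cases. For $\sigma = \idsubs{\Gamma}$ we have $\semc{\idsubs\Gamma}d = d$, so $\dsubs{\Delta}{\Gamma}{\idsubs\Gamma \cdot \delta} \rel d \in \semctx\Gamma$ follows from the hypothesis (modulo the equation $\idsubs{} \circ \delta = \delta$, handled by the substitution version of Lemma \ref{lem:logrelEqTy}). For $\mathsf{p}$ and $\exsubs\sigma t$ we unfold Definition \ref{logrelsubs}; for $\mathsf q$ (the hypothesis rule) we use the $\sigD{}{}$-clause of the relation together with the equation $\subsTm \q {\exsubs\sigma t} = t$. For the universe-formation, $\mathsf{Nat}$-, $\enum n$-, and neutral cases, everything reduces to an unfolding of Definition \ref{def:logrel} plus the fact that the value lives in the corresponding PER; the neutral case uses Lemma \ref{lem:judgeq}(c).

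The heart of the proof is the dependent function type together with $\lambda$ and application. For $\F A B$, given the IH for $A$ (so $\dtype{\Delta}{\subsTy A \delta} \rel \semc A d \in \perT$) I extend the logical substitution: for any $\Theta \leq^i \Delta$ and any $\dterm{\Theta}{\lift i {\subsTy A \delta}}{s} \rel d' \in [\semc A d]$, the pair $\exsubs{\subsc \delta {\p^i}}{s}$ is related to $(d,d')$ in $\semctx{\ctxe \Gamma A}$ by Definition~\ref{logrelsubs} and the monotonicity remark; applying the IH for $B$ yields the required relation for $F\,d'$. For $\lambda t$ one must build the function by precisely the same construction: on any logical argument $(s,d')$, the IH for the body $t$ in context $\ctxe \Gamma A$ gives $\dterm{\Theta}{\subsTy B{\exsubs{\subsc \delta {\p^i}}{s}}}{\subsTm t {\exsubs{\subsc \delta {\p^i}}{s}}} \rel \semc t{(d,d')} \in [\semc B {(d,d')}]$, and a $\beta$-conversion via Lemma \ref{lem:logrelEqTy} replaces $\subsTm t {\exsubs{\subsc \delta {\p^i}}{s}}$ with $\appTm{(\lift i {\subsTm{(\lambda t)}\delta})}{s}$. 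For application the definition of the $\iPi$-relation is consumed directly, instantiated at $i = 0$ and $s$ equal to the logical image of the argument term, with a final invocation of Lemma \ref{lem:logrelEqTy} to match $\subsTy B {\subid{}{u}}$ against the substituted codomain.

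The remaining cases (singleton formation/intro/elim, $\TmU$-elimination, $\Sigma$, natural numbers with $\natrecraw$, enumeration sets with $\elimraw$, $\boxty{}$ with $\whereraw$) follow the same pattern: unfold the corresponding clause of the logical relation on the value side, use the IH on the premises of the typing rule, and use Lemmas~\ref{lem:logrelEqTy}, \ref{lem:logrelMon}, \ref{lem:logrelEqD} plus the semantic equations (Definition of $\semc{}{}$) together with the congruence rules of the calculus to bridge between the syntactic substitution propagation and the semantic compositionality. The conversion rule is discharged by Lemma \ref{lem:logrelEqTy}. The equational congruence rules of the GAT are handled uniformly: if $t = t' \in \term\Gamma A$ is derived, then $\semc t d = \semc{t'} d \in [\semc A d]$ by soundness (Theorem \ref{thm:soundness}), and Lemmas \ref{lem:logrelEqTy}/\ref{lem:logrelEqD} transport the relation across the equality.

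The main obstacle will be the $\lambda$/$\F{}{}$ case, where three bookkeeping issues converge: lifting the substitution $\delta$ through an extra binder by composing with $\p^i$, ensuring the induced logical substitution $\exsubs{\subsc \delta {\p^i}}{s}$ actually lies in $\semctx{\ctxe \Gamma A}$ (which requires the IH on $A$ to already be available), and converting between $\subsTm{(\lambda t)}{\delta}$ applied to an argument and the substituted body via the $\beta$-equation so that the relation at the right-hand side of Definition \ref{def:logrel}(4b) fits. The same shape of argument recurs in the elimination cases for $\mathsf{Nat}$, $\enum n$ and $\boxty{}$, where the motive $B$ lives in an extended context and its substituted instance must be matched against the semantic recursor acting on values.
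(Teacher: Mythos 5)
Your proposal is correct and follows essentially the same route as the paper's proof: mutual induction on the typing derivations, extending the logical substitution with $\exsubs{\subsc{\delta}{\p^i}}{s}$ via monotonicity for the binder cases, consuming the $\iPi$-clause at $i=0$ for application, and discharging conversion and the equational bookkeeping with Lemmas~\ref{lem:logrelEqTy}, \ref{lem:logrelMon}, \ref{lem:logrelEqD} and their substitution counterparts. The points you flag as delicate (the $\F{}{}$/$\lambda$ case and the eliminators with motives in extended contexts) are exactly the cases the paper chooses to spell out.
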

\LONGVERSION{
\begin{proof}By mutual induction on the derivations. See \ref{prf:logrel}.
\end{proof}
}

\LONGSHORT{ We define for each context $\Gamma$ an element
  $\rho_\Gamma$ of $D$. This environment will be used to define the
  normalisation function.

}{
We define for each context $\Gamma$ an element $\rho_\Gamma$ of $D$,
that is, by construction, logically related to $\idsubs{\Gamma}$. This
environment will be used to define the normalisation function; also
notice that if we instantiate Thm.~\ref{thm:logrel} with
$\rho_\Gamma$, then a well-typed term under $\Gamma$ will be logically
related to its denotation.
}
\begin{defi}[Canonical environment]
  \label{def:canonical-env}
\LONGSHORT{
We define $\rho_\Gamma$ by induction on $\Gamma$ as follows:
      \label{eq:env-ctx}
    \begin{align*}
      \rho_\ectx &=  \iO&\\
      \rho_{\ctxe{\Gamma}{A}} & =
      \iPair{d'}{\upa{\semc{A}{d'}}{\iVar{n}}}&    \mbox{ where } n =
      |\Gamma|, \mbox{ and } d' = \rho_\Gamma.
    \end{align*}
}{
  Let $\rho_\Gamma = P_\Gamma\ \iO$, where
      $P_\ectx \ d = d$ and
      $P_{\ctxe{\Gamma}{A}}\ d =
      \iPair{d'}{\upa{\semc{A}{d'}}{\iVar{|\Gamma|}}}$ with $d' = P_\Gamma\ d$.

Then $\dsubs{\Gamma}{\Gamma}{\idsubs{\Gamma}}\rel{\rho_\Gamma} \in
\semc{\Gamma}{}$ for $\Gamma \in \ctx$.
}
\end{defi}\medskip

\LONGVERSION{
\noindent By an immediate induction on contexts we can check the following.

\begin{lem}\label{lem:logrelIdSub} If $\dctx{\Gamma}$ then 
  $\dsubs{\Gamma}{\Gamma}{\idsubs{\Gamma}}\rel{\rho_\Gamma} \in
  \semctx{\Gamma}$. 
\end{lem}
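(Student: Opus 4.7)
The plan is to prove the lemma by induction on the well-formedness derivation $\dctx{\Gamma}$, equivalently by induction on the length of $\Gamma$. The base case $\Gamma = \ectx$ is immediate, since $\rho_\ectx = \iO$ and Definition~\ref{logrelsubs}(1) stipulates that $\dsubs{\ectx}{\ectx}{\idsubs{\ectx}}\rel\iO\in\one$ holds unconditionally.

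For the inductive case $\Gamma = \ctxe{\Gamma'}{A}$, I would use the axiom $\idsubs{\ctxe{\Gamma'}{A}} = \exsubs{\p}{\q}$ together with closure of the substitution relation under judgmental equality (counterpart of Lemma~\ref{lem:logrelEqTy} for substitutions) to reduce the goal to $\dsubs{\ctxe{\Gamma'}{A}}{\ctxe{\Gamma'}{A}}{\exsubs{\p}{\q}}\rel \rho_{\ctxe{\Gamma'}{A}} \in \semctx{\ctxe{\Gamma'}{A}}$. By Definition~\ref{logrelsubs}(2) and the shape of $\rho_{\ctxe{\Gamma'}{A}} = \iPair{\rho_{\Gamma'}}{\upa{\semc{A}{\rho_{\Gamma'}}}{\iVar{|\Gamma'|}}}$, this splits into three subgoals: (i) $\dsubs{\ctxe{\Gamma'}{A}}{\Gamma'}{\p} \rel \rho_{\Gamma'} \in \semctx{\Gamma'}$; (ii) $\dtype{\ctxe{\Gamma'}{A}}{\subsTy{A}{\p}} \rel \semc{A}{\rho_{\Gamma'}} \in \perT$; and (iii) $\dterm{\ctxe{\Gamma'}{A}}{\subsTy{A}{\p}}{\q} \rel \upa{\semc{A}{\rho_{\Gamma'}}}{\iVar{|\Gamma'|}} \in [\semc{A}{\rho_{\Gamma'}}]$.

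Subgoal (i) follows from the induction hypothesis $\dsubs{\Gamma'}{\Gamma'}{\idsubs{\Gamma'}} \rel \rho_{\Gamma'} \in \semctx{\Gamma'}$ by applying the monotonicity counterpart of Lemma~\ref{lem:logrelMon} for substitutions (weakening by $\p$) and then using the category law $\subsc{\idsubs{\Gamma'}}{\p} = \p$ together with equality closure. Subgoal~(ii) is a direct instance of the fundamental theorem of logical relations (Thm.~\ref{thm:logrel}) applied to the typing $\dtype{\Gamma'}{A}$ (obtained by inversion of $\dctx{\ctxe{\Gamma'}{A}}$) with the substitution from~(i).

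The main obstacle, and the only step requiring genuine calculation, is subgoal~(iii). The plan here is to invoke Lemma~\ref{lem:judgeq}(c) with the neutral element $k = \iVar{|\Gamma'|}$: first, $\iVar{|\Gamma'|} \in \perne$ because $\reify j {\iVar{|\Gamma'|}}$ is defined at every level $j$; second, for any $\Delta \leqslant^i \ctxe{\Gamma'}{A}$ we have $|\Delta| = |\Gamma'|+1+i$, so that $\reifyC{\Delta}{\iVar{|\Gamma'|}} = \ind{|\Delta|-|\Gamma'|-1} = \ind{i} \equiv \lift{i}{\q}$ by definition of the de Bruijn index $\ind i$. Together with $\deqtype{\Delta}{\lift{i}{\subsTy{A}{\p}}}{\lift{i}{\subsTy{A}{\p}}}$ (which gives well-typedness of $\lift{i}{\q}$ at $\lift{i}{\subsTy{A}{\p}}$ by rule~\rulename{hyp} and weakening), this yields the required equality $\deqterm{\Delta}{\lift{i}{\subsTy{A}{\p}}}{\lift{i}{\q}}{\reifyC{\Delta}{\iVar{|\Gamma'|}}}$ definitionally, and Lemma~\ref{lem:judgeq}(c) closes the argument.
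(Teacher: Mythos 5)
Your proof is correct and follows essentially the same route as the paper's: induction on the context, rewriting $\idsubs{}$ as $\exsubs{\p}{\q}$ and closing under judgmental equality of substitutions, handling the $\p$ component by the induction hypothesis plus monotonicity, and the type component by the fundamental theorem. The only (harmless) divergence is that for the variable component you invoke Lemma~\ref{lem:judgeq}(c) directly with $k=\iVar{|\Gamma'|}$ and check its reification explicitly, where the paper's appendix tersely cites Thm.~\ref{thm:logrel}; your justification is if anything the more precise one.
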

\begin{proof}By induction on $\dctx{\Gamma}$. See \ref{prf:logrelIdSub}
\end{proof}

\subsection{Main results}

Now we can define concretely the normalisation function as the
composition of reification with normalisation after evaluation under
the canonical environment. The following corollaries just instantiate
previous lemmata and theorems concluding correctness of NbE.

}

\begin{defi}[Normalisation algorithm] 
  \label{def:nbe-alg}
  Let $\dtype{\Gamma}{A}$, and $\dterm{\Gamma}{A}{t}$.
  \begin{align*}
    \nbety{\Gamma}{A} &= \reifyC{\Gamma}{\Da{\semc{A}{\rho_\Gamma}}} &\\
    \nbetm{\Gamma}{A}{t} &= \reifyC{\Gamma}{\da{\semc{A}{\rho_\Gamma}}{\semc{t}\rho_\Gamma}} &
  \end{align*}
\end{defi}\medskip

\LONGSHORT{

  \noindent Notice that if we instantiate Thm.~\ref{thm:logrel} with
  $\rho_\Gamma$, then a well-typed term $t$ under $\Gamma$ will be
  logically related to its denotation. Finally, using the key lemma
  \ref{lem:judgeq} we conclude correctness for NbE.

  \begin{cor}
    Let $\dtype{\Gamma}{A}$, and $\dterm{\Gamma}{A}{t}$, then by
    fundamental theorem of logical relations (and Lem.~\ref{lem:logrelEqTy}),
    \begin{enumerate}[\em(1)]
    \item $\dtype{\Gamma}{A}\rel \semc{A}\rho_\Gamma \in
      \perT$; and
    \item $\dterm{\Gamma}{A}{t}\rel \semc{t}\rho_\Gamma \in
      [\semc{A}\rho_\Gamma]$,
    \end{enumerate}
  \end{cor}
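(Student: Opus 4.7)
The plan is to apply the fundamental theorem of logical relations (Thm.~\ref{thm:logrel}) to the canonical environment and then transport the result along the identity-substitution law.

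First, I would invoke Lem.~\ref{lem:logrelIdSub} to obtain $\dsubs{\Gamma}{\Gamma}{\idsubs{\Gamma}}\rel{\rho_\Gamma} \in \semctx{\Gamma}$. Instantiating parts (1) and (2) of Thm.~\ref{thm:logrel} with $\Delta = \Gamma$ and $\delta = \idsubs{\Gamma}$ then yields
\[
  \dtype{\Gamma}{\subsTy{A}{\idsubs{\Gamma}}}\rel \semc{A}{\rho_\Gamma} \in \perT
  \qquad\text{and}\qquad
  \dterm{\Gamma}{\subsTy{A}{\idsubs{\Gamma}}}{\subsTm{t}{\idsubs{\Gamma}}} \rel \semc{t}{\rho_\Gamma} \in [\semc{A}{\rho_\Gamma}].
\]

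Next, the substitution axioms give the judgemental equalities $\deqtype{\Gamma}{\subsTy{A}{\idsubs{\Gamma}}}{A}$ and $\deqterm{\Gamma}{A}{\subsTm{t}{\idsubs{\Gamma}}}{t}$ (the latter uses the reflexivity of the former). Applying Lem.~\ref{lem:logrelEqTy}(a) to the first logical-relations statement converts the type tag from $\subsTy{A}{\idsubs{\Gamma}}$ to $A$, establishing item (1). Applying Lem.~\ref{lem:logrelEqTy}(b) to the second statement—using both judgemental equalities—converts the type tag and the term simultaneously, yielding $\dterm{\Gamma}{A}{t}\rel \semc{t}{\rho_\Gamma} \in [\semc{A}{\rho_\Gamma}]$, which is item (2).

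There is no real obstacle here: the work has already been done in the fundamental theorem; this corollary is just the instantiation at the identity substitution followed by a conversion step. The only minor point of care is making sure that Lem.~\ref{lem:logrelEqTy}(b) is invoked with the correct pair of equalities (both the type and term conversions coming from $\subsTm{t}{\idsubs{\Gamma}} = t \of A$), but this is immediate from syntactic validity (Rem.~\ref{rem:invctx}).
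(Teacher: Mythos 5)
Your proposal matches the paper's own justification exactly: the paper proves this corollary precisely by instantiating the fundamental theorem (Thm.~\ref{thm:logrel}) at the identity substitution, using Lem.~\ref{lem:logrelIdSub} to supply $\dsubs{\Gamma}{\Gamma}{\idsubs{\Gamma}}\rel\rho_\Gamma \in \semctx{\Gamma}$, and then applying Lem.~\ref{lem:logrelEqTy} together with the identity-substitution laws to replace $\subsTy{A}{\idsubs{\Gamma}}$ and $\subsTm{t}{\idsubs{\Gamma}}$ by $A$ and $t$. Your argument is correct and essentially the same as the paper's.
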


  \begin{cor}[Soundness of NbE]\label{rem:nbe-eq}
    By way of Lem.~\ref{lem:judgeq}, it follows immediately
    \begin{enumerate}[\em(1)]
    \item $\deqtype{\Gamma}{A}{\nbe{A}}$, and
    \item $\deqterm{\Gamma}{A}{t}{\nbe{t}}$.
    \end{enumerate}
  \end{cor}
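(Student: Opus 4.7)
The plan is to read the corollary as a direct consequence of Lem.~\ref{lem:judgeq}(a) and (b), applied to the statement of the preceding corollary. First I would unfold Def.~\ref{def:nbe-alg}: the two assertions to be proved are, by definition, $\deqtype{\Gamma}{A}{\reifyC{\Gamma}{\Da{\semc{A}\rho_\Gamma}}}$ and $\deqterm{\Gamma}{A}{t}{\reifyC{\Gamma}{\da{\semc{A}\rho_\Gamma}{\semc{t}\rho_\Gamma}}}$. The preceding corollary has already established, for $\dtype{\Gamma}{A}$ and $\dterm{\Gamma}{A}{t}$, that $\dtype{\Gamma}{A} \rel \semc{A}\rho_\Gamma \in \perT$ and $\dterm{\Gamma}{A}{t} \rel \semc{t}\rho_\Gamma \in [\semc{A}\rho_\Gamma]$; this was obtained by instantiating Thm.~\ref{thm:logrel} at the canonical environment $\rho_\Gamma$ via Lem.~\ref{lem:logrelIdSub}, and then stripping the identity substitution $\idsubs{\Gamma}$ from $\subsTy{A}{\idsubs{\Gamma}}$ and $\subsTm{t}{\idsubs{\Gamma}}$ using the axiom $\subsTm{t}{\idsubs{}}=t$ together with closure under conversion, Lem.~\ref{lem:logrelEqTy}.

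Next, I would feed the two relatedness facts into Lem.~\ref{lem:judgeq}. Clause (a) of that lemma, applied to $\dtype{\Gamma}{A} \rel \semc{A}\rho_\Gamma \in \perT$, delivers exactly $\deqtype{\Gamma}{A}{\reifyC{\Gamma}{\Da{\semc{A}\rho_\Gamma}}}$, which is $\deqtype{\Gamma}{A}{\nbety{\Gamma}{A}}$. Clause (b), applied to $\dterm{\Gamma}{A}{t} \rel \semc{t}\rho_\Gamma \in [\semc{A}\rho_\Gamma]$, delivers $\deqterm{\Gamma}{A}{t}{\reifyC{\Gamma}{\da{\semc{A}\rho_\Gamma}{\semc{t}\rho_\Gamma}}}$, which is $\deqterm{\Gamma}{A}{t}{\nbetm{\Gamma}{A}{t}}$. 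Both parts of the corollary then follow without further computation.

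There is no genuine obstacle at this point, as the author's phrasing ``it follows immediately'' indicates: all of the conceptual work resides in Lem.~\ref{lem:judgeq}, which converts logical relatedness into judgemental equality with the reified $\eta$-expanded denotation, and in the fundamental theorem Thm.~\ref{thm:logrel}, which guarantees that every well-typed term is logically related to its denotation under an appropriate environment. The corollary is simply the composition of these two results, with the canonical environment $\rho_\Gamma$ supplied by Lem.~\ref{lem:logrelIdSub} serving as the bridge between the syntactic identity substitution and a semantic object in $\semctx{\Gamma}$.
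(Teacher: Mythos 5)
Your proposal matches the paper's own argument exactly: the paper likewise obtains $\dtype{\Gamma}{A}\rel \semc{A}\rho_\Gamma \in \perT$ and $\dterm{\Gamma}{A}{t}\rel \semc{t}\rho_\Gamma \in [\semc{A}\rho_\Gamma]$ from Thm.~\ref{thm:logrel} instantiated at the canonical environment (Lem.~\ref{lem:logrelIdSub}) together with Lem.~\ref{lem:logrelEqTy}, and then applies clauses (a) and (b) of Lem.~\ref{lem:judgeq} to read off the two judgemental equalities. No gaps; the identification of where the real work lives is also accurate.
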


}{ The first point of soundness is a direct consequence of
  Thm.~\ref{thm:logrel} and Lem.~\ref{lem:logrelEqTy}; and the second
  point is obtained using Lem.~\ref{lem:judgeq}.
  \begin{cor}[Soundness of NbE]
      \label{rem:nbe-eq}
      Let $\dtype{\Gamma}{A}$, and $\dterm{\Gamma}{A}{t}$, then
      \begin{enumerate}
      \item $\dtype{\Gamma}{A}\rel \semc{A}\rho_\Gamma \in \perT$, and
        $\dterm{\Gamma}{A}{t}\rel \semc{t}\rho_\Gamma \in
        [\semc{A}\rho_\Gamma]$; and
      \item $\deqtype{\Gamma}{A}{\nbe{A}}$, and
        $\deqterm{\Gamma}{A}{t}{\nbe{t}}$.
      \end{enumerate}
  \end{cor}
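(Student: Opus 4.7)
The corollary has two parts, and both follow almost directly by chaining together results established earlier in the section. The strategy is to (i) instantiate the fundamental theorem of logical relations with the canonical environment to obtain the logical relations of part~(1), and (ii) feed those relations into the key lemma that converts a logical relation into a judgemental equality with the reification of the related value.

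First I would obtain the substitution-level witness $\dsubs{\Gamma}{\Gamma}{\idsubs{\Gamma}}\rel \rho_\Gamma \in \semctx{\Gamma}$. This is proved by induction on the derivation $\dctx{\Gamma}$: in the empty case both sides are $\iO$ and belong to $\one$; in the extension $\ctxe{\Gamma}{A}$, the inductive hypothesis gives the pair for $\Gamma$, one appeals to Theorem~\ref{thm:logrel}(1) applied to $\dtype{\Gamma}{A}$ to know $\dtype{\Gamma}{\subsTy{A}{\idsubs{}}}\rel \semc{A}{\rho_\Gamma}\in\perT$ (so $\semc{A}{\rho_\Gamma}\in\perT$), and then uses Lemma~\ref{lem:judgeq}(c) on the freshly generated variable $\iVar{|\Gamma|}$ to place $\upa{\semc{A}{\rho_\Gamma}}{\iVar{|\Gamma|}}$ logically above the hypothesis $\q$. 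The only mild bookkeeping here is massaging the types using the substitution law $\subsTy{A}{\idsubs{}} = A$ and the weakening/monotonicity lemmas; Lemmas~\ref{lem:logrelEqTy} and~\ref{lem:logrelMon} handle exactly that.

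For part~(1), instantiate Theorem~\ref{thm:logrel} with $\delta = \idsubs{\Gamma}$ and $d = \rho_\Gamma$. From the hypotheses $\dtype{\Gamma}{A}$ and $\dterm{\Gamma}{A}{t}$ the theorem yields $\dtype{\Gamma}{\subsTy{A}{\idsubs{}}}\rel \semc{A}{\rho_\Gamma}\in\perT$ and $\dterm{\Gamma}{\subsTy{A}{\idsubs{}}}{\subsTm{t}{\idsubs{}}}\rel \semc{t}{\rho_\Gamma}\in[\semc{A}{\rho_\Gamma}]$. Using the substitution axioms $\subsTy{A}{\idsubs{}} = A$ and $\subsTm{t}{\idsubs{}} = t$ together with Lemma~\ref{lem:logrelEqTy} (closure of the logical relation under judgemental equality of types and terms), these rewrite to the desired $\dtype{\Gamma}{A}\rel \semc{A}{\rho_\Gamma}\in\perT$ and $\dterm{\Gamma}{A}{t}\rel \semc{t}{\rho_\Gamma}\in[\semc{A}{\rho_\Gamma}]$.

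Part~(2) is then immediate from Lemma~\ref{lem:judgeq}. Applying clause~(a) of that lemma to the type relation obtained in part~(1) yields $\deqtype{\Gamma}{A}{\reifyC{\Gamma}{\Da{\semc{A}{\rho_\Gamma}}}}$, which is exactly $\deqtype{\Gamma}{A}{\nbety{\Gamma}{A}}$ by Definition~\ref{def:nbe-alg}. Applying clause~(b) to the term relation of part~(1) gives $\deqterm{\Gamma}{A}{t}{\reifyC{\Gamma}{\da{\semc{A}{\rho_\Gamma}}{\semc{t}{\rho_\Gamma}}}}$, which is $\deqterm{\Gamma}{A}{t}{\nbetm{\Gamma}{A}{t}}$. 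No real obstacle remains at this stage; the entire mathematical content sits in Theorem~\ref{thm:logrel}, Lemma~\ref{lem:judgeq} and the canonical-environment construction, and this corollary is merely the assembly of those components.
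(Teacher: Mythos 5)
Your proposal is correct and follows the paper's own route exactly: part (1) is the fundamental theorem of logical relations (Thm.~\ref{thm:logrel}) instantiated at the canonical environment $\rho_\Gamma$ (which is logically related to $\idsubs\Gamma$ by the construction in Def.~\ref{def:canonical-env}), cleaned up with Lem.~\ref{lem:logrelEqTy}, and part (2) is Lem.~\ref{lem:judgeq}(a,b) unfolded against Def.~\ref{def:nbe-alg}. The auxiliary induction you sketch for $\dsubs{\Gamma}{\Gamma}{\idsubs{\Gamma}}\rel \rho_\Gamma \in \semctx{\Gamma}$ likewise matches the paper's proof of that fact, including the use of reflection at the fresh variable in the context-extension case.
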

}
\LONGSHORT{

  We have now a decision procedure for judgmental equality; for
  deciding $\deqterm{\Gamma}{A}{t}{t'}$, put both terms in normal
  formal and check if they are syntactically equal.

  \begin{cor}
    \label{cor:termrel}
    If $\dtype{\Gamma}{A}$, and $\dtype{\Gamma}{A'} $, then we can decide
    $\deqtype{\Gamma}{A}{A'}$. Also if $\dterm{\Gamma}{A}{t}$, and
    $\dterm{\Gamma}{A}{t'}$, we can decide $\deqterm{\Gamma}{A}{t}{t'}$.
  \end{cor}

  As a byproduct we can conclude that type constructors are injective;
  this result is exploited in the next section where we introduce the
  type-checking algorithm.  Injectivity of $\F{\_}{\_}$ plays a key
  r\^ole in all versions of dependent type theory with equality as
  judgement;  \cf\ Adams' \cite{adams} proof of equivalence between
  PTS with equality as a judgement and equality taken as a relation
  between untyped terms, improved by Siles and 
  Herbelin~\cite{silesHerbelin:lics10}.
  
  \begin{rem}
    \label{rem:nbe-hom}
    By expanding definitions, we easily check 
    \begin{enumerate}[(1)]
    \item $\nbety{\Gamma}{\F{A}{B}} = \F{(\nbety{\Gamma}{A})}{(\nbety{\ctxe{\Gamma}{A}}{B})}$, and
    \item $\nbety{\Gamma}{\singTm{a}{A}} =
      \singTm{\nbetm{\Gamma}{A}{a}}{\nbety{\Gamma}{A}}$.
    \end{enumerate}
  \end{rem}

  \begin{cor}[Injectivity of $\F{\_}{\_}$ and of $\singTm{\_}{\_}$]
    \label{injcons}
    If $\deqtype{\Gamma}{\F{A}{B}}{\F{A'}{B'}}$, then
    $\deqtype{\Gamma}{A}{A'}$, and
    $\deqtype{\ctxe{\Gamma}{A}}{B}{B'}$.  Also
    $\deqtype{\Gamma}{\singTm{t}{A}}{\singTm{t'}{A'}}$, then
    $\deqtype{\Gamma}{A}{A'}$, and $\deqterm{\Gamma}{A}{t}{t'}$.
  \end{cor}

}{

  \begin{rem}
    \label{rem:nbe-hom}
    By expanding the definitions, we easily check 
    \begin{enumerate}
    \item $\nbety{\Gamma}{\F{A}{B}} = \F{(\nbety{\Gamma}{A})}{(\nbety{\ctxe{\Gamma}{A}}{B})}$, and
    \item $\nbety{\Gamma}{\singTm{a}{A}} =
      \singTm{\nbetm{\Gamma}{A}{a}}{\nbety{\Gamma}{A}}$.
    \end{enumerate}
  \end{rem}

  \begin{cor}
    \label{cor:termrel}
    If $\dtype{\Gamma}{A}$, and $\dtype{\Gamma}{A'} $, then we can decide
    $\deqtype{\Gamma}{A}{A'}$. Also if $\dterm{\Gamma}{A}{t}$, and
    $\dterm{\Gamma}{A}{t'}$, we can decide $\deqterm{\Gamma}{A}{t}{t'}$.
  \end{cor}

  \begin{cor}[Injectivity of $\F{\_}{\_}$ and of $\singTm{\_}{\_}$]
    \label{injcons}
    If $\deqtype{\Gamma}{\F{A}{B}}{\F{A'}{B'}}$, then
    $\deqtype{\Gamma}{A}{A'}$, and
    $\deqtype{\ctxe{\Gamma}{A}}{B}{B'}$.  Also
    $\deqtype{\Gamma}{\singTm{t}{A}}{\singTm{t'}{A'}}$, then
    $\deqtype{\Gamma}{A}{A'}$, and $\deqterm{\Gamma}{A}{t}{t'}$.
  \end{cor}
}


\PrfIrrTitle
\label{sec:pi-nbe}

\LONGSHORT{ 

  In this section we introduce the logical relations for the new types
  in \lambdaIrr. We skip the re-statement of the results given for
  \lambdaSing\ in \ref{sec:logrel-sing}, instead we present in Appendix
  \ref{sec:proofs} the proof for some of the new cases arising in this
  calculus for each of lemmata \ref{lem:logrelEqTy},
  \ref{lem:logrelMon}, \ref{lem:logrelEqD}, \ref{lem:judgeq} and
  theorem \ref{thm:logrel}.

\begin{defi}[\cf\ \ref{def:logrel}]\hfill
  \begin{enumerate}[(1)]
  \item Sigma types.
    \begin{enumerate}[(a)]
    \item $\dtype{\Gamma}{A}\rel \iDs{X}{F}$ iff
      $\deqtype{\Gamma}{A}{\DSum{A'}{B'}}$ and
      $\dtype{\Gamma}{A'}\rel X$ and for all $\Delta \leqslant^i
      \Gamma$ and $\dterm{\Delta}{\lift{i}{A'}}{s}\rel d\in [X]$,
      $\dtype{\Delta}{\subsTy{B'}{\exsubs{\p^i}{s}}}\rel F\,d$.
    \item $\dterm{\Gamma}{A}{t}\rel d \in [\iDs{X}{F}]$ iff
      $\deqtype{\Gamma}{A}{\DSum{A'}{B'}}$ and
      $\dterm{\Gamma}{A'}{\dfst{t}}\rel \fstnew d\in [X]$ and
      $\dterm{\Gamma}{\subsTy{B'}{\subid{\Gamma}{\dfst{t}}}}
      {\dsnd{t}}\rel \sndnew d\in [F\, (\fstnew d)]$.
    \end{enumerate}
  \item Natural numbers.
    \begin{enumerate}[(a)]
    \item $\dtype{\Gamma}{A}\rel \iNat$ iff
      $\deqtype{\Gamma}{A}{\iNat}$.
    \item $\dterm{\Gamma}{A}{t}\rel d \in [\iNat]$ iff
      $\dtype{\Gamma}{A}\rel \iNat$ and for all $\Delta \leqslant^i
      \Gamma$,
      $\deqterm{\Delta}{\iNat}{\lift{i}{t}}{\reifyC{\Delta}{d}}$.
    \end{enumerate}
  \item Finite types.
    \begin{enumerate}[(a)]
    \item $\dtype{\Gamma}{A}\rel \enumD{n}$ iff
      $\deqtype{\Gamma}{A}{\enum{n}}$.
    \item $\dterm{\Gamma}{A}{t}\rel d \in [\enumD{n}]$ iff
      $\dtype{\Gamma}{A}\rel \enumD{n}$ and for all $\Delta
      \leqslant^i \Gamma$,
      $\deqterm{\Delta}{\enum n}{\lift{i}{t}}{\reifyC{\Delta}{d}}$.
    \end{enumerate}
  \item Proof-irrelevance types.
    \begin{enumerate}[(a)]
    \item $\dtype{\Gamma}{A}\rel \prf{X}\in \perT$ iff
      $\deqtype{\Gamma}{A}{\boxty{A'}}$ and $\dtype{\Gamma}{A'}\rel
      X\in \perT$.
    \item $\dterm{\Gamma}{A}{t}\rel d \in [\prf{X}]$ iff
      $\dtype{\Gamma}{A}\rel \prf{X}$.
    \end{enumerate}
  \end{enumerate}
\end{defi}
}{
  We add the corresponding cases in the definition of logical relations,
  \begin{align*}
    &\dtype{\Gamma}{A}\rel \prf{X}\in \perT , \mbox{ iff }
    \deqtype{\Gamma}{A}{\boxty{A'}} , \mbox{ and }\dtype{\Gamma}{A'}\rel
    X\in \perT ; \mbox{ and}\\
  &\dterm{\Gamma}{A}{t}\rel d \in [\prf{X}], \mbox{ iff }
    \dtype{\Gamma}{A}\rel \prf{X} \in \perT .
  \end{align*}
}

\LONGSHORT{

  \begin{rem}\hfill
    \begin{enumerate}[(1)]
    \item $\nbety{\Gamma}{\DSum{A}{B}}=\DSum{\nbety{\Gamma}{A}}{\nbety{\ctxe{\Gamma}{A}}{B}}$;
    \item $\nbetm{\Gamma}{\DSum{A}{B}}{\depair{t}{b}}=\depair{\nbetm{\Gamma}{A}{t}}
      {\nbetm{\Gamma}{\subsTy{B}{\subid{}{t}}}{b}}$;
    \item $\nbe{\suctm{t}} = \suctm{\nbe{t}}$.
    \item $\nbe{\boxty{A}} = \boxty{\nbe{A}}$.
    \end{enumerate}
  \end{rem}
}{
  \begin{rem}
    All the lemmata \ref{lem:logrelEqTy}, \ref{lem:logrelMon},
    \ref{lem:logrelEqD}, \ref{lem:judgeq}, theorem \ref{thm:logrel},
    and remarks \ref{rem:nbe-eq}, \ref{rem:nbe-hom} are still
    valid. Moreover we also have $\nbe{\boxty{A}} = \boxty{(\nbe{A})}$.
  \end{rem}
}

\LONGVERSION{
\begin{cor}
  If $\deqtype{\Gamma}{\DSum{A}{B}}{\DSum{A'}{B'}}$, then
  $\deqtype{\Gamma}{A}{A'} $, and $\deqtype{\ctxe{\Gamma}{A}}{B}{B'}$.
\end{cor}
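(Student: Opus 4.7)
The plan is to mimic the proof of Corollary \ref{injcons} (injectivity of $\F{\_}{\_}$ and $\singTm{\_}{\_}$) using the NbE algorithm, relying on the analogous homomorphism property for $\DSum{\_}{\_}$.

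First, I would extend Remark \ref{rem:nbe-hom} to sigma types: by directly unfolding the definitions of $\semc{\_}{\_}$, $\Downarrow$, and $\reify j {}$ on the constructor $\DSum{A}{B}$, one checks
\[
  \nbety{\Gamma}{\DSum{A}{B}} \;=\; \DSum{\nbety{\Gamma}{A}}{\nbety{\ctxe{\Gamma}{A}}{B}}.
\]
This is immediate from $\semc{\DSum{A}{B}}d = \iDs{\semc A d}{(e \mapsto \semc B {\iPair d e})}$, the clause $\Da{\iDs X F} = \iDs{\Da X}{(d \mapsto \Da{F\,\upa X d})}$, and the fact that evaluation under $\rho_{\ctxe \Gamma A}$ of $B$ reuses $\upa{\semc A {\rho_\Gamma}}{\iVar{|\Gamma|}}$ for the new variable.

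Next, starting from $\deqtype{\Gamma}{\DSum{A}{B}}{\DSum{A'}{B'}}$, syntactic validity (Remark \ref{rem:invctx}) gives $\dtype{\Gamma}{\DSum{A}{B}}$ and $\dtype{\Gamma}{\DSum{A'}{B'}}$, whence by inversion (analogous to Lem.~\ref{lem:inv-ty}) we get $\dtype{\Gamma}{A}$, $\dtype{\ctxe{\Gamma}{A}}{B}$, $\dtype{\Gamma}{A'}$ and $\dtype{\ctxe{\Gamma}{A'}}{B'}$. Completeness of NbE (Theorem \ref{thm:completeness}) applied to the hypothesis yields
\[
  \nbety{\Gamma}{\DSum{A}{B}} \;\equiv\; \nbety{\Gamma}{\DSum{A'}{B'}},
\]
and via the homomorphism this reads $\DSum{\nbety{\Gamma}{A}}{\nbety{\ctxe{\Gamma}{A}}{B}} \equiv \DSum{\nbety{\Gamma}{A'}}{\nbety{\ctxe{\Gamma}{A'}}{B'}}$. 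Since these are syntactic equalities of normal forms, projecting componentwise gives $\nbety{\Gamma}{A} \equiv \nbety{\Gamma}{A'}$ and $\nbety{\ctxe{\Gamma}{A}}{B} \equiv \nbety{\ctxe{\Gamma}{A'}}{B'}$.

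Finally, soundness of NbE (Corollary \ref{rem:nbe-eq}) gives $\deqtype{\Gamma}{A}{\nbety{\Gamma}{A}}$ and $\deqtype{\Gamma}{A'}{\nbety{\Gamma}{A'}}$; transitivity delivers $\deqtype{\Gamma}{A}{A'}$. For the second component, one small subtlety to handle is that $\nbety{\ctxe{\Gamma}{A}}{B}$ lives under context $\ctxe{\Gamma}{A}$ whereas $\nbety{\ctxe{\Gamma}{A'}}{B'}$ lives under $\ctxe{\Gamma}{A'}$; but the derived equality $\deqctx{\ctxe{\Gamma}{A}}{\ctxe{\Gamma}{A'}}$ from the first part combined with the congruence rule for conversion lets us transport $B' = \nbety{\ctxe{\Gamma}{A'}}{B'}$ into the context $\ctxe{\Gamma}{A}$, after which transitivity with the soundness equation for $B$ yields $\deqtype{\ctxe{\Gamma}{A}}{B}{B'}$. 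This context-conversion step is the only real obstacle, and it is routine given that the paper already implicitly handles the same issue in Corollary \ref{injcons}.
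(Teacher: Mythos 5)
Your proof is correct and is essentially the argument the paper intends: the corollary is stated with no explicit proof, immediately after the remark establishing $\nbety{\Gamma}{\DSum{A}{B}}=\DSum{\nbety{\Gamma}{A}}{\nbety{\ctxe{\Gamma}{A}}{B}}$ (so your ``extension'' of Remark~\ref{rem:nbe-hom} is already in the paper), and it follows the same completeness--homomorphism--soundness pattern as Corollary~\ref{injcons}. Your explicit treatment of the context-conversion step for the second component is a welcome addition rather than a deviation.
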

}

  \section{Type-checking algorithm}
  \label{sec:proof-alg}
\LONGSHORT{
  \noindent In this section, we define a couple of judgements that represent a
  bidirectional type checking algorithm for terms in normal form; its
  implementation in Haskell can be found in the appendix.
  The algorithm is similar to previous ones 
  \cite{coquand:type,abelCoquandDybjer:flops08},
  in that it proceeds by analysing the possible types for each normal
  form, and succeeds only if the type's shape matches the one required
  by the introduction rule of the term.
  The only difference is introduced by the presence of singleton types;
  now we should take into account that a normal form can also have
  a singleton as its type. 

  This situation can be dealt in two possible ways; either one checks
  that the deepest tag of the normalised type (see Def.~\ref{def:erase})
  has the form of the type of the introductory rule; or one adds a rule
  for checking any term against singleton types. The first approach
  requires to have more rules (this is due to the combination of
  singletons and a universe). We take the second approach, which
  requires to compute the eta-long normal form of the type before
  type-checking. We also note that the proof of completeness is more
  involved, because now the algorithm is not only driven by the term
  being checked, but also by the type.

  Our algorithm depends on having a \emph{good} normalisation function;
  note that this function does not need to be based on normalisation
  by evaluation. Also note that the second point asks for having correctness
  and completeness of the normalisation function.

  \begin{defi}[Good normalisation function]\hfill
    \label{def:nbe-prop}
    \begin{enumerate}[(1)]
    \item $\norm{\singTm{a}{A}} = \singTm{\norm{a}}{\norm{A}}$, and
      $\norm{\F{A}{B}} = \F{\norm{A}{\norm{B}}}$; 
    \item $\normty{\Gamma}{A} = \normty{\Gamma}{B}$ if and only if 
      $\deqtype{\Gamma}{A}{B}$, and 
      $\normtm{\Gamma}{A}{t}= \normtm{\Gamma}{A}{t'}$, if and only if
      $\deqterm{\Gamma}{A}{t}{t'}$.
    \end{enumerate}
  \end{defi}\medskip

  \noindent From these properties we can prove the injectivity of $\F{}{}$which
  is crucial for completeness of type checking $\lambda$-abstractions.
}{ 

  In this section we define a bi-directional type-checking algorithm
  for terms in normal form, and a type-inference algorithm for neutral
  terms. We prove its correctness and completeness.

  The algorithm is similar to previous ones
  \cite{coquand:type,abelCoquandDybjer:flops08}.  The only difference is
  due to the presence of singleton types. We deal with this by
  $\eta$-normalising the type, and considering first if the normalised
  type is a singleton (side-condition in type-checking of neutrals);
  in that case we check that the term is typeable with the tag of the
  singleton type, and that it is equal to the term of the singleton.

  We stress the importance of having a normalisation function with the
  property stated in Rem.~\ref{rem:nbe-hom}, and also to have
  decidability of equality. In fact, it is enough to have a function
  $\nbe{\_}$ such that:
  
    \begin{enumerate}
    \item $\norm{\singTm{a}{A}} = \singTm{\norm{a}}{\norm{A}}$, and
      $\norm{\F{A}{B}} = \F{\norm{A}{\norm{B}}}$; 
    \item $\normty{\Gamma}{A} = \normty{\Gamma}{B}$ if and only if 
      $\deqtype{\Gamma}{A}{B}$, and 
      $\normtm{\Gamma}{A}{t}= \normtm{\Gamma}{A}{t'}$, if and only if
      $\deqterm{\Gamma}{A}{t}{t'}$.
    \end{enumerate}  
}


  

\subsection{Type-checking \lambdaSing}

In this section, let $V,V',W,v,v',w \in \nfterms$, and $k \in
\neterms$.
\LONGSHORT{
For obtaining the deepest tag of a singleton type, 
we define an operation on types, which
is essentially the same as the one defined by Aspinall~\cite{aspinall:csl94}.
  \begin{defi}[Singleton's tag] 
    \begin{equation*}
    \label{def:erase}
      \ertype{V} = 
      \begin{cases}
        \ertype{W} & \mbox{ if } V\equiv\singTm{w}{W}  \\
        V & \mbox{ otherwise.}
      \end{cases}
    \end{equation*}
  \end{defi}
}{ 
  We define a function to get the deepest tag of a singleton, that is
  essentially the same as in \cite{aspinall:csl94},
    \begin{equation*}
    \label{def:erase}
      \ertype{V} = 
      \begin{cases}
        \ertype{W} & \mbox{ if } V\equiv\singTm{w}{W}  \\
        V & \mbox{ otherwise} .
      \end{cases}
    \end{equation*}
}\medskip

  \noindent The predicates for type-checking are defined mutually inductively,
  together with the function for inferring types.

  \begin{defi}[Type-checking and type-inference]
    \label{alg:typecheking}
We define three mutually inductive algorithmic judgements
\[
\begin{array}{ll}
  \chktype \Gamma V   & \mbox{in context $\Gamma$, normal type $V$ checks} \\
  \chkterm \Gamma V v & \mbox{in context $\Gamma$, normal term $v$ checks
    against type $V$} \\
  \inftype \Gamma k V & \mbox{in context $\Gamma$, the type of neutral
    term $k$ is
    inferred as $V$}.
\end{array}
\]
All three judgements presuppose and maintain the invariant the
input $\Gamma$ is a well-formed context. 
The procedures $\chktype \Gamma V$ and $\chkterm \Gamma V v$ expect
their inputs $V$ and $v$ in $\beta$-normal form.  Inference $\inftype
\Gamma k V$ expects a neutral term $k$ and returns its principal type
$V$ in long normal form. 
\vspace{0.7em}

\localpara{Well-formedness checking of types $\chktype \Gamma V$}
      \begin{gather*}
        \algrule{\ }{\chktype{\Gamma}{\TmU}}
\quad
        \algrule{\chktype{\Gamma}{V}\\ \chktype{\ctxe{\Gamma}{V}}{W}}{
          \chktype{\Gamma}{\F{V}{W}}}
\quad
        \algrule{\chktype{\Gamma}{V}\\ \chkterm{\Gamma}{\nbe{V}}{v}}{\chktype{\Gamma}{\singTm{v}{V}}}
\quad
        \algrule{\chkterm{\Gamma}{\TmU}{k}}{\chktype{\Gamma}{k}}
      \end{gather*}    
\localpara{Type checking terms $\chkterm\Gamma V v$} 
      \begin{gather*}
        \algrule{\chkterm{\Gamma}{\TmU}{V} \\
          \chkterm{\ctxe{\Gamma}{V}}{\TmU}{W}}{\chkterm{\Gamma}{\TmU}{\F{V}{W}}}
\qquad 
        \algrule{\chkterm{\ctxe{\Gamma}{V}}{W}{v}
               }{\chkterm{\Gamma}{\F{V}{W}}{\lambda v}}
\\
        \algrule{\chkterm{\Gamma}{\TmU}{V} \\
          \chkterm{\Gamma}{\nbe{V}}{v}}{\chkterm{\Gamma}{\TmU}{\singTm{v}{V}}}
\qquad
        \algrule{\chkterm{\Gamma}{V'}{v} \\ \deqterm{\Gamma}{V'}{v'}{v}}
        {\chkterm{\Gamma}{\singTm{v'}{V'}}{v}}
\\
        \inferrule*[right=\ensuremath{V\not\equiv\singTm{w}{W}}]
          {\inftype{\Gamma}{k}{V'}\\ \deqtype{\Gamma}{\ertype{V'}}{V}}
          {\chkterm{\Gamma}{V}{k}}
      \end{gather*}
\localpara{Type inference $\inftype\Gamma k V$} 
      \begin{gather*}
        \algrule{\ }{\inftype{\Gamma.A_i.\ldots A_0}{\ind i}{\nbe{\lift{i+1}{A_i}}}}\quad
        \algrule{\inftype{\Gamma}{k}{V}\\ \deqtype{\Gamma}{\ertype{V}}{\F{V'}{W}}\\ \chkterm{\Gamma}{V'}{v}}
        {\inftype{\Gamma}{\appTm{k}{v}}{\nbe{\subsTy{W}{\subid{}{v}}}}}
      \end{gather*}
    \end{defi}\medskip
  
\noindent Bidirectional type checking for dependent function types is
well-understood 
\cite{coquand:type,loehMcBrideSwierstra:tutorialDependentlyLambda};
let us illustrate briefly how it works for singleton types, by
considering the type checking problem
$\chkterm{\singTm{\ztm}{\natty}}{\singTm{\ztm}{\natty}}{\q}$.  Here is
a skeletal derivation of this judgement, which is at the same time an
execution trace of the type checker:
\[
\begin{prooftree}
  \[
  \[ \justifies
  \inftype{\singTm{\ztm}{\natty}}{\q}{\singTm{\ztm}{\natty}}
  \]
  \qquad
  \deqtype{\singTm{\ztm}{\natty}}{\ertype{\singTm{\ztm}{\natty}}}{\natty} 
  \justifies
  \chkterm{\singTm{\ztm}{\natty}}{\natty}{\q}
  \] 
  \qquad
  \deqterm{\singTm{\ztm}{\natty}}{\natty}{\q}{\ztm}
  \justifies
  \chkterm{\singTm{\ztm}{\natty}}{\singTm{\ztm}{\natty}}{\q}
\end{prooftree}
\]
    Since the type to check against is a singleton, 
    the algorithm proceeds by checking
    $\chkterm{\singTm{\ztm}{\natty}}{\natty}{\q}$ and
    $\deqterm{\singTm{\ztm}{\natty}}{\natty}{\q}{\ztm}$. Now the type
    of the neutral $\q$ is inferred and its tag compared to the given
    type $\natty$; as the tag is also $\natty$, the check succeeds. 
    The remaining equation
    $\deqterm{\singTm{\ztm}{\natty}}{\natty}{\q}{\ztm}$ is derivable
    by \ruleref{sing-eq-el}. Of course, the equations are checked
    by the $\norm{\_}$ function; for example, by using
    our own function for normalisation we have
    $\normtm{\singTm{\ztm}{\natty}}{\natty}{\q}
     = \ztm = \normtm{\singTm{\ztm}{\natty}}{\natty}{\ztm}$.

  \begin{thm}[Correctness of type-checking]\hfill
    \label{thm:corr-tc}
    \begin{enumerate}[\em(1)]
    \item If 
          $\chktype{\Gamma}{V}$, then $\dtype{\Gamma}{V}$.
    \item If 
          $\chkterm{\Gamma}{V}{v}$, then $\dterm{\Gamma}{V}{v}$.
    \item If 
          $\inftype{\Gamma}{k}{V}$, then $\dterm{\Gamma}{V}{k}$.
    \end{enumerate}
  \end{thm}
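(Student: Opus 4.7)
The proof will proceed by mutual induction on the derivations of the three algorithmic judgements $\chktype{\Gamma}{V}$, $\chkterm{\Gamma}{V}{v}$, and $\inftype{\Gamma}{k}{V}$. For each algorithmic rule we rebuild the corresponding derivation in the declarative system, using the induction hypotheses together with soundness of NbE (Corollary~\ref{rem:nbe-eq}) whenever the algorithm has normalised a type or a term to form a side-condition. Throughout, we maintain the invariant that the input context is well-formed, so that the declarative formation rules apply.

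Most cases are direct: for formation of $\TmU$, $\F{V}{W}$, and $\F{V}{W}$ as a small type, we invoke the induction hypotheses on the premises and apply the matching declarative rules. The genuinely new work concerns the three places where singleton types interact with the algorithm. First, whenever the algorithm checks a term $v$ against a type of the form $\nbe{V}$ (rules for $\singTm{v}{V}$ formation and small-type introduction), the induction hypothesis yields $\dterm{\Gamma}{\nbe{V}}{v}$, and soundness of NbE gives $\deqtype{\Gamma}{V}{\nbe{V}}$; a single use of conversion then produces $\dterm{\Gamma}{V}{v}$, so that $(\textsc{sing-f})$ and $(\textsc{sing-u-i})$ apply. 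Second, for the rule $\chkterm{\Gamma}{\singTm{v'}{V'}}{v}$ the induction hypothesis gives $\dterm{\Gamma}{V'}{v}$; combined with the side-condition $\deqterm{\Gamma}{V'}{v'}{v}$, the derived rule $(\textsc{sing-i})$ in the form $v : \singTm{v}{V'}$, together with the congruence of $\singTm{\_}{\_}$, produces $\dterm{\Gamma}{\singTm{v'}{V'}}{v}$ by conversion. Third, the ``fall-through'' rule for a neutral $k$ produces $\dterm{\Gamma}{V'}{k}$ via induction, after which repeated application of $(\textsc{sing-el})$ yields $\dterm{\Gamma}{\ertype{V'}}{k}$, and the side-condition $\deqtype{\Gamma}{\ertype{V'}}{V}$ closes the case by conversion.

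For type inference of $\ind i$ in context $\Gamma.A_i.\ldots.A_0$, we use $(\textsc{hyp})$ followed by $i$ weakenings with $\p$ to obtain $\dterm{\Gamma.A_i.\ldots.A_0}{\lift{i+1}{A_i}}{\ind i}$ and close by converting along $\deqtype{}{\lift{i+1}{A_i}}{\nbe{\lift{i+1}{A_i}}}$, again invoking soundness of NbE. For $\inftype{\Gamma}{\appTm{k}{v}}{\nbe{\subsTy{W}{\subid{}{v}}}}$, the induction hypothesis gives $\dterm{\Gamma}{V}{k}$; strip singletons with $(\textsc{sing-el})$ to reach type $\ertype{V}$, convert to $\F{V'}{W}$ via the side-condition $\deqtype{\Gamma}{\ertype{V'}}{\F{V'}{W}}$, apply $(\textsc{fun-el})$ with $\dterm{\Gamma}{V'}{v}$ from the other induction hypothesis, and conclude by converting the result along NbE-soundness.

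The only real obstacle is keeping the conversions honest in the singleton cases: we need $\dterm{\Gamma}{V'}{k}$ to entail $\dterm{\Gamma}{\ertype{V'}}{k}$, which requires that every singleton layer peeled off by $\ertype{\,\cdot\,}$ comes with a well-typed inhabitant so that $(\textsc{sing-el})$ actually applies. This is guaranteed by syntactic validity (Remark~\ref{rem:invctx}) together with inversion of types (Lemma~\ref{lem:inv-ty}): each time $V'$ has the shape $\singTm{w}{W}$, the hypothesis $\dtype{\Gamma}{V'}$ inherited along the induction gives $\dterm{\Gamma}{W}{w}$ and $\dtype{\Gamma}{W}$, which is exactly what $(\textsc{sing-el})$ needs. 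Once this bookkeeping is in place, each case of the mutual induction closes with a short derivation.
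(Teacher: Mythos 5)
Your proof is correct and takes essentially the same route as the paper's: simultaneous induction on the three algorithmic judgements, rebuilding each declarative derivation from the induction hypotheses, with soundness of NbE (Corollary~\ref{rem:nbe-eq}) supplying the conversions for normalised types and \ruleref{sing-i}/\ruleref{sing-el} handling the singleton cases, exactly as in the paper's case analysis. Your explicit check that each \ruleref{sing-el} step is licensed by a well-typed inhabitant of the peeled singleton (via syntactic validity and inversion of types) is bookkeeping the paper leaves implicit, but it does not alter the argument.
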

\LONGSHORT{
  \begin{proof}By simultaneous induction on 
    $\chktype{\Gamma}{V}$, $\chkterm{\Gamma}{V}{v}$, and $\inftype
    \Gamma V k$. See \ref{prf:corr-tc}.
  \end{proof}
}
{
  \begin{proof}
    By simultaneous induction on the type-checking judgement.
  \end{proof}
}

In order to prove completeness we define a lexicographic order on
pairs of terms and types, in this way we can make induction over the
term, and the type.
  
\begin{defi}
  \label{def:wf-compl}
  Let $v, v' \in \nfterms$, and $A, A' \in \type{\Gamma}$, then $(v,A)
  \prec (v',A')$ is the lexicographic order on $\nfterms \times
  \type{\Gamma}$. The corresponding orders are $v \prec v'$ iff $v$ is
  an immediate sub-term of $v'$; and $A \prec^{\Gamma} A'$, iff
  $\nbe{A'} \equiv \singTm{w}{\nbe{A}}$.
\end{defi}
  
  \begin{thm}[Completeness of type-checking]\hfill
    \label{thm:compl-tc}
    \begin{enumerate}[\em(1)]
    \item If $\dtype{\Gamma}{V}$, then $\chktype{\Gamma}{V}$.
    \item If $\dterm{\Gamma}{A}{v}$, then
      $\chkterm{\Gamma}{\nbe{A}}{v}$.
    \item If $\dterm{\Gamma}{A}{k}$, and $\inftype{\Gamma}{k}{V'}$,
      then $\deqtype{\Gamma}{\ertype{\nbe{A}}}{\ertype{V'}}$.
    \end{enumerate}
  \end{thm}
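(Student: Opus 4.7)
The plan is to prove the three statements simultaneously by a carefully chosen mixed induction. Claim (1) is proved by induction on the normal form $V$; claim (2) is proved by well-founded induction on the pair $(v,\nbe{A})$ using the lexicographic order $\prec$ from Def.~\ref{def:wf-compl} (term decreasing on immediate subterms, type decreasing by peeling off one layer of outermost singleton); claim (3) is proved by structural induction on the neutral term $k$, together with the auxiliary observation that whenever $\dterm{\Gamma}{A}{k}$ the judgement $\inftype{\Gamma}{k}{V'}$ succeeds for some $V'$. Claims (2) and (3) are mutually recursive: the neutral case of (2) calls (3), and the application case of (3) calls (2) on the argument.

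For (1), the cases $V\equiv\TmU$, $V\equiv\F{V'}{W}$, and $V\equiv\singTm{v}{V'}$ follow directly from Lem.~\ref{lem:inv-ty} and the inductive hypothesis (the singleton case also invokes IH on (2), which is available since $v$ is structurally smaller than $V$). The neutral case $V\equiv k$ uses Lem.~\ref{lem:inv-ty}(3) to obtain $\dterm{\Gamma}{\TmU}{k}$, then IH on (2) gives $\chkterm{\Gamma}{\TmU}{k}$ since $\nbe{\TmU}=\TmU$.

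For (2), I first branch on whether $\nbe{A}$ is a singleton. If $\nbe{A}\equiv\singTm{w}{W}$ with $W$ itself normal, then by Rem.~\ref{rem:nbe-hom} and soundness of $\nbe$ we get $\deqtype{\Gamma}{A}{\singTm{a}{A'}}$ for some $a,A'$ with $\nbe{a}=w$ and $\nbe{A'}=W$. Using \ruleref{sing-eq-el} we obtain $\dterm{\Gamma}{A'}{v}$ and $\deqterm{\Gamma}{A'}{v}{a}$. Applying IH at the smaller type (fewer outer singletons) yields $\chkterm{\Gamma}{W}{v}$, and the equality premise is discharged via the good normaliser. If $\nbe{A}$ is not a singleton, I case-split on $v$: for the introduction forms $\F{V}{W}$, $\singTm{v'}{V'}$, and $\lambda v'$, Lem.~\ref{lem:inv-tm} forces $\nbe{A}$ to be $\TmU$ or a matching $\Pi$-type (the singleton branch of inversion having already been absorbed) and the IH on strictly smaller subterms applies; for a neutral $v=k$, I invoke claim (3) together with the auxiliary existence to obtain $\inftype{\Gamma}{k}{V'}$ with $\deqtype{\Gamma}{\ertype{V'}}{\ertype{\nbe{A}}}=\nbe{A}$, which is exactly the side condition of the neutral checking rule.

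For (3), the variable case $k\equiv\ind{i}$ produces $V'=\nbe{\lift{i+1}{A_i}}$ by construction, and an inner induction on the number of singleton wrappers around $A$ using Lem.~\ref{lem:inv-tm}(5) (peeling one singleton at a time via \ruleref{sing-eq-el}) shows that $\ertype{\nbe{A}}$ coincides with $\nbe{\lift{i+1}{A_i}}$ up to judgemental equality. The application case $k\equiv\appTm{k'}{v}$ follows from inversion ($\dterm{\Gamma}{\F{A'}{B'}}{k'}$, $\dterm{\Gamma}{A'}{v}$, and $\deqtype{\Gamma}{A}{\subsTy{B'}{\subid{}{v}}}$) plus IH on $k'$ giving $\inftype{\Gamma}{k'}{V''}$ with $\ertype{V''}$ a function type (and in fact equal to $\F{\nbe{A'}}{\nbe{B'}}$ by injectivity of $\F{\_}{\_}$, Cor.~\ref{injcons}); then IH on (2) delivers $\chkterm{\Gamma}{\nbe{A'}}{v}$, the inference rule fires, and its output $\nbe{\subsTy{W}{\subid{}{v}}}$ matches $\ertype{\nbe{A}}$ by congruence of $\nbe$ and the good-normaliser property.

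The principal obstacle is the interaction between singleton unfolding and the algorithm's $\ertype{\cdot}$-driven side conditions: because type checking a neutral term at a singleton type does \emph{not} use the $\ertype{\cdot}$ rule but rather recurses into the singleton, while inferring the type of a neutral term always strips singletons eagerly, one must show that the two routes are reconciled by Cor.~\ref{injcons} and Rem.~\ref{rem:nbe-hom}. Making this mesh work is what forces (2) and (3) to be proved mutually, with the auxiliary existence claim for $\inftype{\Gamma}{k}{V'}$ threaded through the induction.
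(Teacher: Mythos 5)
Your proposal is correct and follows essentially the same route as the paper's proof: a simultaneous induction whose well-founded measure decreases either on immediate subterms or by peeling one outer singleton off the ($\eta$-long normal) type, with the same appeals to the inversion lemmata, the good-normaliser properties of $\nbe{\_}$, \ruleref{sing-eq-el}, and injectivity of $\F{\_}{\_}$. Your repackaging of the type component as an inner induction on the number of singleton wrappers, and your explicit threading of the existence of the inferred type through claim (3), are only cosmetic variations on the paper's lexicographic measures $(v,A)$ and $(k,A)$.
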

\LONGSHORT{
  \begin{proof}
    We prove these three statements simultaneously by well-founded
    induction on the order $\prec$.  The respective measures are (1)
    $(V,\TmU)$, (2) $(v,A)$, and (3) $(k,A)$.
    Details are in the Appendix~\ref{prf:compl-tc}.
  \end{proof}
}
{
  \begin{proof}
    By simultaneous induction on $V$, and well-founded induction on
    $(v,A)$.
  \end{proof}
}


\PrfIrrTitle

\LONGSHORT{

  We give additional rules for type-checking and type-inference
  algorithms for the constructs added in
  Sect.~\ref{sec:pi-calc}. Remember that we distinguished two calculi:
  the calculus $(\vdashp)$ has rules (\textsc{\enumrule[0]tm}) and
  (\textsc{prf-tm}); while $(\vdash)$ lacks those rules.



}{
}

\begin{defi}[Type-checking and type-inference]
  \label{alg:nat-tc}

\LONGSHORT{
\renewcommand{\para}[1]{\noindent #1.} 

\para{$\Sigma$-types}

\begin{gather*}
    \algrule{\chktype{\Gamma}{V}\\
        \chktype{\ctxe{\Gamma}{V}}{W}}{\chktype{\Gamma}{\DSum{V}{W}}}
\qquad
    \algrule{\chkterm{\Gamma}{\TmU}{V}\\
      \chkterm{\ctxe{\Gamma}{V}}{\TmU}{W}}{\chkterm{\Gamma}{\TmU}{\DSum{V}{W}}}
\\
    \algrule{\chkterm{\Gamma}{V}{v}\\
      \chkterm{\Gamma}{\nbe{\subsTy{W}{\subid{\Gamma}{v}}}}{v'} \\ 
    }
    {\chkterm{\Gamma}{\DSum{V}{W}}{\depair{v}{v'}}}
\\
  \algrule{\inftype{\Gamma}{k}{\DSum{V}{W}}}
  {\inftype{\Gamma}{\dfst{k}}{V}} 
\qquad
  \algrule{\inftype{\Gamma}{k}{\DSum{V}{W}}}
  {\inftype{\Gamma}{\dsnd{k}}
    {\nbe{\subsTy{W}{\subid{\Gamma}{\dfst{k}}}}}}
\end{gather*}

\para{Natural numbers}
\begin{gather*}
    \algrule{\ }{\chktype{\Gamma}{\natty}}
\qquad
    \algrule{\ }{\chkterm{\Gamma}{\TmU}{\natty}}
\qquad
    \algrule{\ }{\chkterm{\Gamma}{\natty}{\ztm}}
\qquad
    \algrule{\chkterm{\Gamma}{\natty}{v}}{\chkterm{\Gamma}{\natty}{\suctm{v}}}
\\
   \algrule{\chktype{\ctxe{\Gamma}{\natty}}{V}\\
         \inftype{\Gamma}{k}{\natty}\\
         \chkterm{\Gamma}{\nbe{\subsTy{V}{\subid{\Gamma}{\ztm}}}}{v} \\ 
         \chkterm{\Gamma}{\F \natty {(\F V 
            \nbe{\subsTy{V}{\exsubs{\subsc \p \p}{\suctm{(\subsTm{\q}{\p})}}}})}}{v'}}
       {\inftype{\Gamma}{\natrec{V}{v}{v'}{k}}{\nbe{\subsTy{V}{\subid{}{k}}}}}
\end{gather*}

\para{Finite types}
  \begin{gather*}
    \algrule{\ }{\chktype{\Gamma}{\enum{n}}}
\qquad
    \algrule{\ }{\chkterm{\Gamma}{\TmU}{\enum{n}}}
\qquad
    \algrule{i<n}{\chkterm{\Gamma}{\enum{n}}{\const{n}{i}}}\
\\
    \algrule{\chktype{\ctxe{\Gamma}{\enum{n}}}{V}\\
      \inftype{\Gamma}{k}{\enum{n}}\\
      \chkterm{\Gamma}{\nbe{\subsTy{V}{\subid{\Gamma}{\const{n}{i}}}}}{v_i}}
    {\inftype{\Gamma}{\elim{n}{V}{v_0 \cdots v_{n-1}}{k}}{\nbe{\subsTy{V}{\subid{}{k}}}}}
  \end{gather*}

\para{Proof types}
  \begin{gather*}  
    \algrule{\chktype{\Gamma}{V}}{\chktype{\Gamma}{\boxty{V}}}
\qquad
    \algrule{\chkterm{\Gamma}{\TmU}{V}}{\chkterm{\Gamma}{\TmU}{\boxty{V}}}
\qquad
    \algrule{\chkterm{\Gamma}{V}{v}}{\chkterm{\Gamma}{\boxty{V}}{\boxtm{v}}}
\\
    \algrule{
        \chktype \Gamma W \\
        \inftype{\Gamma}{k}{\boxty{V}}\\
        \chkterm{\ctxe{\Gamma}{V}}{\nbe{\subsTm{W}{\p}}}{v} \\
        \deqterm{\ctxe{\ctxe \Gamma V} {V \p}}{W \p \p}{v \p}{v (\p \p, \q)}  
      }{\inftype{\Gamma}{\wheretm{v}{k}{W}}{W}}
  \end{gather*}

}{  
  \begin{gather*}
    \algrule{\chktype{\Gamma}{V}}{\chktype{\Gamma}{\boxty{V}}}
    \
    \algrule{\chkterm{\Gamma}{V}{v}}{\chkterm{\Gamma}{\boxty{V}}{\boxtm{v}}}
    \
    \algrule{
      \inftype{\Gamma}{k}{\boxty{V'}}\\
      \chkterm{\ctxe{\Gamma}{V'}}{\nbe{\subsTm{V}{\p}}}{v}}
    {\inftype{\Gamma}{\wheretm{v}{k}{V}}{\boxty{V}}}
  \end{gather*}

}
\end{defi}\medskip


\LONGSHORT{

\noindent  We do not show the proof for correctness, because nothing is to be
  gained from it; suffice it to say that we can prove correctness with
  respect to $(\vdashp)$.

\begin{thm}
\label{thm:corr-tc-pi}
  The type-checking algorithm is sound with respect to the calculus
  $\vdashp$.
\end{thm}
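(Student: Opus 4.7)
The plan is to prove the three statements of Theorem~\ref{thm:corr-tc} simultaneously by induction on the derivations of $\chktype\Gamma V$, $\chkterm\Gamma V v$, and $\inftype\Gamma k V$, exactly as in the proof of Theorem~\ref{thm:corr-tc} for \lambdaSing. All cases from the singletons fragment carry over verbatim since the new algorithmic rules only enlarge the set of introduction/elimination forms that we must dispatch on and do not touch the existing clauses. The target calculus is $(\vdashp)$ rather than $(\vdash)$ because reification can return the marker $\oprf$ (whenever a neutral inhabits $\boxty V$ or $\enum 0$), so intermediate applications of soundness of NbE (Corollary~\ref{rem:nbe-eq}) live naturally in $(\vdashp)$.

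For each new algorithmic rule I would simply apply the induction hypothesis to each premise, possibly invoking soundness of NbE to replace a normalised type such as $\nbe{\subsTy{W}{\subid\Gamma v}}$ by the original $\subsTy{W}{\subid\Gamma v}$ (Corollary~\ref{rem:nbe-eq}(1) together with the derived conversion rule), and then close with the corresponding declarative typing rule of \lambdaPI. Concretely: the $\Sigma$-rules give \rulename{sum-f}, \rulename{sum-in}, \rulename{sum-el1}, \rulename{sum-el2}; the $\natty$ and $\enum n$ rules give \rulename{nat-*} and \enumrule{*}; the $\boxty{}$ formation and introduction give \rulename{prf-f} and \rulename{prf-i}; and in each case the inferred type that the algorithm outputs already has the shape demanded by the typing rule after one appeal to \rulename{conv} mediated by soundness of NbE. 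No appeal to completeness of NbE is needed; correctness only requires that semantic equations produced by $\nbe{\_}$ are judgmentally valid.

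The only slightly delicate case is the elimination $\inftype{\Gamma}{\wheretm{v}{k}{W}}{W}$: here the rule \rulename{prf-el} of \lambdaPI\ carries the semantic side condition
\[
  \subsTm b \p = \subsTm b {\exsubs{\subsc \p \p}{\q}}
  \in \term{\ctxe{\ctxe \Gamma V}{\subsTy V \p}}{\subsTy W {\subsc \p \p}} \, ,
\]
and the algorithm checks exactly this equation
$\deqterm{\ctxe{\ctxe\Gamma V}{V\p}}{W\p\p}{v\p}{v(\p\p,\q)}$, which is decidable by Corollary~\ref{cor:termrel}. So the matching hypothesis is obtained directly by induction on the premises (giving $\dterm{\Gamma}{\boxty V}{k}$, $\dtype{\Gamma}{W}$, $\dterm{\ctxe\Gamma V}{\subsTy W\p}{v}$ after undoing the normalisations) together with this syntactic equality check, allowing us to apply \rulename{prf-el} and conclude $\dterm\Gamma W {\wheretm v k W}$.

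The main obstacle, and the reason the statement is phrased in terms of $(\vdashp)$, will be the bookkeeping around reified normal forms that may mention $\oprf$: when the inferred type $V$ for a neutral is of the form $\boxty{V'}$ or $\enum 0$, values of that type are collapsed to $\oprf$ by the reification functions (Definition~\ref{def:up-down} for \lambdaPI). Thus Corollary~\ref{rem:nbe-eq}, which is naturally stated in $(\vdashp)$ thanks to Corollary~\ref{cor:cons-prf-tm}, is exactly what we need; and every use of it in the induction step remains within $(\vdashp)$. Beyond this, the proof is routine case analysis with induction hypothesis plus one conversion step per rule.
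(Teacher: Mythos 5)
Your proposal is correct and follows essentially the same route as the paper, which proves this theorem simply ``by simultaneous induction on the derivability of the type-checking judgements'' (mirroring the detailed appendix proof of Theorem~\ref{thm:corr-tc}: induction hypothesis on each premise, one conversion step via soundness of NbE to undo the normalisations, then the matching declarative rule). Your elaboration of the $\wheretm{v}{k}{W}$ case and the observation that only soundness (not completeness) of NbE is needed are both consistent with the paper's argument.
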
  
\begin{proof}
  By simultaneous induction on the derivability of the type-checking
  judgements.
\end{proof}

It is clear that the given rules are not complete for checking
$(\vdashp)$, because there is no rule for checking
$\chkterm{\Gamma}{A}{\oprf}$. Note that it is not possible to have a
sound and complete type-checking algorithm with respect to
$(\vdashp)$, for it would imply the decidability of
type-inhabitation. Since type checking happens always \emph{before}
normalisation, we can still use a good normalisation function with
respect to the calculus $(\vdashp)$ for normalising types or deciding
equality. Indeed, if the term to type-check does not contain $\oprf$,
the need of checking $\chkterm{\Gamma}{V}{\oprf}$ will never arise;
this is clearly seen by verifying that only sub-terms are
type-checked in the premises.

\begin{thm}
  The type-checking algorithm is complete with respect to the calculus
  $(\vdash)$.
\end{thm}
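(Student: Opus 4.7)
The plan is to extend the statement and proof of Thm.~\ref{thm:compl-tc} to all term and type formers of \lambdaPI, namely
(1) $\dtype{\Gamma}{V}$ implies $\chktype{\Gamma}{V}$,
(2) $\dterm{\Gamma}{A}{v}$ implies $\chkterm{\Gamma}{\nbe{A}}{v}$, and
(3) $\dterm{\Gamma}{A}{k}$ together with $\inftype{\Gamma}{k}{V'}$ imply $\deqtype{\Gamma}{\ertype{\nbe{A}}}{\ertype{V'}}$,
by well-founded induction on the order $\prec$ of Def.~\ref{def:wf-compl}, where $\nfterms$ and $\neterms$ are now the extended sets for \lambdaPI.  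A key observation is that the user calculus $(\vdash)$ does not produce $\oprf$, so every normal form arising from a well-typed term in $(\vdash)$ is covered by some algorithmic rule of Def.~\ref{alg:typecheking} or Def.~\ref{alg:nat-tc}.

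All cases for $\lambdaSing$-constructs carry over verbatim from Thm.~\ref{thm:compl-tc}.  For the new formation cases ($\DSum{V}{W}$, $\natty$, $\enum{n}$, $\boxty{V}$), inversion via Lem.~\ref{lem:inv-pi} yields typing derivations for the components, to which the i.h.\ applies, and the corresponding rule from Def.~\ref{alg:nat-tc} fires.  The introductions ($\depair{v}{v'}$, $\ztm$, $\suctm{v}$, $\const{n}{i}$, $\boxtm{v}$) proceed analogously, using the evident extensions of Rem.~\ref{rem:nbe-hom} (e.g.\ $\nbety{\Gamma}{\boxty{A}} \equiv \boxty{(\nbety{\Gamma}{A})}$ and $\nbety{\Gamma}{\DSum{A}{B}} \equiv \DSum{\nbety{\Gamma}{A}}{\nbety{\ctxe{\Gamma}{A}}{B}}$, which follow at once from the defining equations of $\semc{\cdot}$, $\Downarrow$, and $\reify{\_}{\_}$), so that $\nbe{A}$ carries the expected head constructor.

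The neutral eliminators $\dfst{k}$, $\dsnd{k}$, $\natrec{V}{v}{v'}{k}$, $\elim{n}{V}{\vec v}{k}$, and $\wheretm{v}{k}{W}$ are handled through statement~(3).  After inverting typing, I would apply (3) to the principal neutral argument and then invoke injectivity of the corresponding type constructor, which is the \lambdaPI-analogue of Cor.~\ref{injcons} and, as there, follows from completeness of NbE (Thm.~\ref{thm:completeness}) together with the fact that the PER model uses disjoint constructors for $\iDs{}{}$, $\iNat$, $\enumD{n}$, and $\prf{}$.  With the head of the inferred type pinned down, the i.h.\ applied to the remaining sub-derivations, combined with the commutation of normalisation with single-variable substitutions such as $\subid{}{v}$ and $\subid{}{k}$ (a corollary of Thm.~\ref{thm:completeness}), yields precisely the algorithm's premises.

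The hardest case is $\inftype{\Gamma}{\wheretm{v}{k}{W}}{W}$, whose algorithmic rule includes the side condition $\deqterm{\ctxe{\ctxe{\Gamma}{V'}}{\subsTy{V'}{\p}}}{\subsTy{W}{\subsc{\p}{\p}}}{\subsTm{v}{\p}}{\subsTm{v}{\exsubs{\subsc{\p}{\p}}{\q}}}$.  This is exactly the proof-irrelevance premise of \ruleref{prf-el} recovered by inversion, and its algorithmic check is decidable by Cor.~\ref{cor:termrel}.  Combined with injectivity of $\boxty{}$, used to extract $V'$ from the type inferred for $k$, this closes the case and completes the induction.
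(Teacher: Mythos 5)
Your proposal is correct and follows essentially the same route as the paper: the paper proves this theorem ``by simultaneous induction on the normal form of types and terms, using inversion on the typing judgement and correctness of $\norm{\_}$,'' i.e.\ by extending the well-founded induction of Thm.~\ref{thm:compl-tc} to the new constructs, relying on the homomorphism properties of $\nbe{\_}$, injectivity of the new type constructors, decidability of equality for the irrelevance side condition, and the observation that $\oprf$ never occurs in user input. Your write-up simply makes explicit the case analysis the paper leaves implicit.
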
  
\begin{proof}
  By simultaneous induction on the normal form of types and terms,
  using inversion on the typing judgement and correctness of $\norm{\_}$.
\end{proof}

}{ 
  \begin{rem}
    Thm.~\ref{thm:corr-tc} is still valid for the calculus with
    \ruleref{prf-tm}. Moreover, Thm.~\ref{thm:compl-tc} is valid if
    we add the axiom $\chkterm{\Gamma}{\boxty{V}}{\oprf}.$
  \end{rem}
}


\begin{cor}
  The type-checking algorithm is correct (by Thm.~\ref{thm:corr-tc-pi}
  and Cor.~\ref{cor:cons-prf-tm}) and complete with respect to the
  calculus $(\vdash)$. 
\end{cor}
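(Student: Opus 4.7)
The plan is to split the corollary into its two halves and read each off from the prerequisites. For correctness, I would start from a successful run of the type-checking or type-inference algorithm on inputs expressed in pure user syntax (which by convention excludes $\oprf$). Applying Thm.~\ref{thm:corr-tc-pi} immediately produces a derivation of the corresponding typing judgement in the extended calculus $(\vdashp)$. Since the input---and hence the concluding judgement---does not mention $\oprf$, Cor.~\ref{cor:cons-prf-tm} transports this $(\vdashp)$-derivation to a derivation in $(\vdash)$, which is the content of correctness with respect to $(\vdash)$.

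For completeness, I would begin with an arbitrary $(\vdash)$-derivable typing judgement $\Gamma \vdash J$. Because $(\vdash) \subseteq (\vdashp)$, the same judgement is derivable in $(\vdashp)$, and the proof of Thm.~\ref{thm:compl-tc}, extended uniformly to the new constructors of \lambdaPI\ by inversion on their typing rules and by using the clause-by-clause compatibility of $\nbe{\_}$ with $\Sigma$, $\natty$, $\enum n$ and $\boxty{\cdot}$, shows that the algorithm succeeds. The extra axiom $\chkterm{\Gamma}{\boxty V}{\oprf}$ that would be needed to complete the algorithm with respect to $(\vdashp)$ is superfluous here: since the input term lies in the pure $(\vdash)$-fragment, no subterm of the input has shape $\oprf$, and the recursive calls of the algorithm only inspect subterms, so a check of the form $\chkterm{\Gamma}{\boxty V}{\oprf}$ is never triggered.

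The main obstacle is justifying that Cor.~\ref{cor:cons-prf-tm} may legitimately be applied to the conclusion obtained from Thm.~\ref{thm:corr-tc-pi}: one needs that the normalised types returned by the inference judgement remain $\oprf$-free when the inputs are pure, so that the concluding judgement is actually expressible in $(\vdash)$. This reduces to a short induction over the clauses of the reification and evaluation functions, using the observation that $\oprf$ is introduced only by reifying values of a proof-irrelevant type, and such values arise only from subterms already present in the pure input. Once this bookkeeping is in place, correctness and completeness both follow without further effort.
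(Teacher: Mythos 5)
Your proposal follows essentially the same route as the paper: correctness is obtained by composing soundness with respect to $(\vdashp)$ (Thm.~\ref{thm:corr-tc-pi}) with conservativity (Cor.~\ref{cor:cons-prf-tm}), and completeness comes from observing that the algorithm only ever type-checks sub-terms of its input, so the missing rule $\chkterm{\Gamma}{\boxty V}{\oprf}$ is never invoked on $\oprf$-free user syntax. One caveat about your final paragraph: the ``obstacle'' you identify is misdiagnosed. The types returned by inference are computed by $\nbe{\_}$, and the reification clauses $\da{\prf{X}}{d} = \oprf$ and $\da{\enumD{0}}{d} = \oprf$ mean that normal forms of proof terms are literally $\oprf$; since types in \lambdaPI\ may depend on such terms, an inferred type can contain $\oprf$ even for pure input, so the $\oprf$-freeness lemma you propose is false in general. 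This is not a problem for the corollary, because Cor.~\ref{cor:cons-prf-tm} (via the good-lifting theorem) is designed precisely to handle judgements in which $\oprf$ does occur: it produces a judgmentally equal $\oprf$-free counterpart in $(\vdash)$, so no extra bookkeeping about reification is needed.
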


%


\section{Conclusion}

\noindent The main contributions of the paper are the definition of a correct
and complete type-checking algorithm, and a simpler solution to the
problem of generating fresh identifiers in the NbE algorithm for a
calculus with singletons, one universe, and proof-irrelevant
types. The type-checker is based on the NbE algorithm which is used to
decide equality and to prove the injectivity of the type
constructors. We emphasise that the type-checking algorithm is modular
with respect to the normalisation algorithm. All the results can be
extended to a calculus with annotated lambda abstractions, yielding a
type-checking algorithm for terms not necessarily in normal forms.
The NbE algorithm can be implemented fairly easily in Haskell (\cf\
Appendix \ref{sec:nbealg}), but the correctness of the implementation
depends on proving the computational adequacy of the domain semantics
with respect to Haskell's operational semantics. We have not developed
this proof in this article and leave it for to future work.


\subsection{Related and Further Work on Singleton Types}
\label{sec:related}

\noindent Singleton types are used to model the SML module system and records
with manifest fields \cite{coquandPollackTakeyama:fundinf05}.

Aspinall \cite{aspinall:csl94} presents a logical framework with
singleton types and subtyping and shows its consistency via a PER
model, yet not decidability.  
The second author, Pollack, and Takeyama
\cite{coquandPollackTakeyama:fundinf05} extend the Aspinall's
framework by $\eta$-equality and records and a type checking algorithm
which is correct wrt.\ the PER model. This work is unconventional since
there is no complete syntactical specification of the LF in terms of
syntax, typing and equality rules.  Instead, in the style of
Martin-L\"of meaning explanations, they list a number of inference
rules which are valid in the semantics and prove that type-checked
expressions evaluate to values of the correct semantic type.

Courant \cite{courant:itrs02} shows strong normalization for a variant of
Aspinall's system with equality defined by reduction.  He uses a
typed Kripke model of strongly normalizing terms, a variant of
Goguen's typed operational semantics \cite{goguen:PhD}.  

Stone and Harper \cite{stoneHarper:tocl06} extend Aspinall's framework
by sigma types and eta-equality, which allows them to reduce
singletons at higher types to singletons at base type.  Their decision
procedure is type-directed, its completeness is shown via a Kripke
model.  Crary \cite{crary:lfmtp08} gives a simplified decision
procedure via hereditary substitutions and proves its correctness in
Twelf, without the need for a model construction.  His purely
syntactical approach does not scale to universes, since he cannot
handle types defined by recursion.  Goguen \cite{goguen:syntacticEta}
follows a similar agenda, he shows decidability for singleton types
in the presence of eta by an eta-expanding translation into a logical
framework with beta-equality only.  He works with fully annotated
terms in the sense of Streicher \cite{streicher:PhD}.  He stresses
that his approach does not scale to computation on the type level.

In the continuation of this work we want to investigate whether our
type-checking algorithm can be simplified if we implement Stone and
Harper's insight that singleton types at higher types can be defined
in terms of singleton base types.  Further, we would like to integrate
subtyping in our calculus, which should not be too difficult, since
the PER model already supports subtyping
\cite{aspinall:csl94,coquandPollackTakeyama:fundinf05}.

\subsection{Related and Further Work on Proof Irrelevance}

Pfenning \cite{pfenning:intextirr} presents a logical framework with
proof irrelevance that supports irrelevant function arguments,
with function introduction rule 
(writing $\funT x {\boxty A} B$ in our syntax):  
\[
  \dfrac{\Gamma, x \div A \derN B \qquad \Gamma, x \div A \derN t : B
    }{\Gamma \derN \lambda x t :  \funT x {\boxty A} B}
\]
He proves decidability using erasure, mentioning that
his technique does not scale to universes.  
Elimination of irrelevance is implicitly handled by annotating
variables to ensure proof variables $(x \div A)$
appear only in proofs,
in contrast to our explicit use of 
$\ABwhere \_ \_ \_$ in the style of Awodey and Bauer
\cite{awodeyBauer:propositionsAsTypes}.  However, we believe that
Pfenning's proof irrelevance can be modeled via bracket types 
$\boxty A$, with the weaker ``monadic'' rule for $\whereraw$ (see
section~\ref{sec:cwf-gat}). 

Barras and Bernardo's \cite{DBLP:conf/fossacs/BarrasB08} presentation of
proof irrelevant functions
\[
  \dfrac{\Gamma, x : A \derN B \qquad 
      \Gamma, x : A \derN t : B \qquad x \not\in\FV(t^*)
    }{\Gamma \derN \lambda x t :  \funT x {\boxty A} B}
\]
diverges from Pfenning's that they allow irrelevant variables $x$ to
be relevant in types $B$.  (In $t$ the variable $x$ might only
appear irrelevantly, expressed by the side condition that $x$ may not be
free in the relevant parts $t^*$ of $t$.)
Barras and Bernardo justify their calculus by erasing into Miquel's
Implicit Calculus of Constructions (ICC) \cite{miquel:tlca01}. The ICC
style irrelevance seems more expressive than Awodey and Bauer's or
Pfenning's, but the exact relationship is unclear to us.


Berger's Uniform Heyting Algebra \cite{berger:uniformHA} features
uniform quantification $\{\forall x\}A$  (and $\{\exists x\}A)$ to
obtain optimized programs by extraction from proofs.  A proof of a
uniform universal
\[
  \dfrac{\Gamma \derN M : A
    }{\Gamma \derN \{\forall\}^+(\lambda x M) : \{\forall x\}A}
\]
may not mention term variable $x$ in a computational relevant
position.  Since the shape of formulas does not depend on terms,
Berger's calculus can be seen as logical counterpart of
either Pfenning's or Bruno and Bernardo's type system.

We see two interesting questions about the different approaches to
proof irrelevance above:
\begin{enumerate}[(1)]
\item How can Barras and Bernardo's ICC$^*$ be understood in
  terms of judgmental equality \`a la Pfenning?
\item How can ICC$^*$ and the calculus of Pfenning be extended to full
  bracket types \`a la Awodey and Bauer without explicit use of $\whereraw$.
\end{enumerate}


\para{Acknowledgments} The authors have led many discussions with Peter
Dybjer on normalization by evaluation, generalized algebraic theories,
and presentation of type theory as categories with families.  The
authors thank the referees of a previous version of this article for
their helpful comments. The authors are also grateful for the
suggestions made by the anonymous referees on this long version.

\bibliographystyle{hplain}
\bibliography{biblio,auto-lmcs}

\begin{thebibliography}{10}

\bibitem{p31-abadi}
M.~Abadi, L.~Cardelli, P.~L. Curien, and J.~J. L\'evy.
\newblock {Explicit substitutions}.
\newblock In {\em {POPL {'}90: Proceedings of the 17th ACM SIGPLAN-SIGACT
  symposium on Principles of programming languages}}, pages 31--46, {New York,
  NY, USA}, {1990}. {ACM}.

\bibitem{abelAehligDybjer:mfps07}
Andreas Abel, Klaus Aehlig, and Peter Dybjer.
\newblock Normalization by evaluation for {Martin-L\"of} type theory with one
  universe.
\newblock In Marcelo Fiore, editor, {\em Proc. of the 23rd Conf. on the
  Mathematical Foundations of Programming Semantics (MFPS XXIII)}, volume 173
  of {\em Electr. Notes in Theor. Comp. Sci.}, pages 17--39. Elsevier, 2007.

\bibitem{abelCoquandDybjer:lics07}
Andreas Abel, Thierry Coquand, and Peter Dybjer.
\newblock Normalization by evaluation for {Martin-L\"of Type Theory} with typed
  equality judgements.
\newblock In {\em Proc. of the 22nd IEEE Symp. on Logic in Computer Science
  (LICS 2007)}, pages 3--12. IEEE Computer Soc. Press, 2007.

\bibitem{abelCoquandDybjer:flops08}
Andreas Abel, Thierry Coquand, and Peter Dybjer.
\newblock On the algebraic foundation of proof assistants for intuitionistic
  type theory.
\newblock In Jacques Garrigue and Manuel~V. Hermenegildo, editors, {\em Proc.
  of the 9th Int. Symp. on Functional and Logic Programming, FLOPS 2008},
  volume 4989 of {\em Lect. Notes in Comput. Sci.}, pages 3--13. Springer,
  2008.

\bibitem{abelCoquandDybjer:mpc08}
Andreas Abel, Thierry Coquand, and Peter Dybjer.
\newblock Verifying a semantic $\beta\eta$-conversion test for {Martin-L\"of}
  type theory.
\newblock In Philippe Audebaud and Christine Paulin-Mohring, editors, {\em
  Proc. of the 9th Int. Conf. on Mathematics of Program Construction, MPC
  2008}, volume 5133 of {\em Lect. Notes in Comput. Sci.}, pages 29--56.
  Springer, 2008.

\bibitem{dom-theory}
Samson Abramsky and A~Jung.
\newblock {\em {Handbook of Logic in Computer Science}}, chapter {Domain
  Theory}, pages 1--168.
\newblock {Oxford University Press}, {1994}.

\bibitem{adams}
Robin Adams.
\newblock {Pure type systems with judgemental equality}.
\newblock {\em {Journal Functional Programming}}, {16}({2}):219--246, {2006}.

\bibitem{aehlig:nbe}
Klaus Aehlig and Felix Joachimski.
\newblock Operational aspects of untyped normalization by evaluation.
\newblock {\em Math. Struct. in Comput. Sci.}, 14(4):587--611, August 2004.

\bibitem{altenkirchDybjerHofmannScott:lics01}
Thorsten Altenkirch, Peter Dybjer, Martin Hofmann, and Philip~J. Scott.
\newblock Normalization by evaluation for typed lambda calculus with
  coproducts.
\newblock In {\em Proc. of the 16th IEEE Symp. on Logic in Computer Science
  (LICS 2001)}, pages 303--310. IEEE Computer Soc. Press, 2001.

\bibitem{aspinall:csl94}
David Aspinall.
\newblock Subtyping with singleton types.
\newblock In Leszek Pacholski and Jerzy Tiuryn, editors, {\em Computer Science
  Logic, 8th Int. Wksh., CSL '94}, volume 933 of {\em Lect. Notes in Comput.
  Sci.}, pages 1--15. Springer, 1995.

\bibitem{awodeyBauer:propositionsAsTypes}
Steven Awodey and Andrej Bauer.
\newblock Propositions as [{T}ypes].
\newblock {\em J. Log. Comput.}, 14(4):447--471, 2004.

\bibitem{balatDiCosmoFiore:popl04}
Vincent Balat, Roberto~Di Cosmo, and Marcelo~P. Fiore.
\newblock Extensional normalisation and type-directed partial evaluation for
  typed lambda calculus with sums.
\newblock In Neil~D. Jones and Xavier Leroy, editors, {\em Proc. of the 31st
  ACM Symp. on Principles of Programming Languages, POPL 2004}, pages 64--76.
  ACM Press, 2004.

\bibitem{DBLP:conf/fossacs/BarrasB08}
Bruno Barras and Bruno Bernardo.
\newblock The implicit calculus of constructions as a programming language with
  dependent types.
\newblock In {\em FoSSaCS}, pages 365--379, 2008.

\bibitem{berardi}
Stefano Berardi.
\newblock {About the sets-as-propositions embedding of HOL in CC}.
\newblock {February} {2004}.

\bibitem{berger:uniformHA}
Ulrich Berger.
\newblock Uniform {H}eyting arithmetic.
\newblock {\em Ann. Pure Appl. Logic}, 133(1--3):125--148, 2005.

\bibitem{bergerSchwichtenberg:lics91}
Ulrich Berger and Helmut Schwichtenberg.
\newblock An inverse to the evaluation functional for typed $\lambda$-calculus.
\newblock In {\em Proc. of the 6th IEEE Symp. on Logic in Computer Science
  (LICS'91)}, pages 203--211. IEEE Computer Soc. Press, 1991.

\bibitem{143230}
Kim Bruce and John~C. Mitchell.
\newblock Per models of subtyping, recursive types and higher-order
  polymorphism.
\newblock In {\em POPL '92: Proceedings of the 19th ACM SIGPLAN-SIGACT
  symposium on Principles of programming languages}, pages 316--327, New York,
  NY, USA, 1992. ACM.

\bibitem{aut-4}
N.~G.~de Bruijn.
\newblock {Some extensions of Automath : the AUT-4 family}.
\newblock {1994}.

\bibitem{cartmell}
J.~Cartmell.
\newblock {Generalised algebraic theories and contextual categories}.
\newblock {\em {Annals of Pure and Applied Logic}}, pages 32--209, {1986}.

\bibitem{coquand:type}
Thierry Coquand.
\newblock An algorithm for type-checking dependent types.
\newblock In {\em Proc. of the 3rd Int. Conf. on Mathematics of Program
  Construction, MPC '95}, volume~26 of {\em Sci. Comput. Program.}, pages
  167--177. Elsevier, May 1996.

\bibitem{coquandPollackTakeyama:fundinf05}
Thierry Coquand, Randy Pollack, and Makoto Takeyama.
\newblock A logical framework with dependently typed records.
\newblock {\em Fundam. Inform.}, 65(1-2):113--134, 2005.

\bibitem{courant:itrs02}
Judica{\"e}l Courant.
\newblock Strong normalization with singleton types.
\newblock In {\em Intersection Types and Related Systems (ITRS 2002)},
  volume~70 of {\em Electr. Notes in Theor. Comp. Sci.} Elsevier, 2002.

\bibitem{crary:lfmtp08}
Karl Crary.
\newblock A syntactic account of singleton types via hereditary substitution.
\newblock In James Cheney and Amy Felty, editors, {\em 4th Int. Wksh. on
  Logical Frameworks and Meta-languages: Theory and Practice (LFMTP 2009)},
  pages 21--29. ACM Press, 2009.

\bibitem{dybjer:internalTypeTheory}
Peter Dybjer.
\newblock Internal type theory.
\newblock In Stefano Berardi and Mario Coppo, editors, {\em Types for Proofs
  and Programs, Int. Wksh., TYPES'95}, volume 1158 of {\em Lect. Notes in
  Comput. Sci.}, pages 120--134. Springer, 1996.

\bibitem{dybjer:jsl00}
Peter Dybjer.
\newblock A general formulation of simultaneous inductive-recursive definitions
  in type theory.
\newblock {\em J. Symb. Logic}, 65(2):525--549, 2000.

\bibitem{filinskiRohde:untypedNbE}
Andrzej Filinski and Henning~Korsholm Rohde.
\newblock A denotational account of untyped normalization by evaluation.
\newblock In Igor Walukiewicz, editor, {\em Proc. of the 7th Int. Conf. on
  Foundations of Software Science and Computational Structures, FoSSaCS 2004},
  volume 2987 of {\em Lect. Notes in Comput. Sci.}, pages 167--181. Springer,
  2004.

\bibitem{garillotWerner:tphols07}
Fran\c{c}ois Garillot and Benjamin Werner.
\newblock Simple types in type theory: Deep and shallow encodings.
\newblock In Klaus Schneider and Jens Brandt, editors, {\em Theorem Proving in
  Higher Order Logics, TPHOLs 2007}, volume 4732 of {\em Lect. Notes in Comput.
  Sci.}, pages 368--382. Springer, 2007.

\bibitem{girardLafontTaylor:proofsAndTypes}
Jean-Yves Girard, Yves Lafont, and Paul Taylor.
\newblock {\em Proofs and Types}, volume~7 of {\em Cambridge Tracts in Theoret.
  Comput. Sci.}
\newblock Cambridge University Press, 1989.

\bibitem{goguen:PhD}
Healfdene Goguen.
\newblock {\em A Typed Operational Semantics for Type Theory}.
\newblock PhD thesis, University of Edinburgh, August 1994.
\newblock Available as LFCS Report ECS-LFCS-94-304.

\bibitem{goguen:syntacticEta}
Healfdene Goguen.
\newblock A syntactic approach to eta equality in type theory.
\newblock In Jens Palsberg and Mart\'{\i}n Abadi, editors, {\em Proc. of the
  32nd ACM Symp. on Principles of Programming Languages, POPL 2005}, pages
  75--84. ACM Press, January 2005.

\bibitem{gregoireLeroy:compiledReduction}
Benjamin Gr{\'e}goire and Xavier Leroy.
\newblock A compiled implementation of strong reduction.
\newblock In {\em Proc. of the 7th ACM SIGPLAN Int. Conf. on Functional
  Programming (ICFP '02)}, volume~37 of {\em SIGPLAN Notices}, pages 235--246.
  ACM Press, September 2002.

\bibitem{harperHonsellPlotkin:LF}
Robert Harper, Furio Honsell, and Gordon Plotkin.
\newblock A {F}ramework for {D}efining {L}ogics.
\newblock {\em Journal of the Association of Computing Machinery},
  40(1):143--184, January 1993.

\bibitem{coq81}
INRIA.
\newblock {\em The Coq Proof Assistant, Version 8.1}.
\newblock INRIA, 2007.
\newblock {http://coq.inria.fr/}.

\bibitem{harper:popl07}
Daniel~K. Lee, Karl Crary, and Robert Harper.
\newblock Towards a mechanized metatheory of {Standard ML}.
\newblock In Martin Hofmann and Matthias Felleisen, editors, {\em Proc. of the
  34th ACM Symp. on Principles of Programming Languages, POPL 2007}, pages
  173--184. ACM Press, 2007.

\bibitem{loehMcBrideSwierstra:tutorialDependentlyLambda}
Andreas L{\"o}h, Conor McBride, and Wouter Swierstra.
\newblock A tutorial implementation of a dependently typed lambda calculus.
\newblock {\em Fundam. Inform.}, 102(2):177--207, 2010.

\bibitem{odalric-pi-sn}
Odalric-Ambrym Maillard.
\newblock {Proof-irrelevance, strong-normalisation in Type-Theory and PER}.
\newblock Technical report, {Chalmers Institute of Technology}, {2006}.

\bibitem{martinlof72}
Per Martin-L\"{o}f.
\newblock {An Intuitionistic Theory of Types}.
\newblock Technical report, {1972}.

\bibitem{mlitt}
Per Martin-L\"{o}f.
\newblock {\em {Intuitionistic Type Theory}}.
\newblock {Bibliopolis}, {1984}.

\bibitem{martinlof:jaist04}
Per Martin-L\"of.
\newblock Normalization by evaluation and by the method of computability.
\newblock Talk at JAIST, Japan Advanced Institute of Science and Technology,
  Kanazawa, June 2004.

\bibitem{mcBride:afp04}
Conor McBride.
\newblock Epigram: Practical programming with dependent types.
\newblock In Varmo Vene and Tarmo Uustalu, editors, {\em 5th Int. School on
  Advanced Functional Programming, AFP 2004, Revised Lectures}, volume 3622 of
  {\em Lect. Notes in Comput. Sci.}, pages 130--170. Springer, 2005.

\bibitem{miquel:tlca01}
Alexandre Miquel.
\newblock The implicit calculus of constructions.
\newblock In Samson Abramsky, editor, {\em Proc. of the 5th Int. Conf. on Typed
  Lambda Calculi and Applications, TLCA 2001}, volume 2044 of {\em Lect. Notes
  in Comput. Sci.}, pages 344--359. Springer, 2001.

\bibitem{319872}
John~C. Mitchell.
\newblock A type-inference approach to reduction properties and semantics of
  polymorphic expressions (summary).
\newblock In {\em LFP '86: Proceedings of the 1986 ACM conference on LISP and
  functional programming}, pages 308--319, New York, NY, USA, 1986. ACM.

\bibitem{DBLP:books/el/leeuwen90/Mitchell90}
John~C. Mitchell.
\newblock Type systems for programming languages.
\newblock In {\em Handbook of Theoretical Computer Science, Volume B: Formal
  Models and Sematics (B)}, pages 365--458. 1990.

\bibitem{complambda}
E.~Moggi.
\newblock {Computational lambda-calculus and monads}.
\newblock In {\em {Proceedings of the Fourth Annual Symposium on Logic in
  computer science}}, pages 14--23, {Piscataway, NJ, USA}, {1989}. {IEEE
  Press}.

\bibitem{nordstroem:mltt}
Bengt Nordstr\"om, Kent Petersson, and Jan~M. Smith.
\newblock {\em Programming in Martin L\"of's Type Theory: An Introduction}.
\newblock Clarendon Press, Oxford, 1990.

\bibitem{norell:PhD}
Ulf Norell.
\newblock {\em Towards a Practical Programming Language Based on Dependent Type
  Theory}.
\newblock PhD thesis, Dept of Comput. Sci. and Engrg., Chalmers, G\"{o}teborg,
  Sweden, September 2007.

\bibitem{pfenning:intextirr}
Frank Pfenning.
\newblock Intensionality, extensionality, and proof irrelevance in modal type
  theory.
\newblock In {\em LICS 2001: IEEE Symposium on Logic in Computer Science}, June
  2001.

\bibitem{plotkin:adequacy}
Gordon Plotkin.
\newblock {LCF} considered as a programming language.
\newblock {\em Theor. Comput. Sci.}, 5:223--255, 1977.

\bibitem{pollack:alpha}
Robert Pollack.
\newblock Closure under alpha-conversion.
\newblock In Henk Barendregt and Tobias Nipkow, editors, {\em Types for Proofs
  and Programs (TYPES'93), Nijmegen, The Netherlands}, volume 806 of {\em Lect.
  Notes in Comput. Sci.}, pages 313--332. Springer, 1994.

\bibitem{sambin}
Giovanni Sambin and Silvio Valentini.
\newblock {Building up a toolbox for Martin-L{\"{o}}f{'}s type theory: subset
  theory}.
\newblock In {\em {Twenty-five years of constructive type theory (Venice,
  1995)}}, volume~{36}, chapter {Oxford Logic Guides}, pages 221--244. {Oxford
  University Press}, {New York}, {1998}.

\bibitem{pvs-sub}
Natarajan Shankar and Sam Owre.
\newblock {Principles and Pragmatics of Subtyping in PVS}.
\newblock In {\em {WADT {'}99: Selected papers from the 14th International
  Workshop on Recent Trends in Algebraic Development Techniques}}, pages
  37--52, {London, UK}, {2000}. {Springer-Verlag}.

\bibitem{silesHerbelin:lics10}
Vincent Siles and Hugo Herbelin.
\newblock Equality is typable in semi-full pure type systems.
\newblock In {\em LICS}, pages 21--30, 2010.

\bibitem{sozeau:types06}
Matthieu Sozeau.
\newblock Subset coercions in {Coq}.
\newblock In Thorsten Altenkirch and Conor McBride, editors, {\em Types for
  Proofs and Programs, Int. Wksh., TYPES 2006}, volume 4502 of {\em Lect. Notes
  in Comput. Sci.}, pages 237--252. Springer, 2007.

\bibitem{stoneHarper:tocl06}
Christopher~A. Stone and Robert Harper.
\newblock Extensional equivalence and singleton types.
\newblock {\em ACM Trans. Comput. Logic}, 7(4):676--722, 2006.

\bibitem{streicher:PhD}
Thomas Streicher.
\newblock {\em Semantics of Type Theory}.
\newblock Progress in Theoretical Computer Science. Birkh\"auser Verlag, Basel,
  1991.

\bibitem{werner:strengthProofIrrelevance}
Benjamin Werner.
\newblock On the strength of proof-irrelevant type theories.
\newblock {\em Logical Meth. in Comput. Sci.}, 4, 2008.

\end{thebibliography}

\appendix

\section{Normalisation by evaluation}
\label{sec:nbealg}

\begin{hscode}\SaveRestoreHook
\column{B}{@{}>{\hspre}l<{\hspost}@{}}%
\column{5}{@{}>{\hspre}l<{\hspost}@{}}%
\column{6}{@{}>{\hspre}l<{\hspost}@{}}%
\column{8}{@{}>{\hspre}l<{\hspost}@{}}%
\column{9}{@{}>{\hspre}l<{\hspost}@{}}%
\column{10}{@{}>{\hspre}l<{\hspost}@{}}%
\column{11}{@{}>{\hspre}l<{\hspost}@{}}%
\column{12}{@{}>{\hspre}l<{\hspost}@{}}%
\column{13}{@{}>{\hspre}l<{\hspost}@{}}%
\column{14}{@{}>{\hspre}l<{\hspost}@{}}%
\column{15}{@{}>{\hspre}l<{\hspost}@{}}%
\column{16}{@{}>{\hspre}l<{\hspost}@{}}%
\column{19}{@{}>{\hspre}l<{\hspost}@{}}%
\column{20}{@{}>{\hspre}l<{\hspost}@{}}%
\column{22}{@{}>{\hspre}l<{\hspost}@{}}%
\column{23}{@{}>{\hspre}l<{\hspost}@{}}%
\column{24}{@{}>{\hspre}l<{\hspost}@{}}%
\column{25}{@{}>{\hspre}l<{\hspost}@{}}%
\column{29}{@{}>{\hspre}c<{\hspost}@{}}%
\column{29E}{@{}l@{}}%
\column{30}{@{}>{\hspre}l<{\hspost}@{}}%
\column{31}{@{}>{\hspre}l<{\hspost}@{}}%
\column{32}{@{}>{\hspre}l<{\hspost}@{}}%
\column{35}{@{}>{\hspre}l<{\hspost}@{}}%
\column{36}{@{}>{\hspre}l<{\hspost}@{}}%
\column{37}{@{}>{\hspre}l<{\hspost}@{}}%
\column{42}{@{}>{\hspre}l<{\hspost}@{}}%
\column{43}{@{}>{\hspre}l<{\hspost}@{}}%
\column{44}{@{}>{\hspre}l<{\hspost}@{}}%
\column{45}{@{}>{\hspre}l<{\hspost}@{}}%
\column{47}{@{}>{\hspre}l<{\hspost}@{}}%
\column{53}{@{}>{\hspre}l<{\hspost}@{}}%
\column{62}{@{}>{\hspre}c<{\hspost}@{}}%
\column{62E}{@{}l@{}}%
\column{65}{@{}>{\hspre}l<{\hspost}@{}}%
\column{E}{@{}>{\hspre}l<{\hspost}@{}}%
\>[B]{}\mathbf{type}\;\Conid{Type}{}\<[13]%
\>[13]{}\mathrel{=}{}\<[16]%
\>[16]{}\Conid{Term}{}\<[E]%
\\
\>[B]{}\mathbf{data}\;\Conid{Term}{}\<[13]%
\>[13]{}\mathrel{=}{}\<[16]%
\>[16]{}\Conid{U}{}\<[44]%
\>[44]{}\mbox{\onelinecomment  universe}{}\<[E]%
\\
\>[13]{}\mid {}\<[16]%
\>[16]{}\Conid{Fun}\;\Conid{Type}\;\Conid{Type}{}\<[44]%
\>[44]{}\mbox{\onelinecomment  dependent function space}{}\<[E]%
\\
\>[13]{}\mid {}\<[16]%
\>[16]{}\Conid{Singl}\;\Conid{Term}\;\Conid{Type}{}\<[44]%
\>[44]{}\mbox{\onelinecomment  singleton type ($\{a\}_A$)}{}\<[E]%
\\
\>[13]{}\mid {}\<[16]%
\>[16]{}\Conid{App}\;\Conid{Term}\;\Conid{Term}{}\<[44]%
\>[44]{}\mbox{\onelinecomment  application}{}\<[E]%
\\
\>[13]{}\mid {}\<[16]%
\>[16]{}\Conid{Lam}\;\Conid{Term}{}\<[44]%
\>[44]{}\mbox{\onelinecomment  abstraction}{}\<[E]%
\\
\>[13]{}\mid {}\<[16]%
\>[16]{}\Conid{Q}{}\<[44]%
\>[44]{}\mbox{\onelinecomment  variable}{}\<[E]%
\\
\>[13]{}\mid {}\<[16]%
\>[16]{}\Conid{Sub}\;\Conid{Term}\;\Conid{Subst}{}\<[44]%
\>[44]{}\mbox{\onelinecomment  substitution}{}\<[E]%
\\
\>[13]{}\mid {}\<[16]%
\>[16]{}\Conid{Sigma}\;\Conid{Type}\;\Conid{Type}{}\<[44]%
\>[44]{}\mbox{\onelinecomment  dependent pair type}{}\<[E]%
\\
\>[13]{}\mid {}\<[16]%
\>[16]{}\Conid{Fst}\;\Conid{Term}{}\<[44]%
\>[44]{}\mbox{\onelinecomment  first projection}{}\<[E]%
\\
\>[13]{}\mid {}\<[16]%
\>[16]{}\Conid{Snd}\;\Conid{Term}{}\<[44]%
\>[44]{}\mbox{\onelinecomment  second projection}{}\<[E]%
\\
\>[13]{}\mid {}\<[16]%
\>[16]{}\Conid{Pair}\;\Conid{Term}\;\Conid{Term}{}\<[44]%
\>[44]{}\mbox{\onelinecomment  dependent pair }{}\<[E]%
\\
\>[13]{}\mid {}\<[16]%
\>[16]{}\Conid{Nat}{}\<[44]%
\>[44]{}\mbox{\onelinecomment  naturals}{}\<[E]%
\\
\>[13]{}\mid {}\<[16]%
\>[16]{}\Conid{Zero}{}\<[44]%
\>[44]{}\mbox{\onelinecomment  0}{}\<[E]%
\\
\>[13]{}\mid {}\<[16]%
\>[16]{}\Conid{Suc}\;\Conid{Term}{}\<[44]%
\>[44]{}\mbox{\onelinecomment  +1}{}\<[E]%
\\
\>[13]{}\mid {}\<[16]%
\>[16]{}\Conid{Natrec}\;\Conid{Type}\;\Conid{Term}\;\Conid{Term}\;\Conid{Term}{}\<[44]%
\>[44]{}\mbox{\onelinecomment  elimination for Nat}{}\<[E]%
\\
\>[13]{}\mid {}\<[16]%
\>[16]{}\Conid{Prf}\;\Conid{Type}{}\<[44]%
\>[44]{}\mbox{\onelinecomment  proof (with proof irrelevance)}{}\<[E]%
\\
\>[13]{}\mid {}\<[16]%
\>[16]{}\Conid{Box}\;\Conid{Term}{}\<[44]%
\>[44]{}\mbox{\onelinecomment  a term in \ensuremath{\Conid{Prf}\;\Conid{A}}}{}\<[E]%
\\
\>[13]{}\mid {}\<[16]%
\>[16]{}\Conid{Star}{}\<[44]%
\>[44]{}\mbox{\onelinecomment  canonical element of \ensuremath{\Conid{Prf}\;\Conid{A}}}{}\<[E]%
\\
\>[13]{}\mid {}\<[16]%
\>[16]{}\Conid{Where}\;\Conid{Type}\;\Conid{Term}\;\Conid{Term}{}\<[44]%
\>[44]{}\mbox{\onelinecomment  \ensuremath{\Conid{Box}} elimination}{}\<[E]%
\\
\>[13]{}\mid {}\<[16]%
\>[16]{}\Conid{Enum}\;\Conid{Int}{}\<[44]%
\>[44]{}\mbox{\onelinecomment  \ensuremath{\Conid{Enum}\;\Varid{n}} has n elements}{}\<[E]%
\\
\>[13]{}\mid {}\<[16]%
\>[16]{}\Conid{Const}\;\Conid{Int}\;\Conid{Int}{}\<[44]%
\>[44]{}\mbox{\onelinecomment  \ensuremath{\Conid{Const}\;\Varid{n}\;\Varid{i}} is the \ensuremath{\Varid{i}}th element}{}\<[E]%
\\
\>[13]{}\mid {}\<[16]%
\>[16]{}\Conid{Case}\;\Conid{Int}\;\Conid{Type}\;[\mskip1.5mu \Conid{Term}\mskip1.5mu]\;\Conid{Term}{}\<[44]%
\>[44]{}\mbox{\onelinecomment  elimination for \ensuremath{\Conid{Enum}\;\Varid{n}}}{}\<[E]%
\\
\>[16]{}\mathbf{deriving}\;(\Conid{Eq},\Conid{Show}){}\<[E]%
\\[\blanklineskip]%
\>[B]{}\mathbf{data}\;\Conid{Subst}{}\<[13]%
\>[13]{}\mathrel{=}{}\<[16]%
\>[16]{}\Conid{E}{}\<[44]%
\>[44]{}\mbox{\onelinecomment  empty substitution}{}\<[E]%
\\
\>[13]{}\mid {}\<[16]%
\>[16]{}\Conid{Is}{}\<[44]%
\>[44]{}\mbox{\onelinecomment  identity substitution}{}\<[E]%
\\
\>[13]{}\mid {}\<[16]%
\>[16]{}\Conid{Ext}\;\Conid{Subst}\;\Conid{Term}{}\<[44]%
\>[44]{}\mbox{\onelinecomment  extension}{}\<[E]%
\\
\>[13]{}\mid {}\<[16]%
\>[16]{}\Conid{P}{}\<[44]%
\>[44]{}\mbox{\onelinecomment  weakening}{}\<[E]%
\\
\>[13]{}\mid {}\<[16]%
\>[16]{}\Conid{Comp}\;\Conid{Subst}\;\Conid{Subst}{}\<[44]%
\>[44]{}\mbox{\onelinecomment  composition}{}\<[E]%
\\
\>[16]{}\mathbf{deriving}\;(\Conid{Eq},\Conid{Show}){}\<[E]%
\\[\blanklineskip]%
\>[B]{}\mathbf{type}\;\Conid{DT}{}\<[13]%
\>[13]{}\mathrel{=}{}\<[16]%
\>[16]{}\Conid{D}{}\<[44]%
\>[44]{}\mbox{\onelinecomment  semantic types}{}\<[E]%
\\
\>[B]{}\mathbf{data}\;\Conid{D}{}\<[13]%
\>[13]{}\mathrel{=}{}\<[16]%
\>[16]{}\Conid{T}{}\<[44]%
\>[44]{}\mbox{\onelinecomment  terminal object (empty context)}{}\<[E]%
\\
\>[13]{}\mid {}\<[16]%
\>[16]{}\Conid{Ld}\;(\Conid{D}\to \Conid{D}){}\<[44]%
\>[44]{}\mbox{\onelinecomment  function}{}\<[E]%
\\
\>[13]{}\mid {}\<[16]%
\>[16]{}\Conid{FunD}\;\Conid{DT}\;(\Conid{D}\to \Conid{DT}){}\<[44]%
\>[44]{}\mbox{\onelinecomment  dependent function type}{}\<[E]%
\\
\>[13]{}\mid {}\<[16]%
\>[16]{}\Conid{UD}{}\<[44]%
\>[44]{}\mbox{\onelinecomment  universe}{}\<[E]%
\\
\>[13]{}\mid {}\<[16]%
\>[16]{}\Conid{SingD}\;\Conid{D}\;\Conid{DT}{}\<[44]%
\>[44]{}\mbox{\onelinecomment  singleton type}{}\<[E]%
\\
\>[13]{}\mid {}\<[16]%
\>[16]{}\Conid{Vd}\;\Conid{Int}{}\<[44]%
\>[44]{}\mbox{\onelinecomment  free variable}{}\<[E]%
\\
\>[13]{}\mid {}\<[16]%
\>[16]{}\Conid{AppD}\;\Conid{D}\;\Conid{D}{}\<[44]%
\>[44]{}\mbox{\onelinecomment  neutral application}{}\<[E]%
\\
\>[13]{}\mid {}\<[16]%
\>[16]{}\Conid{SumD}\;\Conid{DT}\;(\Conid{D}\to \Conid{DT}){}\<[44]%
\>[44]{}\mbox{\onelinecomment  dependent pair type}{}\<[E]%
\\
\>[13]{}\mid {}\<[16]%
\>[16]{}\Conid{PairD}\;\Conid{D}\;\Conid{D}{}\<[44]%
\>[44]{}\mbox{\onelinecomment  context comprehension}{}\<[E]%
\\
\>[13]{}\mid {}\<[16]%
\>[16]{}\Conid{FstD}\;\Conid{D}{}\<[44]%
\>[44]{}\mbox{\onelinecomment  first projection of neutral}{}\<[E]%
\\
\>[13]{}\mid {}\<[16]%
\>[16]{}\Conid{SndD}\;\Conid{D}{}\<[44]%
\>[44]{}\mbox{\onelinecomment  second projection of neutral}{}\<[E]%
\\
\>[13]{}\mid {}\<[16]%
\>[16]{}\Conid{NatD}{}\<[44]%
\>[44]{}\mbox{\onelinecomment  natural number type}{}\<[E]%
\\
\>[13]{}\mid {}\<[16]%
\>[16]{}\Conid{ZeroD}{}\<[44]%
\>[44]{}\mbox{\onelinecomment  0}{}\<[E]%
\\
\>[13]{}\mid {}\<[16]%
\>[16]{}\Conid{SucD}\;\Conid{D}{}\<[44]%
\>[44]{}\mbox{\onelinecomment  +1}{}\<[E]%
\\
\>[13]{}\mid {}\<[16]%
\>[16]{}\Conid{NatrecD}\;(\Conid{D}\to \Conid{DT})\;\Conid{D}\;\Conid{D}\;\Conid{D}{}\<[44]%
\>[44]{}\mbox{\onelinecomment  recursion on neutrals}{}\<[E]%
\\
\>[13]{}\mid {}\<[16]%
\>[16]{}\Conid{PrfD}\;\Conid{DT}{}\<[44]%
\>[44]{}\mbox{\onelinecomment  proof type}{}\<[E]%
\\
\>[13]{}\mid {}\<[16]%
\>[16]{}\Conid{StarD}{}\<[44]%
\>[44]{}\mbox{\onelinecomment  don't care}{}\<[E]%
\\
\>[13]{}\mid {}\<[16]%
\>[16]{}\Conid{EnumD}\;\Conid{Int}{}\<[44]%
\>[44]{}\mbox{\onelinecomment  enumeration type}{}\<[E]%
\\
\>[13]{}\mid {}\<[16]%
\>[16]{}\Conid{ConstD}\;\Conid{Int}\;\Conid{Int}{}\<[44]%
\>[44]{}\mbox{\onelinecomment  constants in \ensuremath{\Conid{EnumD}}}{}\<[E]%
\\
\>[13]{}\mid {}\<[16]%
\>[16]{}\Conid{CaseD}\;\Conid{Int}\;(\Conid{D}\to \Conid{DT})\;[\mskip1.5mu \Conid{D}\mskip1.5mu]\;\Conid{D}{}\<[44]%
\>[44]{}\mbox{\onelinecomment  elimination on neutrals}{}\<[E]%
\\[\blanklineskip]%
\>[B]{}\mathbf{type}\;\Conid{Ctx}{}\<[13]%
\>[13]{}\mathrel{=}{}\<[16]%
\>[16]{}[\mskip1.5mu \Conid{Type}\mskip1.5mu]{}\<[E]%
\\[\blanklineskip]%
\>[B]{}\Varid{pi1},\Varid{pi2}\mathbin{::}{}\<[14]%
\>[14]{}\Conid{D}\to \Conid{D}{}\<[E]%
\\
\>[B]{}\Varid{pi1}\;{}\<[6]%
\>[6]{}(\Conid{PairD}\;\Varid{d}\;\Varid{d'}){}\<[20]%
\>[20]{}\mathrel{=}{}\<[23]%
\>[23]{}\Varid{d}{}\<[E]%
\\
\>[B]{}\Varid{pi1}\;{}\<[6]%
\>[6]{}\Conid{StarD}{}\<[20]%
\>[20]{}\mathrel{=}\Conid{StarD}{}\<[E]%
\\
\>[B]{}\Varid{pi1}\;{}\<[6]%
\>[6]{}\Varid{k}{}\<[20]%
\>[20]{}\mathrel{=}{}\<[23]%
\>[23]{}\Conid{FstD}\;\Varid{k}{}\<[E]%
\\
\>[B]{}\Varid{pi2}\;{}\<[6]%
\>[6]{}(\Conid{PairD}\;\Varid{d}\;\Varid{d'}){}\<[20]%
\>[20]{}\mathrel{=}{}\<[23]%
\>[23]{}\Varid{d'}{}\<[E]%
\\
\>[B]{}\Varid{pi2}\;{}\<[6]%
\>[6]{}\Conid{StarD}{}\<[20]%
\>[20]{}\mathrel{=}\Conid{StarD}{}\<[E]%
\\
\>[B]{}\Varid{pi2}\;{}\<[6]%
\>[6]{}\Varid{k}{}\<[20]%
\>[20]{}\mathrel{=}{}\<[23]%
\>[23]{}\Conid{SndD}\;\Varid{k}{}\<[E]%
\\[\blanklineskip]%
\>[B]{}\Varid{ap}\mathbin{::}\Conid{D}\to \Conid{D}\to \Conid{D}{}\<[E]%
\\
\>[B]{}\Varid{ap}\;(\Conid{Ld}\;\Varid{f})\;\Varid{d}\mathrel{=}\Varid{f}\;\Varid{d}{}\<[E]%
\\
\>[B]{}\Varid{ap}\;\Conid{StarD}\;{}\<[11]%
\>[11]{}\anonymous \mathrel{=}\Conid{StarD}{}\<[E]%
\\[\blanklineskip]%
\>[B]{}\Varid{neutralD}\mathbin{::}\Conid{D}\to \Conid{Bool}{}\<[E]%
\\
\>[B]{}\Varid{neutralD}\;(\Conid{Vd}\;\anonymous ){}\<[30]%
\>[30]{}\mathrel{=}\Conid{True}{}\<[E]%
\\
\>[B]{}\Varid{neutralD}\;(\Conid{AppD}\;\anonymous \;\anonymous ){}\<[30]%
\>[30]{}\mathrel{=}\Conid{True}{}\<[E]%
\\
\>[B]{}\Varid{neutralD}\;(\Conid{FstD}\;\anonymous ){}\<[30]%
\>[30]{}\mathrel{=}\Conid{True}{}\<[E]%
\\
\>[B]{}\Varid{neutralD}\;(\Conid{SndD}\;\anonymous ){}\<[30]%
\>[30]{}\mathrel{=}\Conid{True}{}\<[E]%
\\
\>[B]{}\Varid{neutralD}\;(\Conid{NatrecD}\;\anonymous \;\anonymous \;\anonymous \;\anonymous ){}\<[30]%
\>[30]{}\mathrel{=}\Conid{True}{}\<[E]%
\\
\>[B]{}\Varid{neutralD}\;(\Conid{CaseD}\;\anonymous \;\anonymous \;\anonymous \;\anonymous ){}\<[30]%
\>[30]{}\mathrel{=}\Conid{True}{}\<[E]%
\\
\>[B]{}\Varid{neutralD}\;\Conid{StarD}{}\<[30]%
\>[30]{}\mathrel{=}\Conid{True}{}\<[E]%
\\
\>[B]{}\Varid{neutralD}\;\anonymous {}\<[30]%
\>[30]{}\mathrel{=}\Conid{False}{}\<[E]%
\\[\blanklineskip]%
\>[B]{}\Varid{natrec}\mathbin{::}(\Conid{D}\to \Conid{DT})\to \Conid{D}\to \Conid{D}\to \Conid{D}\to \Conid{D}{}\<[E]%
\\
\>[B]{}\Varid{natrec}\;\Varid{b}\;\Varid{z}\;\Varid{s}\;\Conid{StarD}{}\<[31]%
\>[31]{}\mathrel{=}\Conid{StarD}{}\<[E]%
\\
\>[B]{}\Varid{natrec}\;\Varid{b}\;\Varid{z}\;\Varid{s}\;\Conid{ZeroD}{}\<[31]%
\>[31]{}\mathrel{=}\Varid{z}{}\<[E]%
\\
\>[B]{}\Varid{natrec}\;\Varid{b}\;\Varid{z}\;\Varid{s}\;(\Conid{SucD}\;\Varid{e}){}\<[31]%
\>[31]{}\mathrel{=}(\Varid{s}\mathbin{`\Varid{ap}`}\Varid{e})\mathbin{`\Varid{ap}`}(\Varid{natrec}\;\Varid{b}\;\Varid{z}\;\Varid{s}\;\Varid{e}){}\<[E]%
\\
\>[B]{}\Varid{natrec}\;\Varid{b}\;\Varid{z}\;\Varid{s}\;\Varid{d}\mid \Varid{neutralD}\;\Varid{d}{}\<[31]%
\>[31]{}\mathrel{=}\Varid{up}\;{}\<[37]%
\>[37]{}(\Varid{b}\;\Varid{d})\;{}\<[E]%
\\
\>[37]{}(\Conid{NatrecD}\;{}\<[47]%
\>[47]{}(\lambda \Varid{e}\to \Varid{downT}\;(\Varid{b}\;\Varid{e}))\;{}\<[E]%
\\
\>[47]{}(\Varid{down}\;(\Varid{b}\;\Conid{ZeroD})\;\Varid{z})\;{}\<[E]%
\\
\>[47]{}\Varid{downSuc}\;{}\<[E]%
\\
\>[47]{}\Varid{d}){}\<[E]%
\\
\>[B]{}\hsindent{13}{}\<[13]%
\>[13]{}\mathbf{where}\;\Varid{downSuc}\mathrel{=}\Varid{down}\;{}\<[35]%
\>[35]{}(\Conid{FunD}\;{}\<[42]%
\>[42]{}\Conid{NatD}\;{}\<[E]%
\\
\>[42]{}(\lambda \Varid{n}\to \Conid{FunD}\;(\Varid{b}\;\Varid{n}){}\<[E]%
\\
\>[42]{}(\lambda \Varid{e}\to \Varid{b}\;(\Conid{SucD}\;\Varid{n}))))\;{}\<[E]%
\\
\>[35]{}\Varid{s}{}\<[E]%
\\[\blanklineskip]%
\>[B]{}\Varid{downs}\mathbin{::}\Conid{Int}\to (\Conid{D}\to \Conid{DT})\to [\mskip1.5mu \Conid{D}\mskip1.5mu]\to \Conid{Int}\to [\mskip1.5mu \Conid{D}\mskip1.5mu]{}\<[E]%
\\
\>[B]{}\Varid{downs}\;{}\<[8]%
\>[8]{}\anonymous \;{}\<[11]%
\>[11]{}\anonymous \;{}\<[14]%
\>[14]{}[\mskip1.5mu \mskip1.5mu]\;{}\<[22]%
\>[22]{}\anonymous {}\<[25]%
\>[25]{}\mathrel{=}[\mskip1.5mu \mskip1.5mu]{}\<[E]%
\\
\>[B]{}\Varid{downs}\;{}\<[8]%
\>[8]{}\Varid{n}\;{}\<[11]%
\>[11]{}\Varid{f}\;{}\<[14]%
\>[14]{}(\Varid{d}\mathbin{:}\Varid{ds})\;{}\<[22]%
\>[22]{}\Varid{i}{}\<[25]%
\>[25]{}\mathrel{=}\Varid{down}\;(\Varid{f}\;(\Conid{ConstD}\;\Varid{n}\;\Varid{i}))\;\Varid{d}\mathbin{:}\Varid{downs}\;\Varid{n}\;\Varid{f}\;\Varid{ds}\;(\Varid{i}\mathbin{+}\mathrm{1}){}\<[E]%
\\[\blanklineskip]%
\>[B]{}\Varid{constD}{}\<[9]%
\>[9]{}\mathbin{::}\Conid{Int}\to \Conid{Int}\to \Conid{D}\to \Conid{Bool}{}\<[E]%
\\
\>[B]{}\Varid{constD}\;{}\<[9]%
\>[9]{}\Varid{n}\;{}\<[12]%
\>[12]{}\Varid{i}\;{}\<[15]%
\>[15]{}(\Conid{ConstD}\;\Varid{m}\;\Varid{j}){}\<[29]%
\>[29]{}\mathrel{=}{}\<[29E]%
\>[32]{}\Varid{m}\equiv \Varid{n}\mathrel{\wedge}\Varid{i}\equiv \Varid{j}{}\<[E]%
\\
\>[B]{}\Varid{constD}\;{}\<[9]%
\>[9]{}\anonymous \;{}\<[12]%
\>[12]{}\anonymous \;{}\<[15]%
\>[15]{}\anonymous {}\<[29]%
\>[29]{}\mathrel{=}{}\<[29E]%
\>[32]{}\Conid{False}{}\<[E]%
\\[\blanklineskip]%
\>[B]{}\Varid{caseD}\mathbin{::}\Conid{Int}\to (\Conid{D}\to \Conid{DT})\to [\mskip1.5mu \Conid{D}\mskip1.5mu]\to \Conid{D}\to \Conid{D}{}\<[E]%
\\
\>[B]{}\Varid{caseD}\;\Varid{n}\;\Varid{b}\;\Varid{ds}\;\Conid{StarD}{}\<[62]%
\>[62]{}\mathrel{=}{}\<[62E]%
\>[65]{}\Conid{StarD}{}\<[E]%
\\
\>[B]{}\Varid{caseD}\;\Varid{n}\;\Varid{b}\;\Varid{ds}\;(\Conid{ConstD}\;\Varid{m}\;\Varid{i})\mid \Varid{n}\equiv \Varid{m}\mathrel{\wedge}\Varid{i}\mathbin{<}\Varid{n}{}\<[62]%
\>[62]{}\mathrel{=}{}\<[62E]%
\>[65]{}\Varid{ds}\mathbin{!!}\Varid{i}{}\<[E]%
\\
\>[B]{}\Varid{caseD}\;\Varid{n}\;\Varid{b}\;\Varid{ds}\;\Varid{d}\mid {}\<[19]%
\>[19]{}\Varid{neutralD}\;\Varid{d}\mathrel{\wedge}{}\<[E]%
\\
\>[19]{}\Varid{and}\;[\mskip1.5mu \Varid{constD}\;\Varid{n}\;\Varid{i}\;(\Varid{ds}\mathbin{!!}\Varid{i})\mid \Varid{i}\leftarrow [\mskip1.5mu \mathrm{0}\mathinner{\ldotp\ldotp}\Varid{n}\mathbin{-}\mathrm{1}\mskip1.5mu]\mskip1.5mu]{}\<[62]%
\>[62]{}\mathrel{=}{}\<[62E]%
\>[65]{}\Varid{up}\;(\Varid{b}\;\Varid{d})\;\Varid{d}{}\<[E]%
\\
\>[B]{}\Varid{caseD}\;\Varid{n}\;\Varid{b}\;\Varid{ds}\;\Varid{d}\mid {}\<[19]%
\>[19]{}\Varid{neutralD}\;\Varid{d}{}\<[62]%
\>[62]{}\mathrel{=}{}\<[62E]%
\>[65]{}\Varid{up}\;(\Varid{b}\;\Varid{d}){}\<[E]%
\\
\>[19]{}\hsindent{16}{}\<[35]%
\>[35]{}(\Conid{CaseD}\;\Varid{n}\;{}\<[45]%
\>[45]{}(\lambda \Varid{e}\to \Varid{downT}\;(\Varid{b}\;\Varid{e}))\;{}\<[E]%
\\
\>[45]{}(\Varid{downs}\;\Varid{n}\;\Varid{b}\;\Varid{ds}\;\mathrm{0})\;{}\<[E]%
\\
\>[45]{}\Varid{d}){}\<[E]%
\\
\>[B]{}\Varid{up}\mathbin{::}\Conid{DT}\to \Conid{D}\to \Conid{D}{}\<[E]%
\\
\>[B]{}\Varid{up}\;(\Conid{SingD}\;\Varid{a}\;\Varid{x})\;{}\<[19]%
\>[19]{}\Varid{k}{}\<[22]%
\>[22]{}\mathrel{=}\Varid{a}{}\<[E]%
\\
\>[B]{}\Varid{up}\;(\Conid{FunD}\;\Varid{a}\;\Varid{f})\;{}\<[19]%
\>[19]{}\Varid{k}{}\<[22]%
\>[22]{}\mathrel{=}\Conid{Ld}\;(\lambda \Varid{d}\to \Varid{up}\;(\Varid{f}\;\Varid{d})\;(\Conid{AppD}\;\Varid{k}\;(\Varid{down}\;\Varid{a}\;\Varid{d}))){}\<[E]%
\\
\>[B]{}\Varid{up}\;(\Conid{SumD}\;\Varid{a}\;\Varid{f})\;{}\<[19]%
\>[19]{}\Varid{k}{}\<[22]%
\>[22]{}\mathrel{=}\Conid{PairD}\;{}\<[31]%
\>[31]{}(\Varid{up}\;\Varid{a}\;(\Conid{FstD}\;\Varid{k}))\;{}\<[E]%
\\
\>[31]{}(\Varid{up}\;(\Varid{f}\;(\Varid{up}\;\Varid{a}\;(\Conid{FstD}\;\Varid{k})))\;(\Conid{SndD}\;\Varid{k})){}\<[E]%
\\
\>[B]{}\Varid{up}\;(\Conid{PrfD}\;\Varid{a})\;{}\<[19]%
\>[19]{}\Varid{k}{}\<[22]%
\>[22]{}\mathrel{=}\Conid{StarD}{}\<[E]%
\\
\>[B]{}\Varid{up}\;(\Conid{EnumD}\;\mathrm{0})\;{}\<[19]%
\>[19]{}\Varid{k}{}\<[22]%
\>[22]{}\mathrel{=}\Conid{StarD}{}\<[E]%
\\
\>[B]{}\Varid{up}\;(\Conid{EnumD}\;\mathrm{1})\;{}\<[19]%
\>[19]{}\Varid{k}{}\<[22]%
\>[22]{}\mathrel{=}\Conid{ConstD}\;\mathrm{1}\;\mathrm{0}{}\<[E]%
\\
\>[B]{}\Varid{up}\;\Varid{d}\;{}\<[19]%
\>[19]{}\Varid{k}{}\<[22]%
\>[22]{}\mathrel{=}\Varid{k}{}\<[E]%
\\[\blanklineskip]%
\>[B]{}\Varid{down}\mathbin{::}\Conid{DT}\to \Conid{D}\to \Conid{D}{}\<[E]%
\\
\>[B]{}\Varid{down}\;\Conid{UD}\;{}\<[19]%
\>[19]{}\Varid{d}{}\<[22]%
\>[22]{}\mathrel{=}\Varid{downT}\;\Varid{d}{}\<[E]%
\\
\>[B]{}\Varid{down}\;(\Conid{SingD}\;\Varid{a}\;\Varid{x})\;{}\<[19]%
\>[19]{}\Varid{d}{}\<[22]%
\>[22]{}\mathrel{=}\Varid{down}\;\Varid{x}\;\Varid{a}{}\<[E]%
\\
\>[B]{}\Varid{down}\;(\Conid{FunD}\;\Varid{a}\;\Varid{f})\;{}\<[19]%
\>[19]{}\Varid{d}{}\<[22]%
\>[22]{}\mathrel{=}\Conid{Ld}\;(\lambda \Varid{e}\to \Varid{down}\;(\Varid{f}\;(\Varid{up}\;\Varid{a}\;\Varid{e}))\;(\Varid{d}\mathbin{`\Varid{ap}`}(\Varid{up}\;\Varid{a}\;\Varid{e}))){}\<[E]%
\\
\>[B]{}\Varid{down}\;(\Conid{SumD}\;\Varid{a}\;\Varid{b})\;{}\<[19]%
\>[19]{}\Varid{d}{}\<[22]%
\>[22]{}\mathrel{=}\Conid{PairD}\;(\Varid{down}\;\Varid{a}\;(\Varid{pi1}\;\Varid{a}))\;(\Varid{down}\;(\Varid{b}\;(\Varid{pi1}\;\Varid{d}))\;(\Varid{pi2}\;\Varid{d})){}\<[E]%
\\
\>[B]{}\Varid{down}\;(\Conid{PrfD}\;\Varid{a})\;{}\<[19]%
\>[19]{}\Varid{d}{}\<[22]%
\>[22]{}\mathrel{=}\Conid{StarD}{}\<[E]%
\\
\>[B]{}\Varid{down}\;(\Conid{EnumD}\;\mathrm{1})\;{}\<[19]%
\>[19]{}\Varid{d}{}\<[22]%
\>[22]{}\mathrel{=}\Conid{ConstD}\;\mathrm{1}\;\mathrm{0}{}\<[E]%
\\
\>[B]{}\Varid{down}\;\Varid{d}\;{}\<[19]%
\>[19]{}\Varid{e}{}\<[22]%
\>[22]{}\mathrel{=}\Varid{e}{}\<[E]%
\\[\blanklineskip]%
\>[B]{}\Varid{downT}\mathbin{::}\Conid{DT}\to \Conid{DT}{}\<[E]%
\\
\>[B]{}\Varid{downT}\;(\Conid{SingD}\;\Varid{a}\;\Varid{x}){}\<[22]%
\>[22]{}\mathrel{=}\Conid{SingD}\;{}\<[31]%
\>[31]{}(\Varid{down}\;\Varid{x}\;\Varid{a})\;{}\<[43]%
\>[43]{}(\Varid{downT}\;\Varid{x}){}\<[E]%
\\
\>[B]{}\Varid{downT}\;(\Conid{FunD}\;\Varid{a}\;\Varid{f}){}\<[22]%
\>[22]{}\mathrel{=}\Conid{FunD}\;{}\<[31]%
\>[31]{}(\Varid{downT}\;\Varid{a})\;{}\<[43]%
\>[43]{}(\lambda \Varid{d}\to \Varid{downT}\;(\Varid{f}\;(\Varid{up}\;\Varid{a}\;\Varid{d}))){}\<[E]%
\\
\>[B]{}\Varid{downT}\;(\Conid{SumD}\;\Varid{a}\;\Varid{b}){}\<[22]%
\>[22]{}\mathrel{=}\Conid{SumD}\;{}\<[31]%
\>[31]{}(\Varid{downT}\;\Varid{a})\;{}\<[43]%
\>[43]{}(\lambda \Varid{d}\to \Varid{downT}\;(\Varid{b}\;(\Varid{up}\;\Varid{a}\;\Varid{d}))){}\<[E]%
\\
\>[B]{}\Varid{downT}\;(\Conid{PrfD}\;\Varid{a}){}\<[22]%
\>[22]{}\mathrel{=}\Conid{PrfD}\;{}\<[31]%
\>[31]{}(\Varid{downT}\;\Varid{a}){}\<[E]%
\\
\>[B]{}\Varid{downT}\;\Varid{d}{}\<[22]%
\>[22]{}\mathrel{=}\Varid{d}{}\<[E]%
\\[\blanklineskip]%
\>[B]{}\Varid{readback}\mathbin{::}\Conid{Int}\to \Conid{D}\to \Conid{Term}{}\<[E]%
\\
\>[B]{}\Varid{readback}\;\Varid{i}\;\Conid{UD}{}\<[31]%
\>[31]{}\mathrel{=}\Conid{U}{}\<[E]%
\\
\>[B]{}\Varid{readback}\;\Varid{i}\;(\Conid{FunD}\;\Varid{a}\;\Varid{f}){}\<[31]%
\>[31]{}\mathrel{=}\Conid{Fun}\;{}\<[43]%
\>[43]{}(\Varid{readback}\;\Varid{i}\;\Varid{a})\;(\Varid{readback}\;(\Varid{i}\mathbin{+}\mathrm{1})\;(\Varid{f}\;(\Conid{Vd}\;\Varid{i}))){}\<[E]%
\\
\>[B]{}\Varid{readback}\;\Varid{i}\;(\Conid{SingD}\;\Varid{a}\;\Varid{x}){}\<[31]%
\>[31]{}\mathrel{=}\Conid{Singl}\;{}\<[43]%
\>[43]{}(\Varid{readback}\;\Varid{i}\;\Varid{a})\;(\Varid{readback}\;\Varid{i}\;\Varid{x}){}\<[E]%
\\
\>[B]{}\Varid{readback}\;\Varid{i}\;(\Conid{Ld}\;\Varid{f}){}\<[31]%
\>[31]{}\mathrel{=}\Conid{Lam}\;{}\<[43]%
\>[43]{}(\Varid{readback}\;(\Varid{i}\mathbin{+}\mathrm{1})\;(\Varid{f}\;(\Conid{Vd}\;\Varid{i}))){}\<[E]%
\\
\>[B]{}\Varid{readback}\;\Varid{i}\;(\Conid{Vd}\;\Varid{n}){}\<[31]%
\>[31]{}\mathrel{=}\Varid{mkvar}\;{}\<[43]%
\>[43]{}(\Varid{i}\mathbin{-}\Varid{n}\mathbin{-}\mathrm{1}){}\<[E]%
\\
\>[B]{}\Varid{readback}\;\Varid{i}\;(\Conid{AppD}\;\Varid{k}\;\Varid{d}){}\<[31]%
\>[31]{}\mathrel{=}\Conid{App}\;{}\<[43]%
\>[43]{}(\Varid{readback}\;\Varid{i}\;\Varid{k})\;(\Varid{readback}\;\Varid{i}\;\Varid{d}){}\<[E]%
\\
\>[B]{}\Varid{readback}\;\Varid{i}\;(\Conid{FstD}\;\Varid{d}){}\<[31]%
\>[31]{}\mathrel{=}\Conid{Fst}\;{}\<[43]%
\>[43]{}(\Varid{readback}\;\Varid{i}\;\Varid{d}){}\<[E]%
\\
\>[B]{}\Varid{readback}\;\Varid{i}\;(\Conid{SndD}\;\Varid{d}){}\<[31]%
\>[31]{}\mathrel{=}\Conid{Snd}\;{}\<[43]%
\>[43]{}(\Varid{readback}\;\Varid{i}\;\Varid{d}){}\<[E]%
\\
\>[B]{}\Varid{readback}\;\Varid{i}\;(\Conid{PairD}\;\Varid{d}\;\Varid{e}){}\<[31]%
\>[31]{}\mathrel{=}\Conid{Pair}\;{}\<[43]%
\>[43]{}(\Varid{readback}\;\Varid{i}\;\Varid{d})\;(\Varid{readback}\;\Varid{i}\;\Varid{e}){}\<[E]%
\\
\>[B]{}\Varid{readback}\;\Varid{i}\;(\Conid{SumD}\;\Varid{a}\;\Varid{b}){}\<[31]%
\>[31]{}\mathrel{=}\Conid{Sigma}\;{}\<[43]%
\>[43]{}(\Varid{readback}\;\Varid{i}\;\Varid{a})\;(\Varid{readback}\;(\Varid{i}\mathbin{+}\mathrm{1})\;(\Varid{b}\;(\Conid{Vd}\;\Varid{i}))){}\<[E]%
\\
\>[B]{}\Varid{readback}\;\Varid{i}\;\Conid{NatD}{}\<[31]%
\>[31]{}\mathrel{=}\Conid{Nat}{}\<[E]%
\\
\>[B]{}\Varid{readback}\;\Varid{i}\;\Conid{ZeroD}{}\<[31]%
\>[31]{}\mathrel{=}\Conid{Zero}{}\<[E]%
\\
\>[B]{}\Varid{readback}\;\Varid{i}\;(\Conid{SucD}\;\Varid{e}){}\<[31]%
\>[31]{}\mathrel{=}\Conid{Suc}\;{}\<[43]%
\>[43]{}(\Varid{readback}\;\Varid{i}\;\Varid{e}){}\<[E]%
\\
\>[B]{}\Varid{readback}\;\Varid{i}\;(\Conid{NatrecD}\;\Varid{b}\;\Varid{z}\;\Varid{s}\;\Varid{e}){}\<[31]%
\>[31]{}\mathrel{=}\Conid{Natrec}\;{}\<[43]%
\>[43]{}(\Conid{Fun}\;\Conid{Nat}\;{}\<[53]%
\>[53]{}(\Varid{readback}\;(\Varid{i}\mathbin{+}\mathrm{1})\;(\Varid{b}\;(\Conid{Vd}\;\Varid{i}))))\;{}\<[E]%
\\
\>[43]{}(\Varid{readback}\;\Varid{i}\;\Varid{z})\;{}\<[E]%
\\
\>[43]{}(\Varid{readback}\;\Varid{i}\;\Varid{s})\;{}\<[E]%
\\
\>[43]{}(\Varid{readback}\;\Varid{i}\;\Varid{e}){}\<[E]%
\\
\>[B]{}\Varid{readback}\;\Varid{i}\;(\Conid{PrfD}\;\Varid{d}){}\<[32]%
\>[32]{}\mathrel{=}\Conid{Prf}\;{}\<[43]%
\>[43]{}(\Varid{readback}\;\Varid{i}\;\Varid{d}){}\<[E]%
\\
\>[B]{}\Varid{readback}\;\Varid{i}\;\Conid{StarD}{}\<[32]%
\>[32]{}\mathrel{=}\Conid{Star}{}\<[E]%
\\
\>[B]{}\Varid{readback}\;\Varid{i}\;(\Conid{EnumD}\;\Varid{n}){}\<[32]%
\>[32]{}\mathrel{=}\Conid{Enum}\;\Varid{n}{}\<[E]%
\\
\>[B]{}\Varid{readback}\;\Varid{i}\;(\Conid{ConstD}\;\Varid{n}\;\Varid{j}){}\<[32]%
\>[32]{}\mathrel{=}\Conid{Const}\;\Varid{n}\;{}\<[43]%
\>[43]{}\Varid{j}{}\<[E]%
\\
\>[B]{}\Varid{readback}\;\Varid{i}\;(\Conid{CaseD}\;\Varid{n}\;\Varid{b}\;\Varid{ds}\;\Varid{d}){}\<[32]%
\>[32]{}\mathrel{=}\Conid{Case}\;\Varid{n}\;{}\<[43]%
\>[43]{}(\Varid{readback}\;(\Varid{i}\mathbin{+}\mathrm{1})\;(\Varid{b}\;(\Conid{Vd}\;\Varid{i})))\;{}\<[E]%
\\
\>[43]{}(\Varid{map}\;(\Varid{readback}\;\Varid{i})\;\Varid{ds})\;{}\<[E]%
\\
\>[43]{}(\Varid{readback}\;\Varid{i}\;\Varid{d}){}\<[E]%
\\[\blanklineskip]%
\>[B]{}\mbox{\onelinecomment  Evaluation}{}\<[E]%
\\[\blanklineskip]%
\>[B]{}\mathbf{type}\;\Conid{Env}\mathrel{=}\Conid{D}{}\<[E]%
\\[\blanklineskip]%
\>[B]{}\Varid{eval}\mathbin{::}\Conid{Term}\to \Conid{Env}\to \Conid{D}{}\<[E]%
\\
\>[B]{}\Varid{eval}\;\Conid{U}\;{}\<[24]%
\>[24]{}\Varid{d}\mathrel{=}\Conid{UD}{}\<[E]%
\\
\>[B]{}\Varid{eval}\;(\Conid{Fun}\;\Varid{t}\;\Varid{f})\;{}\<[24]%
\>[24]{}\Varid{d}\mathrel{=}\Conid{FunD}\;(\Varid{eval}\;\Varid{t}\;\Varid{d})\;(\lambda \Varid{d'}\to \Varid{eval}\;\Varid{f}\;(\Conid{PairD}\;\Varid{d}\;\Varid{d'})){}\<[E]%
\\
\>[B]{}\Varid{eval}\;(\Conid{Singl}\;\Varid{t}\;\Varid{a})\;{}\<[24]%
\>[24]{}\Varid{d}\mathrel{=}\Conid{SingD}\;(\Varid{eval}\;\Varid{t}\;\Varid{d})\;(\Varid{eval}\;\Varid{a}\;\Varid{d}){}\<[E]%
\\
\>[B]{}\Varid{eval}\;(\Conid{Lam}\;\Varid{t})\;{}\<[24]%
\>[24]{}\Varid{d}\mathrel{=}\Conid{Ld}\;(\lambda \Varid{d'}\to \Varid{eval}\;\Varid{t}\;(\Conid{PairD}\;\Varid{d}\;\Varid{d'})){}\<[E]%
\\
\>[B]{}\Varid{eval}\;(\Conid{App}\;\Varid{t}\;\Varid{r})\;{}\<[24]%
\>[24]{}\Varid{d}\mathrel{=}(\Varid{eval}\;\Varid{t}\;\Varid{d})\mathbin{`\Varid{ap}`}(\Varid{eval}\;\Varid{r}\;\Varid{d}){}\<[E]%
\\
\>[B]{}\Varid{eval}\;\Conid{Q}\;{}\<[24]%
\>[24]{}\Varid{d}\mathrel{=}\Varid{pi2}\;\Varid{d}{}\<[E]%
\\
\>[B]{}\Varid{eval}\;(\Conid{Sub}\;\Varid{t}\;\Varid{s})\;{}\<[24]%
\>[24]{}\Varid{d}\mathrel{=}\Varid{eval}\;\Varid{t}\;(\Varid{evalS}\;\Varid{s}\;\Varid{d}){}\<[E]%
\\[\blanklineskip]%
\>[B]{}\Varid{eval}\;(\Conid{Sigma}\;\Varid{t}\;\Varid{r})\;{}\<[24]%
\>[24]{}\Varid{d}\mathrel{=}\Conid{SumD}\;(\Varid{eval}\;\Varid{t}\;\Varid{d})\;(\lambda \Varid{e}\to \Varid{eval}\;\Varid{r}\;(\Conid{PairD}\;\Varid{d}\;\Varid{e})){}\<[E]%
\\
\>[B]{}\Varid{eval}\;(\Conid{Fst}\;\Varid{t})\;{}\<[24]%
\>[24]{}\Varid{d}\mathrel{=}\Varid{pi1}\;(\Varid{eval}\;\Varid{t}\;\Varid{d}){}\<[E]%
\\
\>[B]{}\Varid{eval}\;(\Conid{Snd}\;\Varid{t})\;{}\<[24]%
\>[24]{}\Varid{d}\mathrel{=}\Varid{pi2}\;(\Varid{eval}\;\Varid{t}\;\Varid{d}){}\<[E]%
\\
\>[B]{}\Varid{eval}\;(\Conid{Pair}\;\Varid{t}\;\Varid{r})\;{}\<[24]%
\>[24]{}\Varid{d}\mathrel{=}\Conid{PairD}\;(\Varid{eval}\;\Varid{t}\;\Varid{d})\;(\Varid{eval}\;\Varid{r}\;\Varid{d}){}\<[E]%
\\[\blanklineskip]%
\>[B]{}\Varid{eval}\;\Conid{Nat}\;{}\<[24]%
\>[24]{}\Varid{d}\mathrel{=}\Conid{NatD}{}\<[E]%
\\
\>[B]{}\Varid{eval}\;\Conid{Zero}\;{}\<[24]%
\>[24]{}\Varid{d}\mathrel{=}\Conid{ZeroD}{}\<[E]%
\\
\>[B]{}\Varid{eval}\;(\Conid{Suc}\;\Varid{t})\;{}\<[24]%
\>[24]{}\Varid{d}\mathrel{=}\Conid{SucD}\;(\Varid{eval}\;\Varid{t}\;\Varid{d}){}\<[E]%
\\
\>[B]{}\Varid{eval}\;(\Conid{Natrec}\;\Varid{b}\;\Varid{z}\;\Varid{s}\;\Varid{t})\;{}\<[24]%
\>[24]{}\Varid{d}\mathrel{=}\Varid{natrec}\;{}\<[36]%
\>[36]{}(\lambda \Varid{e}\to \Varid{eval}\;\Varid{b}\;(\Conid{PairD}\;\Varid{d}\;\Varid{e}))\;{}\<[E]%
\\
\>[36]{}(\Varid{eval}\;\Varid{z}\;\Varid{d})\;{}\<[E]%
\\
\>[36]{}(\Varid{eval}\;\Varid{s}\;\Varid{d})\;{}\<[E]%
\\
\>[36]{}(\Varid{eval}\;\Varid{t}\;\Varid{d}){}\<[E]%
\\[\blanklineskip]%
\>[B]{}\Varid{eval}\;(\Conid{Prf}\;\Varid{t})\;{}\<[24]%
\>[24]{}\Varid{d}\mathrel{=}\Conid{PrfD}\;(\Varid{eval}\;\Varid{t}\;\Varid{d}){}\<[E]%
\\
\>[B]{}\Varid{eval}\;(\Conid{Box}\;\Varid{t})\;{}\<[24]%
\>[24]{}\Varid{d}\mathrel{=}\Conid{StarD}{}\<[E]%
\\
\>[B]{}\Varid{eval}\;\Conid{Star}\;{}\<[24]%
\>[24]{}\Varid{d}\mathrel{=}\Conid{StarD}{}\<[E]%
\\
\>[B]{}\Varid{eval}\;(\Conid{Where}\;\Varid{t}\;\Varid{b}\;\Varid{p})\;{}\<[24]%
\>[24]{}\Varid{d}\mathrel{=}\Varid{eval}\;\Varid{b}\;(\Conid{PairD}\;\Varid{d}\;\Conid{StarD}){}\<[E]%
\\
\>[B]{}\Varid{eval}\;(\Conid{Enum}\;\Varid{n})\;{}\<[24]%
\>[24]{}\Varid{d}\mathrel{=}\Conid{EnumD}\;\Varid{n}{}\<[E]%
\\
\>[B]{}\Varid{eval}\;(\Conid{Const}\;\Varid{n}\;\Varid{i})\;{}\<[24]%
\>[24]{}\Varid{d}\mathrel{=}\Conid{ConstD}\;\Varid{n}\;\Varid{i}{}\<[E]%
\\
\>[B]{}\Varid{eval}\;(\Conid{Case}\;\Varid{n}\;\Varid{b}\;\Varid{ts}\;\Varid{t})\;{}\<[24]%
\>[24]{}\Varid{d}\mathrel{=}\Varid{caseD}\;\Varid{n}\;{}\<[37]%
\>[37]{}(\lambda \Varid{e}\to \Varid{eval}\;\Varid{b}\;(\Conid{PairD}\;\Varid{d}\;\Varid{e}))\;{}\<[E]%
\\
\>[37]{}(\Varid{map}\;((\Varid{flip}\;\Varid{eval})\;\Varid{d})\;\Varid{ts})\;{}\<[E]%
\\
\>[37]{}(\Varid{eval}\;\Varid{t}\;\Varid{d}){}\<[E]%
\\[\blanklineskip]%
\>[B]{}\Varid{evalS}\mathbin{::}\Conid{Subst}\to \Conid{Env}\to \Conid{Env}{}\<[E]%
\\
\>[B]{}\Varid{evalS}\;\Conid{E}\;{}\<[20]%
\>[20]{}\Varid{d}\mathrel{=}\Conid{T}{}\<[E]%
\\
\>[B]{}\Varid{evalS}\;\Conid{Is}\;{}\<[20]%
\>[20]{}\Varid{d}\mathrel{=}\Varid{d}{}\<[E]%
\\
\>[B]{}\Varid{evalS}\;(\Conid{Ext}\;\Varid{s}\;\Varid{t})\;{}\<[20]%
\>[20]{}\Varid{d}\mathrel{=}\Conid{PairD}\;(\Varid{evalS}\;\Varid{s}\;\Varid{d})\;(\Varid{eval}\;\Varid{t}\;\Varid{d}){}\<[E]%
\\
\>[B]{}\Varid{evalS}\;\Conid{P}\;{}\<[20]%
\>[20]{}\Varid{d}\mathrel{=}\Varid{pi1}\;\Varid{d}{}\<[E]%
\\
\>[B]{}\Varid{evalS}\;(\Conid{Comp}\;\Varid{s}\;\Varid{s'})\;{}\<[20]%
\>[20]{}\Varid{d}\mathrel{=}(\Varid{evalS}\;\Varid{s}\mathbin{\circ}\Varid{evalS}\;\Varid{s'})\;\Varid{d}{}\<[E]%
\\[\blanklineskip]%
\>[B]{}\Varid{nbe}\mathbin{::}\Conid{Type}\to \Conid{Term}\to \Conid{Term}{}\<[E]%
\\
\>[B]{}\Varid{nbe}\;\Varid{ty}\;\Varid{t}\mathrel{=}\Varid{readback}\;\mathrm{0}\;(\Varid{down}\;(\Varid{eval}\;\Varid{ty}\;\Conid{T})\;(\Varid{eval}\;\Varid{t}\;\Conid{T})){}\<[E]%
\\[\blanklineskip]%
\>[B]{}\Varid{nbeTy}\mathbin{::}\Conid{Type}\to \Conid{Type}{}\<[E]%
\\
\>[B]{}\Varid{nbeTy}\;\Varid{ty}\mathrel{=}\Varid{readback}\;\mathrm{0}\;(\Varid{downT}\;(\Varid{eval}\;\Varid{ty}\;\Conid{T})){}\<[E]%
\\[\blanklineskip]%
\>[B]{}\Varid{nbeOpen}\mathbin{::}\Conid{Ctx}\to \Conid{Type}\to \Conid{Term}\to \Conid{Term}{}\<[E]%
\\
\>[B]{}\Varid{nbeOpen}\;\Varid{ctx}\;\Varid{ty}\;\Varid{t}{}\<[19]%
\>[19]{}\mathrel{=}\Varid{readback}\;\Varid{n}\;(\Varid{down}\;(\Varid{eval}\;\Varid{ty}\;\Varid{env})\;(\Varid{eval}\;\Varid{t}\;\Varid{env})){}\<[E]%
\\
\>[B]{}\hsindent{5}{}\<[5]%
\>[5]{}\mathbf{where}\;{}\<[12]%
\>[12]{}\Varid{n}{}\<[19]%
\>[19]{}\mathrel{=}\Varid{length}\;\Varid{ctx}{}\<[E]%
\\
\>[12]{}\Varid{env}{}\<[19]%
\>[19]{}\mathrel{=}\Varid{mkenv}\;\Varid{n}\;\Varid{ctx}{}\<[E]%
\\[\blanklineskip]%
\>[B]{}\Varid{nbeOpenTy}\mathbin{::}\Conid{Ctx}\to \Conid{Type}\to \Conid{Type}{}\<[E]%
\\
\>[B]{}\Varid{nbeOpenTy}\;\Varid{ctx}\;\Varid{ty}{}\<[19]%
\>[19]{}\mathrel{=}\Varid{readback}\;\Varid{n}\;(\Varid{downT}\;(\Varid{eval}\;\Varid{ty}\;\Varid{env})){}\<[E]%
\\
\>[B]{}\hsindent{5}{}\<[5]%
\>[5]{}\mathbf{where}\;{}\<[12]%
\>[12]{}\Varid{n}{}\<[19]%
\>[19]{}\mathrel{=}\Varid{length}\;\Varid{ctx}{}\<[E]%
\\
\>[12]{}\Varid{env}{}\<[19]%
\>[19]{}\mathrel{=}\Varid{mkenv}\;\Varid{n}\;\Varid{ctx}{}\<[E]%
\\[\blanklineskip]%
\>[B]{}\Varid{mkenv}\mathbin{::}\Conid{Int}\to \Conid{Ctx}\to \Conid{Env}{}\<[E]%
\\
\>[B]{}\Varid{mkenv}\;\mathrm{0}\;[\mskip1.5mu \mskip1.5mu]{}\<[19]%
\>[19]{}\mathrel{=}\Conid{T}{}\<[E]%
\\
\>[B]{}\Varid{mkenv}\;\Varid{n}\;(\Varid{t}\mathbin{:}\Varid{ts}){}\<[19]%
\>[19]{}\mathrel{=}\Conid{PairD}\;\Varid{d'}\;(\Varid{up}\;\Varid{td}\;(\Conid{Vd}\;(\Varid{n}\mathbin{-}\mathrm{1}))){}\<[E]%
\\
\>[B]{}\hsindent{5}{}\<[5]%
\>[5]{}\mathbf{where}\;{}\<[12]%
\>[12]{}\Varid{d'}{}\<[19]%
\>[19]{}\mathrel{=}\Varid{mkenv}\;(\Varid{n}\mathbin{-}\mathrm{1})\;\Varid{ts}{}\<[E]%
\\
\>[12]{}\Varid{td}{}\<[19]%
\>[19]{}\mathrel{=}\Varid{eval}\;\Varid{t}\;\Varid{d'}{}\<[E]%
\\[\blanklineskip]%
\>[B]{}\Varid{mkvar}\mathbin{::}\Conid{Int}\to \Conid{Term}{}\<[E]%
\\
\>[B]{}\Varid{mkvar}\;\Varid{n}{}\<[10]%
\>[10]{}\mid \Varid{n}\equiv \mathrm{0}{}\<[23]%
\>[23]{}\mathrel{=}\Conid{Q}{}\<[E]%
\\
\>[10]{}\mid \Varid{otherwise}{}\<[23]%
\>[23]{}\mathrel{=}\Conid{Sub}\;\Conid{Q}\;(\Varid{subs}\;(\Varid{n}\mathbin{-}\mathrm{1})){}\<[E]%
\\[\blanklineskip]%
\>[B]{}\Varid{subs}\;\Varid{n}{}\<[9]%
\>[9]{}\mid \Varid{n}\equiv \mathrm{0}{}\<[23]%
\>[23]{}\mathrel{=}\Conid{P}{}\<[E]%
\\
\>[B]{}\Varid{subs}\;\Varid{n}{}\<[9]%
\>[9]{}\mid \Varid{otherwise}{}\<[23]%
\>[23]{}\mathrel{=}\Conid{Comp}\;\Conid{P}\;(\Varid{subs}\;(\Varid{n}\mathbin{-}\mathrm{1})){}\<[E]%
\ColumnHook
\end{hscode}\resethooks

\section{Type-checking algorithm}
\label{sec:alg}

Type checking algorithm for normal forms, and type inference algorithm
for neutral terms.

\subsection*{Checking well-formedness of types}\bla\par

\begin{hscode}\SaveRestoreHook
\column{B}{@{}>{\hspre}l<{\hspost}@{}}%
\column{13}{@{}>{\hspre}l<{\hspost}@{}}%
\column{35}{@{}>{\hspre}l<{\hspost}@{}}%
\column{E}{@{}>{\hspre}l<{\hspost}@{}}%
\>[B]{}\Varid{chkType}\mathbin{::}\Conid{Ctx}\to \Conid{Type}\to \Conid{Bool}{}\<[E]%
\\
\>[B]{}\Varid{chkType}\;\Varid{ts}\;{}\<[13]%
\>[13]{}\Conid{U}{}\<[35]%
\>[35]{}\mathrel{=}\Conid{True}{}\<[E]%
\\
\>[B]{}\Varid{chkType}\;\Varid{ts}\;{}\<[13]%
\>[13]{}(\Conid{Fun}\;\Varid{t}\;\Varid{r}){}\<[35]%
\>[35]{}\mathrel{=}\Varid{chkType}\;\Varid{ts}\;\Varid{t}\mathrel{\wedge}\Varid{chkType}\;(\Varid{t}\mathbin{:}\Varid{ts})\;\Varid{r}{}\<[E]%
\\
\>[B]{}\Varid{chkType}\;\Varid{ts}\;{}\<[13]%
\>[13]{}(\Conid{Singl}\;\Varid{a}\;\Varid{t}){}\<[35]%
\>[35]{}\mathrel{=}\Varid{chkType}\;\Varid{ts}\;\Varid{t}\mathrel{\wedge}\Varid{chkTerm}\;\Varid{ts}\;\Varid{t}\;\Varid{a}{}\<[E]%
\\
\>[B]{}\Varid{chkType}\;\Varid{ts}\;{}\<[13]%
\>[13]{}(\Conid{Sigma}\;\Varid{t}\;\Varid{r}){}\<[35]%
\>[35]{}\mathrel{=}\Varid{chkType}\;\Varid{ts}\;\Varid{t}\mathrel{\wedge}\Varid{chkType}\;(\Varid{t}\mathbin{:}\Varid{ts})\;\Varid{r}{}\<[E]%
\\
\>[B]{}\Varid{chkType}\;\Varid{ts}\;{}\<[13]%
\>[13]{}\Conid{Nat}{}\<[35]%
\>[35]{}\mathrel{=}\Conid{True}{}\<[E]%
\\
\>[B]{}\Varid{chkType}\;\Varid{ts}\;{}\<[13]%
\>[13]{}(\Conid{Prf}\;\Varid{t}){}\<[35]%
\>[35]{}\mathrel{=}\Varid{chkType}\;\Varid{ts}\;\Varid{t}{}\<[E]%
\\
\>[B]{}\Varid{chkType}\;\Varid{ts}\;{}\<[13]%
\>[13]{}(\Conid{Enum}\;\Varid{n}){}\<[35]%
\>[35]{}\mathrel{=}\Conid{True}{}\<[E]%
\\
\>[B]{}\Varid{chkType}\;\Varid{ts}\;{}\<[13]%
\>[13]{}\Conid{Q}{}\<[35]%
\>[35]{}\mathrel{=}\Varid{chkNeTerm}\;\Varid{ts}\;\Conid{U}\;\Conid{Q}{}\<[E]%
\\
\>[B]{}\Varid{chkType}\;\Varid{ts}\;{}\<[13]%
\>[13]{}\Varid{w}\mathord{@}(\Conid{Sub}\;\Conid{Q}\;\Varid{s}){}\<[35]%
\>[35]{}\mathrel{=}\Varid{chkNeTerm}\;\Varid{ts}\;\Conid{U}\;\Varid{w}{}\<[E]%
\\
\>[B]{}\Varid{chkType}\;\Varid{ts}\;{}\<[13]%
\>[13]{}\Varid{w}\mathord{@}(\Conid{App}\;\Varid{k}\;\Varid{v}){}\<[35]%
\>[35]{}\mathrel{=}\Varid{chkNeTerm}\;\Varid{ts}\;\Conid{U}\;\Varid{w}{}\<[E]%
\\
\>[B]{}\Varid{chkType}\;\Varid{ts}\;{}\<[13]%
\>[13]{}\Varid{w}\mathord{@}(\Conid{Fst}\;\Varid{k}){}\<[35]%
\>[35]{}\mathrel{=}\Varid{chkNeTerm}\;\Varid{ts}\;\Conid{U}\;\Varid{w}{}\<[E]%
\\
\>[B]{}\Varid{chkType}\;\Varid{ts}\;{}\<[13]%
\>[13]{}\Varid{w}\mathord{@}(\Conid{Snd}\;\Varid{k}){}\<[35]%
\>[35]{}\mathrel{=}\Varid{chkNeTerm}\;\Varid{ts}\;\Conid{U}\;\Varid{w}{}\<[E]%
\\
\>[B]{}\Varid{chkType}\;\Varid{ts}\;{}\<[13]%
\>[13]{}\Varid{w}\mathord{@}(\Conid{Natrec}\;\Varid{t'}\;\Varid{v}\;\Varid{v'}\;\Varid{k}){}\<[35]%
\>[35]{}\mathrel{=}\Varid{chkNeTerm}\;\Varid{ts}\;\Conid{U}\;\Varid{w}{}\<[E]%
\\
\>[B]{}\Varid{chkType}\;\anonymous \;{}\<[13]%
\>[13]{}\anonymous {}\<[35]%
\>[35]{}\mathrel{=}\Conid{False}{}\<[E]%
\ColumnHook
\end{hscode}\resethooks

\subsection*{Checking the types of terms}\bla\par

\begin{hscode}\SaveRestoreHook
\column{B}{@{}>{\hspre}l<{\hspost}@{}}%
\column{13}{@{}>{\hspre}l<{\hspost}@{}}%
\column{20}{@{}>{\hspre}l<{\hspost}@{}}%
\column{21}{@{}>{\hspre}l<{\hspost}@{}}%
\column{26}{@{}>{\hspre}l<{\hspost}@{}}%
\column{27}{@{}>{\hspre}l<{\hspost}@{}}%
\column{29}{@{}>{\hspre}l<{\hspost}@{}}%
\column{30}{@{}>{\hspre}l<{\hspost}@{}}%
\column{41}{@{}>{\hspre}c<{\hspost}@{}}%
\column{41E}{@{}l@{}}%
\column{44}{@{}>{\hspre}l<{\hspost}@{}}%
\column{E}{@{}>{\hspre}l<{\hspost}@{}}%
\>[B]{}\Varid{sgSub}\mathbin{::}\Conid{Term}\to \Conid{Term}\to \Conid{Term}{}\<[E]%
\\
\>[B]{}\Varid{sgSub}\;\Varid{t}\;\Varid{t'}\mathrel{=}\Conid{Sub}\;\Varid{t}\;(\Conid{Ext}\;\Conid{Is}\;\Varid{t'}){}\<[E]%
\\[\blanklineskip]%
\>[B]{}\Varid{chkTerm}\mathbin{::}\Conid{Ctx}\to \Conid{Type}\to \Conid{Term}\to \Conid{Bool}{}\<[E]%
\\
\>[B]{}\Varid{chkTerm}\;\Varid{ts}\;{}\<[13]%
\>[13]{}\Conid{U}\;{}\<[26]%
\>[26]{}(\Conid{Fun}\;\Varid{t}\;\Varid{t'}){}\<[41]%
\>[41]{}\mathrel{=}{}\<[41E]%
\>[44]{}\Varid{chkTerm}\;\Varid{ts}\;\Conid{U}\;\Varid{t}\mathrel{\wedge}{}\<[E]%
\\
\>[44]{}\Varid{chkTerm}\;(\Varid{t}\mathbin{:}\Varid{ts})\;\Conid{U}\;\Varid{t'}{}\<[E]%
\\
\>[B]{}\Varid{chkTerm}\;\Varid{ts}\;{}\<[13]%
\>[13]{}\Conid{U}\;{}\<[26]%
\>[26]{}(\Conid{Singl}\;\Varid{e}\;\Varid{t}){}\<[41]%
\>[41]{}\mathrel{=}{}\<[41E]%
\>[44]{}\Varid{chkTerm}\;\Varid{ts}\;\Conid{U}\;\Varid{t}\mathrel{\wedge}{}\<[E]%
\\
\>[44]{}\Varid{chkTerm}\;\Varid{ts}\;\Varid{t}\;\Varid{e}{}\<[E]%
\\
\>[B]{}\Varid{chkTerm}\;\Varid{ts}\;{}\<[13]%
\>[13]{}\Conid{U}\;{}\<[26]%
\>[26]{}(\Conid{Sigma}\;\Varid{t}\;\Varid{t'}){}\<[41]%
\>[41]{}\mathrel{=}{}\<[41E]%
\>[44]{}\Varid{chkTerm}\;\Varid{ts}\;\Conid{U}\;\Varid{t}\mathrel{\wedge}{}\<[E]%
\\
\>[44]{}\Varid{chkTerm}\;(\Varid{t}\mathbin{:}\Varid{ts})\;\Conid{U}\;\Varid{t'}{}\<[E]%
\\
\>[B]{}\Varid{chkTerm}\;\Varid{ts}\;{}\<[13]%
\>[13]{}\Conid{U}\;{}\<[27]%
\>[27]{}\Conid{Nat}{}\<[41]%
\>[41]{}\mathrel{=}{}\<[41E]%
\>[44]{}\Conid{True}{}\<[E]%
\\
\>[B]{}\Varid{chkTerm}\;\Varid{ts}\;{}\<[13]%
\>[13]{}(\Conid{Fun}\;\Varid{t}\;\Varid{t'})\;{}\<[26]%
\>[26]{}(\Conid{Lam}\;\Varid{e}){}\<[41]%
\>[41]{}\mathrel{=}{}\<[41E]%
\>[44]{}\Varid{chkTerm}\;(\Varid{t}\mathbin{:}\Varid{ts})\;\Varid{t'}\;\Varid{e}{}\<[E]%
\\
\>[B]{}\Varid{chkTerm}\;\Varid{ts}\;{}\<[13]%
\>[13]{}(\Conid{Singl}\;\Varid{e}\;\Varid{t})\;{}\<[26]%
\>[26]{}\Varid{e'}{}\<[41]%
\>[41]{}\mathrel{=}{}\<[41E]%
\>[44]{}\Varid{chkTerm}\;\Varid{ts}\;(\Varid{nbeOpenTy}\;\Varid{ts}\;\Varid{t})\;\Varid{e'}\mathrel{\wedge}{}\<[E]%
\\
\>[44]{}(\Varid{nbeOpen}\;\Varid{ts}\;\Varid{e}\;\Varid{t})\equiv (\Varid{nbeOpen}\;\Varid{ts}\;\Varid{e'}\;\Varid{t}){}\<[E]%
\\
\>[B]{}\Varid{chkTerm}\;\Varid{ts}\;{}\<[13]%
\>[13]{}(\Conid{Sigma}\;\Varid{t}\;\Varid{r})\;{}\<[26]%
\>[26]{}(\Conid{Pair}\;\Varid{e}\;\Varid{e'}){}\<[41]%
\>[41]{}\mathrel{=}{}\<[41E]%
\>[44]{}\Varid{chkTerm}\;\Varid{ts}\;\Varid{t}\;\Varid{e}\mathrel{\wedge}{}\<[E]%
\\
\>[44]{}\Varid{chkTerm}\;\Varid{ts}\;(\Varid{nbeOpenTy}\;\Varid{ts}\;(\Varid{sgSub}\;\Varid{r}\;\Varid{e}))\;\Varid{e'}{}\<[E]%
\\
\>[B]{}\Varid{chkTerm}\;\Varid{ts}\;{}\<[13]%
\>[13]{}\Conid{Nat}\;{}\<[26]%
\>[26]{}\Conid{Zero}{}\<[41]%
\>[41]{}\mathrel{=}{}\<[41E]%
\>[44]{}\Conid{True}{}\<[E]%
\\
\>[B]{}\Varid{chkTerm}\;\Varid{ts}\;{}\<[13]%
\>[13]{}\Conid{Nat}\;{}\<[26]%
\>[26]{}(\Conid{Suc}\;\Varid{t}){}\<[41]%
\>[41]{}\mathrel{=}{}\<[41E]%
\>[44]{}\Varid{chkTerm}\;\Varid{ts}\;\Conid{Nat}\;\Varid{t}{}\<[E]%
\\
\>[B]{}\Varid{chkTerm}\;\Varid{ts}\;{}\<[13]%
\>[13]{}(\Conid{Prf}\;\Varid{t})\;{}\<[26]%
\>[26]{}(\Conid{Box}\;\Varid{e}){}\<[41]%
\>[41]{}\mathrel{=}{}\<[41E]%
\>[44]{}\Varid{chkTerm}\;\Varid{ts}\;\Varid{t}\;\Varid{e}{}\<[E]%
\\
\>[B]{}\Varid{chkTerm}\;\Varid{ts}\;{}\<[13]%
\>[13]{}(\Conid{Enum}\;\Varid{n})\;{}\<[26]%
\>[26]{}(\Conid{Const}\;\Varid{m}\;\Varid{i}){}\<[41]%
\>[41]{}\mathrel{=}{}\<[41E]%
\>[44]{}\Varid{m}\equiv \Varid{n}\mathrel{\wedge}\Varid{i}\mathbin{<}\Varid{n}{}\<[E]%
\\
\>[B]{}\Varid{chkTerm}\;\Varid{ts}\;{}\<[13]%
\>[13]{}\Varid{t}\;{}\<[26]%
\>[26]{}\Varid{e}\mid \Varid{neutral}\;\Varid{e}{}\<[41]%
\>[41]{}\mathrel{=}{}\<[41E]%
\>[44]{}\Varid{chkNeTerm}\;\Varid{ts}\;\Varid{t}\;\Varid{e}{}\<[E]%
\\
\>[B]{}\Varid{chkTerm}\;\anonymous \;{}\<[13]%
\>[13]{}\anonymous \;{}\<[26]%
\>[26]{}\anonymous {}\<[41]%
\>[41]{}\mathrel{=}{}\<[41E]%
\>[44]{}\Conid{False}{}\<[E]%
\\[\blanklineskip]%
\>[B]{}\Varid{neutral}\mathbin{::}\Conid{Term}\to \Conid{Bool}{}\<[E]%
\\
\>[B]{}\Varid{neutral}\;\Conid{Q}{}\<[29]%
\>[29]{}\mathrel{=}\Conid{True}{}\<[E]%
\\
\>[B]{}\Varid{neutral}\;(\Conid{Sub}\;\Conid{Q}\;\Varid{s}){}\<[29]%
\>[29]{}\mathrel{=}\Conid{True}{}\<[E]%
\\
\>[B]{}\Varid{neutral}\;(\Conid{App}\;\Varid{k}\;\Varid{v}){}\<[29]%
\>[29]{}\mathrel{=}\Conid{True}{}\<[E]%
\\
\>[B]{}\Varid{neutral}\;(\Conid{Fst}\;\Varid{k}){}\<[29]%
\>[29]{}\mathrel{=}\Conid{True}{}\<[E]%
\\
\>[B]{}\Varid{neutral}\;(\Conid{Snd}\;\Varid{k}){}\<[29]%
\>[29]{}\mathrel{=}\Conid{True}{}\<[E]%
\\
\>[B]{}\Varid{neutral}\;(\Conid{Natrec}\;\Varid{t'}\;\Varid{v}\;\Varid{v'}\;\Varid{k}){}\<[29]%
\>[29]{}\mathrel{=}\Conid{True}{}\<[E]%
\\
\>[B]{}\Varid{neutral}\;(\Conid{Case}\;\Varid{n}\;\Varid{b}\;\Varid{ts}\;\Varid{t}){}\<[29]%
\>[29]{}\mathrel{=}\Conid{True}{}\<[E]%
\\
\>[B]{}\Varid{neutral}\;(\Conid{Where}\;\Varid{t}\;\Varid{b}\;\Varid{p}){}\<[29]%
\>[29]{}\mathrel{=}\Conid{True}{}\<[E]%
\\
\>[B]{}\Varid{neutral}\;\anonymous {}\<[29]%
\>[29]{}\mathrel{=}\Conid{False}{}\<[E]%
\\[\blanklineskip]%
\>[B]{}\Varid{erase}\mathbin{::}\Conid{Type}\to \Conid{Type}{}\<[E]%
\\
\>[B]{}\Varid{erase}\;(\Conid{Singl}\;\Varid{e}\;\Varid{t}){}\<[20]%
\>[20]{}\mathrel{=}\Varid{erase}\;\Varid{t}{}\<[E]%
\\
\>[B]{}\Varid{erase}\;\Varid{t}{}\<[20]%
\>[20]{}\mathrel{=}\Varid{t}{}\<[E]%
\\[\blanklineskip]%
\>[B]{}\Varid{maybeEr}\mathbin{::}\Conid{Maybe}\;\Conid{Type}\to \Conid{Maybe}\;\Conid{Type}{}\<[E]%
\\
\>[B]{}\Varid{maybeEr}\mathrel{=}\Varid{maybe}\;\Conid{Nothing}\;(\Conid{Just}\mathbin{\circ}\Varid{erase}){}\<[E]%
\\[\blanklineskip]%
\>[B]{}\Varid{chkNeTerm}\mathbin{::}\Conid{Ctx}\to \Conid{Type}\to \Conid{Term}\to \Conid{Bool}{}\<[E]%
\\
\>[B]{}\Varid{chkNeTerm}\;\Varid{ts}\;\Varid{t}\;\Varid{e}\mathrel{=}{}\<[21]%
\>[21]{}\mathbf{case}\;{}\<[27]%
\>[27]{}\Varid{maybeEr}\;(\Varid{infType}\;\Varid{ts}\;\Varid{e})\;\mathbf{of}{}\<[E]%
\\
\>[21]{}\Conid{Just}\;\Varid{t'}{}\<[30]%
\>[30]{}\to \Varid{t}\equiv \Varid{t'}{}\<[E]%
\\
\>[21]{}\Conid{Nothing}{}\<[30]%
\>[30]{}\to \Conid{False}{}\<[E]%
\ColumnHook
\end{hscode}\resethooks

\subsection*{Inferring the types of neutral terms}\bla\par

\begin{hscode}\SaveRestoreHook
\column{B}{@{}>{\hspre}l<{\hspost}@{}}%
\column{9}{@{}>{\hspre}l<{\hspost}@{}}%
\column{13}{@{}>{\hspre}l<{\hspost}@{}}%
\column{16}{@{}>{\hspre}l<{\hspost}@{}}%
\column{17}{@{}>{\hspre}l<{\hspost}@{}}%
\column{19}{@{}>{\hspre}l<{\hspost}@{}}%
\column{22}{@{}>{\hspre}l<{\hspost}@{}}%
\column{29}{@{}>{\hspre}l<{\hspost}@{}}%
\column{30}{@{}>{\hspre}c<{\hspost}@{}}%
\column{30E}{@{}l@{}}%
\column{33}{@{}>{\hspre}l<{\hspost}@{}}%
\column{35}{@{}>{\hspre}l<{\hspost}@{}}%
\column{37}{@{}>{\hspre}l<{\hspost}@{}}%
\column{39}{@{}>{\hspre}l<{\hspost}@{}}%
\column{43}{@{}>{\hspre}l<{\hspost}@{}}%
\column{47}{@{}>{\hspre}l<{\hspost}@{}}%
\column{50}{@{}>{\hspre}l<{\hspost}@{}}%
\column{52}{@{}>{\hspre}l<{\hspost}@{}}%
\column{53}{@{}>{\hspre}l<{\hspost}@{}}%
\column{55}{@{}>{\hspre}l<{\hspost}@{}}%
\column{57}{@{}>{\hspre}l<{\hspost}@{}}%
\column{E}{@{}>{\hspre}l<{\hspost}@{}}%
\>[B]{}\Varid{nbeType}\mathbin{::}\Conid{Ctx}\to \Conid{Type}\to \Conid{Maybe}\;\Conid{Type}{}\<[E]%
\\
\>[B]{}\Varid{nbeType}\;\Varid{ctx}\;\Varid{t}\mathrel{=}\Conid{Just}\;(\Varid{nbeOpenTy}\;\Varid{ctx}\;\Varid{t}){}\<[E]%
\\[\blanklineskip]%
\>[B]{}\Varid{infType}\mathbin{::}\Conid{Ctx}\to \Conid{Term}\to \Conid{Maybe}\;\Conid{Type}{}\<[E]%
\\
\>[B]{}\Varid{infType}\;(\Varid{t}\mathbin{:}\Varid{ts})\;\Conid{Q}{}\<[30]%
\>[30]{}\mathrel{=}{}\<[30E]%
\>[33]{}\Varid{nbeType}\;(\Varid{t}\mathbin{:}\Varid{ts})\;(\Conid{Sub}\;\Varid{t}\;\Conid{P}){}\<[E]%
\\
\>[B]{}\Varid{infType}\;\Varid{ts}\;{}\<[16]%
\>[16]{}(\Conid{Sub}\;\Conid{Q}\;\Varid{s}){}\<[30]%
\>[30]{}\mathrel{=}{}\<[30E]%
\>[33]{}\mathbf{case}\;\Varid{infType}\;(\Varid{infCtx}\;\Varid{ts}\;\Varid{s})\;\Conid{Q}\;\mathbf{of}{}\<[E]%
\\
\>[33]{}\Conid{Just}\;\Varid{t}\to \Varid{nbeType}\;\Varid{ts}\;(\Conid{Sub}\;\Varid{t}\;\Varid{s}){}\<[E]%
\\
\>[33]{}\anonymous \to \Conid{Nothing}{}\<[E]%
\\[\blanklineskip]%
\>[B]{}\Varid{infType}\;\Varid{ts}\;(\Conid{App}\;\Varid{e}\;\Varid{e'}){}\<[30]%
\>[30]{}\mathrel{=}{}\<[30E]%
\>[33]{}\mathbf{case}\;\Varid{maybeEr}\;(\Varid{infType}\;\Varid{ts}\;\Varid{e})\;\mathbf{of}{}\<[E]%
\\
\>[33]{}\hsindent{2}{}\<[35]%
\>[35]{}\Conid{Just}\;(\Conid{Fun}\;\Varid{t}\;\Varid{t'})\to {}\<[E]%
\\
\>[35]{}\hsindent{4}{}\<[39]%
\>[39]{}\mathbf{if}\;\Varid{chkTerm}\;\Varid{ts}\;\Varid{t}\;\Varid{e'}{}\<[E]%
\\
\>[35]{}\hsindent{4}{}\<[39]%
\>[39]{}\mathbf{then}\;\Varid{nbeType}\;\Varid{ts}\;(\Varid{sgSub}\;\Varid{t'}\;\Varid{e'}){}\<[E]%
\\
\>[35]{}\hsindent{4}{}\<[39]%
\>[39]{}\mathbf{else}\;\Conid{Nothing}{}\<[E]%
\\
\>[33]{}\hsindent{2}{}\<[35]%
\>[35]{}\anonymous \to \Conid{Nothing}{}\<[E]%
\\[\blanklineskip]%
\>[B]{}\Varid{infType}\;\Varid{ts}\;(\Conid{Fst}\;\Varid{e}){}\<[30]%
\>[30]{}\mathrel{=}{}\<[30E]%
\>[33]{}\mathbf{case}\;\Varid{maybeEr}\;(\Varid{infType}\;\Varid{ts}\;\Varid{e})\;\mathbf{of}{}\<[E]%
\\
\>[33]{}\hsindent{2}{}\<[35]%
\>[35]{}\Conid{Just}\;(\Conid{Sigma}\;\Varid{t}\;\Varid{t'})\to {}\<[57]%
\>[57]{}\Conid{Just}\;\Varid{t}{}\<[E]%
\\
\>[33]{}\hsindent{2}{}\<[35]%
\>[35]{}\anonymous \to \Conid{Nothing}{}\<[E]%
\\
\>[B]{}\Varid{infType}\;\Varid{ts}\;(\Conid{Snd}\;\Varid{e}){}\<[30]%
\>[30]{}\mathrel{=}{}\<[30E]%
\>[33]{}\mathbf{case}\;\Varid{maybeEr}\;(\Varid{infType}\;\Varid{ts}\;\Varid{e})\;\mathbf{of}{}\<[E]%
\\
\>[33]{}\hsindent{2}{}\<[35]%
\>[35]{}\Conid{Just}\;(\Conid{Sigma}\;\Varid{t}\;\Varid{t'})\to \Varid{nbeType}\;\Varid{ts}\;(\Varid{sgSub}\;\Varid{t'}\;(\Conid{Fst}\;\Varid{e})){}\<[E]%
\\
\>[33]{}\hsindent{2}{}\<[35]%
\>[35]{}\anonymous \to \Conid{Nothing}{}\<[E]%
\\
\>[B]{}\Varid{infType}\;\Varid{ts}\;(\Conid{Natrec}\;\Varid{t}\;\Varid{v}\;\Varid{w}\;\Varid{k}){}\<[30]%
\>[30]{}\mathrel{=}{}\<[30E]%
\>[33]{}\mathbf{case}\;\Varid{maybeEr}\;(\Varid{infType}\;\Varid{ts}\;\Varid{k})\;\mathbf{of}{}\<[E]%
\\
\>[33]{}\Conid{Just}\;\Conid{Nat}\to \mathbf{if}{}\<[E]%
\\
\>[33]{}\hsindent{4}{}\<[37]%
\>[37]{}\Varid{chkType}\;(\Conid{Nat}\mathbin{:}\Varid{ts})\;\Varid{t}\mathrel{\wedge}{}\<[E]%
\\
\>[33]{}\hsindent{4}{}\<[37]%
\>[37]{}\Varid{chkTerm}\;\Varid{ts}\;(\Varid{nbeOpenTy}\;\Varid{ts}\;(\Varid{sgSub}\;\Varid{t}\;\Conid{Zero}))\;\Varid{v}\mathrel{\wedge}{}\<[E]%
\\
\>[33]{}\hsindent{4}{}\<[37]%
\>[37]{}\Varid{chkTerm}\;(\Conid{Nat}\mathbin{:}\Varid{ts})\;{}\<[E]%
\\
\>[37]{}\hsindent{10}{}\<[47]%
\>[47]{}(\Conid{Fun}\;(\Varid{sgSub}\;\Varid{t}\;\Conid{Q})\;{}\<[E]%
\\
\>[47]{}\hsindent{5}{}\<[52]%
\>[52]{}(\Varid{sgSub}\;\Varid{t}\;(\Conid{Suc}\;(\Conid{Sub}\;\Conid{Q}\;\Conid{P}))))\;\Varid{w}{}\<[E]%
\\
\>[33]{}\hsindent{4}{}\<[37]%
\>[37]{}\mathbf{then}\;\Varid{nbeType}\;\Varid{ts}\;(\Varid{sgSub}\;\Varid{t}\;\Varid{k}){}\<[E]%
\\
\>[33]{}\hsindent{4}{}\<[37]%
\>[37]{}\mathbf{else}\;\Conid{Nothing}{}\<[E]%
\\
\>[33]{}\anonymous \to \Conid{Nothing}{}\<[E]%
\\
\>[B]{}\Varid{infType}\;\Varid{ts}\;(\Conid{Where}\;\Varid{t}\;\Varid{b}\;\Varid{k}){}\<[30]%
\>[30]{}\mathrel{=}{}\<[30E]%
\>[33]{}\mathbf{case}\;\Varid{maybeEr}\;(\Varid{infType}\;\Varid{ts}\;\Varid{k})\;\mathbf{of}{}\<[E]%
\\
\>[33]{}\Conid{Just}\;(\Conid{Prf}\;\Varid{t'})\to \mathbf{if}\;\Varid{chkType}\;\Varid{ts}\;\Varid{t}\mathrel{\wedge}{}\<[E]%
\\
\>[33]{}\hsindent{20}{}\<[53]%
\>[53]{}\Varid{chkTerm}\;(\Varid{t'}\mathbin{:}\Varid{ts})\;\Varid{t}\;\Varid{b}\mathrel{\wedge}{}\<[E]%
\\
\>[53]{}\hsindent{2}{}\<[55]%
\>[55]{}\Varid{nbeOpen}\;\Varid{ts'}\;\Varid{w}\;(\Conid{Sub}\;\Varid{b}\;(\Conid{Ext}\;(\Varid{subs}\;\mathrm{1})\;\Conid{Q}))\equiv {}\<[E]%
\\
\>[53]{}\hsindent{2}{}\<[55]%
\>[55]{}\Varid{nbeOpen}\;\Varid{ts'}\;\Varid{w}\;(\Conid{Sub}\;\Varid{b}\;\Conid{P}){}\<[E]%
\\
\>[33]{}\hsindent{17}{}\<[50]%
\>[50]{}\mathbf{then}\;\Conid{Just}\;\Varid{t}{}\<[E]%
\\
\>[33]{}\hsindent{17}{}\<[50]%
\>[50]{}\mathbf{else}\;\Conid{Nothing}{}\<[E]%
\\
\>[33]{}\hsindent{4}{}\<[37]%
\>[37]{}\mathbf{where}\;\Varid{ts'}\mathrel{=}\Conid{Sub}\;\Varid{t'}\;\Conid{P}\mathbin{:}\Varid{t'}\mathbin{:}\Varid{ts}{}\<[E]%
\\
\>[37]{}\hsindent{6}{}\<[43]%
\>[43]{}\Varid{w}{}\<[47]%
\>[47]{}\mathrel{=}\Conid{Sub}\;\Varid{t}\;(\Varid{subs}\;\mathrm{1}){}\<[E]%
\\
\>[33]{}\anonymous \to \Conid{Nothing}{}\<[E]%
\\
\>[B]{}\Varid{infType}\;\Varid{ts}\;(\Conid{Case}\;\Varid{n}\;\Varid{b}\;\Varid{cs}\;\Varid{k}){}\<[30]%
\>[30]{}\mathrel{=}{}\<[30E]%
\>[33]{}\mathbf{case}\;\Varid{maybeEr}\;(\Varid{infType}\;\Varid{ts}\;\Varid{k})\;\mathbf{of}{}\<[E]%
\\
\>[33]{}\Conid{Just}\;(\Conid{Enum}\;\Varid{m})\to \mathbf{if}\;\Varid{m}\equiv \Varid{n}\mathrel{\wedge}{}\<[E]%
\\
\>[33]{}\hsindent{20}{}\<[53]%
\>[53]{}\Varid{chkType}\;(\Conid{Enum}\;\Varid{n}\mathbin{:}\Varid{ts})\;\Varid{b}\mathrel{\wedge}{}\<[E]%
\\
\>[33]{}\hsindent{20}{}\<[53]%
\>[53]{}\Varid{chkList}\;\Varid{ts}\;\Varid{n}\;\Varid{b}\;\mathrm{0}\;\Varid{cs}{}\<[E]%
\\
\>[33]{}\hsindent{17}{}\<[50]%
\>[50]{}\mathbf{then}\;\Varid{nbeType}\;\Varid{ts}\;(\Varid{sgSub}\;\Varid{b}\;\Varid{k}){}\<[E]%
\\
\>[33]{}\hsindent{17}{}\<[50]%
\>[50]{}\mathbf{else}\;\Conid{Nothing}{}\<[E]%
\\
\>[33]{}\anonymous \to \Conid{Nothing}{}\<[E]%
\\[\blanklineskip]%
\>[B]{}\Varid{infType}\;\anonymous \;\anonymous {}\<[30]%
\>[30]{}\mathrel{=}{}\<[30E]%
\>[33]{}\Conid{Nothing}{}\<[E]%
\\[\blanklineskip]%
\>[B]{}\Varid{chkList}\mathbin{::}\Conid{Ctx}\to \Conid{Int}\to \Conid{Type}\to \Conid{Int}\to [\mskip1.5mu \Conid{Term}\mskip1.5mu]\to \Conid{Bool}{}\<[E]%
\\
\>[B]{}\Varid{chkList}\;\Varid{ts}\;{}\<[13]%
\>[13]{}\anonymous \;{}\<[16]%
\>[16]{}\anonymous \;{}\<[19]%
\>[19]{}\anonymous \;{}\<[22]%
\>[22]{}[\mskip1.5mu \mskip1.5mu]{}\<[30]%
\>[30]{}\mathrel{=}{}\<[30E]%
\>[33]{}\Conid{True}{}\<[E]%
\\
\>[B]{}\Varid{chkList}\;\Varid{ts}\;{}\<[13]%
\>[13]{}\Varid{n}\;{}\<[16]%
\>[16]{}\Varid{b}\;{}\<[19]%
\>[19]{}\Varid{i}\;{}\<[22]%
\>[22]{}(\Varid{e}\mathbin{:}\Varid{es}){}\<[30]%
\>[30]{}\mathrel{=}{}\<[30E]%
\>[33]{}\Varid{chkTerm}\;\Varid{ts}\;(\Varid{nbeOpenTy}\;\Varid{ts}\;(\Varid{sgSub}\;\Varid{b}\;(\Conid{Const}\;\Varid{n}\;\Varid{i})))\;\Varid{e}\mathrel{\wedge}{}\<[E]%
\\
\>[33]{}\Varid{chkList}\;\Varid{ts}\;\Varid{n}\;\Varid{b}\;(\Varid{i}\mathbin{+}\mathrm{1})\;\Varid{es}{}\<[E]%
\\[\blanklineskip]%
\>[B]{}\Varid{infCtx}\mathbin{::}\Conid{Ctx}\to \Conid{Subst}\to \Conid{Ctx}{}\<[E]%
\\
\>[B]{}\Varid{infCtx}\;{}\<[9]%
\>[9]{}(\Varid{t}\mathbin{:}\Varid{ts})\;{}\<[17]%
\>[17]{}\Conid{P}{}\<[29]%
\>[29]{}\mathrel{=}\Varid{ts}{}\<[E]%
\\
\>[B]{}\Varid{infCtx}\;{}\<[9]%
\>[9]{}(\Varid{t}\mathbin{:}\Varid{ts})\;{}\<[17]%
\>[17]{}(\Conid{Comp}\;\Conid{P}\;\Varid{s}){}\<[29]%
\>[29]{}\mathrel{=}\Varid{infCtx}\;\Varid{ts}\;\Varid{s}{}\<[E]%
\ColumnHook
\end{hscode}\resethooks

\vfill\eject

\section{Proofs}
\label{sec:proofs}

\newcounter{prfcnt}
\renewcommand{\theprfcnt}{\Alph{section}.\arabic{prfcnt}}

\newcommand{\DisplayProof}[4]{
\refstepcounter{prfcnt}\label{prf:#1}
\proof[\theprfcnt .\ Proof of #2 \ref{#3}.]#4
}

\DisplayProof{famperd}{Lemma}{lem:famperd}{%
    By induction on $X = X' \in \perT$. We do not show the base cases,
  for they are trivial.
  \begin{enumerate}[(1)]
  \item Let $\iSing{d}{X} = \iSing{d'}{X'} \in \perT$.
    \begin{align*}
      \proofLine{[X] = [X']}{by ind. hyp.}\\
      \proofLine{d = d' \in [X]}{by ind. hyp.}\\
      \proofLine{e = d \in [X]\text{ and }e' =d \in [X]}{hypothesis}\\
      \proofLine{e = d' \in [X]\text{ and }e' = d' \in [X]}{by transitivity}\\
      \proofLine{\singD{d}{X} = \singD{d'}{X'}}{by definition}.
    \end{align*}
  \item Let $\iPi{X}{F} = \iPi{X'}{F'} \in \perT$.
    \begin{align*}
      \proofLine{[X] = [X']}{by ind. hyp.}\\
      \proofLine{\text{ for all } d \in dom([X]), F\ d= F'\ d \in \perT \tag{*}\label{eq:per-fam-fun} }{by definition}\\
      \proofLine{\text{ for all } d=d' \in [X], f\cdot d = f'\cdot d' \in [F\ d] }{hypothesis}\\
      \proofLine{f\cdot d = f'\cdot d' \in [F'\ d]}{by ind. hypothesis
        in \eqref{eq:per-fam-fun}}.\rlap{\hbox to 38.3pt{\hfill\qEd}}
    \end{align*}
  \end{enumerate}
}

\DisplayProof{reify}{Lemma}{lem:reify}{By induction on $X=X'
  \in \perT$.
  \begin{enumerate}[(a)]
  \item Case $\iSing{d}{X} = \iSing{d'}{X'}\in \perT$.
    \begin{enumerate}[(1)]
    \item The partial function $\upa{}{}$ maps neutrals to related elements
      in the corresponding PER.
      \begin{align*}
        \proofLine{k = k\in \perne}{hypothesis}\\
        \proofLine{d = d' \in [X]\text{ and } X = X' \in \perT}{by inversion}\\
        \proofLine{\upa{\iSing{d}{X}}{k}= d \text{ and } 
                   \upa{\iSing{d'}{X'}}{k'} = d' }{by def.}\\
        \proofLine{d = d' \in \singD{d}{X}}{by def. of this PER}.
      \end{align*}
    \item The partial function $\da{}{}$ maps related elements to related
      normal forms.
      \begin{align*}
        \proofLine{d_1 = d_2 \in \singD{d}{X}}{hypothesis}\\
        \proofLine{d_1 = d_2 = d = d' \in [X] \text{ and } 
                   X = X' \in \perT}{by inversion}\\
        \proofLine{\da{X}{d}=\da{X'}{d'}\in\pernf }{by ind. hyp.} \\
        \proofLine{\da{\iSing d X}{d_1}=\da{\iSing{d'}{X'}}{d_2}\in\pernf }{by def.}
        \end{align*}
    \item The function $\Da$ maps related elements in $\perT$ to normal forms.
      \begin{align*}
        \proofLine{\Da{\iSing{d}{X}} = \iSing{(\da{X}{d})}{(\Da{X})}}{by def.}\\
        \proofLine{\Da{\iSing{d'}{X'}} = \iSing{(\da{X'}{d'})}{(\Da{X'})}}{by def.}\\
        \proofLine{\da{X}{d}=\da{X'}{d'}\in\pernf }{by ind. hyp.}\\
        \proofLine{\Da{X}=\Da{X'}\in\pernf }{by ind. hyp.}\\
        \proofLine{\iSing{(\da{X}{d})}{(\Da{X})}=\iSing{(\da{X}{d})}{(\Da{X})}\in \pernf }{by Lem.~\ref{rem:presnf}}.
      \end{align*}
    \end{enumerate}
  \item Case $\iPi{X}{F} = \iPi{X'}{F'}\in \perT$.
    \begin{enumerate}[(1)]
    \item The partial function $\upa{}{}$ maps neutrals to related elements
      in the corresponding PER.
      \begin{align*}
        \proofLine{k = k' \in \perne}{hypothesis} \\
        \proofLine{d=d'\in [X]}{hypothesis}\tag{*}\label{eq:eta-fun-1}\\
        \proofLine{X=X'\in\perT}{by inversion}\tag{\textdagger}\label{eq:eta-fun-2}\\
        \proofLine{F\ d=F'\ d'\in\perT}{by inversion}\tag{**}\label{eq:eta-fun-3}\\
        \proofLine{\da{X}{d} = \da{X'}{d'} \in \pernf}%
        {by ind. hyp. on \eqref{eq:eta-fun-1} and \eqref{eq:eta-fun-2}}\\
          \proofLine{\iNe{k}{(\da{X}{d})} =\iNe{k'}{(\da{X'}{d'})}\in \perne}%
        {by Lem.~\ref{rem:presnf}}\tag{\textdaggerdbl}\label{eq:eta-fun-4}\\
          \proofLine{\upa{F\,d}{(\iNe{k}{(\da{X}{d})})} = 
                     \upa{F'\,d'}{(\iNe{k'}{(\da{X'}{d'})})} \in [F\ d]}{by ind. hyp. on \eqref{eq:eta-fun-3} and \eqref{eq:eta-fun-4}} \\
        \proofLine{\upa{\iPi X F}{k} = \upa{\iPi{X'}{F'}}{k'} 
          \in [\iPi X F]}{by def.}
        \end{align*}
      \item The partial function $\da{}{}$ maps related elements to related
      normal forms.
        \begin{align*}
        \proofLine{X=X'\in\perT}{by inversion}\tag{*}\label{eq:mse-fun-1}\\
        \proofLine{f = f' \in [\iPi{X}{F}]}{hypothesis}\\
        \proofLine{k = k' \in \perne}{hypothesis}\\
        \proofLine{\upa{X}{k} = \upa{X'}{k'}\in [X]}%
          {by ind. hyp. on \eqref{eq:mse-fun-1}}\tag{\textdagger}\label{eq:mse-fun-2}\\
        \proofLine{d := \upa{X}{k}}{abbreviation}\\
        \proofLine{d' := \upa{X'}{k'}}{abbreviation}\\
        \proofLine{F\ d=F'\ d'\in\perT}%
          {by inversion and \eqref{eq:mse-fun-2}}\tag{**}\label{eq:mse-fun-3}\\
        \proofLine{f\cdot d = f'\cdot d' \in [F\ d]}%
          {definition of $[\iPi{X}{F}]$}\tag{\textdaggerdbl}\label{eq:mse-fun-4}\\
        \proofLine{\da{F\ d}{(f\cdot d)} = \da{F'\ d'}{(f'\cdot d')} \in \pernf}%
          {by ind. hyp. on \eqref{eq:mse-fun-4}} \\
        \proofLine{(\da{\iPi X F}{f}) \cdot k = (\da{\iPi{X'}{F'}}{f'})
          \cdot k' \in \pernf}{by def. } \\
        \proofLine{\da{\iPi X F}{f} = \da{\iPi{X'}{F'}}{f'}
          \in \pernf}{by Lem.~\ref{rem:presnf} } \\
      \end{align*}      
    \item The function $\Da$ maps related elements in $\perT$ to normal forms.
      \begin{align*}
        \proofLine{X=X'\in\perT}{by inversion}\tag{*}\label{eq:eta-t-fun-1}\\
        \proofLine{\Da{X}=\Da{X'}\in\pernf}{by ind. hyp. on \eqref{eq:eta-t-fun-1}}%
          \tag{**}\label{eq:eta-t-fun-2}\\
        \proofLine{k = k' \in \perne}{hypothesis.}\\
        \proofLine{\upa{X}{k} = \upa{X'}{k'}\in [X]}%
          {by ind. hyp. on \eqref{eq:eta-t-fun-1}}\tag{\textdagger}\label{eq:eta-t-fun-3}\\
        \proofLine{d := \upa{X}{k}}{abbr.}\\
        \proofLine{d' := \upa{X'}{k'}}{abbr.}\\
        \proofLine{F\,d=F'\,d'\in\perT}%
          {by inversion and \eqref{eq:eta-t-fun-3}}\tag{\textdaggerdbl}\label{eq:eta-t-fun-4}\\
        \proofLine{\Da{(F\,d)}=\Da{(F'\,d')}\in\pernf}{by ind. hyp. on \eqref{eq:eta-t-fun-4}}%
          \label{eq:eta-t-fun-5}\\
        \proofLine{\Da{(\iPi X F)} = \Da{(\iPi{X'}{F'})} \in \pernf}{by Lem.~\ref{rem:presnf}}\rlap{\hbox to 93pt{\hfill\qEd}}
        \end{align*}
    \end{enumerate}
  \end{enumerate}
}

\DisplayProof{soundness-irr}{Lemma}{lem:soundness-irr}{The proofs of
  soundness for \rulename{prf-$\beta$} and \rulename{prf-$\eta$} have
  the same structure, so we show only the first one.
  \begin{enumerate}[\rulename{prf-assoc}]
  \item[\rulename{prf-$\beta$}] 
    $\wheretm{b}{\boxtm{a}}{B} = \subsTm{b}{\subid{}{a}}$
    \begin{align*}
      \proofLine{\semc{\wheretm{b}{\boxtm{a}}{B}}d }{}\\
       \proofLine{= \semc b{\iPair d \dprf} }{def. of semantics for $\wheretm{b}{\boxtm{a}}{B}$}\\
       \proofLine{= \semc b{\iPair d {\semc{a}d}} }{ind. hypothesis on 
         $\deqterm{\ctxe{\ctxe{\Gamma}{A}}{\subsTy{A}{\p}}}{\subsTy{B}{\subsc \p \p}}
         {\subsTm{b}{\p}}{\subsTm b {\exsubs{\subsc \p \p}{\q}}}$}\\
       \proofLine{= \semc b{(\semc {\subid{} a} d)} }{def. of semantics for substitutions}\\
       \proofLine{= \semc{\subsTm{b}{\subid{}{a}}}d }{}
     \end{align*}
  \item[\rulename{prf-assoc}]
  $\wheretm a {(\wheretm b c B)} A 
    = \wheretm {(\wheretm {\subsTm a {\exsubs {\subsc \p \p} \q}} b
      {\subsTy A \p})} c B$
\[
\begin{array}{lll}
  \semc {\wheretm a {(\wheretm b c B)} A} d
  & = & \semc a {\iPair d \dprf}
\\ &= & \semc a {\iPair d {\semc b {\iPair d \dprf}}}
\\ &= & \semc {\subsTm a {\exsubs {\subsc \p \p} \q}} 
              {\iPair {\iPair d {\dprf}} 
                      {\semc b {\iPair d {\dprf}}} }
\\ &= & \semc {\wheretm {\subsTm a {\exsubs {\subsc \p \p} \q}} b 
                                                          {\subsTy A
                                                            \p})}
              {\iPair d \dprf}
\\ &= & \semc {\wheretm {(\wheretm {\subsTm a {\exsubs {\subsc \p \p} \q}} b
      {\subsTy A \p})} c B} d\rlap{\hbox to 74pt{\hfill\qEd}}
\end{array}
\]
\end{enumerate}
}

\DisplayProof{logrelEqTy}{Lemma}{lem:logrelEqTy}{By induction on $X\in \perT$. 
  \begin{enumerate}[(a)]
  \item Types; in all cases we use symmetry and transitivity to show 
    the conditions. We only show the case for $\iPi{X}{F}$.
    \begin{enumerate}[(1)]
    \item $X = \iPi{X'}{F} $: 
      \begin{align*}
        \proofLine{\deqtype{\Gamma}{A}{\F{B}{C}}}{by definition}\tag{*}\label{eq:cong-j-eq}&\\
        \proofLine{\dtype{\Gamma}{B}\rel X'}{by definition}\\
        \proofLine{\dtype{\Delta}{\subsTy{C}{\exsubs{\p^i}{s}}} \rel F\ d \in
           \perT}{by definition}\\
         &\text{\hspace{3cm} for all }\Delta\leqslant^i\Gamma \text{ and }
         \dterm{\Delta}{\lift{i}{B}}{s} \rel d\in [X']&&\text{}\\
         \proofLine{\deqtype{\Gamma}{A'}{\F{B}{C}}}{by sym. and trans. on \eqref{eq:cong-j-eq}} 
      \end{align*}
    \item $\enumD{n}\in \perT$.
    \begin{align*}
      \proofLine{\dterm{\Gamma}{A}{t}\rel d\in [\enum{1}]}{hypothesis}\tag{*}\label{eq:eq-j-en-1}\\
      \proofLine{\deqterm{\Gamma}{A}{t}{t'}}{hypothesis}\tag{\textdagger}\label{eq:eq-j-en-2}\\
      \proofLine{\deqterm{\Delta}{\lift{i}{A}}{\lift{i}{t}}{\reify{i}{d}}}%
            {by inversion on \eqref{eq:eq-j-en-1}}\tag{**}\label{eq:eq-j-en-3}\\
       \proofLine{\deqterm{\Delta}{\lift{i}{A}}{\lift{i}{t}}{\lift{i}{t'}}}%
            {by congruence on \eqref{eq:eq-j-en-2}}\tag{\textdaggerdbl}\label{eq:eq-j-en-4}\\
       \proofLine{\deqterm{\Delta}{\lift{i}{A}}{\lift{i}{t'}}{\reify{i}{d}}}%
            {by sym. and trans. on \eqref{eq:eq-j-en-3} and \eqref{eq:eq-j-en-4}}\\
    \end{align*}
    \end{enumerate}
  \item Terms. As in the case for types, we use symmetry and
    transitivity. We show only the case for singletons and functions.
    \begin{enumerate}[(1)]
    \item $X = \iSing{d}{X'}$: 
      \begin{align*}
      \proofLine{\deqtype{\Gamma}{A}{\singTm{b}{B}}}{by hypothesis}\tag{*}\label{eq:cong-j-eq-s1}\\
      \proofLine{\dtype{\Gamma}{B}\rel X' \in \perT}{by hypothesis}\\
      \proofLine{\dterm{\Gamma}{B}{t}\rel d\in [X']}{by hypothesis}  \tag{\textdagger}\label{eq:cong-j-eq-s2}\\
      \proofLine{\deqtype{\Gamma}{A'}{\singTm{b}{B}}}{by sym. and trans. on \eqref{eq:cong-j-eq-s1}}\\
      \proofLine{\dterm{\Gamma}{B}{t'}\rel d\in [X']}{By i.h. on \eqref{eq:cong-j-eq-s2}} 
    \end{align*}
  \item $X = \iPi{X'}{F}$: 
      \begin{align*}
        \proofLine{\deqtype{\Gamma}{A}{\F{B}{C}}}{by hypothesis} \tag{*}\label{eq:cong-j-eq-f1}&\\
        \proofLine{\dtype{\Gamma}{B}\rel X'}{by hypothesis}\\
        \proofLine{\dterm{\Delta}{\subsTy{C}{\exsubs{\p^i}{s}}}{\appTm{\lift{i}{t}}{s}} \rel f\cdot d \in [F\ d]}
          {by hypothesis}\\
        & \text{\hspace{2cm} for all }\Delta\leqslant^i\Gamma \text{ and }
        \dterm{\Delta}{\lift{i}{B}}{s} \rel d\in [X']  &&\\
        \proofLine{\deqtype{\Gamma}{A'}{\F{B}{C}} }{by sym. and trans. on \eqref{eq:cong-j-eq-f1}}
        \tag{\textdagger} \label{eq:cong-j-eq-f4}\\
        \proofLine{\deqterm{\Delta}{\subsTy{C}{\exsubs{\p^i}{s}}}{\appTm{\lift{i}{t}}{s}}{\appTm{\lift{i}{t'}}{s}}}%
          {by congruence on \eqref{eq:cong-j-eq-f4}}\tag{\textdaggerdbl} \label{eq:cong-j-eq-f5}\\
        \proofLine{\dterm{\Delta}{\subsTy{C}{\exsubs{\p^i}{s}}}{\appTm{\lift{i}{t'}}{s}} \rel f\cdot d \in [F\ d]}
          {by i.h. on \eqref{eq:cong-j-eq-f5}}\rlap{\hbox to 86pt{\hfill\qEd}}
      \end{align*}
    \end{enumerate}
  \end{enumerate}
}

\DisplayProof{logrelMon}{Lemma}{lem:logrelMon}{By induction on $X \in \perT$.  This property is
    trivial for the base cases; for singletons is obtained by applying
    the i.h. We show two cases.
    \begin{enumerate}
    \item Let $X = \iPi{X'}{F} $.
      \begin{align*}
        \proofLine{\deqtype{\Gamma}{A}{\F{B}{C}}}{by hypothesis} \tag{*}\label{eq:cong-m-f1}&\\
        \proofLine{\dtype{\Gamma}{B}\rel X'}{}\tag{\textdagger}\label{eq:cong-m-f2}&\\
        \proofLine{\dtype{\Theta}{\subsTy{C}{\exsubs{\p^i}{s}}} \rel
          F\ d \in
          \perT}{by hypothesis}\\
        & \text{\hspace{2cm} for all }\Theta\leqslant^i\Gamma \text{
          and }
        \dterm{\Theta}{\lift{i}{B}}{s} \rel d\in [X'] &&\\
        \proofLine{\deqtype{\Delta}{\lift{i}{A}}
          {\F{(\lift{i}{B})}{(\subsTy{C}{\exsubs{\subsc{\p^i}{\p}}{\q}})}}}
        {by congruence on \eqref{eq:cong-m-f1}}\\
        \proofLine{\dtype{\Delta}{\lift{i}{B}}\rel X}{by i.h. on \eqref{eq:cong-m-f2}}\\
        \proofLine{\dterm{\Theta'}{\lift{j}{(\lift{i}{B})}}{s}\rel d\in [X],\text{with } \Theta'\leqslant^j\Delta}{hypothesis}\\
        \proofLine{\dterm{\Theta'}{\lift{i+j}{B}}{s}\rel d\in [X]}{by
          rem. \ref{rem:lift} and \ref{lem:logrelEqTy}}
        \tag{\textdaggerdbl}\label{eq:cong-m-f4}\\
        \proofLine{\dtype{\Theta'}{\subsTy{C}{\exsubs{\subsc{\p^{i+j}}{\q}}{s}}}\rel F\ d }{by hyp. using \eqref{eq:cong-m-f4}}\\
        \proofLine{\dtype{\Theta'}{\subsTy{\subsTy{C}{\exsubs{\subsc{\p^i}{\p}}{\q}}}{\exsubs{\p^j}{s}}}
          \rel F\ d}{By congruence and \ref{lem:logrelEqTy}}
      \end{align*}
    \item $\prf{X} \in \perT$.  As mentioned earlier if
      $\dtype{\Gamma}{A}\rel \prf{X}\in \perT$ then
      $\dterm{\Gamma}{A}{\_}\rel \_ \in [\prf{X}]$ is non-empty if and
      only if $\dterm{\Gamma}{A}{\_}$ is not empty.
      \begin{align*}
        \proofLine{\dterm{\Gamma}{A}{t} \rel d\in [\prf{X}]}{hypothesis}\tag{*}\label{eq:mon-prf-1} \\
        \proofLine{\dterm{\Gamma}{A}{t}}{by inversion on
          \eqref{eq:mon-prf-1}}
        \tag{\textdagger}\label{eq:mon-prf-2}\\
        \proofLine{\dtype{\Gamma}{A}\rel \prf{X}\in \perT}{by
          inversion on \eqref{eq:mon-prf-1}}
        \tag{**}\label{eq:mon-prf-3}\\
        \proofLine{\dterm{\Delta}{\lift{i}{A}}{\lift{i}{t}}}{by weakening on \eqref{eq:mon-prf-2}}\\
        \proofLine{\dtype{\Delta}{\lift{i}{A}}\rel \prf{X}\in \perT}{by monotonicity for types on \eqref{eq:mon-prf-3}}\\
        \proofLine{\dterm{\Delta}{\lift{i}{A}}{\lift{i}{t}}\rel d\in
          [\prf{X}]}{by definition of log. rel.}
      \end{align*}
    \end{enumerate}
  We do not show proofs for the second part, since the most involved
  case is dealt analogously to the case for $\F{X'}{F}$.\qed  
}

\DisplayProof{logrelEqD}{Lemma}{lem:logrelEqD}{By induction on $X = X'
  \in \perT$. Note that the first part for the base cases is trivial;
  the second point is also trivial for $X \in \perne$. Thus we do not
  show those parts of the proof.
  \begin{enumerate}[(a)]
  \item Types.
    \begin{enumerate}[(1)]
    \item $\iSing{d}{X} = \iSing{d'}{X'}$. 
      \begin{align*}
        \proofLine{\deqtype{\Gamma}{A}{\singTm{b}{B}}}{by hypothesis}\tag{*}\label{eq:cong-d-eq-s1}\\
        \proofLine{\dtype{\Gamma}{B}\rel X \in \perT}{by hypothesis}\\
        \proofLine{\dterm{\Gamma}{B}{t}\rel d\in [X]}{by hypothesis}\tag{\textdagger}\label{eq:cong-d-eq-s2}\\
        \proofLine{\dterm{\Gamma}{B}{t}\rel d'\in [X']}{By i.h. on \eqref{eq:cong-d-eq-s1} and \eqref{eq:cong-d-eq-s2}}
      \end{align*}
    \item $\iPi{X}{F} = \iPi{X'}{F'}$. 
      \begin{align*}
        \proofLine{\deqtype{\Gamma}{A}{\F{B}{C}}}{by hypothesis}\\
        \proofLine{\dtype{\Gamma}{B}\rel X'\tag{*}\label{eq:cong-d-eq-f1}}{by hypothesis}\\
        \proofLine{\dtype{\Theta}{\subsTy{C}{\exsubs{\p^i}{s}}} \rel F\ d \in
          \perT}{by hypothesis}\tag{\textdagger}\label{eq:cong-d-eq-f2}\\ 
        & \text{\hspace{3cm} for
          all }\Theta\leqslant^i\Gamma \text{ and }
        \dterm{\Theta}{\lift{i}{B}}{s} \rel d\in [X'] &&\\
        \proofLine{\dtype{\Gamma}{B}\rel X' \in \perT}{By i.h. on \eqref{eq:cong-d-eq-f1}} \\
        \proofLine{\dtype{\Theta}{\subsTy{B}{\exsubs{\p^i}{s}}}\rel F'\ d \in
          \perT}{by i.h. on \eqref{eq:cong-d-eq-f2} }
      \end{align*}
    \end{enumerate}
  \item Terms.
    \begin{enumerate}[(1)]
    \item $e = e' \in [\iSing{d}{X}]$.
      \begin{align*}
        \proofLine{\deqtype{\Gamma}{A}{\singTm{b}{B}}}{by hypothesis} \\
        \proofLine{\dtype{\Gamma}{B}\rel X \in \perT}{by hypothesis} \tag{*}\label{eq:cong-d-eq-s1t}\\
        \proofLine{\dterm{\Gamma}{B}{t}\rel d\in [X]}{by hypothesis} \tag{\textdagger}\label{eq:cong-d-eq-s2t}\\
        \proofLine{e' = d \in [X]}{by def. of $e=e' \in [\iSing{d}{X}]$} \tag{**}\label{eq:cong-d-eq-s3t}\\
        \proofLine{\dterm{\Gamma}{B}{t}\rel e'\in [X]}{by i.h. on \eqref{eq:cong-d-eq-s1t}, \eqref{eq:cong-d-eq-s2t}, and
        \eqref{eq:cong-d-eq-s3t}, }
      \end{align*}      
  \item $f = f' \in [\iPi{X}{F}]$.
      \begin{align*}
      &\deqtype{\Gamma}{A}{\F{B}{C}} &\\
      &\dtype{\Gamma}{B}\rel X \tag{*}\label{eq:cong-d-eq-f1t}&\\
      &\dterm{\Delta}{\subsTy{C}{\exsubs{\p^i}{s}}}{\appTm{\lift{i}{t}}{s}} \rel f\cdot d \in [F\ d]
      \tag{\textdagger}\label{eq:cong-d-eq-f2t}\\
      & \text{\hspace{3cm} for all }\Delta\leqslant^i\Gamma \text{ and }
      \dterm{\Delta}{\lift{i}{B}}{s} \rel d\in [X] \tag{**}\label{eq:cong-d-eq-f3t} \enspace .
      \end{align*}
      By i.h. on \eqref{eq:cong-d-eq-f1t} and \eqref{eq:cong-d-eq-f3t}
      and monotonicity \ref{lem:logrelMon}
      \[\dterm{\Delta}{\lift{i}{A'}}{s} \rel d' \in [X'] \enspace .\]
      By i.h. on \eqref{eq:cong-d-eq-f2t}
      \[\dterm{\Delta}{\subsTy{B}{\exsubs{\p^i}{s}}}
      {\appTm{(\lift{i}{t})}{s}}\rel f'\cdot d' \in [F\ d'] \enspace .\]
  \item $d = d' \in [\iDs{X}{F}]$.
    \begin{align*}
      \proofLine{d = d' \in [\iDs{X}{F}]}{hypothesis}\tag{*}\label{eq:per-sum-0}\\
      \proofLine{\dterm{\Gamma}{A}{t} \rel d\in [\iDs{X}{F}]}{hypothesis}\tag{**}\label{eq:per-sum-1} \\
      \proofLine{\deqtype{\Gamma}{A}{\DSum{A'}{B}}}{by inversion on \eqref{eq:per-sum-1}}\\
      \proofLine{\dtype{\Gamma}{\DSum{A'}{B}}\rel \iDs{X}{F}\in \perT}{by inversion on \eqref{eq:per-sum-1}}\\
      \proofLine{\dterm{\Gamma}{A'}{\dfst{t}} \rel \fstnew{d}\in [X]}{by inversion on \eqref{eq:per-sum-0}}
        \tag{\textdagger}\label{eq:per-sum-3}\\
      \proofLine{\dterm{\ctxe{\Gamma}{A'}}{\subsTy{B}{\subid{}{\dfst{t}}}}{\dsnd{t}}
        \rel \sndnew{d}\in [F\ \fstnew{d}]}{by inversion on \eqref{eq:per-sum-1}}
        \tag{\textdaggerdbl}\label{eq:per-sum-4}\\
      \proofLine{\fstnew{d}=\fstnew{d'}\in [X]}{by definition of \eqref{eq:per-sum-0}}\tag{\textdbldagger}\label{eq:per-sum-5}\\
      \proofLine{\sndnew{d}=\sndnew{d'}\in [F\ \fstnew{d}]}{by definition of \eqref{eq:per-sum-0}}
        \tag{\textdbldaggerdbl}\label{eq:per-sum-6}\\
      \proofLine{\dterm{\Gamma}{A'}{\dfst{t}} \rel \fstnew{d'}\in [X]}{by ind. hyp. on \eqref{eq:per-sum-3} and \eqref{eq:per-sum-5}}\\
      \proofLine{\dterm{\ctxe{\Gamma}{A'}}{\subsTy{B}{\subid{}{\dfst{t}}}}{\dsnd{t}}
        \rel \sndnew{d'}\in [F\ \fstnew{d'}]}{by ind. hyp. on \eqref{eq:per-sum-4} and \eqref{eq:per-sum-6}}.\rlap{\hbox to 29pt{\hfill\qEd}}
    \end{align*}
  \end{enumerate}
\end{enumerate}
}

\DisplayProof{judgeq}{Lemma}{lem:judgeq}{By induction on $X\in \perT$.
  By induction on $X\in \perT$. For a better organisation of the proof
  we show the proofs for each point separately.
  \begin{enumerate}[(a)] 
  \item $\deqtype{\Gamma}{A}{\reifyC{\Gamma}{\Da{X}}}$. We skip the
    part for the minimal elements in $\perT$.
    \begin{enumerate}[(1)] 
    \item $\iSing{d}{X}$:
      \begin{align*}
        \proofLine{\deqtype{\Gamma}{A'}{\reifyC{\Gamma}{\Da{X}}}}{by ind. hyp. }\\
        \proofLine{\deqterm{\Gamma}{A'}{t}{\reifyC{\Gamma}{\da{X}{d}}}}{by ind. hyp. }\\
        \proofLine{\deqtype{\Gamma}{\singTm{a}{A'}}
          {\singTm{\reifyC{\Gamma}{\da{X}{d}}}{\reifyC{\Gamma}{\Da{X}}}}}{by congruence and
          transitivity}
      \end{align*}
    \item $\iPi{X}{F}$:
      \begin{align*}
        \proofLine{\deqtype{\Gamma}{A'}{\reifyC{\Gamma}{\Da{X}}}}{by ind. hyp.} \\
        \proofLine{\deqtype{\Delta}{\subsTy{B}{\exsubs{\p^i}{s}}}{\reifyC{\Delta}{\Da{F\
                d}}}}{}\tag{*}\label{eq:der-f-eq-1}\\
        &\text{\hspace{2cm} for any } \Delta\leqslant^i \Gamma \text{ and }
        \dterm{\Delta}{\lift{i}{A'}}{s} \rel d \in [X]&&\\
        \proofLine{\dterm{\ctxe{\Gamma}{A'}}{\subsTy{A'}{\p}}{\q}\rel
          \upa{X}{\iVar{|\Gamma|}}}{by ind. hyp.}\tag{\textdagger}\label{eq:der-f-eq-2}\\
        \proofLine{\deqtype{\ctxe{\Gamma}{A'}}{\subsTy{B}{\exsubs{\p}{\q}}}
          {\reifyC{\ctxe{\Gamma}{A'}}{\Da{F\ \upa{X}{\iVar{|\Gamma|}}}}}}
          {by instantiating \eqref{eq:der-f-eq-1} with \eqref{eq:der-f-eq-2}}\\
        \proofLine{\deqtype{\ctxe{\Gamma}{A'}}{B}
          {\reifyC{\ctxe{\Gamma}{A'}}{\Da{F\ \upa{X}{\iVar{|\Gamma|}}}}}}{by \ref{lem:logrelEqTy}}.
      \end{align*}
    \end{enumerate}
  \item $\deqterm{\Gamma}{A}{t}{\reifyC{\Gamma}{\da{X}{d}}}$. We skip
    the part for the minimal elements in $\perT$.
    \begin{enumerate}[(1)]
    \item $d' \in [\iSing{d}{X}]$: 
      \begin{align*}
        \proofLine{\deqtype{\Gamma}{A}{\singTm{b}{B}}}{} \tag{*}\label{eq:der-eq-s1}\\
        \proofLine{\dtype{\Gamma}{B}\rel X \in \perT}{}\\
        \proofLine{\dterm{\Gamma}{B}{t}\rel d\in [X]}{} \tag{\textdagger}\label{eq:der-eq-s2} \\
        \proofLine{\deqterm{\Gamma}{B}{t}{\reifyC{\Gamma}{\da{X}{d}}}}{by ind. hyp.  in \eqref{eq:der-eq-s2}}\\
        \proofLine{\deqterm{\Gamma}{\singTm{t}{B}}{t}{\reifyC{\Gamma}{\da{X}{d}}}}{by conversion and \ruleref{sing-eq-i}}\\
        \proofLine{\deqterm{\Gamma}{A}{t}{\reifyC{\Gamma}{\da{X}{d}}}}{by conversion}
      \end{align*}
      \item $f \in [\iPi{X}{F}]$:
        \begin{align*}
          \proofLine{\dterm{\ctxe{\Gamma}{A'}}{\subsTy{A'}{\p}}{\q}\rel
            \upa{X}{\iVar{|\Gamma|}} \in [X]}{by ind. hyp.  on the third part}\\
          \proofLine{d := \upa{X}{\iVar{|\Gamma|}}}{abbreviation}\\
          \proofLine{\dterm{\ctxe{\Gamma}{A'}}{\subsTy{B}{\exsubs{\p}{\q}}}
            {\appTm{(\subsTm{t}{\p})}{\q}}\rel f\cdot d\in [F\ d]}
          {by definition of the logical relation}\\
          \proofLine{\deqterm{\ctxe{\Gamma}{A'}}{\subsTy{B}{\exsubs{\p}{\q}}}
            {\appTm{(\subsTm{t}{\p})}{\q}}{\reifyC{\ctxe{\Gamma}{A}}{\da{F\
                  d}{f\cdot d}}}}{by ind. hyp. }\\
          \proofLine{\deqterm{\ctxe{\Gamma}{A'}}{B}
            {\appTm{(\subsTm{t}{\p})}{\q}}{\reifyC{\ctxe{\Gamma}{A}}{\da{F\
                  d}{f\cdot
                  d}}}}{by \ruleref{conv}} \\
          \proofLine{\deqterm{\Gamma}{\F{A'}{B}} {\lambda
              (\appTm{(\subsTm{t}{\p})}{\q})}{\lambda
              (\reifyC{\ctxe{\Gamma}{A}}{\da{F\ d}{f\cdot
                  d}})}}{by congruence} \\
          \proofLine{\deqterm{\Gamma}{\F{A'}{B}} {t}{\lambda
              (\appTm{(\subsTm{t}{\p})}{\q})}}{by \ruleref{eta}}\\
            \proofLine{\deqterm{\Gamma}{\F{A'}{B}}
              {t}{\reifyC{\Gamma}{\da{\iPi{X}{F}}{f}}}}{by trans.}
        \end{align*}
    \end{enumerate}
  \item\hfill
    \begin{enumerate}[(1)]
    \item $\iSing{d}{X}$:
      \begin{align*}
        \proofLine{\deqtype{\Gamma}{A}{\singTm{a}{B}}}{by hypothesis} \\
        \proofLine{\dtype{\Gamma}{B}\rel X\in \perT}{by hypothesis} \\
        \proofLine{\dterm{\Gamma}{B}{t} \rel d\in [X]}{by hypothesis}\\
        \proofLine{\dtype{\Gamma}{B}\rel X\in \perT}{by monotonicity \ref{lem:logrelMon}} \\
        \proofLine{\dterm{\Delta}{\lift{i}{B}}{\lift{i}{t}} \rel d\in [X]}{by monotonicity \ref{lem:logrelMon}}\\
        \proofLine{\deqtype{\Delta}{\lift{i}{A}}{\singTm{\lift{i}{a}}{\lift{i}{B}}}}{by congruence} 
      \end{align*}
    \item $\iPi{X}{F}$: 
      \begin{align*}
        \proofLine{\dterm{\Delta}{\lift{i}{A'}}{s}\rel d'\in [X]}{hypothesis} \tag{*}\label{eq:der-eq-f-1}\\
        \proofLine{\deqterm{\Delta}{\lift{i}{A'}}{s}{\reifyC{\Delta}{\da{X}{d'}}}}{by ind. hyp. on \eqref{eq:der-eq-f-1}}\\
        \proofLine{\deqterm{\Delta}{\subsTy{B}{\exsubs{\p^i}{s}}}{\appTm{(\lift{i}{t})}{s}}
          {\appTm{(\lift{i}{(\reifyC{\Gamma}{d})})}{(\reifyC{\Delta}{(\da{X}{d'}})})}}{by congruence} \\
        \proofLine{\reifyC{\Delta} {\iNe{d}{\da{X}{d'}}}=\appTm{(\lift{i}{(\reifyC{\Gamma}{d})})}{(\reifyC{\Delta}{(\da{X}{d'}})})}
        {by definition}\\
        \proofLine{\dterm{\Delta}{\subsTy{B}{\exsubs{\p^i}{s}}}{\appTm{(\lift{i}{t})}{s}}
          \rel \upa{F\ d'}{\iNe{d}{\da{X}{d'}}}\in [F\ d']}{by ind. hyp.}\rlap{\hbox to 56pt{\hfill\qEd}}
      \end{align*}
    \end{enumerate}
  \end{enumerate}
}

\DisplayProof{logrelIdSub}{Lemma}{lem:logrelIdSub}{By induction on
  $\dctx{\Gamma}$; we show only the inductive case.  Let
  $\dctx{\ctxe{\Gamma}{A}}$.
      \begin{align*}
        \proofLine{d := \rho_{\Gamma}}{definition}\\
        \proofLine{\dsubs{\Gamma}{\Gamma}{\idsubs{}}\rel
          d\in \semctx{\Gamma}}{by inversion and i.h. }
        \tag{*}\label{eq:ex-idsubs-logrel-1}\\
        \proofLine{\dsubs{\ctxe{\Gamma}{A}}{\Gamma}{\subsc{\idsubs{}}{\p}}
        \rel d\in \semctx{\Gamma}}
        {from \eqref{eq:ex-idsubs-logrel-1} by Rem. \ref{lem:monsubs}}
        \tag{\textdagger}\label{eq:ex-idsubs-logrel-2}\\
        \proofLine{\dsubs{\ctxe{\Gamma}{A}}{\Gamma}{\p}
        \rel d\in \semctx{\Gamma}}
        {from \eqref{eq:ex-idsubs-logrel-2} by Rem. \ref{lem:logrelEqSub}}
        \tag{**}\label{eq:idsubs-logrel-3}\\
        \proofLine{\dtype{\ctxe{\Gamma}{A}}{\subsTy{A}{\p}}
          \rel \semc{A}{d} \in \perT}{by inversion and Thm. \ref{thm:logrel}}\\
        \proofLine{\dterm{\ctxe{\Gamma}{A}}{\subsTy{A}{\p}}{\q}
          \rel \upa{\semc{A}{d}}{\iVar{n}} \in [\semc{A}{d}]}{by Thm. \ref{thm:logrel}}\\
        \proofLine{\dsubs{\ctxe{\Gamma}{A}}{\ctxe{\Gamma}{A}}{\exsubs{\p}{\q}}\rel\\
         &\quad\quad (d,\upa{\semc{A}{d}}{\iVar{n}})
            \in \sigD{\semctx{\Gamma}}{(e\mapsto [\semc{A}{e}])}}
        {by Def. \ref{logrelsubs}}
        \tag{\textdaggerdbl}\label{eq:ex-idsubs-logrel-4}\\
        \proofLine{\dsubs{\ctxe{\Gamma}{A}}{\ctxe{\Gamma}{A}}{\idsubs{}}\rel
          \rho_{\ctxe{\Gamma}{A}} \in \semctx{\ctxe{\Gamma}{A}}}{from 
           \eqref{eq:ex-idsubs-logrel-4} by Rem. \ref{lem:logrelEqSub}}\rlap{\hbox to 63pt{\hfill\qEd}}
      \end{align*}
}

\DisplayProof{logrel}{Theorem}{thm:logrel}{
 We note that for
  terms we show only the cases when the last rule used was the
  introductory rule, or the rule for introducing elements in
  singletons; for the case of the conversion rule,
  we can conclude by i.h., and lemma \ref{lem:logrelEqTy}.
  \begin{enumerate}[(a)]
  \item Types. We show only the case for \ruleref{fun-f}.
      \begin{align*}
        \proofLine{\dterm{\Delta}{\lift{i}{A'}}{s}\rel e\in [X]}{hypothesis } \tag{*}\label{eq:fund-f-1-t}\\
        \proofLine{\dsubs{\Theta}{\Gamma}{\subsc{\delta}{\p^i}} \rel d \in
          \semctx{\Gamma} }{By monotonicity for substitutions \ref{lem:monsubs}} \tag{\textdagger}\label{eq:fund-f-2-t}\\
      \proofLine{\dsubs{\Theta}{\ctxe{\Gamma}{A}}{\exsubs{\subsc{\delta}{\p^i}}{s}}
        \rel \iPair{d}{e} \in \semctx{\ctxe{\Gamma}{A}}}{From \eqref{eq:fund-f-1-t} and \eqref{eq:fund-f-2-t} } \\
      \proofLine{\dtype{\Theta}{\subsTy{B}{\exsubs{\subsc{\delta}{p^i}}{s}}} \rel
        \semc{B}{\iPair{d}{e}} \in \perT}{by ind. hyp. on $\dtype{\ctxe{\Gamma}{A}}{B}$ and using
         \ref{lem:logrelEqTy} and \ref{lem:logrelEqD}}
      \end{align*}
    \item Terms. We show the case for application
      \ruleref{fun-el} and for \ruleref{\enumrule{e}}. The case for abstraction \ruleref{fun-i} is
      analogous to \ruleref{fun-f}.\hfill
    \begin{enumerate}[(1)]
    \item \ruleref{fun-el}
      \begin{align*}
        \proofLine{\dterm{\Gamma}{\subsTy{B}{\subid{}{r}}}{\appTm{t}{r}}}{hypothesis} \\
        \proofLine{\dterm{\Delta}{\subsTy{A}{\delta}}{\subsTm{r}{\delta}}\rel
          \semc{r}{d} \in [\semc{A}{d}]}{by ind. hyp. }\tag{*}\label{eq-fund-app-1}\\
        \proofLine{\dterm{\Delta}{\subsTy{\F{A}{B}}{\delta}}
          {\subsTm{t}{\delta}}
          \rel \semc{t}{d} \in [\semc{\F{A}{B}}{d}]}{by ind. hyp.  }\tag{\textdagger}\label{eq-fund-app-2}\\
        \proofLine{
          \dterm{\Delta}{\subsTy{B}{\subid{}{\subsTm{r}{\delta}}}}
          {\appTm{(\subsTm{t}{\delta})}{(\subsTm{r}{\delta}})} \rel}{}\\
          \proofLine{\qquad
          \semc{t}{d} \cdot \semc{r}{d} \in
          [\semc{B}{\iPair{d}{\semc{r}{d}}}]}{by def. of log. rel. for
          \eqref{eq-fund-app-2} with \eqref{eq-fund-app-1}}\\
        \proofLine{\dterm{\Delta}{\subsTy{B}{\subid{}{\subsTm{r}{\delta}}}}
          {\subsTm{(\appTm{t}{r})}{\delta}} \rel}{}\\
        \proofLine{\qquad\semc{\appTm{t}{r}}{d} \in
          [\semc{B}{\iPair{d}{\semc{r}{d}}}]}{by \ref{lem:logrelEqTy} and \ref{lem:logrelEqD}}
      \end{align*}
    \item \ruleref{\enumrule{e}}
      \begin{align*}
        \proofLine{\dterm{\Gamma}{\subsTy{B}{\subid{}{t}}}{\elim{B}{t_0\cdots t_{n-1}}{t}}}{hypothesis} \\
        \proofLine{\dterm{\Delta}{\enum{n}}{\subsTm{t}{\delta}}\rel \semc{t}d\in [\enum{n}]}{by inversion and by ind. hyp.}\\
        \proofLine{\deqterm{\Delta}{\enum{n}}{\subsTm{t}{\delta}}{\reifyC{\Delta}{\semc{t}d}}}{by \ref{lem:judgeq} }\\
        \proofLine{\dterm{\Delta}{\subsTy{B}{\exsubs{\delta}{\constn{i}}}}
           {\subsTm{t_i}{\delta}}\rel \semc{t_i}d\in [\semc{B}\depair{d}{\semc{t}d}]}{by inversion and by ind. hyp.}\\
        &\text{if } \reifyC{\Delta}{\semc{t}d}\equiv \constn{i}\text{:}&&\\
        \proofLine{\quad \deqterm{\Delta}{\subsTy{B}{\subid{}{t}}}
          {\subsTm{(\elimn{B}{t_0\cdots t_{n-1}}{\constn{i}})}{\delta}}{\subsTm{t_i}{\delta}}}{by subst.} \\
        \proofLine{\quad \dterm{\Delta}{\subsTy{B}{\subid{}{t}}}
          {\subsTm{(\elimn{B}{t_0\cdots t_{n-1}}{\constn{i}})}{\delta}}\rel}{}\\
        \proofLine{\quad\quad\quad\quad
          \semc{\elim{B}{t_0\cdots t_{n-1}}{t}}d\in [\semc{\subsTy{B}{\subid{}{t}}}d]}
        {by \ref{lem:logrelEqTy} and \ref{lem:logrelEqD}}\\
       &\text{if }  \reifyC{\Delta}{\semc{t}d}\in \neterms\text{:}&&\\
        \proofLine{\quad\deqtype{\ctxe{\Delta}{\enum{n}}}{\subsTy{B}{\exsubs{\subsc{\delta}{\p}}{\q}}}
          {\reify{|\Delta|+1}{\Da{\semc{B}\depair{d}{\iVar{|\Delta|}}}}}}{}\\
        \proofLine{\quad\deqterm{\Delta}{\subsTy{B}{\exsubs{\delta}{\constn{i}}}}
           {\subsTm{t_i}{\delta}}{\reifyC{\Delta}{\semc{t_i}d}}}{by \ref{lem:judgeq}}\\
        \proofLine{\quad t'_i:=\reifyC{\Delta}{ \semc{t_i}d}}{abbreviation}\\
        \proofLine{\quad t':= \reifyC{\Delta}{\da{\semc{B}\depair{d}{\constn{i}}}{\semc{t}d}}}{abbreviation}\\
        \proofLine{\quad B' := \reify{|\Delta|+1}{\semc{B}\depair{d}{\iVar{|\Delta|}}}}{abbreviation}\\
        \proofLine{\quad\Delta\vdash
           \subsTm{(\elimn{B}{t_0\cdots t_{n-1}}{t})}{\delta}=}{}\\
         \proofLine{\quad\qquad\elimn{B'}{t'_0\cdots t'_{n-1}}{t'}:\subsTy{B}{\exsubs{\delta}{t}}}{congruence}\\
    \end{align*}\begin{align*}
         \proofLine{\quad
           \dterm{\Delta}{\subsTy{B}{\exsubs{\delta}{t}}}{\subsTm{(\elimn{B}{t_0\cdots t_{n-1}}{t})}{\delta}}
          \rel}{}\\
        \proofLine{\qquad
          \upa{\semc{B}\depair{d}{\semc{t}d}}{
        (\elimn{B'}{t'_0\cdots t'_{n-1}}{t'}\in \semc{B}\depair{d}{\semc{t}d}})}{by \ref{lem:judgeq} and \ref{lem:logrelMon}\qquad\qquad\qquad}\\
        \proofLine{\quad \dterm{\Delta}{\subsTy{B}{\subid{}{t}}}
          {\subsTm{(\elimn{B}{t_0\cdots t_{n-1}}{\constn{i}})}{\delta}}\rel}{}\\
        \proofLine{\qquad\semc{\elim{B}{t_0\cdots t_{n-1}}{t}}d\in [\semc{\subsTy{B}{\subid{}{t}}}d]}
        {by \ref{lem:logrelEqTy} and \ref{lem:logrelEqD}}
    \end{align*}

    \end{enumerate}    
  \item Substitutions. Only the proof for \ruleref{ext-subs} is shown.
      \begin{align*}
        \proofLine{\dsubs{\Gamma}{\exsubs{\gamma}{t}}{\ctxe{\Theta}{A}}}{hypothesis}\\
        \proofLine{\dsubs{\Delta}{\Theta}{\subsc{\gamma}{\delta}}\rel
          \semc{\gamma}{d}\in \semctx{\Theta}}{by ind. hyp.} \tag{*}\label{eq-fund-subs-eq-1}\\
        \proofLine{\dterm{\Delta}{\subsTy{(\subsTy{A}{\gamma})}{\delta}}{\subsTm{t}{\delta}
          }\rel \semc{t}{d} \in [\semc{\subsTy{A}{\gamma}}{d}]}{} \tag{\textdagger}\label{eq-fund-subs-eq-2}\\
        \proofLine{\iPair{\semc{\gamma}{d}}{\semc{t}{d}} \in
          \sigD{\semctx{\Theta}}{(e \mapsto [\semc{\subsTy{A}{\gamma}
            }{e}])}}{from \eqref{eq-fund-subs-eq-1} and \eqref{eq-fund-subs-eq-2}} \\
        \proofLine{      \dsubs{\Delta}{\ctxe{\Theta}{A}}{\subsc{\exsubs{\gamma}{t}}{\delta}}\rel
          \semc{\exsubs{\gamma}{t}}{d} \in \semctx{\ctxe{\Theta}{A}}}{by \ref{lem:logrelEqSub} and \ref{lem:logrelSubEqD} }\rlap{\hbox to 60pt{\hfill\qEd}} 
      \end{align*}
  \end{enumerate}
}


\DisplayProof{corr-tc}{Theorem}{thm:corr-tc}{By simultaneous induction on 
    $\chktype{\Gamma}{V}$, $\chkterm{\Gamma}{V}{v}$, and $\inftype
    \Gamma V k$. 
    \begin{enumerate}[(1)]
    \item Types:
      \begin{enumerate}[$\bullet$]
      \item the case for $\F{V}{W}$ is also obtained directly from the
        derivations we get using the i.h.\ on $\chktype{\Gamma}{V}$,
        and $\chktype{\ctxe{\Gamma}{V}}{W}$; and use them for deriving
        $\dtype{\Gamma}{\F{V}{W}}$
      \item for $\singTm{v}{V}$, we can apply the same reasoning as
        before: by i.h.\ on $\chktype{\Gamma}{V}$, and
        $\chkterm{\Gamma}{\nbe{V}}{v}$ we know that there are, respectively,
        derivations with conclusions $\dtype{\Gamma}{V}$, and
        $\dterm{\Gamma}{V}{v}$; from which we can conclude
        $\dtype{\Gamma}{\singTm{v}{V}} $
      \item here we'll consider the three cases when $V$ is a neutral
        term, because the reasoning is the same. By i.h.\ on
        $\chkterm{\Gamma}{\TmU}{V}$, we have a derivation with
        conclusion $\dterm{\Gamma}{\TmU}{V}$; hence we use
        \ruleref{u-el}.
      \end{enumerate}
    \item Terms:
      \begin{enumerate}[$\bullet$]
      \item let $V = \TmU$, and $v=\F{V'}{W}$. By i.h.
        $\dterm{\Gamma}{\TmU}{V'}$, and
        $\dterm{\ctxe{\Gamma}{V'}}{\TmU}{W}$, and using both
        derivations we can derive $\dterm{\Gamma}{\TmU}{\F{V}{W}}$.
      \item consider $V = \TmU$, and $v=\singTm{v'}{V'}$. by i.h.\ on
        $\chkterm{\Gamma}{\TmU}{V'}$, and
        $\chkterm{\Gamma}{\nbe{V}}{v'}$, we have
        $\dterm{\Gamma}{\TmU}{V}$, and $\dterm{\Gamma}{\nbe{V}}{v'}$,
        and using conversion we derive $\dterm{\Gamma}{V}{v'}$; and
        these are the premises we need to show
        $\dterm{\Gamma}{\TmU}{\singTm{v'}{V}}$.
      \item $ V= \F{V'}{W}$, and $v =\lambda v'$: we have
        $\chkterm{\ctxe{\Gamma}{V'}}{W}{v'}$. From this we can
        conclude by i.h.\ $\dterm{\ctxe{\Gamma}{V'}}{W}{v'}$; and this is
      the key premise for concluding
      $\dterm{\Gamma}{\F{V'}{W}}{\lambda v'}$.
    \item $V=\singTm{w}{W}$: by hypothesis we know
      $\dterm{\Gamma}{W}{w}$, and $\chkterm{\Gamma}{W}{v}$, and
      $\deqterm{\Gamma }{W}{w}{v}$; by the i.h. on the second one we
      get $\dterm{\Gamma}{W}{v}$; then we can conclude using
      \ruleref{sing-i}.
    \item $v = k\in \neterms$, and $V \not\equiv \singTm{w}{W}$: let
      $\inftype{\Gamma}{k}{V'}$, then we distinguish the cases when
      $V'$ is a singleton, and when $V'$ is not a singleton. In the
      latter case, the derivation is obtained directly from the
      correctness of type-inference.  In the first case we use the
      rule \ruleref{sing-el}, with the derivation obtained by i.h.\
      and then we conclude with conversion.
      \end{enumerate}
    \item Inference:
      \begin{enumerate}[$\bullet$]
      \item for $\lift{i}{\q}$, if $i=0$, then we use \ruleref{hyp},
        and conversion; if $i>0$, then we have a derivation with
        conclusion $\dterm{\Gamma}{\subsTy{A_i}{\p}}{\q}$, and clearly
        $\dsubs{\Gamma}{\Gamma.A_i\ldots.A_0}{\p^i}$, hence by
        \ruleref{subs-term}, we have
        $\dterm{\Gamma.A_i\ldots.A_0}{\lift{i+1}{A_i}}{\lift{i}{\q}}$,
        we conclude by correctness of $\norm{\_}$ and by conversion.
      \item by i.h. we have derivations with conclusions
        $\dterm{\Gamma}{V'}{k}$, with $\ertype{V'}=\F{V}{W}$, hence we
        have a derivation $\dterm{\Gamma}{\F{V}{W}}{k}$ (using
        \ruleref{sing-el} if necessary) and $\dterm{\Gamma}{V}{v}$,
        hence by the rule \ruleref{fun-el}, we have
        $\dterm{\Gamma}{\subsTy{W}{\subid{}{v}}}{\appTm{k}{v}}$.
        We conclude by conversion and correctness of $\norm{\_}$.\qed
      \end{enumerate}
    \end{enumerate}
}

\DisplayProof{compl-tc}{Theorem}{thm:compl-tc}{We prove simultaneously
  all the points. The first point is by induction on the structure of
  the type. In the last two points we use well-founded induction on
  the order $\prec$.
  \begin{enumerate}[(1)]
   \item Types:
     \begin{enumerate}[$\bullet$]
     \item $ \dtype{\Gamma}{\F{V'}{W}}$; by inversion we know
       $\dtype{\Gamma}{V'}$, and $\dtype{\ctxe{\Gamma}{V'}}{W}$;
       hence by i.h.\ we have respectively $\chktype{\Gamma}{V'}$,
       and $\chktype{\ctxe{\Gamma}{V'}}{W}$.
     \item $V = \singTm{v}{V'}$: by inversion we have
       $\dtype{\Gamma}{V'}$, and $\dterm{\Gamma}{\nbe{V'}}{v}$, hence by
       i.h.\ we have both $\chktype{\Gamma}{V'}$, and
       $\chkterm{\Gamma}{V'}{v}$.
     \item $\dtype{\Gamma}{k}$, we have to show
       $\chktype{\Gamma}{k}$. By lemma~\ref{lem:invneut}, we know
       $\dterm{\Gamma}{\TmU}{k}$; hence by i.h.\ we have
       $\inftype{\Gamma }{k}{A}$, and $\deqtype{\Gamma}{A}{\TmU}$,
       hence $\chkterm{\Gamma}{\TmU}{k}$.
     \end{enumerate}
   \item Terms: We omit the trivial cases, e.g. $(\TmU,A)$; we
     have re-arranged the order of the cases for the sake of clarity.
     \begin{enumerate}[$\bullet$]
     \item $v = \F{V'}{W}$: 
       \begin{enumerate}[(a)]
       \item either $\deqtype{\Gamma}{A}{\TmU}$,
         $\dterm{\Gamma}{\TmU}{V'}$, and
         $\dterm{\ctxe{\Gamma}{V'}}{\TmU}{W}$; hence, by i.h.\ we know
         both $\chkterm{\Gamma}{\TmU}{V'}$, and
         $\chkterm{\ctxe{\Gamma}{V'}}{\TmU}{W}$; hence we can conclude
         $\chkterm{\Gamma}{\TmU}{\F{V}{W}}$.
       \item Or $\deqtype{\Gamma}{A}{\singTm{a}{A'}}$,
         $\dterm{\Gamma}{A'}{v}$, and $\deqterm{\Gamma}{A'}{v}{a}$,
         hence by i.h. we know
         $\chkterm{\Gamma}{\nbe{B}}{v}$, by conversion we also have
         and transitivity of the equality
         $\deqterm{\Gamma}{\nbe{B}}{\nbe{a}}{v}$, hence
         $\chkterm{\Gamma}{\singTm{\nbe{a}}{\nbe{B}}}{v}$.
       \end{enumerate}
     \item $v = \singTm{v'}{V}$: 
       \begin{enumerate}[(a)]
       \item $\dterm{\Gamma}{\TmU}{V}$, and
         $\dterm{\Gamma}{V}{v'}$. From those derivations we have by
         i.h.\ $\chkterm{\Gamma}{\TmU}{V}$, and
         $\chkterm{\Gamma}{\nbe{V}}{v'}$, respectively; from which 
         we conclude $\chkterm{\Gamma}{\TmU}{\singTm{v'}{V}}$
       \item $\deqtype{\Gamma}{A}{\singTm{a}{A'}}$, with
         $\dterm{\Gamma}{A'}{v}$, and $\deqterm{\Gamma}{A'}{v}{a}$,
         hence by i.h. we know $\chkterm{\Gamma}{\nbe{B}}{v}$. We can
         also derive $\deqterm{\Gamma}{\nbe{B}}{\nbe{a}}{v}$, hence
         $\chkterm{\Gamma}{\singTm{\nbe{a}}{\nbe{B}}}{v}$.
       \end{enumerate}
     \item $v = \lambda v'$
       \begin{enumerate}[(a)]
       \item $\deqtype{\Gamma}{V}{\F{A'}{B}}$, and
         $\dterm{\ctxe{\Gamma}{A'}}{B}{v'}$; from this we can conclude
         $\dterm{\ctxe{\Gamma}{\nbe{A'}}}{B}{v'}$ by ind. hyp. we get
         $\chkterm{\ctxe{\Gamma}{\nbe{A'}}}{\nbe{B}}{v'}$; therefore
         $\chkterm{\Gamma}{\F{\nbe{A'}}{\nbe{B'}}}{\lambda v'}$.
       \item Or $\deqtype{\Gamma}{A}{\singTm{a}{A'}}$,
         $\dterm{\Gamma}{A'}{v}$, and $\deqterm{\Gamma}{A'}{v}{a}$,
         hence by i.h. we know
         $\chkterm{\Gamma}{\nbe{B}}{v}$, by conversion we also have
         and transitivity of the equality
         $\deqterm{\Gamma}{\nbe{B}}{\nbe{a}}{v}$, hence
         $\chkterm{\Gamma}{\singTm{\nbe{a}}{\nbe{B}}}{v}$.
       \end{enumerate}
     \item $v \in \neterms$: then we do case analysis on $\nbe{A}$.
       \begin{enumerate}[(a)]
       \item If $\nbe{A} = \singTm{w}{W}$, then by soundness of
         $\nbe{\_}$, and conversion we have
         $\dterm{\Gamma}{\singTm{w}{W}}{k}$; and by inversion of
         singletons we have $\dterm{\Gamma}{W}{k}$, and also
         $\deqterm{\Gamma}{W}{k}{w} (*)$. Clearly $(k,W)\prec (k,A)$,
         hence we can apply the inductive hypothesis and conclude
         $\chkterm{\Gamma}{W}{k}$; from that and $(*)$, we conclude 
         $\chkterm{\Gamma}{\singTm{w}{W}}{k}$, i.e.,
         $\chkterm{\Gamma}{\nbe{A}}{k}$.
       \item If $V \not\equiv \singTm{w}{W}$, then $\ertype{V}\equiv
         V$. We use the last clause for concluding
         $\chkterm{\Gamma}{\nbe{A}}{k}$; but we need to show that if
         $\inftype{\Gamma}{k}{V'}$, then
         $\deqtype{\Gamma}{\ertype{V}}{\ertype{V'}}$; we show this in
         the next point.
         \end{enumerate}
     \end{enumerate}
   \item Inference: let $\dterm{\Gamma}{A}{k}$,
     $\inftype{\Gamma}{k}{V'}$, and $V = \nbe{A}$. Show
     $\deqtype{\Gamma}{\ertype{V}}{\ertype{V'}}$.
     \begin{enumerate}[$\bullet$]
     \item let us consider first the case when $V = \singTm{w}{W}$; by
       inversion we have derivations $\dterm{\Gamma}{W}{k}$, and
       $\deqterm{\Gamma}{W}{k}{w}$.  Hence by i.h. we know that
       $\deqtype{\Gamma}{\ertype{V'}}{\ertype{W}}$, and
       $\ertype{W}=\ertype{\singTm{w}{W}}$.
     \item Now we consider the case when $V$ is not a singleton, and
       $k = \lift{i}{\q}$; this case is trivial because by inversion
       we know that
       $\deqtype{\Gamma}{V}{\nbe{\lift{i+1}{(\Gamma!i)}}}$.
     \item the last case to consider is $k = \appTm{k'}{v}$ and $V$
       not a singleton. By inversion we know
       $\dterm{\Gamma}{\subsTy{B}{\subid{}{v}}}{\appTm{k}{v}}$, and
       $\dterm{\Gamma}{\F{A}{B}}{k}$, hence
       $\dterm{\Gamma}{\F{\nbe{A}}{\nbe{B}}}{k}$, and
       $\dterm{\Gamma}{A}{v}$, hence $\dterm{\Gamma}{\nbe{A}}{v}$.  By
       i.h.\ we know that if $\inftype{\Gamma}{k}{V'}$, then
       $\ertype{V'}=\F{\nbe{A}}{\nbe{B}}$, and also
       $\chkterm{\Gamma}{\nbe{A}}{v}$. Hence we can conclude
       $\inftype{\Gamma}{\appTm{k}{v}}
       {\nbe{\subsTy{\nbe{B}}{\subid{}{v}}}}$. And
       $\deqtype{\Gamma}{\nbe{\subsTy{\nbe{B}}{\subid{}{v}}}}{
         \nbe{\subsTy{B}{\subid{}{v}}}}$ (by correctness of the
       $\nbe{\_}$ algorithm).\qed
     \end{enumerate}
   \end{enumerate}
}


\end{document}